\documentclass[reqno,11pt]{amsart}
\usepackage{amsmath, latexsym, amsfonts, amssymb, amsthm, amscd}
\usepackage{graphics,epsf,psfrag}
\usepackage{graphics,epsf,psfrag}
\usepackage{stmaryrd}
\newcommand{\ext}{\mathrm{ext}}
\newcommand{\sm}{\mathrm{small}}
\renewcommand{\bar}{\overline}

\setlength{\oddsidemargin}{5mm}
\setlength{\evensidemargin}{5mm}
\setlength{\textwidth}{150mm}
\setlength{\headheight}{0mm}
\setlength{\headsep}{12mm}
\setlength{\topmargin}{0mm}
\setlength{\textheight}{220mm}
\setcounter{secnumdepth}{3}
\setcounter{tocdepth}{1}
\newcommand{\lint}{\llbracket}
\newcommand{\rint}{\rrbracket}

\newcommand{\bg}{\mathbf{g}}
\numberwithin{equation}{section}

\newtheorem{theorema}{Theorem}

\newtheorem{theorem}{Theorem}[section]
\newtheorem{lemma}[theorem]{Lemma}
\newtheorem{proposition}[theorem]{Proposition}
\newtheorem{cor}[theorem]{Corollary}
\newtheorem{rem}[theorem]{Remark}
\newtheorem{definition}[theorem]{Definition}



\newcommand{\Diam}{\mathrm{Diam}}

\newcommand{\ind}{\mathbf{1}}
\newcommand{\restrict}{\!\!\upharpoonright}
\newcommand{\Supp}{\mathrm{Supp}}
\newcommand{\larg}{\mathrm{large}}

\renewcommand{\tilde}{\widetilde}
\renewcommand{\hat}{\widehat}
\newcommand{\cc}{\complement}

\newcommand{\cQ}{{\ensuremath{\mathcal Q}} }

\newcommand{\cA}{{\ensuremath{\mathcal A}} }
\newcommand{\cB}{{\ensuremath{\mathcal B}} }

\newcommand{\cP}{{\ensuremath{\mathcal P}} }

\newcommand{\cH}{{\ensuremath{\mathcal H}} }
\newcommand{\cC}{{\ensuremath{\mathcal C}} }

\newcommand{\cL}{{\ensuremath{\mathcal L}} }

\newcommand{\cZ}{{\ensuremath{\mathcal Z}} }

\newcommand{\cK}{{\ensuremath{\mathcal K}} }

\newcommand{\bP}{{\ensuremath{\mathbf P}} }

\newcommand{\bE}{{\ensuremath{\mathbf E}} }

\newcommand{\bC}{{\ensuremath{\mathbf C}} }

\newcommand{\bL}{{\ensuremath{\mathbf L}} }

\newcommand{\bB}{{\ensuremath{\mathbf B}} }
\newcommand{\cR}{{\ensuremath{\mathcal R}} }



\DeclareMathSymbol{\leqslant}{\mathalpha}{AMSa}{"36} 
\DeclareMathSymbol{\geqslant}{\mathalpha}{AMSa}{"3E} 
\DeclareMathSymbol{\eset}{\mathalpha}{AMSb}{"3F}     
\DeclareMathOperator*{\union}{\bigcup}
\newcommand{\sumtwo}[2]{\sum_{\substack{#1 \\ #2}}} 
\newcommand{\limtwo}[2]{\lim_{\substack{#1 \\ #2}}}     


\newcommand{\bbC}{{\ensuremath{\mathbb C}} }

\newcommand{\bbN}{{\ensuremath{\mathbb N}} }

\newcommand{\bbP}{{\ensuremath{\mathbb P}} }

\newcommand{\bbR}{{\ensuremath{\mathbb R}} }

\newcommand{\bbZ}{{\ensuremath{\mathbb Z}} }


\newcommand{\gb}{\beta}
\newcommand{\gga}{\gamma}            

\newcommand{\gep}{\varepsilon}       

\newcommand{\gG}{\Gamma}

\newcommand{\go}{\omega}

\newcommand{\gO}{\Omega}
\newcommand{\gl}{\lambda}
\newcommand{\gL}{\Lambda}

\newcommand{\tr}{\mathrm{tr}}


\makeatletter
\def\captionfont@{\footnotesize}
\def\captionheadfont@{\scshape}

\long\def\@makecaption#1#2{%
  \vspace{2mm}
  \setbox\@tempboxa\vbox{\color@setgroup
    \advance\hsize-6pc\noindent
    \captionfont@\captionheadfont@#1\@xp\@ifnotempty\@xp
        {\@cdr#2\@nil}{.\captionfont@\upshape\enspace#2}%
    \unskip\kern-6pc\par
    \global\setbox\@ne\lastbox\color@endgroup}%
  \ifhbox\@ne 
    \setbox\@ne\hbox{\unhbox\@ne\unskip\unskip\unpenalty\unkern}%
  \fi
  \ifdim\wd\@tempboxa=\z@ 
    \setbox\@ne\hbox to\columnwidth{\hss\kern-6pc\box\@ne\hss}%
  \else 
    \setbox\@ne\vbox{\unvbox\@tempboxa\parskip\z@skip
        \noindent\unhbox\@ne\advance\hsize-6pc\par}%
\fi
  \ifnum\@tempcnta<64 
    \addvspace\abovecaptionskip
    \moveright 3pc\box\@ne
  \else 
    \moveright 3pc\box\@ne
    \nobreak
    \vskip\belowcaptionskip
  \fi
\relax
}
\makeatother
\def\writefig#1 #2 #3 {\rlap{\kern #1 truecm
\raise #2 truecm \hbox{#3}}}


\newcommand{\ldom}{\preccurlyeq}
\newcommand{\tf}{\textsc{f}}

\title[Wetting and Layering for SOS II]{Wetting and layering for Solid-on-Solid II: Layering transitions, Gibbs states, and regularity
of the free energy}

\author{Hubert Lacoin}
\address{
  IMPA, Institudo de Matem\'atica Pura e Aplicada, Estrada Dona Castorina 110
Rio de Janeiro, CEP-22460-320, Brasil. 
}

\begin{document}

\begin{abstract}
We consider the Solid-On-Solid model interacting with  a wall, which is  the statistical mechanics model associated with 
the integer-valued field $(\phi(x))_{x\in \bbZ^2}$,
and the energy functional
$$V(\phi)=\gb \sum_{x\sim y}|\phi(x)-\phi(y)|-\sum_{x}\left( h\ind_{\{\phi(x)=0\}}-\infty\ind_{\{\phi(x)<0\}} \right).$$
We prove that for $\gb$ sufficiently large, there exists a decreasing sequence $(h^*_n(\gb))_{n\ge 0}$,
satisfying $\lim_{n\to\infty}h^*_n(\gb)=h_w(\gb),$ and such that:
$(A)$
 The free energy associated with the system
is infinitely differentiable on $\bbR \setminus \left(\{h^*_n\}_{n\ge 1}\cup h_w(\gb)\right)$, and not differentiable on $\{h^*_n\}_{n\ge 1}$.
$(B)$ For each $n\ge 0$ within the interval $(h^*_{n+1},h^*_n)$ (with the convention $h^*_0=\infty$),
there exists a unique translation invariant Gibbs state which is localized around height $n$, while at a point of non-differentiability, 
at least two ergodic Gibbs state coexist. The respective typical heights of these two Gibbs states are $n-1$ and $n$.
The value $h^*_n$ corresponds thus to a first order layering transition from level $n$ to level $n-1$.
These results combined with those obtained in \cite{cf:part1} provide a complete description of the wetting and layering transition for SOS.
\\[10pt]
2010 \textit{Mathematics Subject Classification: 60K35, 60K37, 82B27, 82B44}
\\[10pt]
  \textit{Keywords: Solid-on-Solid, Wetting, Layering transitions, Gibbs states.}
\end{abstract}

\maketitle

\tableofcontents

\newpage
\section{Introduction}

The Solid-On-Solid model (SOS) introduced in \cite{cf:BCF, cf:T} provides a simplified framework 
to study the behavior of two dimensional interfaces in three dimensional systems which display phase coexistence, 
such as  the Ising model with mixed boundary condition \cite{cf:Dob}. 
The  SOS interfaces have a simpler description than the ones that appear in most three dimensional lattice models: they are graphs of functions from a subset of 
$\bbZ^2$ to $\bbZ$ and thus have the simplest possible topological structure. The Gibbs weight associated with each possible interface realization also
have a simple expression.
This makes the SOS model considerably easier to analyze than, say, Ising interfaces. On the other hand, as the simplification performed to obtain 
the SOS description starting from a lattice model with phase coexistence, such as the low temperature Ising  or Potts model, are not too drastic, 
it is believed that results obtained for Solid-On-Solid model may have a predictive value for a large class of interfaces \cite{cf:BCF,cf:T,cf:WGL}.
For this reason, a particular attention has been given to results obtained for SOS concerning the transition from 
to rigid  interfaces at low temperature to rough ones at high temperature  \cite{cf:BM, cf:FS, cf:Swend}, and to the study of layering and wetting transitions
in presence of an interaction with a wall (in a wetting or pre-wetting setup) \cite{cf:ADM, cf:AY, cf:BMF,cf:CM, cf:CM2, cf:chal,cf:DM} .
We refer to the recent review \cite{cf:IV} for a richer introduction to effective interface models as well to the introduction of \cite{cf:part1} 
for additional motivation and references.

Our objective is to give a full  description of the transitions occurring for the wetting problem, 
when an interface interacts with a solid wall which occupies a full half-space.
The problem has been investigated in \cite{cf:chal} where it was shown that the \textit{wetting transition}
occurs for a positive value $h_w(\gb)$
of the intensity of the interaction with the wall: When $h>h_w(\gb)$ the interface is typically localized in a neighborhood of the wall, while for $h<h_w(\gb)$
is is repulsed away from it.
In \cite{cf:AY}, a heuristic analysis of the interface stability yielded the prediction that besides this wetting transition, the system should undergo
countably many \textit{layering transitions} which corresponds to discrete change of the typical height of the interface. 
This analysis also provided a low temperature expansion for the value of first layering critical points.
The first rigorous results concerning these conjectured layering phenomenon were obtained in \cite{cf:ADM}
(results were obtained earlier for the related and more tractable pre-wetting problem, see for instance \cite{cf:CM, cf:DM}): For any given $n\ge 0$, the existence of a 
regime where interfaces are localized at height $n$ was evidenced, analyticity of the free energy and results concerning uniqueness of Gibbs states in that regime 
were also proved.  
The results in \cite{cf:ADM} nonetheless leave some challenging questions open:
\begin{itemize}
 \item [(A)] The existence of a regime with localization at height $n$ is only proved under the assumption that $\gb\ge c \log n$ (with our notation) for some constant $c>0$  
 (cf.\ \cite[Remark (4) pp 528]{cf:ADM}). This limitation obstructs the understanding of how the layering transition accumulate on the right of $h_w(\gb)$ when $n$ tends to infinity.
 \item [(B)] The layering transitions corresponding to the changes of typical height, say from $n$ to $n-1$ cannot be analyzed, 
 as the intervals on which localization is 
 shown to occur are not adjacent. This is because the perturbative approach used in \cite{cf:ADM} does not allow to come close to the layering critical points.
\end{itemize}
The present paper overcomes these limitations and proves that for $\gb$ sufficiently large 
the free energy is infinitely differentiable everywhere except on a countable set which 
corresponds to the layering critical points. On this set, the first derivative of the free energy is shown to be  discontinuous. 
The existence of Gibbs states when the free energy is also proved, together with uniqueness  on intervals 
where the free energy is differentiable, and non-uniqueness at points of non differentiability.
Combined with the results obtained in \cite{cf:part1} concerning  the value critical point $h_w(\gb)$ 
and the sharp asymptotics for the free energy, this yields a complete picture of the systems behavior.

\section{Model and results}

\subsection{The Solid on Solid Model on $\bbZ^d$}

Consider $\gL$ a finite subset of  $\bbZ^2$ (equipped with its usual lattice structure) and let $\partial \gL$ denote its external boundary 
$$\partial \gL:= \{ x \in \bbZ^d \setminus \gL \ : \ \exists y \in \gL, \ x \sim y \}.$$
Given $\psi\in \bbZ^\bbZ$, we define the Hamiltonian for SOS in the domain $\gL$ with boundary condition $\psi$ on the set $\gO_{\gL}:= \bbZ^\gL$ by
\begin{equation}\label{defhamil}
\cH^{\psi}_\gL(\phi):= \frac{1}{2}\sumtwo{x, y \in \gL}{ x \sim y} |\phi(x)-\phi(y)|
+\sumtwo{x\in \gL, y\in \partial\gL}{x\sim y} |\phi(x)-\psi(y)|, \quad \forall \phi \in \gO_{\gL}.
\end{equation}
Given $\gb>0$, we define
the SOS measure with boundary condition $\psi$, $\bP^{\psi}_{\gL,\gb}$ on $\gO_{\gL}$ by
\begin{equation}\label{defSOS}
\bP^{\psi}_{\gL,\gb}(\phi):= \frac{1}{\mathcal Z^{\psi}_{\gL,\gb}}e^{-\gb\cH^{\psi}_\gL(\phi)} \quad \text{ where } 
\mathcal Z^{\psi}_{\gL,\gb}:=\sum_{\phi\in \gO_{\gL}}  e^{-\gb\cH^{\psi}_\gL(\phi)}.
\end{equation}
For most purposes, we only have to consider the constant boundary conditions $\psi \equiv n$ for $n\ge 0$.
In that case we simply write $\bP^{n}_{\gL,\gb}$ and $\cZ^{n}_{\gL,\gb}$.
We drop the superscript $n$ in the notation in the special case $n=0$.
Note that by translation invariance $\mathcal Z^n_{\gL,\gb}=\mathcal Z_{\gL,\gb}$ does not depend on $n$.
We also define the \textit{free energy} (sometimes also referred to as \textit{pressure}) for the SOS model by
\begin{equation}
\tf(\gb):=\limtwo{|\gL|\to \infty}{|\partial\gL|/|\gL|\to 0} \frac{1}{|\gL|}\log  \cZ_{\gL,\gb},
\end{equation}
where the limit can be taken over any sequence of finite sets $(\gL_N)_{N\ge 0}$ such such ratio between the cardinality of 
$\gL_N$ and that of its boundary vanishes. A justification of the existence of the limit is given in the introduction of \cite{cf:part1}. 
We used $|\cdot|$ to denote the cardinality of a set, and we keep this notation in the remainder of the paper.

\medskip

When $\gb$ is sufficiently large, it is known  \cite[Theorem 2]{cf:BM}  that $\bP_{\gL,\gb}$ converges (in the sense of finite dimensional marginal)
to an infinite volume measure $\bP_{\gb}$ or \textit{Gibbs state} (see Definition \ref{defgs}). We introduce a quantitative version of the statement which requires 
the introduction of  some classic terminology.

\medskip

We say that a function $f: \ \gO_{\infty}:=(\bbZ)^{\bbZ^2}\to \bbR$ is local if there exists $(x_1,\dots,x_k)$  and $\tilde f:  (\bbZ)^{k}\to \bbR$ 
such that $f(\phi)=\tilde f(\phi(x_1),\dots,\phi(x_k)).$ The minimal  choice (for the inclusion) 
for the set of indices $\{x_1,\dots, x_k\}$ is called the support of $f$ ($\Supp(f)$).
With some abuse of notation, whenever $\gL$ contains the support of $f$,  we extend $f$ to $\gO_{\gL}$ in the obvious way.
An event is called local if its indicator function is a local function.

\medskip

Given $\bP_{\gL}$ a sequence a measure on $\gO_{\gL}$ and $\bP$ a measure on $\gO_{\infty}$.
We say that $\bP_{\gL}$ \textit{converges locally} to $\bP$ when $\gL$ exhausts $\bbZ^2$ if for any sequence  $\gL_N$ exhausting $\bbZ^2$, 
and any local function $f$ we have
$$\lim_{N\to \infty} \bP_{\gL_N}[f(\phi)]=\bP[f(\phi)].$$
 
\noindent For $A$ and $B$ two finite subsets of $\bbZ^2$ we set 
\begin{equation}\label{disset}
d(A,B):=\min_{x\in A, y\in B} |x-y|,
\end{equation}
where $|\ \cdot \ |$ denote the $\ell_1$ distance. The proofs in \cite{cf:BM} imply that  $\bP_{\gL,\gb}$ 
converges exponentially fast in some sense to some measure on $\gO_{\infty}$. The statement below can also be proved using the 
techniques introduced in Section \ref{clusexp} (see Remark \ref{teorema}).

\begin{theorema}\label{infinitevol}
There exists $\gb_0>1$ and $c$ such that 
for any $\gb>\gb_0$, 
there exists a measure $\bP_{\gb}$ defined  on $\gO_{\infty}$ such that that for every local function 
$f: \gO_{\infty}\to [0,1]$ with  $\Supp(f)=A$, and every $\gL$ which contains $A$
\begin{equation}\label{decayz}
| \bE_{\gL,\gb}[f(\phi)]-\bE_{\gb}[f(\phi)] |\le C_{\gb} |A| e^{-c \gb d(\partial \gL, A)}.
\end{equation}
\end{theorema}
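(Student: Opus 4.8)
The plan is to establish \eqref{decayz} via a high-temperature (here: low-temperature, large-$\gb$) cluster expansion for the SOS partition function, exactly as announced in the excerpt (``the techniques introduced in Section \ref{clusexp}''). First I would fix a reference height, say $0$, and rewrite $\cZ^{\psi}_{\gL,\gb}$ as a sum over \emph{contours}: the level lines of $\phi$ relative to the constant configuration $\phi\equiv 0$, or more precisely the connected components of the set of edges where $\phi$ is non-constant (or differs from $\psi$ near the boundary). The standard Peierls-type bound gives each contour $\gga$ a weight bounded by $e^{-\gb|\gga|}$ (up to combinatorial factors counting the number of height profiles compatible with a given geometric contour, which costs at most $C^{|\gga|}$), so that for $\gb$ large the polymer system satisfies the Kotecký--Preiss convergence criterion. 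This yields a convergent expansion $\log\cZ^{\psi}_{\gL,\gb}=\sum_{X}\Psi^{\psi}_{\gL}(X)$ over clusters $X$, with exponential decay $|\Psi(X)|\le e^{-c\gb\,\mathrm{diam}(X)}$ uniformly in $\gL$.

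The second step is to use the expansion to \emph{define} $\bP_{\gb}$ and to compare it with $\bP_{\gL,\gb}$. For a local function $f$ with $\Supp(f)=A$, write $\bE_{\gL,\gb}[f]$ as a ratio of partition functions (or, better, differentiate a tilted partition function $\cZ_{\gL,\gb}(f)$ and expand), so that $\bE_{\gL,\gb}[f]$ becomes a convergent sum of cluster terms. The terms that contribute must contain a contour touching $A$ (otherwise $f$ is insensitive to them and they cancel between numerator and denominator). The infinite-volume limit $\bE_{\gb}[f]:=\lim_{N}\bE_{\gL_N,\gb}[f]$ then exists because every cluster term stabilizes once $\gL$ is large enough to contain it, and the tail is summable. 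The difference $\bE_{\gL,\gb}[f]-\bE_{\gb}[f]$ is controlled by the clusters that are ``aware'' of both $A$ and $\partial\gL$: such a cluster must have spatial extent at least $d(\partial\gL,A)$, and by the exponential decay of $|\Psi(X)|$ together with the fact that the number of clusters of diameter $\ell$ through a fixed point grows only exponentially in $\ell$, summing over the at most $C|A|$ possible anchor points in $A$ (one really anchors at $\Supp(f)$) gives a bound of the form $C_{\gb}|A|\sum_{\ell\ge d(\partial\gL,A)} e^{-c'\gb\ell}\le C_{\gb}|A|e^{-c\gb d(\partial\gL,A)}$, which is \eqref{decayz}. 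The same argument shows $(\bP_{\gL_N,\gb})$ is Cauchy, so the limit measure $\bP_{\gb}$ is well defined and is a Gibbs state in the sense of Definition \ref{defgs}.

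The main obstacle, and the step needing genuine care rather than routine bookkeeping, is setting up the contour representation so that it is \emph{both} geometrically clean \emph{and} has weights obeying a uniform Peierls bound: the SOS field is unbounded, so a naive ``+/$-$'' contour picture does not directly apply, and one must handle nested level lines and the entropy of reconstructing the integer heights inside and between contours. One standard device is to expand in the differences $|\phi(x)-\phi(y)|$ edge by edge (a ``low-temperature'' / high-field expansion), treating excited edges as the polymers; then an excited edge of height-gap $k$ carries weight $e^{-\gb k}$, the geometric factor is controlled, and connectivity of excited edges defines the polymers. One must then check the Kotecký--Preiss criterion $\sum_{X\ni e} |w(X)| e^{|X|}\le \text{const}$ holds for $\gb>\gb_0$, which fixes $\gb_0$ and the constant $c$ in the statement. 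Once this combinatorial/entropy estimate is in place, everything else — existence of the limit, the bound \eqref{decayz}, and the identification of $\bP_\gb$ — follows by the routine manipulations sketched above, and the alternative of simply invoking the convergence proved in \cite{cf:BM} plus a quantitative reading of its Peierls estimates would give the same conclusion.
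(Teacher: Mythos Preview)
Your proposal is correct and follows essentially the same route the paper sketches (the paper does not give a self-contained proof of Theorem~\ref{infinitevol}, but points to Remark~\ref{teorema} and the remark following Proposition~\ref{propinfi}). One refinement worth noting: the ``entropy'' difficulty you flag --- reconstructing the unbounded integer heights from the contour geometry --- is dissolved in the paper in a very clean way. Instead of treating excited edges or carrying the height-gap $k$ as part of the polymer, the paper first sums over all cylinder intensities for each signed level-line contour $\gamma$, which produces the closed-form weight $w_\gb(\gamma)=(e^{\gb|\tilde\gamma|}-1)^{-1}$ (equation~\eqref{specialweight}). This weight satisfies the Koteck\'y--Preiss criterion \eqref{usualchoice} for $\gb>5$ with no residual entropy to account for, so Proposition~\ref{propinfi} applies directly to the \emph{contour} distribution $\bbP^{w_\gb}_\gL$. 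The passage from contour convergence back to convergence of the field $\phi$ is then a second, separate step: conditioned on the set of signed contours $\Upsilon(\phi)$, the cylinder intensities are independent geometric variables, so the law of $\phi\restrict_A$ is a deterministic (local) function of $\Upsilon\cap\cC'_A$, and \eqref{decayz} follows from \eqref{convergenz}. Your alternative ``edge-by-edge'' polymer picture would also work, but the paper's summation trick is what makes the verification of \eqref{condit} a one-line check rather than an entropy estimate.
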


\subsection{The wetting problem for the SOS model}

For $\phi \in \gO_{\gL}$ and $A\subset \bbZ$ (or $\bbR$), we set 
$$\phi^{-1}(A):=\{ x\in \gL \ : \ \phi(x)\in A\}.$$
We write also $\phi^{-1}A$ when more convenient.
Using the notation  $\bbZ_+:=\bbZ\cap[0,\infty)$, we define
\begin{equation}\label{positivity}
 \gO^+_{\gL}:=(\bbZ_+)^{\gL}=\{\phi \in \gO_{\gL} \ : \  x\in \gL,\ \phi(x)\ge 0\}.
\end{equation}
This convention of adding the superscript $+$ to indicate a restriction to the set of positive functions 
is used in other contexts throughout the paper.

\medskip

Given $h\in \bbR$ we consider $\bP^{\psi,h}_{\gL,\gb}$ which is a modification of $\bP^{\psi}_{\gL,\gb}$ 
where the interface $\phi$ is constrained to remain positive 
and gets an energetic reward $h$ for each contact with $0$.
It is defined as follows
\begin{equation}\label{def}
\bP^{\psi,h}_{\gL,\gb}(\phi):= \frac{1}{\mathcal Z^{\psi,h}_{\gL,\gb}}e^{-\gb\cH^\psi_\gL(\phi)+ h|\phi^{-1}\{0\}|}
 \quad \text{where } \mathcal Z^{\psi,h}_{\gL,\gb}:=\sum_{\phi\in \gO^+_{\gL}}  e^{-\gb\cH^\psi_\gL(\phi)+h|\phi^{-1}\{0\}|}.
\end{equation}
In this case also, we replace $\psi$ by $n$ in the notation for the special case $\psi\equiv n$.
The aim of our study is to investigate the localization transition in $h$ for $\bP^{n,h}_{\gL,\gb}$ which appears in the limit when $\gL$ exhausts $\bbZ^2$.
A key quantity to study the phenomenon is the corresponding free energy
\begin{equation}\label{nodepends}
\tf(\gb,h):=\limtwo{|\gL|\to \infty}{|\partial \gL|/|\gL|\to 0} \frac{1}{|\gL|}\log \mathcal Z^{n,h}_{\gL,\gb}.
\end{equation}
The reader can check that as a consequence of the inequality 
$$|\cH^n_\gL(\phi)- \cH^m_\gL(\phi)|\le 4 |m-n| |\partial \gL|$$ the quantity
$\tf(\gb,h)$ indeed does not depend on $n$.

\medskip

To clarify notation, in the remainder of the paper, we often consider the limit along the sequence 
$\gL_N:=\lint 1, N\rint^2$ (using the notation $\lint a,b \rint=[a,b]\cap \bbZ$ ). 
We write $\cZ_{N,\gb}$ for $\cZ_{\gL_N,\gb}$ and adopt a similar convention for other quantities.

\medskip

The function $h\mapsto \tf(\gb,h)$ is non-decreasing and convex in $h$ (as a limit of non-decreasing convex function).
At points where $\tf(\gb,h)$ is differentiable, convexity allows to exchange the positions of limit and derivative
thus  $\partial_h \tf(\gb,h)$ corresponds to the 
asymptotic contact fraction. Thus for every $n\in \bbN$ we have
\begin{equation}\label{difencial}
 \partial_h \tf(\gb,h)=\lim_{N\to \infty} \frac{1}{N^2} \bE^{n,h}_{N,\gb}[|\phi^{-1}(0)|],
\end{equation}
wherever $\partial_h\tf$ is defined (by convexity this is everywhere except possibly on a countable set).

\medskip

We define 
$h_w(\gb)$ to be the value of $h$ which marks the wetting transition between a localized phase (where the asymptotic contact fraction is positive) 
and a delocalized phase 
\begin{equation}\begin{split}
 h_w(\gb)&:= \inf\{ h\in \bbR \ : \ \partial_h \tf(\gb,h) \text{ exists and is positive }\}\\
 &= \sup\{ h\in \bbR : \tf(\gb,h)= \tf(\gb)\}.
\end{split}\end{equation}

\subsection{The asymptotic behavior for the free energy}

In previous work \cite{cf:part1}, we established the value of $h_w(\gb)$ answering a question left open since the pioneering work of Chalker \cite{cf:chal}, and 
we were able to describe the asymptotic behavior of $\tf(\gb,h)$ close to the critical point. To state this result we need to introduce a few quantities.
Letting ${\bf 0}$ and ${\bf 1}$ denote the vertices $(0,0)$ and $(1,0)$ respectively, we define, for $\gb>\gb_0$
\begin{equation}\label{singledouble}
\begin{split}
\alpha_1(\gb)&:= \lim_{n\to \infty} e^{4\gb}  \bP_{\gb} \left[ \phi({\bf 0})\ge  n\right],\\
\alpha_2(\gb)&:= \lim_{n\to \infty} e^{6\gb}  \bP_{\gb} \left[  \min( \phi({\bf 0}), \phi({\bf 1})) \ge n\right].
\end{split}
\end{equation}
For a proof of the existence of these quantity, we refer to \cite[Proposition 4.6]{cf:part1}.
We also set $J:=e^{-2\gb}$ and for $u\in \bbR$
\begin{equation}
 \bar \tf(\gb,u)=\tf\left(\gb,\log \left( \frac {e^{4\gb}}{e^{4\gb}-1}\right)+u \right)-\tf(\gb).
\end{equation}

\begin{theorema}\label{oldmain}
If $\gb>\gb_0$ (of Theorem \ref{infinitevol}), we have 
$$h_w(\gb)=\log\left(\frac{e^{4\gb}}{e^{4\gb}-1}\right).$$
Furthermore
\begin{equation}\label{lequivdelamor}
\bar \tf(\gb,u)\stackrel{u\to 0+}{\sim} F(\gb,u),
\end{equation}
where  
\begin{equation}\label{uds}
F(\gb,u):=\max_{n\in \bbZ_{+}}\left( \alpha_1 J^{2n}u-  \frac{2 \alpha_2(J^3-J^4)}{1-J^3} J^{3n} \right).
\end{equation}
\end{theorema}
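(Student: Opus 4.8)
\textbf{Step 1: $\tf(\gb,h)\le\tf(\gb)$ for $h\le h_w^\sharp:=\log\frac{e^{4\gb}}{e^{4\gb}-1}$, hence $h_w(\gb)\ge h_w^\sharp$.} Fix a finite $\gL$ and a boundary condition $n\ge1$ (recall $\tf(\gb,h)$ does not depend on $n$). To a pair $(\phi,\eta)$ with $\phi\in\gO^+_{\gL}$ and $\eta\colon\phi^{-1}\{0\}\to\bbZ_{+}$ associate $\phi_\eta\in\gO_{\gL}$ by lowering $\phi$ to $-\eta(x)$ on the contact set $\phi^{-1}\{0\}$ and leaving it unchanged elsewhere. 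Since $\phi^{-1}\{0\}=\{\phi_\eta\le0\}$, the map $(\phi,\eta)\mapsto\phi_\eta$ is injective into $\gO_{\gL}$, and a bond-by-bond check gives $\cH^n_{\gL}(\phi_\eta)\le\cH^n_{\gL}(\phi)+4\sum_x\eta(x)$ (lowering a contact site by $j$ raises each of its four incident bonds by at most $j$). Summing $e^{-\gb\cH^n_{\gL}(\phi_\eta)}$ first over $\eta$ — a factor $\sum_{j\ge0}e^{-4\gb j}=\frac{e^{4\gb}}{e^{4\gb}-1}=e^{h_w^\sharp}$ per contact — and then over $\phi$,
\[
\cZ_{\gL,\gb}\ \ge\ \sum_{\phi\in\gO^+_{\gL}}e^{-\gb\cH^n_{\gL}(\phi)+h_w^\sharp|\phi^{-1}\{0\}|}\ \ge\ \cZ^{n,h}_{\gL,\gb}\qquad(h\le h_w^\sharp),
\]
so $\tf(\gb,h)\le\tf(\gb)$. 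Conversely, restricting to $\phi\ge1$ and shifting by $-1$ gives $\cZ^{n,h}_{\gL,\gb}\ge\cZ^{n-1,0}_{\gL,\gb}=\cZ_{\gL,\gb}\,\bP^{n-1}_{\gL,\gb}[\phi\in\gO^+_{\gL}]$, which is $\ge\cZ_{\gL,\gb}\,e^{-o(|\gL|)}$ by the classical vanishing of the entropic-repulsion cost in dimension two; hence $\tf(\gb,h)\ge\tf(\gb)$, and therefore $\tf(\gb,h)=\tf(\gb)$ for all $h\le h_w^\sharp$.

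\textbf{Step 2: a cluster expansion uniform in the reference level.} For every fixed $n$, $\tf(\gb,h)-\tf(\gb)=\lim_N\frac1{N^2}\log\bigl(\cZ^{n,h}_{N,\gb}/\cZ_{N,\gb}\bigr)$. I would represent $\phi$ through its level lines relative to the flat configuration $\phi\equiv n$: the numerator uses contours at heights $1,2,\dots$ (each bond weighted $e^{-\gb}$) and a reward $e^{h}$ for each site at level $0$, while the denominator uses contours at all heights of $\bbZ$; the quotient then becomes a polymer partition function. For $\gb>\gb_0$ every polymer has summably small weight \emph{uniformly in $n$} (a cluster of contours linking level $n$ to level $0$ has total length at least $4n$, hence weight at most $e^{-4\gb n}$), so a Koteck\'y--Preiss-type criterion applies with constants free of $n$, giving $\frac1{N^2}\log\bigl(\cZ^{n,h}_{N,\gb}/\cZ_{N,\gb}\bigr)=\Psi_n(\gb,h)+o(1)$ with $\Psi_n$ a convergent cluster series (in particular this recovers Theorem~\ref{infinitevol}). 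The only contributions surviving the quotient are (i) the reweighting by $e^{h}-1$ of polymers encircling a level-$0$ site, and (ii) the deletion of polymers that reach heights $\le0$, forbidden by the wall; both families have total weight $O(e^{-4\gb n})=O(J^{2n})$.

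\textbf{Step 3: leading orders, maximum over levels, and the value of $h_w(\gb)$.} Expanding $\Psi_n(\gb,h)$ in the small parameter $J^n=e^{-2\gb n}$, the order-$J^{2n}$ part equals $\alpha_1(\gb)\bigl((e^{h}-1)(1-J^2)-J^2\bigr)=\alpha_1(\gb)\bigl(e^{h-h_w^\sharp}-1\bigr)$: matching the resummed one-site families with the expansion defining $\alpha_1(\gb)$ in \eqref{singledouble} reproduces exactly $\alpha_1(\gb)$, and $h_w^\sharp=\log\frac{e^{4\gb}}{e^{4\gb}-1}$ is precisely the value for which the ``reward'' and ``wall'' contributions cancel apart from the piece linear in $u:=h-h_w^\sharp$. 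The next order, $J^{3n}$, is the first uncancelled cost: it comes from the two-site (bond) excursions below $0$ suppressed by the wall and resums, via \eqref{singledouble}, to $-\frac{2\alpha_2(\gb)(J^3-J^4)}{1-J^3}J^{3n}$, all further terms being $O(J^{4n})$. Finally, a coarse-graining on the height of the ``sea'' around a fixed site shows that $\cZ^{n,h}_{N,\gb}$ equals, up to a factor polynomial in $N$, the maximum over $m\ge0$ of the analogous partition function restricted to configurations localised at height $m$; restricting to the optimal height gives the matching lower bound, so $\tf(\gb,h)-\tf(\gb)=\sup_{m\ge0}\Psi_m(\gb,h)$. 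As $u\to0+$ the optimal $m$ diverges, only the two displayed orders survive, and
\[
\sup_{m\ge0}\Psi_m(\gb,h_w^\sharp+u)\ \sim\ \max_{m\in\bbZ_{+}}\Bigl(\alpha_1 J^{2m}u-\tfrac{2\alpha_2(J^3-J^4)}{1-J^3}J^{3m}\Bigr)=F(\gb,u),
\]
which is \eqref{lequivdelamor}. Since $F(\gb,u)>0$ for $u>0$, this forces $\tf(\gb,h)>\tf(\gb)$ just above $h_w^\sharp$, so $h_w(\gb)\le h_w^\sharp$ and, with Step~1, $h_w(\gb)=h_w^\sharp$.

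\textbf{Main obstacle.} The real work lies in Steps 2--3: making the contour expansion quantitatively sharp \emph{and} uniform in $n$ — i.e.\ encoding the non-local positivity constraint as an exclusion rule among polymers without spoiling convergence, and then carrying out the term-by-term comparison of the resummed cluster series with the expansions defining $\alpha_1(\gb)$ and $\alpha_2(\gb)$, so as to pin down $\Psi_n$ modulo $o(J^{3n})$ uniformly over the relevant $n$. The bookkeeping by which the ``reward minus wall'' cancellation at $h=h_w^\sharp$ comes out to exactly $\log\frac{e^{4\gb}}{e^{4\gb}-1}$, and the two-sided estimates needed to pass from $\sup_m\Psi_m$ to the clean maximum $F(\gb,u)$ (excluding, e.g., configurations interpolating between two heights), are the other delicate points.
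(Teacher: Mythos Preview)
This theorem is not proved in the present paper; it is imported from the companion paper \cite{cf:part1}. So there is no ``paper's own proof'' here to compare against. That said, your outline has one correct step and two that need repair.

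\medskip

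Your Step~1 is essentially Chalker's argument \cite{cf:chal} and is correct: the injection $(\phi,\eta)\mapsto\phi_\eta$ together with $\cH(\phi_\eta)\le\cH(\phi)+4\sum_x\eta(x)$ yields $\cZ_{\gL,\gb}\ge\cZ^{n,h_w^\sharp}_{\gL,\gb}$, hence $h_w(\gb)\ge h_w^\sharp$. The reverse inequality $\tf(\gb,h)\ge\tf(\gb)$ via entropic repulsion is also standard.

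\medskip

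Step~2 contains a genuine error. You claim a Koteck\'y--Preiss criterion ``with constants free of~$n$''. This is false for the contour weights $w^h_n$ of Section~\ref{contsec}: a negative contour $\gamma$ has weight $w^h_n(\gamma)=e^{-\gb|\tilde\gamma|}z^h_{n-1}(\gamma)/\bar z^h_n(\gamma)$, and if $u$ is much larger than $J^{n+2}$ this ratio blows up (the field prefers to drop toward the wall). Controlling the weights for a given $n$ is exactly the content of Theorem~\ref{converteo} of the present paper, and the range of $u$ for which $n$-stability holds is \emph{not} all of $\bbR_+$ but only $[u^*_{n+1},u^*_n]$. Your parenthetical (``a cluster linking level~$n$ to level~$0$ has length $\ge 4n$'') bounds the wrong object: the weight $w^h_n(\gamma)$ encodes a ratio of restricted partition functions, not a single contour length.

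\medskip

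Step~3 is confused about where the $\max_n$ comes from. You write $\bar\tf(\gb,u)=\sup_m\Psi_m(\gb,h)$, but the free energy is independent of the boundary condition~$m$ (cf.\ \eqref{nodepends}), so either all well-defined $\Psi_m$ coincide or the identity is vacuous. The actual mechanism in \cite{cf:part1} (whose tools are recalled here in Lemma~\ref{rwo} and Proposition~\ref{rouxrou}) is different: for each $n$ one rewrites $\cZ^{n,h}_{\gL,\gb}/\cZ_{\gL,\gb}$ as an SOS expectation of $e^{u|\phi^{-1}(\bbZ_-)|-\bar H(\phi^{-1}(\bbZ_-))}$, expands via the peak probabilities \eqref{singledouble}, and obtains $\bar\tf(\gb,u)=\alpha_1 J^{2n}u-\frac{2\alpha_2(J^3-J^4)}{1-J^3}J^{3n}+o(J^{3n})$ for the specific $n=n(u)$ that keeps the error small. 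The $\max$ in $F(\gb,u)$ then arises because the main term, for the optimal $n(u)$, equals the maximum over all $n$ up to lower order --- not because the free energy is literally a supremum of distinct quantities.
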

\noindent Note that the function $F(\gb,u)$ is piecewise affine on $\bbR_+$, and present angular points at 
\begin{equation}\label{defun}
 u=u_n:=\frac{2\alpha_2}{\alpha_1(1+J)}J^{n+2},
\end{equation}
for $n\ge 1$.
While Theorem \ref{oldmain} does not imply the convergence of $\partial_u \bar \tf(\gb,u)$ and thus, the presence 
of angular points on the free energy curves, convexity implies that for large values of $n$ the contact fraction changes abruptly
around $u_n$.

\medskip

A possible explanation for this phenomenon which is corroborated by the proof of Theorem \ref{oldmain} is that the typical behavior of $\phi$ changes radically 
around $u_n$: when $u\le u_n+o(J^{n})$ the surface $\phi$ tends to localize at height $n$, meaning that $\phi(x)=n$ for a majority of points and that 
connected components  of the set $\{ x \ : \ \phi(x)\ne n\}$ are all  of small diameters,
while when $u\ge u_n+o(J^{n})$ the typical height should be $n-1$.

\medskip

This indicates that there should exist a value $u^*_n$ which delimits a phase transition between these two kinds of behavior. Moreover it should satisfy,  asymptotically
for large values of $n$,
$u^*_n=u_n+o(J^{n})$.
The change of behavior around $u^*_n$ should provoke a discontinuity in the contact fraction, 
so that these sequences of phase transition should be manifested by discontinuities
for $\partial_u \bar \tf(\gb,u)$. This prediction can be interpreted as a refined version 
of the conjecture presented in \cite[Statement p 228]{cf:ADM}. Earlier version of the same conjecture is found in \cite{cf:AY} and \cite[Section 4.3]{cf:BMF}.

\subsection{Main results}

In the present paper, we bring the above stated  conjecture on a rigorous ground by showing the existence of an infinite sequence of point of discontinuity for 
$\partial_u \bar \tf(\gb,u)$. We complement this result by relevant information concerning the regularity of $\tf(\gb,u)$ between these transition points, 
and a statement concerning uniqueness of Gibbs states.

\begin{theorem}\label{main}
For $\gb\ge \gb_1$ sufficiently large,
there exists a decreasing  sequence   $(u^*_n(\gb))_{n\ge 1}$, which satisfies 
$$\frac{1}{200} J^{n+2} \le  u^*_n \le 200 J^{n+2}$$
and  
 \begin{equation}\label{asimptic}
\lim_{n\to \infty} e^{2\gb n} u^*_n=\frac{2\alpha_2J^{2}}{\alpha_1(1+J)}.
\end{equation}
which is such that
\begin{itemize}
 \item [(i)] The function $u\mapsto \bar \tf(\gb,u)$ is infinitely differentiable on $(u^*_{n+1}(\gb),u^*_n(\gb))$   for any $n\ge 1$ and also on $(u^*_{1}(\gb),\infty).$
 \item [(ii)] For any $n\ge 1$, $\bar \tf(\gb,u)$ is not differentiable at $u^*_n(\gb)$, meaning that that   the left and right derivative at  $u^*_n$ do not coincide
 \begin{equation}
       \partial^-_u \bar \tf(\gb,u^*_n)<\partial^+_u \bar \tf(\gb,u^*_n).
 \end{equation}

\end{itemize}
\end{theorem}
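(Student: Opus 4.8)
\emph{Proof strategy.} The plan is to realize $\bar\tf(\gb,\cdot)$ as the upper envelope of a sequence of real-analytic functions $(g_n)_{n\ge0}$, where $g_n$ is the (suitably shifted) free energy of a \emph{restricted ensemble} in which the interface stays localized around height $n$; all of (i), (ii) and the location and asymptotics of $u^*_n$ will then be read off from elementary properties of this envelope. For each $n\ge0$ I introduce the partition function $\cZ^{(n)}_{\gL,\gb,h}$ obtained from $\cZ^{n,h}_{\gL,\gb}$ by restricting the sum over $\phi\in\gO^+_\gL$ to configurations for which the level set $\{\phi=n\}$ percolates and surrounds every connected component of $\{\phi\ne n\}$. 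Slicing $\phi$ along these components turns it into a polymer gas whose activities are the local products of the weights $e^{-\gb\cH}$ and $e^{h|\phi^{-1}\{0\}|}$; for $\gb$ large these activities are uniformly small and entire in $h$ with geometric bounds, so the cluster expansion of Section~\ref{clusexp} converges and yields
\[
\frac{1}{|\gL|}\log\cZ^{(n)}_{\gL,\gb,h}=\tf_n(\gb,h)+O\!\left(\frac{|\partial\gL|}{|\gL|}\right),
\]
with $\tf_n(\gb,\cdot)$ real-analytic — hence $C^\infty$ — on a complex neighbourhood of the range of $h$ over which level $n$ is the relevant phase. Writing $g_n(u):=\tf_n(\gb,h_w(\gb)+u)-\tf(\gb)$, and reading off the first terms of the expansion — produced by the single-site and domino-shaped excursions that also define $\alpha_1,\alpha_2$ in \eqref{singledouble} — one gets, uniformly over the relevant $u$ and for $\gb\ge\gb_1$,
\[
g_n(u)=\alpha_1J^{2n}u-b_n+r_n(u),\qquad g_n'(u)=\alpha_1J^{2n}+r_n'(u),\qquad b_n:=\frac{2\alpha_2(J^3-J^4)}{1-J^3}J^{3n},
\]
with $r_n,r_n'$ of strictly smaller order than the quantities they will be compared with; this matches, level by level, the affine pieces of $F(\gb,\cdot)$ in \eqref{uds} (consistently with Theorem~\ref{oldmain}) and uses the quantitative bounds on $\alpha_1,\alpha_2$ from \cite{cf:part1} (in particular that they are close to $1$ for $\gb$ large).

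\emph{The free energy is the envelope.} Next I claim $\bar\tf(\gb,u)=\sup_{n\ge0}g_n(u)$ for $u>0$, the supremum being attained at a finite index. The bound $\bar\tf\ge g_n$ follows by restricting $\cZ^{n,h}_{\gL,\gb}$ to phase-$n$ configurations. For the matching bound I would run a Pirogov--Sinai coarse graining: an arbitrary $\phi\in\gO^+_\gL$ is decomposed into a dominant level together with the contours separating regions of distinct dominant level, and a Peierls bound on the weight of such contours shows that, for $\gb$ large, every departure from the globally dominant level is exponentially costly in $|\gL|$; hence $\tfrac1{|\gL|}\log\cZ^{n,h}_{\gL,\gb}\to\sup_{n'}g_{n'}(u)+\tf(\gb)$, the possibility of the interface escaping to $+\infty$ contributing only the value $g_\infty\equiv0$ (the relevant one for $u\le0$, where $\bar\tf=0$). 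In particular $\bar\tf=g_n$ on $(u^*_{n+1},u^*_n)$ for $n\ge1$, and $\bar\tf=g_0$ on $(u^*_1,\infty)$ — where it is the phase-$0$ expansion that survives on a half-line, since large $h$ only suppresses excursions.

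\emph{The envelope, and the conclusions.} Since $g_n'(u)=\alpha_1J^{2n}(1+o(1))$ is strictly positive and strictly decreasing in $n$, while $g_n(0)=-b_n(1+o(1))$ is strictly increasing in $n$ (towards $0$), each $g_n-g_{n-1}$ is real-analytic, strictly decreasing and positive at $u=0$, so it has a unique zero $u^*_n>0$; the leading-order balance $\alpha_1J^{2n}u-b_n=\alpha_1J^{2(n-1)}u-b_{n-1}$ moreover gives $u^*_n=\tfrac{2\alpha_2}{\alpha_1(1+J)}J^{n+2}(1+o(1))$, i.e.\ the angular point $u_n$ of \eqref{defun}, whence — with the bounds on $\alpha_1,\alpha_2$ from \cite{cf:part1} — both $\tfrac1{200}J^{n+2}\le u^*_n\le200\,J^{n+2}$ and the limit \eqref{asimptic}, and also $u^*_{n+1}<u^*_n$. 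As $(g_n)$ is to leading order a family of affine functions with slopes decreasing and intercepts increasing in $n$, its upper envelope is realized by the single index $n$ on each $(u^*_{n+1},u^*_n)$ (and by index $0$ on $(u^*_1,\infty)$), and by exactly the two indices $n-1$ and $n$ at $u^*_n$, every other index lying strictly below there by an amount that dominates the remainders; pushing the expansions to the requisite order makes this rigorous. Part (i) follows since on each interval $\bar\tf$ coincides with a $g_n$ that is analytic on a neighbourhood of it. For (ii), $\bar\tf$ equals $g_n$ on $(u^*_{n+1},u^*_n]$ and $g_{n-1}$ on $[u^*_n,u^*_{n-1})$, so $\partial^-_u\bar\tf(\gb,u^*_n)=g_n'(u^*_n)$ and $\partial^+_u\bar\tf(\gb,u^*_n)=g_{n-1}'(u^*_n)$, and $g_{n-1}'(u^*_n)=\alpha_1J^{2n-2}(1+o(1))>\alpha_1J^{2n}(1+o(1))=g_n'(u^*_n)$ for $\gb\ge\gb_1$.

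\emph{Main obstacle.} The crux is \emph{uniformity in $n$}: unlike in \cite{cf:ADM}, where the level-$n$ analysis requires $\gb\gtrsim\log n$, here the cluster expansion must converge, and the remainders $r_n,r_n'$ must be genuinely negligible against the competing quantities, for every level $n$ with a single threshold $\gb_1$, on a window of $u$ whose width $\sim J^{n+2}$ shrinks with $n$ — this forces one to track enough orders of the expansion, uniformly. Equally delicate is the Peierls input of the second step at the transition points $u^*_n$ themselves, where the two competing phases differ in bulk free energy only by an amount exponentially small in $n$: controlling the contour entropy against this vanishing margin is the heart of the argument, and it is also what, ultimately, forces precisely two ergodic Gibbs states, of typical heights $n-1$ and $n$, to coexist there.
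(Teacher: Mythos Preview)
Your high-level picture — restricted ensembles $g_n$ indexed by the dominant height, $\bar\tf$ as their upper envelope, transitions at the crossings — is morally the right Pirogov--Sinai philosophy and matches the paper's framework. But there is a genuine gap at the point you yourself flag as the ``main obstacle'', and it is not resolved by your sketch.

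The circularity is this: you define the restricted partition function by requiring $\{\phi=n\}$ to percolate, and claim its polymer (excursion) activities are ``uniformly small'' so that the cluster expansion converges. They are not. A large excursion in which $\phi$ sits at level $n-1$ has weight $\exp(-\gb|\tilde\gamma|)\cdot z^u_{n-1}(\gamma)/\bar z^u_n(\gamma)$ in the paper's notation, and near $u^*_n$ this ratio is close to $1$ in the bulk — indeed the two phases have the same free energy there by definition — so the activity is only $\exp(-\gb|\tilde\gamma|+\text{boundary})$, not exponentially small in the volume. Proving that the boundary cost still wins, i.e.\ that the contour weight satisfies $w^u_n(\gamma)\le e^{-(\gb-1)|\tilde\gamma|}$, is exactly Theorem~\ref{converteo}, and it requires the double induction (in $\gamma$ for inclusion, in $n$ in the direction of the contour sign) plus the decomposition into small/large contours of Sections~\ref{prezent}--\ref{dcimus}. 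Your ``Peierls bound on the weight of such contours'' \emph{is} the theorem, not a tool available to prove it.

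The paper breaks the circularity by the truncation trick you omit: first prove stability on the strictly smaller intervals $[u^+_{n+1},u^-_n]$ by direct quantitative estimates (Proposition~\ref{truta}); then define truncated weights $w^{u,\tr}_n=\min(w^u_n,e^{-(\gb-1)|\tilde\gamma|})$, which satisfy \eqref{usualchoice} by fiat, so that the truncated free energies $\tf^{\tr}_n$ are well-defined and continuous everywhere; $u^*_n$ is then \emph{defined} as the crossing of $\tf^{\tr}_{n-1}$ and $\tf^{\tr}_n$, and only afterwards is stability extended to the full $[u^*_{n+1},u^*_n]$ by a second induction (Proposition~\ref{mejor}) using Lemma~\ref{finalfrontier} and the equality of truncated free energies at the endpoints. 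Your $g_n$ plays the role of $\tf^{\tr}_n$, but without the truncation you have no way to define it up to and including the transition point.

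A minor point: you claim $g_n$ is real-analytic. The paper only obtains $C^\infty$ (Proposition~\ref{infinitz}, via Ursell-function decay) and explicitly remarks that analyticity near $u^*_n$ would require controlling the expansion for complex $u$, which its monotonicity arguments do not provide. This does not affect the theorem as stated, but your analyticity claim is unjustified.
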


\begin{rem}
We believe that the free energy is in fact analytic in $u$ on the domain where it is differentiable. While such a statement could in principle be directly deduced from the convergence of the cluster expansion,
it would require to be able to obtain a convergence result for complex values of $u$, more precisely for each $n$ one should prove convergence of the expansion
on an open subset of $\bbC$ which contains the real interval $(u^*_{n+1},u^*_{n})$.
\end{rem}

\begin{rem}
Note that the main result in \cite{cf:ADM} includes a statement about analyticity. While this is not explicitly stated in the proofs, it appears that 
the cluster expansion considered in \cite{cf:ADM} also converges when the parameter $u$ considered in  \cite[Equation (1.15)]{cf:ADM} is allowed to have a small imaginary part \cite{cf:Al}.
It seems plausible that with some (significant) efforts, our proof of Proposition \ref{truta} could be adapted to handle also small imaginary perturbation
of $u$, which would yield analyticity of $\bar \tf(\gb,u)$ in the intervals of the type $[200 J^{n+2}, \frac{1}{200} J^{n+1}]$, $n\ge 1$ and on $[200 J^2,\infty)$.
However
monotonicity plays a too central role in Proposition \ref{mejor} to extend this kind of argument.
For this reason the proof of analyticity in the neighborhood of $u^*_n$ appears like a more challenging task.
\end{rem}

To state our second  result about convergence for the measure $\bP^{n,h}_{\gL,\gb}$ we need to recall some terminology.

\begin{definition}\label{defgs}
An infinite volume measure or \textit{Gibbs state} for parameter $(\gb,h)$ is a measure $\nu_{\gb,h}$ on $(\bbZ_+)^{\bbZ^2}$
such that for any finite $\gL\subset \bbZ^2$, we have for $\nu_{\gb,h}$-almost all $\psi$,
\begin{equation}\label{DLR}
\nu_{\gb,h}\left[ \phi\restrict_{\gL}\in \cdot \ | \  \phi\restrict_{\gL^{\cc}}=\psi\restrict_{\gL^{\cc}} \right]=  \bP^{\psi,h}_{\gL,\gb}\left[ \phi \in \cdot \right].
\end{equation}
\end{definition}

It is not difficult to check that 
the relation \eqref{DLR}, often referred to as the Dobrushin-Lanford-Ruelle (DLR) Equation, is valid if one replaces $\nu_{\gb,h}$
by a measure  $\bP^{\Psi,h}_{\gb,\gL'}$ defined on a domain  $\gL'$ which includes $\gL$ and with arbitrary boundary condition $\Psi$.
As a consequence, the  measures obtained as local limits of  $\bP^{\psi,h}_{N,\gb}$ for $\psi\in \bbZ_+^{\bbZ^2}$ are Gibbs state. 

\medskip

For technical purpose we define $\star$-connectivity, as the connectivity associated with the network $\bbZ^d$ were diagonal edges of the type
$\{x,x+(1,1)\}$ have been added (but not the other diagonals).

\begin{definition}
A Gibbs state $\nu_{\gb,h}$ is said to :
\begin{itemize}
 \item [(i)] Be \textit{translation invariant} if under $\nu_{\gb,h}$, the distribution of $\phi$ and $\theta_z(\phi):= (\phi(z+x))_{x\in \bbZ^2}$
are the same.
 \item [(ii)] Have \textit{finite mean} if for all $x\in \bbZ^2$
\begin{equation}\label{finitemean}
 \nu_{\gb,h}\left[ \phi(x) \right]<\infty.
\end{equation}
 \item [(iii)] \textit{Percolate at level} $n$ if $\nu_{\gb,h}$ almost surely, $\phi^{-1}(n)$ has unique infinite connected component in $\bbZ^2$
and all connected component of $\phi^{-1}[n+1,\infty)$ and $\phi^{-1}(-\infty,n-1]$ are finite.
\end{itemize}
\end{definition}

Setting $h^*_n=h_w(\gb)+u^*_n$,
our second result  essentially  claims that translation invariant Gibbs states are unique for $h\in (h_w(\gb),\infty)\setminus \{h^*_n\}_{n\ge 1}$
and that multiple  translation invariant Gibbs  states coexist at the layering points $(h^*_n)_{n\ge 1}$.

\begin{theorem}\label{Gibbs}
For $\gb$ sufficiently large, the following holds true.
\begin{itemize}
\item[(i)]For  $h\le h_w(\gb)$, there exists  no Gibbs state for $(\gb,h)$. 
\item[(ii)] When $n\ge 0$, $h\in (h^*_{n-1},h^*_{n})$ then there exists a unique finite mean translation invariant Gibbs state
  which we call $\bP^{n,h}_{\gb}$.
  Moreover $\bP^{n,h}_{\gb}$ percolate at level $n$.

  \item [(iii)] When $h=h^*_n$, $n\ge 1$, then there exists several finite mean translation invariant Gibbs states.
In particular we can identify two extremal states $\bP^{n-1,h^*_n}_{\gb}$ and $\bP^{n,h^*_n}_{\gb}$ which satisfy:
  \begin{itemize}
  \item[(A)] 
    \begin{equation}\begin{split}\label{variousgibbs}
     \bP^{n,h^*_n}_{\gb}[\phi(x)=0]&= \partial^-_h\tf(\gb,h^*_n), \\
     \bP^{n-1,h^*_n}_{\gb}[\phi(x)=0]&= \partial^+_h\tf(\gb,h^*_n).
      \end{split}
  \end{equation}
  \item[(B)]  $\bP^{n-1,h^*_n}_{\gb}$ and $\bP^{n,h^*_n}_{\gb}$ respectively percolate at level $n-1$ and $n$.
  \item[(C)] We have $\bP^{n-1,h^*_n}_{\gb}\preccurlyeq  \bP^{n,h^*_n}_{\gb}$ . Any other finite mean translation invariant Gibbs state $\nu$
  for parameters $(\gb,h^*_n)$  satisfies
  \begin{equation}\label{ssanduba}
 \bP^{n-1,h^*_n}_{\gb}\preccurlyeq \nu \preccurlyeq \bP^{n,h^*_n}_{\gb}.
  \end{equation}
\end{itemize}

\end{itemize}

 \end{theorem}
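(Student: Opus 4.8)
\textbf{Proof proposal for Theorem \ref{Gibbs}.}

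The plan is to build all the relevant Gibbs states as local limits of the finite-volume measures $\bP^{n,h}_{N,\gb}$ and to exploit monotonicity (FKG and stochastic domination in the boundary condition) together with the already-established convexity/differentiability properties of $\tf$ from Theorem \ref{main}. First I would record the monotonicity toolkit: under $\bP^{\psi,h}_{\gL,\gb}$ the field $\phi$ enjoys the FKG inequality, and $\psi\preccurlyeq\psi'$ implies $\bP^{\psi,h}_{\gL,\gb}\preccurlyeq\bP^{\psi',h}_{\gL,\gb}$; moreover raising $\gL$ with constant boundary condition $n$ is monotone in $\gL$ in the appropriate direction (the measure with boundary $n$ on a larger box dominates the one obtained by conditioning $\phi\equiv n$ off the smaller box, etc.). This gives, for each fixed $n$, the existence of the local limit $\bP^{n,h}_{\gb}:=\lim_N \bP^{n,h}_{N,\gb}$ (monotone limits of local functions), and by the DLR-stability remark in the excerpt these limits are genuine Gibbs states; translation invariance follows from the exhaustion by squares plus a standard Cesàro/averaging argument, and finite mean from the tail bounds available in the localized regime (Theorem \ref{oldmain} and the cluster-expansion estimates behind Proposition \ref{truta}).

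For part (i): if $h\le h_w(\gb)$ then $\tf(\gb,h)=\tf(\gb)$ and $\partial_h\tf(\gb,h)=0$ wherever defined, so the asymptotic contact fraction vanishes; combined with the entropic-repulsion estimates (the interface is pushed to height $\to\infty$), any putative Gibbs state $\nu$ would have to be stochastically above $\bP^{n,h}_{\gb}$ for every $n$, and since $\bP^{n,h}_{\gb}[\phi(x)\ge k]\to 1$ as $n\to\infty$ for each fixed $k$, no finite (indeed no locally finite) measure can satisfy all these constraints — I would phrase this as the nonexistence of a consistent family, using that on a box the partition function $\cZ^{n,h}_{\gL,\gb}$ with the wall is comparable to the wall-free one up to boundary terms while the contact reward is negligible, forcing escape to infinity. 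For part (ii), on an interval $(h^*_{n-1},h^*_n)$ Theorem \ref{main}(i) gives differentiability of $\tf$, hence via \eqref{difencial} the contact fraction is a well-defined number; the key input is the structural statement (sketched after Theorem \ref{oldmain} and to be made rigorous by the cluster expansion, Proposition \ref{truta}) that in this interval the typical configuration is localized at height $n$ with small, exponentially-controlled excursions. I would then prove uniqueness by a sandwich: every translation-invariant finite-mean Gibbs state $\nu$ lies between the minimal and maximal ones, $\bP^{n,h,-}_{\gb}\preccurlyeq\nu\preccurlyeq\bP^{n,h,+}_{\gb}$ (limits with boundary conditions $\equiv 0$ and $\equiv M$, $M\to\infty$, suitably truncated), and then show these two coincide because their contact fractions both equal $\partial_h\tf(\gb,h)$ and a measure-theoretic lemma (equality of expectations of a monotone local function under stochastically ordered measures forces equality of the marginals of $\ind_{\{\phi(x)=0\}}$, which then propagates to all local functions by the cluster-expansion uniqueness at height $n$). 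Percolation at level $n$ is read off from the same expansion: $\phi^{-1}(n)$ dominates a highly supercritical site process while $\phi^{-1}[n+1,\infty)$ and $\phi^{-1}(-\infty,n-1]$ are dominated by subcritical ones, so Burton–Keane gives the unique infinite cluster.

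For part (iii), at $h=h^*_n$ Theorem \ref{main}(ii) gives $\partial^-_h\tf<\partial^+_h\tf$. I would construct $\bP^{n,h^*_n}_{\gb}$ as the decreasing-in-$\gL$ limit with boundary condition $\equiv n$ (equivalently the limit as $h\downarrow h^*_n$ of $\bP^{n,h}_{\gb}$, which exists by monotonicity in $h$) and $\bP^{n-1,h^*_n}_{\gb}$ as the limit as $h\uparrow h^*_n$ of $\bP^{n-1,h}_{\gb}$; left/right continuity of $h\mapsto\partial_h\tf$ at $h^*_n$ together with \eqref{difencial} yields \eqref{variousgibbs}, and since the two one-sided derivatives differ these are distinct translation-invariant finite-mean Gibbs states. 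The domination $\bP^{n-1,h^*_n}_{\gb}\preccurlyeq\bP^{n,h^*_n}_{\gb}$ and the sandwich \eqref{ssanduba} for an arbitrary translation-invariant finite-mean $\nu$ follow from: (a) $\nu$ stochastically dominates the minimal Gibbs state and is dominated by the maximal one at $h^*_n$, and (b) identifying the minimal and maximal ones with $\bP^{n-1,h^*_n}_{\gb}$ and $\bP^{n,h^*_n}_{\gb}$ respectively — this last identification is where I expect to spend the most effort: one must rule out that the extremal states sit at heights $<n-1$ or $>n$, which amounts to showing that at $h^*_n$ exactly the two phases $n-1$ and $n$ (and no others) are thermodynamically stable. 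The natural route is the contour/cluster-expansion bookkeeping underlying $F(\gb,u)$ in Theorem \ref{oldmain}: at $u^*_n$ the free-energy cost of sitting at level $n-1$ and at level $n$ coincide and are strictly below that of any other level, with a uniform gap; this forces the interface under any Gibbs state to be, off a finite region, flat at height $n-1$ or $n$. Part (B), percolation at the respective levels, then follows exactly as in (ii). The main obstacle, as in \cite{cf:ADM}, is that the perturbative expansion is delicate precisely near the critical points $u^*_n$; controlling it uniformly up to $h^*_n$ — so that the two competing "ground states" $n-1$ and $n$ are handled on equal footing and the excursions between them are shown to be finite — is the technical heart, and is presumably the content of Propositions \ref{truta} and \ref{mejor} already referenced in the excerpt.
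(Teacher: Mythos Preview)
Your broad strategy---build the candidate states as limits of $\bP^{n,h}_{N,\gb}$, use FKG/monotonicity, and match contact fractions via convexity of $\tf$---is exactly the paper's. Two steps, however, are genuine gaps as written.

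First, your ``maximal Gibbs state'' obtained by sending the boundary height $M\to\infty$ is not obviously well-defined: without a priori tightness you have no reason the limit is finite, and the cluster expansion only controls boundary conditions at the stable height. The paper avoids constructing a global maximal state. For the upper half of the sandwich \eqref{ssanduba} it instead samples $\hat\phi$ from the putative state $\nu$, runs the finite-volume measure with the \emph{random} boundary $\max(n,\hat\phi)$, takes a subsequential limit $\nu'$ in the compactification $\bbZ_+\cup\{\infty\}$, and shows $\nu'$ dominates both $\nu$ and $\bP^{n,h}_\gb$ while having contact fraction $\ge\partial^-_h\tf$; then $\nu'=\bP^{n,h}_\gb$, hence $\nu\preccurlyeq\bP^{n,h}_\gb$. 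Your ``rule out heights $<n-1$ or $>n$ by free-energy gap'' is not how the paper proceeds and would require a separate stability analysis of all levels simultaneously at $h^*_n$.

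Second, the step ``equal contact fractions under stochastic order forces equality of the measures'' needs a concrete mechanism. The paper's device is to apply DLR at a single site: $\bP[\phi(x)=0]=\bE\big[g\big((\phi(y))_{y\sim x}\big)\big]$ where $g$ is an explicit \emph{strictly} decreasing function of the neighbouring heights. Equality of this expectation for two stochastically ordered translation-invariant measures then forces the neighbour marginals to coincide, and one iterates. Your ``propagates by cluster-expansion uniqueness'' is circular, since the cluster expansion gives uniqueness only for the specific limit $\bP^{n,h}_\gb$, not for an arbitrary $\nu$.

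For part (i) your argument is also off: the measures $\bP^{n,h}_\gb$ are not defined for $h\le h_w(\gb)$ (the expansion does not converge there), so ``$\nu\succcurlyeq\bP^{n,h}_\gb$ for all $n$'' makes no sense. The paper simply shows that with zero boundary $\bP^{0,h}_{\gL,\gb}[\phi(x)\le K]\to 0$ for every $K$, which by monotonicity kills any Gibbs state.
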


 \begin{rem}
  We believe that there are no infinite mean translation invariant Gibbs state for $h>h_w(\gb)$ and that the finite mean assumption is present only for technical reasons.
  We would also tend to believe that in analogy with low temperature two dimensional Ising model  (see \cite{cf:A, cf:CoV, cf:H} for results and proofs)
  $\bP^{n-1,h^*_n}_{\gb}$ and $\bP^{n,h^*_n}_{\gb}$ are  in fact the only
  ergodic Gibbs states when $h=h^*_n$, but proving such a statement is out of the scope of this paper.
\end{rem}

Finally we conclude the exposition with a result showing that our Gibbs states exhibit exponential decay of correlation.

\begin{proposition}\label{Gibbset}
For $\gb$ sufficiently large, there exists constants $c$ and $C$ such that
for every $n\ge 0$ and any $h\in [h^*_{n+1},h^*_{n}],$ we have for any pair of local functions $f:\gO_{\infty}\to [0,1]$ and $g:\gO_{\infty}\to [0,1]$ with respective supports
$A$ and $B$ we have
\begin{equation}
 \left|\bE^{n,h}_{\gb}\left[ f(\phi)g(\phi) \right]-  \bE^{n,h}_{\gb}\left[ f(\phi)\right]\bE^{n,h}_{\gb}\left[g(\phi) \right] \right|\le C
 |A| e^{-c \gb d(A,B)},
\end{equation}
and thus in particular using the notation $\delta_x:=\ind_{\{\phi(x)=0\}}$.
\begin{equation}
| \bE^{n,h}_{\gb}\left[ \delta_x\delta_y \right]-\bE^{n,h}_{\gb}\left[ \delta_x\right]\bE^{n,h}_{\gb}\left[ \delta_y \right]|\le C e^{-c \gb |x-y|}.
\end{equation}

\end{proposition}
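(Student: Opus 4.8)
The plan is to derive Proposition \ref{Gibbset} as a consequence of the cluster expansion for the measures $\bP^{n,h}_{\gb}$, exactly the machinery developed in Section \ref{clusexp} and used to establish Theorems \ref{main} and \ref{Gibbs}. For $h\in[h^*_{n+1},h^*_n]$ the typical height of the interface is $n$, and I would represent the relevant partition functions as polymer partition functions whose polymers are $\star$-connected components of the ``defect set'' $\{x:\phi(x)\neq n\}$ (after the standard contour/excursion encoding), with activities that are exponentially small in $\gb$ times the size of the polymer, uniformly in the interval $[h^*_{n+1},h^*_n]$. The key input is a Kotecky--Preiss type convergence criterion for this polymer gas, which should already be available from the analysis underpinning Proposition \ref{truta}; granting it, the connected (Ursell/truncated) functions of the cluster expansion decay exponentially in the tree-distance spanned by their clusters.

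First I would set up, for a finite box $\gL_N\supset A\cup B$, the ratio
\begin{equation*}
\frac{\bE^{n,h}_{N,\gb}[f(\phi)g(\phi)]}{\bE^{n,h}_{N,\gb}[f(\phi)]\,\bE^{n,h}_{N,\gb}[g(\phi)]}
\end{equation*}
and expand numerator and denominator via the logarithm of the polymer partition function; after cancellation, the logarithm of this ratio is a sum over clusters of polymers that are ``pinned'' so as to simultaneously feel the constraints localized near $A$ and near $B$. Any such cluster must contain a connected chain of polymers joining a neighborhood of $A$ to a neighborhood of $B$, hence has total size $\gtrsim d(A,B)$; summing the cluster weights with the convergence bound gives a contribution bounded by $C|A|e^{-c\gb d(A,B)}$, and a final elementary step (using $|\log(1+t)|\le 2|t|$ for $t$ small, valid since the bound is $\le 1/2$ once $\gb$ is large and $d(A,B)$ is not too small, with the complementary range absorbed into $C$) converts this into the additive estimate in the statement. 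Passing to the limit $N\to\infty$ is justified by Theorem \ref{infinitevol}-type local convergence, which is part of the construction of $\bP^{n,h}_{\gb}$ in Theorem \ref{Gibbs}; the uniformity in $N$ of the cluster bounds is what makes the limit harmless. Taking $f=\delta_x$, $g=\delta_y$ (these are local with singleton supports) yields the second displayed inequality with $|A|=1$.

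The only delicate point specific to this proposition, as opposed to a textbook high-temperature expansion, is uniformity of the activity bounds and of the convergence criterion over the \emph{closed} interval $[h^*_{n+1},h^*_n]$, including its right endpoint $h^*_n$ where two Gibbs states coexist: there the state $\bP^{n,h^*_n}_{\gb}$ is the one localized at height $n$ (by \eqref{variousgibbs} and part (B) of Theorem \ref{Gibbs}), and its cluster expansion around the height-$n$ ground state remains convergent because, at $h^*_n$, it is the ``$\ge$''-extremal state obtained as a limit from $h<h^*_n$; the relevant activities are upper semicontinuous up to $h^*_n$. I expect this endpoint uniformity — making sure the constants $c,C$ do not degenerate as $h\uparrow h^*_n$ or as $n\to\infty$ — to be the main obstacle, and it is handled by the bounds $\tfrac1{200}J^{n+2}\le u^*_n\le 200 J^{n+2}$ from Theorem \ref{main}, which keep the excess reward $h-h^*_{n+1}\le u^*_n-u^*_{n+1}$ of order $J^{n+2}$, hence subleading compared to the $e^{-\gb}$-scale of the polymer activities, so that the Kotecky--Preiss parameter can be chosen once and for all. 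Everything else is the routine bookkeeping of cluster expansions and I would not reproduce it in detail here, referring instead to the estimates established in Section \ref{clusexp}.
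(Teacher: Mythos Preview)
Your sketch has the right core idea (Kotecky--Preiss convergence $\Rightarrow$ exponential decay of truncated correlations) but differs from the paper's actual argument and contains two imprecisions worth flagging.

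\medskip

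\textbf{How the paper does it.} The paper's polymers are not $\star$-connected components of $\{\phi\ne n\}$; they are the signed level-line contours with the weights $w^h_n(\gamma)$ from \eqref{defw}. The crucial input is Theorem \ref{converteo}, which gives $w^h_n(\gamma)\le e^{-(\gb-1)|\tilde\gamma|}$ on the \emph{closed} interval $[u^*_{n+1},u^*_n]$, so \eqref{usualchoice} holds and Proposition \ref{propinfi} applies to the contour measure $\bbP^{w^h_n}_{\gL}$. The translation from the contour correlation decay to the decay for $\bE^{n,h}_{\gL,\gb}$ is not done by expanding a ratio as you propose, but by the conditioning argument of Proposition \ref{conviark}: conditioned on the set of external contours $\Upsilon^{\ext}_n(\phi)$, the field inside each external contour is independent with law $\bP^{n,u}_{\gamma}$, so whenever no external contour meets both $A$ and $B$ one has $\bE[fg\,|\,\Upsilon^{\ext}]=\bE[f\,|\,\Upsilon^{\ext}]\bE[g\,|\,\Upsilon^{\ext}]$. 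The covariance then splits into (i) the probability that some external contour touches both $A$ and $B$ (a Peierls bound, \eqref{pignouf}) and (ii) the covariance of the two conditional expectations under the contour law, which is exactly \eqref{decayze}. Passing $N\to\infty$ uses \eqref{cloxf}. This is cleaner than your ratio approach precisely because it makes explicit how a function of $\phi$ becomes a local function of $\Upsilon$; in your write-up this step is missing, and without it the ``clusters must span $A$ to $B$'' heuristic does not directly apply.

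\medskip

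\textbf{Two corrections.} First, Proposition \ref{truta} alone only gives stability on the strict interior $[u^+_{n+1},u^-_n]$; to include the endpoints $u^*_n$ you need the full Theorem \ref{converteo} (in particular Proposition \ref{mejor}). Second, your discussion of endpoint uniformity via upper semicontinuity and the size of $u^*_n-u^*_{n+1}$ is beside the point: the uniformity on $[h^*_{n+1},h^*_n]$ and in $n$ is an immediate consequence of the single bound $w^h_n(\gamma)\le e^{-(\gb-1)|\tilde\gamma|}$ from Theorem \ref{converteo}, with no perturbative argument needed at $h^*_n$.
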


\subsection{Organization of the proof}

All the results exposed in the previous  section are going to be derived as consequences of the 
convergence of a cluster expansion associated with a certain contour representation of the partition functions (see the introduction of \cite{cf:BFP} and mentioned references 
for a review of cluster expansion techniques). 
To prove the convergence, we need to obtain very fine estimates on finite size partition functions: these are obtained by combining various ingredients such as
asymptotic properties of the SOS model (Proposition \ref{rouxrou}),

Therefore our first task is to introduce the necessary framework for the exposition of this result.
This is the purpose of Section \ref{tools}, in which we introduce various technical tools, including contour representations, 
cluster expansion methods, and FKG inequalities.

\medskip

In Section \ref{zorga}, we introduce the main technical result of the paper Theorem \ref{converteo}, which implies the convergence of a cluster expansion associated with the measures 
$\bP^{n,h}_{\gL,\gb}$
and give the main steps of its proof.
We also explain how Theorem \ref{main} can be deduced from this convergence result.

\medskip

In Section \ref{ponverteo}, which is the technical core of the paper, we perform the proof of Theorem \ref{converteo} in full details.
For better readability, the proof of the more technical estimates presented in Section \ref{ponverteo} are performed separately, in Section \ref{secrestrict}.

\medskip
In Section \ref{dalast}, we explore the consequences of Theorem \ref{converteo} on the measure $\bP^h_{\gL,\gb}$  and prove in particular Proposition \ref{Gibbset}.
Finally in Section \ref{daverylast} we prove the remaining statements of Theorem \ref{Gibbs}.

\section{Technical preliminaries}\label{tools}

\subsection{Contour representation}\label{contour}

We recall briefly how to describe a function $\phi\in \gO_{\gL}$ using only its level lines.
The formalism of this section is identical to the one used in \cite{cf:part1}, and close to the one displayed in e.g.\
\cite{cf:ADM, cf:CM, cf:PLMST}.

\medskip

We let $(\bbZ^2)^*$ denote the dual lattice of $\bbZ^2$ (dual edges cross that of $\bbZ^2$ orthogonally in their midpoints).
Two adjacent edges  $(\bbZ^2)^*$ meeting at $x^*$ of are said to be \textit{linked} if they both 
lie on the same side of the line making an angle $\pi/4$ with the horizontal 
and passing through $x$. 
(see Figure \ref{linked}).

We define a \textit{contour sequence} to be  a finite sequence $(e_1,\dots,e_n)$ of distinct edges of $(\bbZ^2)^*$ which satisfies:
\begin{itemize}
 \item [(i)] For any $i=\lint 1,n-1\rint$, $e_i$ and $e_{i+1}$ have a common end point $(\bbZ^2)^*$, $e_1$ and $e_{|\gamma|}$ also have a common end point.
 \item [(ii)] If for $i\ne j$, if $e_i$, $e_{i+1}$, $e_j$ and $e_{j+1}$ meet at a common end point then  $e_i$, $e_{i+1}$ are linked and so are $e_j$ and $e_{j+1}$ (with the convention that $n+1=1$).
\end{itemize}
A \textit{geometric contour} $\tilde \gamma:=\{e_1,\dots,e_{|\tilde \gamma|}\}$ is a set of edges that forms a contour sequence when displayed in the right order.
The cardinality $|\tilde \gamma|$ of $\tilde\gamma$ is called the length of the contour. 
A \textit{signed contour} or simply \textit{contour} $\gamma=(\tilde \gamma,\gep)$ 
is a pair composed of a geometric contour and a sign $\gep\in \{+1,-1\}$. We let $\gep(\gamma)$ denote the sign associated with a contour $\gamma$, 
while with a small abuse of notation, 
$\tilde \gamma$ will be used for the geometric contour associated to $\gamma$ when needed.
For $x^*\in (\bbZ^2)^*$ we write $x^*\in \gamma$ or $x^*\in \tilde \gamma$ when the point $x^*$ is visited by one edge of the geometric contour.

 \begin{figure}[ht]
\begin{center}
\leavevmode
\epsfysize = 3 cm
\epsfbox{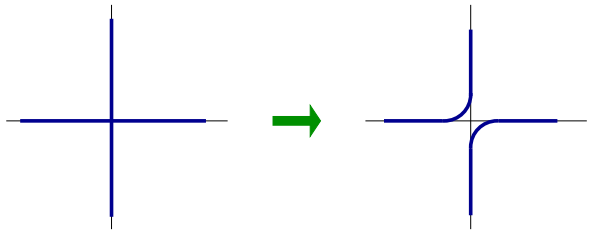}
\end{center}
\caption{\label{linked} 
The rule for splitting a four edges meeting at one points into two pairs of linked edges.
To obtain the set of contours that separates $\{x \ : \ \phi(x)\ge h\}$ from   $\{x \ : \ \phi(x)< h\}$ for $h\in \bbZ$,
we draw all dual edges separating two sites $x$, $y$ such that $\phi(x)\ge h>\phi(y)$ and apply the above graphic rule for every dual vertex where four edges meet.
When several sets of level lines include the same contour, it corresponds to a cylinder of intensity $2$ or more for $\phi$.}
\end{figure}

We let $\bar \gamma$ denote the set of vertices of $\bbZ^2$ enclosed by $\tilde \gamma$.
We refer to $\bar \gamma$ as the \textsl{interior of} $\gamma$ and say that $|\bar \gamma|$ is the volume enclosed in the contour $\gamma$.
We let $\Delta_{\gamma}$, the neighborhood of $\gamma$, be the set of vertices of $\bbZ^2$ located either at a (Euclidean) distance $1/2$ from 
$\tilde\gamma$ 
(when considered as a subset of $\bbR^2$) 
or at a distance $1/\sqrt{2}$ from the meeting point of two non-linked edges.
We split the $\Delta_{\gamma}$ into two disjoint sets, the internal and the external neighborhoods of $\gamma$ 
(see Figure \ref{compa})
$$\Delta^-_{\gamma}:=\Delta_{\gamma}\cap \bar \gamma \quad \text{ and } \quad  \Delta^+_{\gamma}:=\Delta_{\gamma}\cap \bar \gamma^{\cc}.$$

\medskip

Given a finite set $\gL\subset \bbZ^2$ 
a contour $\gamma$ is said to be in $\gL$ is if $\bar \gamma\subset \gL$.
We let $\cC$ denote the set of contours in $\bbZ^2$ and $\cC_{\gL}$ that of contours in $\gL$.

\medskip

Given $\phi\in \gO_{\gL}$, we say that $\gamma \in \cC_{\gL}$  is a contour
for $\phi$ with boundary condition $n$, if there exists $k\ge 1$ such that 
\begin{equation}\label{cont+}
\min_{x\in \Delta^-_{\gamma}} \phi(x)= \max_{x\in \Delta^+_{\gamma}} \phi(x)+k\gep(\gamma).
\end{equation}
where in the above equation by convention we consider  that
$$\phi(x)=n \quad \text{ if } \quad x\in \gL^{\cc}.$$

The quantity $k$ appearing in \eqref{cont+} is called the \textit{intensity} of the contour and the
triplet $(\gamma,k)=(\tilde \gamma, \gep(\gamma), k)$
with $\gamma \in \cC$ and $k\in \bbN$ an intensity, is called a \textit{cylinder}. We say that  $(\gamma,k)$
is a cylinder for $\phi$ (with boundary condition $n$) if $\gamma$ is a contour of intensity $k$.
The cylinder function associated to $(\gamma,k)$ is defined on $\bbZ^2$ by
\begin{equation}\label{cylfunc}
\varphi_{(\gamma,k)}(x)=\gep(\gamma) k \ind_{\bar \gamma}(x).
\end{equation}
We use $\hat \gamma$ to denote a generic cylinder associated with the contour $\gamma$ (we use the notation $k(\hat \gamma)$ to denote its intensity).
We let $\hat \Upsilon_n(\phi)$ denote the set of cylinders for $\phi$ with boundary condition $n$ and $\Upsilon_n(\phi)$ the associated set of contours.\\

\medskip

We say that $\gL$ is a \textit{simply connected} subset of $\bbZ^2$, if it can be expressed as the interior of a contour,
that is, if
\begin{equation} \label{connectedness}
 \exists \gamma_{\gL}\in \cC, \quad \bar\gamma_{\gL}=\gL.
\end{equation}
Note that, when $\gL$ is simply connected, 
an element
$\phi\in \gO_{\gL}$ is uniquely characterized by its cylinders. More precisely, we have  
\begin{equation}\label{cylinder}
\forall x\in \gL, \quad \phi(x):= n+\sum_{\hat \gamma \in \hat \Upsilon_n(\phi)}  \varphi_{\hat \gamma}(x).
\end{equation}
Furthermore, the reader can check that
\begin{equation}\label{express}
 \cH^n_{\gL}(\phi)=\sum_{\hat \gamma \in \hat \Upsilon_n(\phi)} k(\hat \gamma)|\tilde \gamma|.
\end{equation}
Of course not every set of cylinder is of the form $\hat \Upsilon_n(\phi)$ and we must introduce a notion of compatibility
which characterizes the ``right" sets of cylinder.

\medskip

Two cylinders $\hat \gamma$ and  $\hat \gamma'$ are said to be \textsl{compatible} if they are
cylinders for the function $\varphi_{\hat \gamma}+\varphi_{\hat \gamma'}$.
This is equivalent to the three following conditions  being satisfied :
 (see Figure \ref{compa})
\begin{itemize}
 \item [(i)] $\tilde \gamma\ne \tilde\gamma'$ and $\bar \gamma \cap  \bar \gamma' \in\{\emptyset,\bar \gamma, \bar \gamma'\}$.
  \item [(ii)]  If $\gep= \gep'$ and $\bar\gamma\cap \bar \gamma'= \emptyset$, then
 then $\bar \gamma'\cap \Delta^+_{\gamma}=\emptyset$ .
 \item [(iii)] If $\gep\ne \gep'$ and $\bar \gamma'\subset \bar \gamma$ (resp.\ $\bar \gamma\subset \bar \gamma'$) then 
 $\bar \gamma'\cap \Delta^-_{\gamma}=\emptyset$  (resp. $\bar \gamma\cap \Delta^-_{\gamma'}=\emptyset$).
\end{itemize}
This first condition simply states that compatible contours do not cross each-other.
The conditions $\bar \gamma'\cap \Delta^+_{\gamma}=\emptyset$ and  $\bar \gamma'\cap \Delta^-_{\gamma}=\emptyset$  in  $(ii)$ and $(iii)$  
can be reformulated as: $\tilde \gamma$ and $\tilde \gamma'$ do not share edges, and if both $\tilde \gamma$ and $\tilde \gamma'$ possess
two edges adjacent to one vertex 
$x^*\in (\bbZ^2)^*$ 
 then the two edges in $\gamma$ are linked and so are those in $\gamma'$.
 
 \medskip
 
  \begin{figure}[ht]
\begin{center}
\leavevmode
\epsfxsize =5 cm
\epsfbox{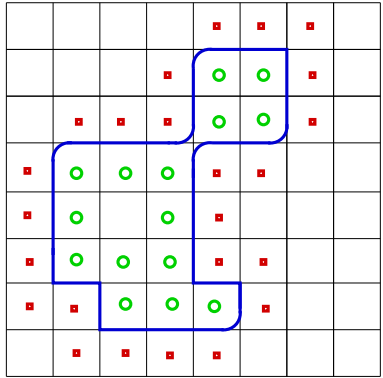}
\end{center}
\caption{\label{compa} 
A contour $\gamma$ represented with its internal (circles) and external (squares) neighborhood.
To be compatible with $\gamma$, a contour $\gamma'$ of the same sign such that $\bar \gamma'\cap \bar \gamma= \emptyset$ cannot enclose any squares.
A compatible contour of opposite sign enclosed in $\gamma$ (such that $\bar \gamma'\subset \bar \gamma$) cannot enclose any circles.
}
\end{figure}

Note that the compatibility of two cylinders does not depend on their respective intensity, so that the notion can naturally be extended to signed contours:
The contours  $\gamma$ and $\gamma'$ are said to be compatible (we write $\gamma \mid \gamma'$) if the cylinders $(\gamma,1)$ and $(\gamma',1)$ are.
Two distinct non-compatible contours are said to be \textit{connected} (we write $\gamma \perp \gamma'$).

\medskip

If $\bC_1$ and $\bC_2$ are two finite collections of contours (compatible or not) we say that $\bC_1$ is \textit{compatible with} $\bC_2$ and write 
$\bC_1 \mid \bC_2$ if
\begin{equation}\label{compatiblesets}
 \forall \gamma_1 \in \bC_1, \forall \gamma_2\in \bC_2, \  \gamma_1 \mid \gamma_2. 
\end{equation}
If \eqref{compatiblesets} does not holds we say that $\bC_1$ and $\bC_2$ are connected and  write  $\bC_1 \perp \bC_2$.
For a contour $\gamma$ and a collection $\bC$ with use the notation $\gamma \perp \bC$ and $\gamma \mid \bC$ for 
 $\{\gamma\} \perp \bC$ and $\{\gamma\} \mid \bC$

\medskip

A (finite or countable) collection of cylinders (or of signed contours) 
is said to be a compatible collection if its elements are pairwise compatible (see Figure \ref{grid}).
The reader can check by inspection that the following result holds. In 
particular it establishes that the set of compatible collections of cylinders is in bijection with $\gO_{\gL}$
(simple connectivity is required to avoid having level lines enclosing holes).

\begin{lemma}\label{toott}
If $\gL$ is simply connected, then for any $\phi\in \gO_{\gL}$, $\hat \Upsilon_n(\phi)$ is a compatible collection of cylinders
and reciprocally, if $\hat \gG \subset \hat \cC_{\gL}$ is a compatible collection of cylinder in $\gL$
then its elements are the cylinders of the function 
$\sum_{\hat \gamma\in \hat \gG} \varphi_{\hat \gamma}.$
\end{lemma}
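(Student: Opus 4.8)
The plan is to prove Lemma \ref{toott} by verifying the two directions directly from the definitions of contour, cylinder, and compatibility, using the reconstruction formula \eqref{cylinder} as the organizing principle.

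\textbf{Direction 1: $\hat\Upsilon_n(\phi)$ is compatible.}
First I would fix $\phi\in\gO_\gL$ and show that any two cylinders $\hat\gamma,\hat\gamma'\in\hat\Upsilon_n(\phi)$ are pairwise compatible. By definition, compatibility of $\hat\gamma$ and $\hat\gamma'$ means that both are cylinders for $\varphi_{\hat\gamma}+\varphi_{\hat\gamma'}$; since they are cylinders for $\phi$ and $\phi$ locally agrees (up to an additive constant on each relevant neighborhood) with $\varphi_{\hat\gamma}+\varphi_{\hat\gamma'}$ in the vicinity of $\tilde\gamma\cup\tilde\gamma'$, this reduces to a local geometric check. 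Concretely, I would argue that \eqref{cont+} forces the three conditions (i)--(iii) listed after the definition of compatibility: condition (i) (contours do not cross, nesting is well-defined) follows because a crossing would make $\phi$ ill-defined at a vertex where the level-line structure is inconsistent; conditions (ii) and (iii) (no shared edges, and the linking rule at common dual vertices) follow from the graphic splitting rule of Figure \ref{linked}, which is exactly what was used to \emph{define} the contours of $\phi$ at four-edge meeting points. This is a ``check by inspection'' in the sense the paper indicates, and I would present it as a case analysis over the possible local configurations at a dual vertex.

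\textbf{Direction 2: every compatible collection arises from a function.}
For the converse, given a compatible collection $\hat\gG\subset\hat\cC_\gL$, I would define $\phi:=\sum_{\hat\gamma\in\hat\gG}\varphi_{\hat\gamma}$ (a finite sum since $\hat\gG\subset\hat\cC_\gL$ and $\gL$ is finite, using that each $\varphi_{\hat\gamma}$ is supported on $\bar\gamma\subset\gL$), so $\phi\in\gO_\gL$ with boundary value $n$ off $\gL$, and then show $\hat\Upsilon_n(\phi)=\hat\gG$. The inclusion $\hat\gG\subseteq\hat\Upsilon_n(\phi)$: for a fixed $\hat\gamma_0\in\hat\gG$, I would evaluate $\phi$ on $\Delta^-_{\gamma_0}$ and $\Delta^+_{\gamma_0}$; every \emph{other} $\hat\gamma\in\hat\gG$, being compatible with $\hat\gamma_0$, contributes the same constant to $\varphi_{\hat\gamma}$ on all of $\Delta^-_{\gamma_0}\cup\Delta^+_{\gamma_0}$ (this is precisely the content of conditions (i)--(iii): a compatible contour either contains $\bar\gamma_0$, is contained in $\Delta^-_{\gamma_0}{}^{\cc}\cap\bar\gamma_0$... or is disjoint from $\Delta_{\gamma_0}$), so the jump \eqref{cont+} across $\gamma_0$ is governed solely by $\varphi_{\hat\gamma_0}$, giving intensity $k(\hat\gamma_0)$ and sign $\gep(\gamma_0)$. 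The reverse inclusion $\hat\Upsilon_n(\phi)\subseteq\hat\gG$: if $\gamma$ is a contour of $\phi$ but $\gamma\notin\gG$, one analyzes the edge of $\tilde\gamma$ and derives a contradiction with the definition of the contours of $\phi$ as the level-line splitting — equivalently, one counts, for each dual edge, how many cylinders of $\hat\gG$ cross it and checks this matches the level-line count for $\phi$, which pins down $\hat\Upsilon_n(\phi)$ uniquely.

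\textbf{Main obstacle.}
The genuinely delicate point is the four-edge meeting vertices, where the linking rule of Figure \ref{linked} is what makes the correspondence a bijection rather than merely a surjection: I would need to check carefully that the compatibility conditions (ii)--(iii), specifically the ``linked edges'' clause, are exactly equivalent to saying that the splitting prescribed by the graphic rule is respected, so that two distinct compatible collections cannot produce the same $\phi$ and $\hat\Upsilon_n(\phi)$ cannot be ambiguous at such vertices. Everything else is bookkeeping with \eqref{cont+} and \eqref{cylinder}; the fact that $\gL$ is simply connected is used only to rule out level lines enclosing holes, ensuring \eqref{cylinder} reconstructs $\phi$ on all of $\gL$ from its cylinders. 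I expect to spend most of the write-up on a clean statement of the local configurations at dual vertices and a short case check, citing Figures \ref{linked} and \ref{compa} for the geometry.
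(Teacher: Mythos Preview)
The paper does not prove this lemma at all; it introduces the statement with ``The reader can check by inspection that the following result holds'' and moves on. Your proposal is exactly the kind of case-by-case verification from the definitions that this phrase invites, and the outline is sound: compatibility of the cylinders of $\phi$ follows from the local level-line structure and the splitting rule of Figure~\ref{linked}, and the converse is a reconstruction argument using \eqref{cont+} on each $\Delta^\pm_{\gamma_0}$ together with the observation that compatible contours contribute constants across these neighborhoods. Your identification of the four-edge meeting points as the place where the linking rule must be matched to conditions (ii)--(iii) is correct and is indeed the only nontrivial local check.
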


Using \eqref{express} and the contour representation above, we can rewrite the partition function $\cZ_{\gL,\gb}$ in a new form.
We let $\cK(\gL)$ and $\hat \cK(\gL)$ denote the set of compatible collections of contour and cylinders in $\gL$.
We have 
\begin{equation}
 \cZ_{\gL,\gb}= \sum_{\hat \gG \in \hat \cK(\gL)} \prod_{\hat \gamma \in \hat \gG} e^{-k(\hat \gamma)\gb|\tilde \gamma|}.
\end{equation}
Summing over all the possible intensities, we obtain
\begin{equation}\label{contourdecomp}
 \cZ_{\gL,\gb}=\sum_{\gG \in  \cK(\gL)} \prod_{\gamma \in  \gG} \frac{1}{e^{\gb|\tilde \gamma|}-1}.
\end{equation}
This last representation of the partition function is suitable to apply the cluster expansion techniques which we introduce in the next section.

 \begin{figure}[ht]
\begin{center}
\leavevmode
\epsfysize = 7 cm
\epsfbox{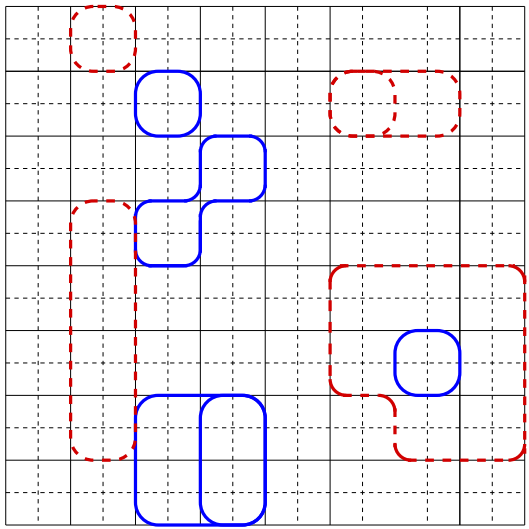}
\end{center}
\caption{\label{grid} 
A compatible collection of contour on the dual lattice (the primal lattice is displayed is dotted lines).
Contours of different signs are displayed in different colors (red-dotted/blue-solid).
The primal lattice is represented in dotted line. 
}
\end{figure}

\medskip

We end this section by introducing a notion which will be of fundamental use in our proofs, and a few notation.
Given $\gG$ a compatible collection of contour and $\gamma \in \gG$, we say that $\gamma$ is an \textit{external contour} in $\gG$
if $\bar \gamma$ is maximal in $\gG$ for the inclusion, that 
is 
\begin{equation}\label{defexternal}
 \forall \gamma'\in \gG, \quad \bar \gamma' \subset \bar\gamma \text{ or } \bar \gamma' \cap \bar \gamma= \emptyset.
\end{equation}

We say that $\gG$ is a \textit{compatible collection of external contours} if it is a compatible collection and  every contour of $\gG$ is external in $\gG$.
Given $\bL$ a finite set of contour, we let $\cK(\bL)$ denote the set of compatible collections of contours included in $\bL$ and
$\cK_{\ext}(\bL)$ denote the set of compatible collection of external contours
(we use the notation $\cK_{\ext}(\gL)$ for $\bL=\cC_{\gL}$).
Given $\phi\in \gO_{\gL}$, we define 
\begin{equation}\label{defuext}
 \Upsilon^{\ext}_n(\phi):=\{ \gamma \in \Upsilon_n(\phi) \ : \ \gamma \text{ is external in }   \Upsilon_n(\phi) \}.
\end{equation}
Obviously $ \Upsilon^{\ext}_n(\phi)\in \cK_{\ext}(\gL)$.
We say that two contours $\gamma_1$ and $\gamma_2$ are \textit{externally compatible}  if they are compatible and 
$\bar \gamma_1\cap \bar  \gamma_2=\emptyset$. We use the notation $\gamma_1\parallel \gamma_2$.
We say that two collections
 $\gG_1,\ \gG_2  \in \cK_{\ext}(\bL)$ are  externally compatible if
 \begin{equation}\label{extcompatibl}
 \forall \gamma_1\in \gG_1, \  \forall \gamma_2\in \gG_2, \ \gamma_1 \parallel \gamma_2,
 \end{equation}
 or equivalently if $\gG_1\cap \gG_2=\emptyset$ and 
  $\gG_1\cup \gG_2\in \cK_{\ext}(\bL).$
 We also use the notation $\gG_1 \parallel   \gG_2$ for external compatibility between contour collections, and also $\gamma \parallel \gG$ for $\{ \gamma \} \parallel \gG$.

\subsection{Cluster expansion}\label{clusexp}
 
Partition functions which can be written as a sum over collections of compatible geometric objects such as \eqref{contourdecomp}
appears in a variety of situation in statistical mechanics. A powerful method called \textit{cluster expansion} has been engineered to analyze the associated systems in the 
low temperature regime (that corresponds to large $\gb$) .
We introduce it here as it appears in \cite{cf:KP}, with a set of  notation adapted to our context.

\medskip
 
Recall that  $\cC$ is the set of contours in $\bbZ^2$ and let  $w: \cC \to \bbR_+$ be an arbitrary function (in full generality $w$ could assume complex values, cf. \cite{cf:KP}).
Recall that $\cK(\bL)$ denote the set of compatible collections of  contours in $\bL$.
Given a finite subset ${\bL}$ of $\cC$, the partition function associated to $w$ and $\bL$,  $Z[\bL,w]$ is defined 
by
\begin{equation}\label{weightpar}
 Z[\bL,w]:= \sum_{\gG\in \cK(\bL)} \prod_{\gamma \in \gG} w(\gamma).
\end{equation}
For $\gL$ a subset of $\bbZ^2$ we write  $Z[\gL,w]$ for $Z[\cC_{\gL},w]$.

\medskip

We consider consider also $\bbP^w_{\bL}$ the probability measure on $\cK(\bL)$ corresponding to $Z[\bL,w]$ and call  $\Upsilon$ the associated random variable.
The distribution $\bbP^w_{\bL}$ has its support in $\cK(\bL)$ and we have for $\gG\in \cK(\bL)$,
\begin{equation}\label{pwl}
 \bbP^w_{\bL}(\Upsilon=\gG):=\frac{1}{Z[\bL,w]}\prod_{\gamma\in \gG}w(\gamma), 
\end{equation}
We use the notation $\bbP^w_{\gL}$ when $\bL:=\cC_{\gL}$ for $\gL$ a finite subset of $\bbZ$.

\medskip

\begin{rem}\label{teorema}
Going back to   \eqref{contourdecomp}, the reader can check that  the distribution of $\Upsilon(\phi)$ under 
$\bP_{\gL,\gb}$, $\gL$ simply connected, is given by $\bP^{w_{\gb}}_\gL$ where
\begin{equation}\label{specialweight}
 w_{\gb}(\gamma):=\frac{1}{e^{\gb|\tilde \gamma|}-1}.
\end{equation}
\end{rem}
\subsubsection{The key result}

The starting point of cluster expansion is the observe that the $\log$ of a partition function  
can be expressed as sum over geometric objects called \textit{clusters}.
A cluster of contour $\bC$ in $\bL$ is a finite non-empty subset of $\bL$ which cannot be split into two compatible parts (recall \eqref{compatiblesets})
or more formally which satisfies 
\begin{equation}\label{noncluster}
\forall \bB \subset \bC, \ \bB \perp (\bC \setminus \bB).
\end{equation}

We let $\cQ(\bL)$ denote the set of clusters in $\bL$ and $\cQ$ the set of all clusters (finite subsets of $\cC$).
The starting point of cluster expansion is the observation that $\log  Z[\bL,w]$ can be written as a sum over clusters
\begin{equation}\label{expansion}
 \log Z[\bL,w]:= \sum_{\bC \in \cQ(\bL)}  w^T(\bC),
\end{equation}
 where the modified weights $w^T$ are given by 
\begin{equation}\label{modifi}
w^T(\bC):= \sum_{\bB\in \cP(\bC)} (-1)^{|\bB|+|\bC|} \log Z[\bB,w],
\end{equation}
where $\cP$ stands for the set of parts. In fact \eqref{expansion} is almost immediate if we consider the sum over all subsets of $\bL$. Then
one can check that $w^T(\bC)=0$ if $\bC$ is not a cluster (we refer to the first lines in \cite[Section 3]{cf:KP} for full details).

\medskip

The reason why the expansion \eqref{expansion} is relevant is that if the original weights $w$ are small in a certain sense, and in particular decay exponentially with the 
length of the contours, then the modified weights $w^T(\bC)$ are also small and decay exponentially fast with the \textit{total length of the cluster}  $L(\bC)$, 
defined as follows
\begin{equation}
 L(\bC)=\sum_{\gamma\in \bC} |\tilde \gamma|.
\end{equation}
The powerful estimate displayed below is the main result of \cite{cf:KP}.

\begin{theorema}\label{superclust}
If there exists two functions $a$ and $d$, $\cC\to \bbR_+$, such that for every $\gamma$ 
 \begin{equation}\label{condit}
  \sum_{\{\gamma'\in \cC \ : \ \gamma' \perp \gamma\}} e^{a(\gamma')+d(\gamma')}w(\gamma')\le a(\gamma)
 \end{equation}
then 
 \begin{equation}\label{dabound}
 \sum_{\{ \bC\in \cQ \ : \  \bC  \perp \gamma \}} |w^T(\bC)|\exp\left(\sum_{\gamma'\in\bC}d(\gamma')\right) \le a(\gamma).
  \end{equation}
\end{theorema}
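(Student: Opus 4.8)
\textbf{Proof plan for Theorem~\ref{superclust} (the Koteck\'y--Preiss estimate).}
The plan is to follow the classical inductive argument of Koteck\'y and Preiss, organized so that one controls, simultaneously for every contour $\gamma$, the truncated sum on the left-hand side of \eqref{dabound} restricted to clusters of bounded cardinality, and then passes to the limit. First I would fix $\gamma\in\cC$ and, for each integer $m\ge 1$, define the partial sum
\[
S_m(\gamma):=\sum_{\substack{\bC\in\cQ \,:\, \bC\perp\gamma \\ |\bC|\le m}} |w^T(\bC)|\,\exp\Big(\sum_{\gamma'\in\bC}d(\gamma')\Big),
\]
and aim to prove by induction on $m$ that $S_m(\gamma)\le a(\gamma)$ for all $\gamma$; the bound \eqref{dabound} then follows by monotone convergence once we also know $S_\infty(\gamma)<\infty$, which the induction delivers as well since the bound is uniform in $m$. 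The base case $m=1$ is the observation that $w^T(\{\gamma'\})=\log Z[\{\gamma'\},w]=\log(1+w(\gamma'))\le w(\gamma')$, so $S_1(\gamma)\le\sum_{\gamma'\perp\gamma}e^{d(\gamma')}w(\gamma')\le\sum_{\gamma'\perp\gamma}e^{a(\gamma')+d(\gamma')}w(\gamma')\le a(\gamma)$ by hypothesis \eqref{condit}.

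For the inductive step the key tool is the combinatorial identity for the tree-graph / Penrose expansion of $w^T(\bC)$: one writes $|w^T(\bC)|$ as a sum over spanning trees (or connected spanning subgraphs) of the incompatibility graph on $\bC$, and then, picking a root contour, peels off one pendant subtree at a time. Concretely, one uses the standard recursion showing that
\[
\sum_{\substack{\bC\ni\gamma_0,\ |\bC|=k}} |w^T(\bC)|\,\exp\Big(\sum_{\gamma'\in\bC}d(\gamma')\Big)
\le w(\gamma_0)\,e^{d(\gamma_0)}\,\frac{1}{(j)!}\Big(\sum_{\gamma'\perp\gamma_0} e^{a(\gamma')+d(\gamma')}w(\gamma')\Big)^{j}\cdot(\text{combinatorial factors}),
\]
and then sums over $j$ to reconstruct an exponential. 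The cleanest route is actually the one in \cite{cf:KP}: introduce the auxiliary quantity $\Phi(\gamma):=\sum_{\bC\perp\gamma}|w^T(\bC)|\exp(\sum_{\gamma'\in\bC}d(\gamma'))$ (with the cardinality cutoff $m$ in place), and show the self-improving inequality
\[
S_m(\gamma)\ \le\ \sum_{\gamma'\perp\gamma} e^{a(\gamma')+d(\gamma')}\,w(\gamma')\ \le\ a(\gamma),
\]
where the first inequality is obtained by expanding each cluster $\bC\perp\gamma$ around a contour $\gamma'\in\bC$ incompatible with $\gamma$, detaching it, and applying the inductive hypothesis $S_{m-1}(\gamma')\le a(\gamma')$ to bound the "rest of the cluster" hanging off $\gamma'$ by $e^{a(\gamma')}$ after summing the series $\sum_{r\ge 0}\frac{1}{r!}(\,\cdot\,)^r$. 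This is precisely where the factor $e^{a(\gamma')}$ in the hypothesis \eqref{condit} is consumed.

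The main obstacle — and the only genuinely delicate point — is the bookkeeping in that detachment step: when one removes a pendant contour $\gamma'$ from a cluster $\bC$, the remaining pieces need not form a single cluster, and one must be careful that (a) each resulting piece is again incompatible with $\gamma'$, so the inductive hypothesis applies, and (b) the symmetry factors $1/r!$ coming from the Penrose tree-graph identity exactly match the multinomial overcounting of ways to reattach $r$ subclusters at $\gamma'$, so that $\sum_r \frac{1}{r!}a(\gamma')^r=e^{a(\gamma')}$ emerges with the right constant and no loss. Handling this rigorously requires either the Penrose partition scheme or the Dobrushin-style inductive reformulation; I would cite \cite{cf:KP} for the precise combinatorial lemma rather than reproduce it, since the paper explicitly states that \eqref{dabound} "is the main result of \cite{cf:KP}". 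Once the uniform bound $S_m(\gamma)\le a(\gamma)$ is established for all $m$ and all $\gamma$, letting $m\to\infty$ by monotone convergence gives \eqref{dabound}, completing the proof.
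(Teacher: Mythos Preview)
Your proposal is a reasonable sketch of the Koteck\'y--Preiss inductive argument, and you correctly identify the delicate point (the $1/r!$ bookkeeping in the tree-graph/Penrose expansion). However, there is nothing to compare against: the paper does \emph{not} prove Theorem~\ref{superclust}. It is stated as a quotation from the literature --- the sentence immediately preceding the theorem says ``The powerful estimate displayed below is the main result of \cite{cf:KP}'' --- and no proof is given anywhere in the paper or its appendix. The results in Appendix~\ref{apclus} (Lemma~\ref{finalfrontier}, Proposition~\ref{propinfi}) are \emph{consequences} of Theorem~\ref{superclust}, not a proof of it. So your plan is not to be measured against the paper's own argument; you are simply reproducing (in outline) the original proof from \cite{cf:KP}, which is exactly what the paper defers to. If the goal is to match the paper, the correct ``proof'' is a one-line citation.
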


\begin{rem}\label{laremarc}
 For simplicity we introduced the result for the notion of contour compatibility/connectedness defined in Section \ref{contour}.
 However the result is purely algebraic and is remains valid if compatibility is replaced by another symmetric relation on contours and 
 an the notion of cluster is defined using this other relation.
In the present paper we use the result with compatibility replaced by external compatibility in the proof of Lemma \ref{dzip}.
\end{rem}

For all practical purpose, in the remainder of the paper, we use the criterion \eqref{condit} for a pair of simple functions  $$a_0(\gamma)=|\tilde \gamma| \text{
and  } d_0(\gamma):= \left(\gb-5\right) |\tilde \gamma|,$$
with $\gb>5$.
A simple and practical way of verifying condition \eqref{condit} in that case is to check for every $x^*\in (\bbZ^2)^*$
\begin{equation}\label{usualchoice}
  \sum_{\gamma\in \cC \ : \ x^*\in \gamma} e^{(\gb-4)|\tilde \gamma|} w(\gamma)\le 1.
 \end{equation}
 Given $\bC$ a set of contour, let us use the notation $x^*\in \bC$ for 
 \begin{equation}
 \exists \gamma \in \bC, \ x^*\in\gamma.
 \end{equation}
We let the reader check that
that for $x^*\in (\bbZ^2)^*$, any clusters which  satisfies $x^*\in \bC$ is incompatible with a contour of length $4$ which displays $x^*$ in its top right corner
(the choice for the sign being left open).
Applying \eqref{dabound} for these two contours of length $4$,  provided that \eqref{condit} holds for \eqref{usualchoice} 
we obtain thus that 
 \begin{equation}\label{dabound2}
 \sum_{ \bC\in \cQ \ : x^*\in \bC } |w^T(\bC)|e^{(\gb-5)L(\bC)} \le 8.
  \end{equation}

For the Solid-On-Solid model without constraint, which corresponds to the the weight function \eqref{specialweight},
one can check that 
\eqref{usualchoice}  holds provided $\gb>5$.

 The results mentioned in the rest of the section are classical consequences of Theorem \ref{superclust}, 
 but are sometimes exposed in the literature in a way that does not exactly fit the needs of our paper.
 For the sake of completeness, we prove these corollaries in Appendix \ref{apclus}.

\subsubsection{Free Energy and boundary effects}\label{fffa}
In our analysis we will be only interested in the case of \textit{translation invariant} weight functions $w$, 
meaning that $w(\gamma+x)=w(\gamma)$ for $x\in \bbZ^2$ where $\gamma+x$ is defined as the contour with the same sign as $\gamma$,
with the set of edges obtained by translating every edges of 
$\tilde\gamma$ by $x$.

\medskip

If the partition function of a statistical mechanics model has an expression of the form \eqref{weightpar}, 
with  translation invariant weights, the cluster expansion yields a simple expression for the free energy of the associated model.
Assuming that $w$ is a translation invariant weight function which satisfies \eqref{usualchoice}, 
the following  limit exists 
\begin{equation}\label{lafreen}
\limtwo{|\gL| \to \infty}{|\partial\gL|/|\gL|\to 0} \frac{1}{|\gL|} \log Z[ \gL ,w]=\tf(w).
 \end{equation}
 More precisely we have for $x^*$ an arbitrary point in the dual lattice $(\bbZ^2)^*$
\begin{equation}\label{fraconv}
 \tf(w)=\sum_{\bC \in \cQ(\bL) \ : \ x^*\in \bC}  \frac{1}{|\bC|}w^T(\bC),
\end{equation}
where $|\bC|:=\{y^*\in (\bbZ^2)^*\ : \ y^*\in \bC \}$ is the number of points in the dual lattice which are visited by a contour in $\bC$
(note that $|\bC|\le L(\bC)$ and that the inequality can be strict). The above expression does not depend on $x^*$ by translation invariance.

\medskip

The fact that the sum in \eqref{fraconv} converges is a consequence of \eqref{dabound2}.
This convergence result can simply be obtained by controlling the difference between the expression given for 
$|\gL|\tf(w)$ and $\log Z[ \gL ,w]$  using \eqref{dabound}. This difference can be shown to be proportional to the size of the boundary.

\medskip

We do not prove \eqref{lafreen} but present instead a very similar result for another kind of partition function. Given $\gamma\in \cC$, we let $Z[\gamma,w]$ denote the 
partition function corresponding to the set of contours in the domain $\bar \gamma$ which are compatible with $\gamma$,
\begin{equation}\label{cgammadef}
 \cC_{\gamma}:= \{ \gamma'\in \cC \ : \ \bar \gamma'\subset \bar\gamma \text{ and } \gamma'\mid \gamma\}
\end{equation}

\begin{lemma}\label{finalfrontier}
If $w$ is a translation invariant weight function which satisfies Equation \eqref{usualchoice} for $\gb$ sufficiently large
we have
\begin{equation}
  \Big |\log Z[\gamma ,w]- |\bar\gamma|\tf(w) \Big|\le \frac 1 4 |\tilde \gamma|.
\end{equation}
\end{lemma}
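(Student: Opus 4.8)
The plan is to compare $\log Z[\gamma,w]$ with the full-plane free energy $\tf(w)$ via the cluster expansion, exactly as in the heuristic sketch given after \eqref{fraconv}. Applying the expansion \eqref{expansion} to the domain $\bL=\cC_\gamma$ of contours contained in $\bar\gamma$ and compatible with $\gamma$, we get
\begin{equation}
 \log Z[\gamma,w]=\sum_{\bC\in\cQ(\cC_\gamma)} w^T(\bC).
\end{equation}
On the other side, summing the translation-invariant identity \eqref{fraconv} over the $|\bar\gamma|$ dual vertices sitting ``inside'' $\gamma$ — more precisely, choosing for each $y\in\bar\gamma$ a dual point $y^*$ canonically attached to $y$ — and using the weight $1/|\bC|$ to account for multiplicity, one writes
\begin{equation}
 |\bar\gamma|\,\tf(w)=\sum_{\bC\in\cQ}\frac{|\{y\in\bar\gamma : y^*\in\bC\}|}{|\bC|}\,w^T(\bC).
\end{equation}
The difference $\log Z[\gamma,w]-|\bar\gamma|\tf(w)$ then splits into two types of terms: clusters $\bC$ that hit $\bar\gamma$ but are \emph{not} entirely in $\cC_\gamma$ (either they protrude outside $\bar\gamma$, or they contain a contour incompatible with $\gamma$), and clusters fully inside $\cC_\gamma$ for which the combinatorial coefficient $|\{y\in\bar\gamma:y^*\in\bC\}|/|\bC|$ differs from $1$. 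Both kinds of ``bad'' clusters must intersect the neighborhood of $\tilde\gamma$, i.e.\ some dual vertex $x^*\in\tilde\gamma$.

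\textbf{Key steps.} First I would make precise the canonical assignment $y\mapsto y^*$ (e.g.\ the top-right dual corner of the unit cell at $y$), so that $\{y^*:y\in\bar\gamma\}$ is in bijection with $\bar\gamma$ and the coefficient equals $1$ for any cluster whose dual footprint lies in this set and which is compatible-contained appropriately. Second, I would argue that every cluster $\bC$ contributing to the difference satisfies $x^*\in\bC$ for at least one $x^*$ among the $O(|\tilde\gamma|)$ dual vertices visited by $\tilde\gamma$ (for the ``protruding'' clusters this is forced by connectedness of a cluster together with the fact that it must straddle the boundary curve $\tilde\gamma$; for the incompatibility case, a cluster containing a contour $\gamma'$ with $\gamma'\not\mid\gamma$ necessarily has $\gamma'\perp\gamma$, hence some shared or adjacent dual vertex on $\tilde\gamma$; for the coefficient-discrepancy case, a cluster inside $\bar\gamma$ with a point $y^*$ outside $\{z^*:z\in\bar\gamma\}$ must reach within distance $1$ of $\tilde\gamma$). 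Third, I would bound the total contribution by summing \eqref{dabound2} over those $O(|\tilde\gamma|)$ dual vertices: each such vertex contributes at most $8$ to $\sum_{\bC:x^*\in\bC}|w^T(\bC)|$, and in fact, because these bad clusters must span across $\tilde\gamma$, they have total length $L(\bC)$ bounded below, so the sharper factor $e^{(\gb-5)L(\bC)}$ in \eqref{dabound2} can be traded for an extra small factor $e^{-c\gb}$. Choosing $\gb$ large enough makes the per-vertex contribution at most $1/4$, yielding $|\log Z[\gamma,w]-|\bar\gamma|\tf(w)|\le\frac14|\tilde\gamma|$.

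\textbf{Main obstacle.} The delicate point is the bookkeeping in the second step: cleanly showing that \emph{every} discrepancy term corresponds to a cluster anchored at a dual vertex of $\tilde\gamma$, and that the number of such anchoring vertices is at most $|\tilde\gamma|$ (not a larger multiple), so that the final constant genuinely comes out as $\frac14$ rather than some larger $C|\tilde\gamma|$. In particular one must be careful with the ``non-linked edges meeting at a point'' subtlety in the definition of $\Delta_\gamma$ and of compatibility, which is exactly where contours can be close to $\tilde\gamma$ without sharing edges. A convenient way around a messy constant-chasing is to absorb all numerical losses into the requirement ``$\gb$ sufficiently large'': since each anchored cluster carries a factor $e^{-(\gb-5)L(\bC)}$ with $L(\bC)\ge$ some absolute constant for boundary-straddling clusters, the whole error is $\le |\tilde\gamma|\cdot C e^{-c\gb}$, which is below $\frac14|\tilde\gamma|$ once $\gb\ge\gb_0$. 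I expect the proof to be short modulo this geometric lemma about which clusters are ``boundary clusters'', which is routine but must be stated carefully.
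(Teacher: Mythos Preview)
Your proposal is correct and follows essentially the same route as the paper's proof. The paper writes $|\bar\gamma|\tf(w)$ as the double sum $\sum_{x\in\bar\gamma}\sum_{\bC\ni x-(1/2,1/2)}|\bC|^{-1}w^T(\bC)$, observes that clusters in $\cQ(\cC_\gamma)$ get coefficient exactly $1$ while all others get a coefficient in $[0,1]$, and then argues that any cluster contributing to the difference must visit some dual vertex on $\tilde\gamma$; translation invariance plus \eqref{dabound2} (using $L(\bC)\ge 4$ and $\gb$ large to beat the constant $8$ down to $1/4$) finishes. Your worry about constant-chasing is handled in the paper simply by the crude overcount ``at most $|\tilde\gamma|$ anchoring vertices'' together with the exponential gain from \eqref{dabound2}, exactly as you suggest in your last paragraph.
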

The proof of Lemma \ref{finalfrontier} is displayed in Appendix \ref{pfinal}, and \eqref{lafreen} can be obtained with only minor modifications.

\subsubsection{Correlation decay and infinite volume limits}\label{fffb}

We say that a countable collection of contours $\gG\subset \cC$  is locally finite if
\begin{equation}\label{locallyfinite}
 \forall x\in \bbZ^2, \quad  \#\{ \gamma \in \gG \ : \ x\in \bar \gamma\}<\infty.
\end{equation}
We let $\cK$ denote the set of \textit{locally finite} compatible collection of contours on $\bbZ^2$.
We say that a function $f: \cK\to \bbR$ is a local function if there exists a finite set 
$A \subset \bbZ^2$ such that $f(\gG)$ is entirely determined by $\gG\cap \cC'_A$,
where $$\cC'_A:= \{ \gamma \in \cC \ : \ \bar \gamma \cap A\ne \emptyset \}.$$
This is equivalent to say that there exists 
$\hat f : \cK[\cC'_A]\to \bbR$ such that 
$$f(\gG)=\hat f(\gG\cap \cC'_{A}).$$
Given $\gG\in  \cK[\cC'_{A}]$ we obtain as a consequence of \eqref{expansion} that 
\begin{equation}\label{godstim}
\bbP^w_{\bL}[\Upsilon \cap \cC'_{A}=\gG]=w_{\bL}(\gG)\frac{Z[\bL'_{A,\gG}]}{Z[\bL]}\\
=
w_{\bL}(\gG)\exp\left( -\sum_{\bC\in \cQ(\bL,A,\gG)} w^T(\bC)\right).
\end{equation}
where 
\begin{equation}\begin{split}
w_\bL(\gG)&:=\ind_{\{\gG\subset \bL\}}\prod_{\gamma\in \gG}w(\gamma),\\
\bL'_{A,\gG}&:=\left\{  \gamma \in \bL \setminus \cC'_A \ : \ \gamma \text{ is compatible with } \gG\right\}
\end{split}
\end{equation}
and $\cQ(\bL,A,\gG)$ is the set of clusters that either intersect $A$ or are connected with $\gG$
\begin{equation}
\begin{split}
 \cQ(\bL,A,\gG)&:= \{\bC\in \cQ(\bL) \ :  \bC\cap (\bL'_{A,\gG})^\cc\ne \emptyset  \}\\
 &=\{\bC\in \cQ(\bL) \ : \ \exists \gamma\in \bC, \ \gamma\in \cC'_A  \text{ or } \gamma \perp \gG  \},\\
 \end{split}
\end{equation}
When \eqref{usualchoice} holds, using Equation \eqref{godstim},
we can prove  two important consequences: Firstly,  $\bbP^w_{\bL}$ converges to an infinite volume $\bbP^w$ limit when 
$\bL$ exhaust $\cC$ with the convergence holding in the local sense (the expectation of every local function converges). 
Secondly, the correlation between two local functions decays exponentially with the distance of their support.

 \medskip
 
 The infinite volume limit $\bbP^w$ is defined via its finite dimensional projection (using Kolmogorov extension Theorem). It is the 
 unique probability on $\cK$ which satisfies for every finite subset $A$
 \begin{equation}\label{godstiminf}
\bbP^w[\Upsilon \cap \cC'_{A}=\gG]=
w(\gG)\exp\left( -\sum_{\bC\in \cQ(A,\gG)} w^T(\bC)\right).
\end{equation}
where $w(\gG):=\prod_{\gamma\in \gG}w(\gamma)$ and
 \begin{equation}
 \cQ(A,\gG):=\{\bC\in \cQ \ : \ \exists \gamma\in \bC, \ \gamma\in \cC'_A  \text{ or } \gamma \perp \gG  \}.
\end{equation}
The convergence of the sum in the exponential in \eqref{godstiminf} is ensured by \eqref{dabound2}.
We show that the convergence occurs in an exponential fashion and that spatial correlation decay exponentially. 
To state the result, we need to introduce the following notion of distance between finite subset of $\bbZ^d$
and the complement of a finite set of contours.

\begin{equation}
d(A,\bL^{\cc}):= \min\{  x\in A, \gamma\in \cC\setminus \bL, \ \max_{y\in \bar \gamma}  |x-y|\}.
\end{equation}
Note that when $A$ is fixed, this distance grows to infinity when $\bL$ exhausts $\cC$.

 \begin{proposition}\label{propinfi}
 If $w$ is a translation invariant weight function which satisfies \eqref{usualchoice}, 
Then
for every pair of  local functions  $f$ and $g$, $\cK \to [0,1]$ with respective supports $A$ and $B$  and every $\bL$, $\bL'$ such that  
$ d(A,\bL^\cc) \le d(A,(\bL')^\cc)$ we have 
\begin{equation}\label{convergenz}
\begin{split}
| \bbP^w_{\bL}[f(\gG)]-\bbP^w_{\bL'}[f(\gG)]|\le &|A|e^{-(\gb/100)d(A,\bL^\cc)},\\
|  \bbP^w_{\bL}[f(\gG)]-\bbP^w[f(\gG)]|\le &|A|e^{-(\gb/100)d(A,\bL)},
  \end{split}
\end{equation}
and also
\begin{equation}\label{decayze}
 |\bbP^w_{\bL}[f(\gG)g(\gG)]- \bbP^w_{\bL}[f(\gG)]\bbP^w_{\bL}[g(\gG)]|\le |A|e^{-(\gb/100)d(A,B)}.
\end{equation}

\end{proposition}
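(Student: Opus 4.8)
The plan is to derive Proposition~\ref{propinfi} directly from the cluster expansion identity \eqref{godstim} (and its infinite-volume counterpart \eqref{godstiminf}), using the exponential decay of cluster weights provided by \eqref{dabound2}. The key observation is that for a local function $f$ with support $A$, the law of $\Upsilon\cap\cC'_A$ under $\bbP^w_{\bL}$ is given explicitly by \eqref{godstim}, and the only $\bL$-dependence enters through the sum $\sum_{\bC\in\cQ(\bL,A,\gG)}w^T(\bC)$ in the exponential. Comparing two domains $\bL\subset\bL'$ (or $\bL$ and $\cC$), the difference of these sums is controlled by the clusters that ``feel'' the difference between the two domains, i.e.\ clusters $\bC$ containing a contour $\gamma$ with $\bar\gamma\not\subset\bL$ but $\bar\gamma\subset\bL'$; any such cluster must contain a dual point $x^*$ with $|x^*-y|\ge$ (roughly) $d(A,\bL^\cc)$ for some $y$ reached by the cluster starting from $A$.

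\medskip

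First I would set up the comparison: fix $\gG\in\cK[\cC'_A]$, write
$$
\bbP^w_{\bL}[\Upsilon\cap\cC'_A=\gG]-\bbP^w_{\bL'}[\Upsilon\cap\cC'_A=\gG]
= w_\bL(\gG)e^{-S_\bL(\gG)}-w_{\bL'}(\gG)e^{-S_{\bL'}(\gG)},
$$
with $S_{\bL}(\gG):=\sum_{\bC\in\cQ(\bL,A,\gG)}w^T(\bC)$, and estimate $|S_\bL(\gG)-S_{\bL'}(\gG)|$ by summing $|w^T(\bC)|$ over clusters $\bC$ that belong to exactly one of $\cQ(\bL,A,\gG)$, $\cQ(\bL',A,\gG)$. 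Each such cluster is connected (through $\gG$ or through a contour meeting $A$) and must reach distance at least $d(A,\bL^\cc)$; since a connected cluster of total length $L(\bC)$ has diameter at most $L(\bC)$, such a cluster satisfies $L(\bC)\ge d(A,\bL^\cc)$, and it visits some dual point within distance $1$ of $A$. Summing $|w^T(\bC)|e^{(\gb-5)L(\bC)}$ over clusters through a fixed dual point is bounded by $8$ via \eqref{dabound2}; extracting the factor $e^{-(\gb-5)d(A,\bL^\cc)}$ and paying a factor $O(|A|)$ for the choice of the dual point near $A$ gives a bound like $C|A|e^{-(\gb-5)d(A,\bL^\cc)}$ on $|S_\bL-S_{\bL'}|$. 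Then the elementary inequality $|e^{-s}-e^{-s'}|\le |s-s'|$ combined with $\sum_\gG w_\bL(\gG)e^{-S_\bL(\gG)}=1$ (and the fact that $f$ takes values in $[0,1]$), after also accounting for the $w_\bL(\gG)$ vs.\ $w_{\bL'}(\gG)$ discrepancy — which only occurs for $\gG$ not contained in $\bL$, hence again involves a long contour — yields \eqref{convergenz}, absorbing the constants into the change of rate from $\gb-5$ to $\gb/100$ (valid for $\gb$ large). The second line of \eqref{convergenz} is the same argument with $\bL'=\cC$, using \eqref{godstiminf}.

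\medskip

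For the decay-of-correlations estimate \eqref{decayze}, I would use the standard trick of iterated conditioning: write $\bbP^w_\bL[f g]-\bbP^w_\bL[f]\bbP^w_\bL[g]$ and expand using \eqref{godstim} applied to the set $A\cup B$ — conditioning on $\Upsilon\cap\cC'_{A\cup B}$. The joint law of $(\Upsilon\cap\cC'_A,\Upsilon\cap\cC'_B)$ factorizes up to the exponential of a sum of cluster weights over clusters that are simultaneously connected to (or meeting) both $A$ and $B$; such a cluster has total length at least $d(A,B)$. Bounding this sum as above with \eqref{dabound2} and again using $|e^{-s}-1|\le |s|e^{|s|}$ (with $|s|$ small because $\gb$ is large) gives the factor $|A|e^{-(\gb/100)d(A,B)}$. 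The main obstacle — really the only delicate point — is the bookkeeping that identifies precisely which clusters contribute to the difference of partition-function corrections and verifies that each such cluster has total length at least the relevant distance; once that geometric fact is in hand, everything reduces to the summability estimate \eqref{dabound2} and elementary inequalities for the exponential. Since these corollaries of Theorem~\ref{superclust} are routine, I would carry out the details in Appendix~\ref{apclus}, as announced.
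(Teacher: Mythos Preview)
Your approach is essentially the same as the paper's: reduce to total variation of $\Upsilon\cap\cC'_A$, use the explicit formula \eqref{godstim}, and bound the symmetric difference of the relevant cluster sets via \eqref{dabound2}. The one place where your sketch is loose is the claim that any cluster $\bC$ in $\cQ(\bL,A,\gG)\triangle\cQ(\bL',A,\gG)$ ``visits some dual point within distance $1$ of $A$'' and hence satisfies $L(\bC)\ge d(A,\bL^\cc)$. This fails if $\gG$ contains a long contour: $\bC$ may be connected to $\gG$ at a point far from $A$, so neither the anchoring near $A$ nor the length lower bound holds for $\bC$ alone. The paper handles this by first splitting off, via a direct Peierls bound \eqref{pignouf}, the event that $\gG$ contains a contour of diameter $\ge d/3$; on the complementary event every contour of $\gG$ stays within distance $d/3$ of $A$, and then your geometric argument is valid (the cluster must stretch from within $d/3$ of $A$ out to distance $d$, forcing $L(\bC)\ge d$). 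The same two-step structure (Peierls for long contours in $\gG$, then cluster bound) is used for \eqref{decayze}, where one must also rule out that $\cC'_A\cap\gG$ and $\cC'_B\cap\gG$ share a contour. You flag the bookkeeping as the delicate point, which is accurate; just be aware that the missing preliminary Peierls step is what makes the bookkeeping go through.
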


\noindent The proof of this result is displayed in Appendix \ref{ppropinf} for completeness. 
 
 \begin{rem}
 Let us remark that the result can be applied to the weights given by \eqref{specialweight} in order to obtain the convergence of the distribution
 of contours associated with the measure $\bP_{\gL,\gb}$ when $\gL$ is simply connected. 
A proof of  Theorem \ref{infinitevol} can then be deduced from this result by noticing that conditioned to the set of contour, the heights of the cylinders are
 independent geometric variables (see e.g.\ \cite[Lemma 4.3]{cf:part1}). The reader can refer to the proof of Proposition \ref{conviark} to see
 how results on the distribution of the field $\phi$ can be deduced from a result about the contour distribution.
 \end{rem}

\subsection{Contour decomposition for the wetting problem} \label{contsec}

We face various obstacles when trying to obtain a decomposition similar to \eqref{contourdecomp}, for $\cZ^{n,h}_{\gL,\gb}$.
First because the function $\phi$ cannot be expressed directly from the contour collection $\Upsilon_n(\phi)$. 
Opting for a representation using cylinders does not fully solve the problem, 
since the quantities $\ind_{\{\phi(x)\ge 0\}}$ and $\ind_{\{ \phi(x)=0\}}$ which appear in the Hamiltonian cannot be fitted in the expansion, 
because they depend on the set of contours in a highly non-local way.

\medskip

The way out is to opt for a more abstract representation, where the contours in the sum do not correspond to the level line of $\phi$.
We obtain one such representation for each choice of boundary condition $n$
\begin{equation}\label{reprezent}
\cZ^{n,h}_{\gL,\gb}= \sum_{\gG\in \cK(\gL)} \prod_{\gamma\in \gG} w^h_n(\gamma).
\end{equation}
Let us stress that using this type of contour decomposition is not a new idea, and that our 
weight function is very similar to the ones used e.g.\ in \cite{cf:ADM, cf:CM}.

\medskip

In order to provide the expression of the weights $w^h_n$ (displayed in \eqref{defw}) we need to introduce a few notation.
Given $\gamma$ a contour, $n\in \bbZ_+$, we let $\gO[\gamma,n]$ and  $\bar \gO[\gamma,n]$ denote the sets of functions in $\bar \gamma$ defined as follows
\begin{equation}\begin{split}\label{defoom}
\gO[\gamma,n]&:=\{\phi,\ \bar \gamma \to \bbZ \ :  \ \forall x\in \Delta^-_{\gamma},\  \gep(\gamma)(\phi(x)-n)\ge 0\},\\
\bar \gO[\gamma,n]&:=\gO[\gamma,n]\setminus \gO[\gamma,n+\gep(\gamma)]\\
&= \{ \phi \in \gO[\gamma,n] \ : \ \exists x\in \Delta^-_{\gamma},\ \phi(x)=n\}.
\end{split}\end{equation}
We define $\gO^+[\gamma,n]$, $\bar \gO^+[\gamma,n]$ as the restrictions of $\gO[\gamma,n]$ and $\bar \gO[\gamma,n]$ to the set of non-negative functions 
(recall the convention adopted in
\eqref{positivity}).
The set $\bar \gO[\gamma,n]$ and  $\gO[\gamma,n]$ can respectively be described as the sets of functions
$\phi$ such that $\Upsilon_n(\phi)$ resp.\ $\Upsilon_n(\phi)\setminus \{\gamma\}$
is compatible with $\gamma$. 

\medskip

Given $\gamma$, $n$ and $h>0$, we define $z^{h}_n(\gamma)$ and $\bar z^h_n(\gamma)$
to be the two partition functions associated with the sets  $\gO^+[\gamma,n]$ and $\bar \gO^+[\gamma,n]$ and the energy functional
$\gb \cH^n_{\bar \gamma}(\phi)-h|\phi^{-1}\{0\}|$ (recall \eqref{defhamil})
\begin{equation}\label{partisioux}\begin{split}
z^h_n(\gamma)&:= \sum_{\phi\in \gO^+[\gamma,n]} e^{-\gb \cH^n_{\bar \gamma}(\phi)+h |\phi^{-1}\{0\}|},\\
\bar z^h_n(\gamma)&:= \sum_{\phi\in \bar \gO^+[\gamma,n]} e^{-\gb \cH^n_{\bar \gamma}(\phi)+h|\phi^{-1}\{0\}|}.
\end{split}\end{equation}
We extend the definition to the case of negative $n$ by setting $z^h_n=\bar z^h_n=0$ for $n<0$.
 The reader can check that $\gO^+[\gamma,n]:=\bigcup_{k\ge 0}\bar\gO^+[\gamma,n+\gep(\gamma)k]$ and thus that
\begin{equation}\label{relatz}
 z^h_n(\gamma)= \sum_{k\ge 0} e^{-k \gb|\tilde \gamma|}\bar z^h_{n+\gep(\gamma)k}(\gamma).
\end{equation}
We are now ready to define our contour weight $w^h_n(\gamma)$ (for $n\ge 0$) as follows
\begin{equation}\label{defw}
w^h_n(\gamma)=  \frac{ e^{-\gb|\tilde \gamma|} z^h_{n+\gep(\gamma)}(\gamma)}{\bar z^h_n(\gamma)}.
\end{equation}
Note that, with our convention, negative contours have weight zero for $n=0$.
This definition turns out to be the most natural to obtain a contour representation for the partition function.

\begin{proposition}\label{trax}
The contour representation \eqref{reprezent} of the partition $\cZ^{n,h}_{\gL,\gb}$ holds true for the weights defined in \eqref{defw} 
when $\gL$ is simply connected (recall \eqref{connectedness}).
\end{proposition}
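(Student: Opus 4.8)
\textbf{Proof strategy for Proposition \ref{trax}.}

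The plan is to expand $\cZ^{n,h}_{\gL,\gb}$ according to the collection of \emph{external} contours of $\phi$ with boundary condition $n$, and then recurse into the interiors of these external contours, recognizing at each step the weight $w^h_n(\gamma)$ together with a residual partition function. Concretely, write $\cZ^{n,h}_{\gL,\gb}=\sum_{\phi\in\gO^+_\gL}e^{-\gb\cH^n_\gL(\phi)+h|\phi^{-1}\{0\}|}$, partition the sum over $\phi$ according to the value of $\Upsilon^{\ext}_n(\phi)\in\cK_{\ext}(\gL)$, and observe that fixing the external contour family $\gG$ decomposes $\gL$ into the exterior region $\gL\setminus\bigcup_{\gamma\in\gG}\bar\gamma$, on which $\phi\equiv n$ by definition of an external contour and positivity, and the interiors $\bar\gamma$ for $\gamma\in\gG$. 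On the exterior region the constraint $\phi\equiv n$ contributes no energy (no broken bonds) and, since $n\ge 1$ or $n=0$, a contact energy term $h|\phi^{-1}\{0\}|$ which vanishes unless $n=0$; in the $n=0$ case one must be slightly careful, but there $w^h_0$ is defined so that negative contours have weight zero and the bookkeeping still works. Using additivity of the Hamiltonian \eqref{express} across the disjoint pieces, $\cZ^{n,h}_{\gL,\gb}=\sum_{\gG\in\cK_{\ext}(\gL)}\prod_{\gamma\in\gG}\big(e^{-\gb k(\gamma)|\tilde\gamma|}\cdot(\text{sum over }\phi\text{ inside }\bar\gamma)\big)$, where the inner sum is exactly over $\phi\in\gO^+[\gamma,n]$ with the external contour $\gamma$ imposed, which is the recursion hook.

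Next I would set up the recursion cleanly. Define, for any simply connected $\gL$, the quantity $\hat\cZ^{h}_{n}(\gL):=\cZ^{n,h}_{\gL,\gb}$, and for a contour $\gamma$ let $z^h_n(\gamma)$ be as in \eqref{partisioux}: the sum over $\phi\in\gO^+[\gamma,n]$ of $e^{-\gb\cH^n_{\bar\gamma}(\phi)+h|\phi^{-1}\{0\}|}$. The key algebraic identity to prove is
\begin{equation}\label{auxrec}
z^h_n(\gamma)=\sum_{\gG\in\cK_{\ext}(\bar\gamma),\ \gG\parallel\gamma}\ \prod_{\gamma'\in\gG}\Big(e^{-\gb|\tilde\gamma'|}z^h_{n+\gep(\gamma')}(\gamma')\Big)\cdot(\text{residual exterior factor}),
\end{equation}
obtained by expanding $z^h_n(\gamma)$ over the external contours of $\phi$ \emph{strictly inside} $\gamma$; summing over intensities $k\ge 1$ of each such inner external contour converts $e^{-\gb k|\tilde\gamma'|}\cdot z^h_{n+\gep(\gamma')k}$-type terms, via \eqref{relatz}, into the factor $e^{-\gb|\tilde\gamma'|}z^h_{n+\gep(\gamma')}(\gamma')$ appearing in \eqref{defw}. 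Here the residual exterior factor is $1$ when $n\ge 1$ (the flat region at height $n$ carries no weight) and needs a separate tiny check when $n=0$ because of contact rewards; but note that in all cases the exterior of the \emph{internal} external contours of a contour $\gamma$ of intensity $1$ sits at height $n+\gep(\gamma)$ rather than $n$, so one should phrase the recursion in terms of $\bar z^h_n(\gamma)$ and $\bar\gO^+[\gamma,n]$, which is why \eqref{defw} divides by $\bar z^h_n(\gamma)$. Telescoping \eqref{auxrec} down through all nesting levels and collecting, at each contour $\gamma$ that actually appears, the ratio $e^{-\gb|\tilde\gamma|}z^h_{n+\gep(\gamma)}(\gamma)/\bar z^h_n(\gamma)=w^h_n(\gamma)$, together with the fact that $\Upsilon_n(\phi)$ ranges over \emph{all} of $\cK(\gL)$ as the nested external families are freely chosen, yields precisely \eqref{reprezent}.

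The main obstacle is the careful handling of the division by $\bar z^h_n(\gamma)$ and the matching of ``height levels'' in the recursion: when one peels off an external contour $\gamma$ of intensity $k\ge 1$, the region just inside $\gamma$ sits at height $n+\gep(\gamma)k$, and one needs the inner partition function to be exactly $z^h_{n+\gep(\gamma)k}(\gamma)$ — but \eqref{defw} only features $z^h_{n+\gep(\gamma)}(\gamma)$, so the summation over intensities via \eqref{relatz} must be threaded through correctly, and one must verify that the denominator $\bar z^h_n(\gamma)$ that is introduced cancels against a numerator contribution from the enclosing level. In other words, the proof is an induction on the volume $|\gL|$ (or on the depth of contour nesting): assuming \eqref{reprezent} for all simply connected domains of smaller volume, apply it inside each $\bar\gamma$ for $\gamma$ an external contour of $\phi$; the bookkeeping that the product of the resulting factors reorganizes into $\prod_{\gamma\in\gG}w^h_n(\gamma)$ over the \emph{full} contour family $\gG=\Upsilon_n(\phi)\in\cK(\gL)$, rather than just the external ones, is the delicate combinatorial identity, and it is exactly where the specific form \eqref{defw} of the weight — in particular the choice of numerator $z^h_{n+\gep(\gamma)}$ and denominator $\bar z^h_n$ — is forced. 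I would also double-check the degenerate cases ($\gamma$ of minimal length $4$, nested contours of opposite signs touching $\Delta^\pm_\gamma$, and the $n=0$ positivity constraint making negative contours carry zero weight) to ensure no boundary term is dropped.
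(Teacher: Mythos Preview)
Your approach is essentially the same as the paper's: decompose $\cZ^{n,h}_{\gL,\gb}$ according to the external contours, extract the weight $w^h_n(\gamma)$ by multiplying and dividing by $\bar z^h_n(\gamma)$, and recurse into $\bar z^h_n(\gamma)$. The paper executes this in three clean lines: first
\[
\cZ^{n,h}_{\gL,\gb}=\sum_{\gG\in\cK_{\ext}(\gL)}\prod_{\gamma\in\gG} e^{-\gb|\tilde\gamma|}z^h_{n+\gep(\gamma)}(\gamma)
=\sum_{\gG\in\cK_{\ext}(\gL)}\prod_{\gamma\in\gG} w^h_n(\gamma)\,\bar z^h_n(\gamma),
\]
then the analogous identity $\bar z^h_n(\gamma)=\sum_{\gG_1\in\bar\cK_{\ext}(\gamma)}\prod_{\gamma_1\in\gG_1} w^h_n(\gamma_1)\,\bar z^h_n(\gamma_1)$, and iteration.

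Two small cleanups would make your write-up match: (i) your auxiliary recursion should be stated for $\bar z^h_n(\gamma)$ rather than $z^h_n(\gamma)$, and the constraint on the inner external family is compatibility $\gG_1\mid\gamma$ (i.e.\ $\gG_1\in\bar\cK_{\ext}(\gamma)$), not external compatibility $\gG_1\parallel\gamma$, which would be vacuous since $\bar\gamma_1\subset\bar\gamma$; (ii) the ``residual exterior factor'' is simply absorbed by recursing on $\bar z^h_n$ with boundary height $n$, so there is no separate term to track once the recursion is set up on the right object. Your caution about the $n=0$ exterior contact reward is well placed; the paper's displayed identity tacitly assumes this away, but it is a harmless overall constant (and negative contours have weight zero at $n=0$, as you note).
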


While it involves some notation, the proof is not conceptually difficult. 
The idea is to process recursively starting with external contours of the field $\phi$ \eqref{defexternal} and iterating the procedure.

\begin{proof}

The starting point of our proof is the observation that the complete description of 
$\phi\in \gO^+_{\gL}$ can be obtained by knowing the set of external contours $\Upsilon^{\ext}_n(\phi)$ together with the associated intensity,
and the value of the restriction $\phi\restrict_{\bar \gamma}$ for every $\gamma\in \Upsilon^{\ext}_n(\phi)$.
When $\gamma$ is an external contour associated with boundary condition $n$ we have
$\phi\restrict_{\bar \gamma}\in \gO^+[\gamma,n+\gep(n)]$ (recall \eqref{defoom}), and this is the only requirement that $\phi\restrict_{\bar \gamma}$
must satisfy.
Hence we obtain directly from \eqref{partisioux}
\begin{equation}\label{tapzero}
Z^{n,h}_{\gL,\gb}
=\sum_{\gG\in \cK_{\ext}(\gL)}\prod_{\gamma\in \gG}  e^{-\gb |\tilde \gamma|} z^h_{n+\gep(\gamma)}(\gamma).
\end{equation}
Using the definition \eqref{defw} we can rewrite the sum as
\begin{equation}\label{tapouz}
Z^{n,h}_{\gL,\gb}
=\sum_{\gG\in \cK_{\ext}(\gL)}\prod_{\gamma\in \gG} w^u_n(\gamma) \bar z^h_n(\gamma).
\end{equation}
Now let us introduce $\bar \cK_{\ext}(\gamma)$ which is the space in which the set of external contours associated to an element of $\bar \gO^+[n,\gamma]$ lies
\begin{equation}\label{daext}
 \bar \cK_{\ext}(\gamma):=\{ \gG \in \cK_{\ext}(\bar \gamma) \ : \ \gG \mid \gamma \}.
\end{equation}
Decomposing according to the external contours of $\phi\restrict_{\bar \gamma}$ we obtain similarly to \eqref{partisioux} that
\begin{equation}\label{tapun}
  \bar z^h_n(\gamma)=  \sum_{\gG_1\in \bar \cK_{\ext}(\gamma)}\prod_{\gamma_1\in \gG_1}  
  w^h_n(\gamma_1) \bar z^h_n(\gamma_1).
\end{equation}
Injecting \eqref{tapun} in \eqref{tapouz} and iterating the procedure, we obtain \eqref{reprezent}.
\end{proof}

\subsection{Rewriting partition functions}

In order to obtain bounds on the contour weights  $w^h_n(\gamma_1)$ 
which are sufficient to prove \eqref{usualchoice}, we have to use alternative expressions for the partition functions in order  to facilitate
the comparison between 
$z^h_{n+\gep(\gamma)}(\gamma)$ and $\bar z^h_n(\gamma)$.
One of the objective is to get rid the positivity constraint for $\phi$. 
Let us define for  $\gG$ a finite subset of $\bbZ^2$,
\begin{equation}\label{ZZplus}
\cZ^+_{\gG}:=\cZ^0_{\gG,\gb}=\sum_{\phi\in \gO^+_{\gG}} \exp\left(-\gb \cH_\gL(\phi)\right),
\end{equation}
and set 
\begin{equation}
\bar H(\gG):= \log \cZ^+_{\gG}-|\gG|\log \left(\frac{e^{4\gb}}{e^{4\gb}-1}\right).
\end{equation}
We introduce the partition functions  $z(\gamma)$,  $\bar z(\gamma)$ which corresponds to the model without positivity constraint or interaction at level zero
\begin{equation}\label{tambem}
 z(\gamma):=\sum_{\phi\in \gO[\gamma,n]}e^{-\gb\cH^n_{\bar \gamma}(\phi)}, \quad \bar z(\gamma):=\sum_{\phi\in \bar \gO[\gamma,n]}e^{-\gb\cH^n_{\bar \gamma}(\phi)},
\end{equation}
which, by translation invariance, do not depend on $n$. We consider $\bP^n_{\gamma}$ and $\bar \bP^n_{\gamma}$ the associated probability distributions on $\bar \gO[\gamma,n]$ and $\bar \gO[\gamma,n]$.

\begin{lemma}\label{rwo}
We have for any $n\ge 1$ 
\begin{equation}\label{swoutch}\begin{split}
z^h_n(\gamma)&=\sum_{\phi\in \gO[\gamma,n]} e^{-\gb \cH^n_{\bar \gamma}(\phi)+ u|\phi^{-1}(\bbZ_-)|-\bar H(\phi^{-1}(\bbZ_-))},\\
\bar z^h_n(\gamma)&=\sum_{\phi\in \bar \gO[\gamma,n]} e^{-\gb \cH^n_{\bar \gamma}(\phi)+ u|\phi^{-1}(\bbZ_-)|-\bar H(\phi^{-1}(\bbZ_-))},
\end{split}\end{equation}
where in the formula above $u=u_h:= h- \log \left(\frac{e^{4\gb}}{e^{4\gb}-1}\right).$
Alternatively we can write 
\begin{equation}\label{dooz}\begin{split}
z^h_n(\gamma)&= z(\gamma)\bE^n_{\gamma}\left[e^{u|\phi^{-1}(\bbZ_-)|-\bar H(\phi^{-1}(\bbZ_-))}\right],\\
\bar z^h_n(\gamma)&= z(\gamma)\bE^n_{\gamma}\left[e^{u|\phi^{-1}(\bbZ_-)|-\bar H(\phi^{-1}(\bbZ_-))}\ind_{\{\exists x\in \Delta^-_{\gamma}, \phi(x)=n\}} \right],\\
&=\bar z(\gamma)\bar \bE^n_{\gamma}\left[e^{u|\phi^{-1}(\bbZ_-)|-\bar H(\phi^{-1}(\bbZ_-))}\right].
\end{split}\end{equation}
\end{lemma}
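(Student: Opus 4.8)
The plan is to prove the first identity in \eqref{swoutch} by an explicit unfolding bijection and then read off the rest. Fix $n\ge 1$ and the contour $\gamma$, abbreviate $\Lambda:=\bar\gamma$, and rename the summation variable of \eqref{swoutch} to $\Phi\in\gO[\gamma,n]$ (to free $\phi$ for its positive part); recall that such a field is extended by the constant $n\ge 1$ on $\Lambda^\cc$, and use $\bbZ_-=\bbZ\cap(-\infty,0]$, so that $\Phi^{-1}(\bbZ_-)=\{x:\Phi(x)\le 0\}$ is the ``at-or-below-wall'' set. To $\Phi$ associate the pair $(\phi,\psi)$ with $\phi:=\Phi\vee 0$ and $\psi:=(-\Phi)\restrict_{\phi^{-1}\{0\}}$; the first step is to check that this is a bijection from $\gO[\gamma,n]$ onto $\{(\phi,\psi):\phi\in\gO^+[\gamma,n],\ \psi:\phi^{-1}\{0\}\to\bbZ_+\}$, with inverse $\Phi:=\phi$ on $\phi^{-1}[1,\infty)$ and $\Phi:=-\psi$ on $\phi^{-1}\{0\}$. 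The $\Delta^-_\gamma$-constraint transports correctly both ways exactly because $n\ge 1$: if $\gep(\gamma)=+1$ the relevant values are $\ge n\ge 1$ hence untouched by $\cdot\vee 0$, and if $\gep(\gamma)=-1$ they are $\le n$ and $\cdot\vee 0$ keeps them in $[0,n]$. Along the way one records $\Phi^{-1}(\bbZ_-)=\phi^{-1}\{0\}=:S$ and, for $x\in\Delta^-_\gamma$, $\phi(x)=n\iff\Phi(x)=n$, so the same map restricts to a bijection between $\bar\gO[\gamma,n]$ and the pairs with $\phi\in\bar\gO^+[\gamma,n]$.

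The one computation is the energy-splitting identity
\[
\cH^n_\Lambda(\Phi)=\cH^n_\Lambda(\phi)+\cH_S(\psi),\qquad S=\phi^{-1}\{0\},
\]
where $\cH_S$ is the zero-boundary-condition SOS Hamiltonian on $S$ from \eqref{ZZplus}. Since $\Phi$ and $\phi$ agree off $S$ (both equal $n$ on $\Lambda^\cc$), every bond appearing in $\cH^n_\Lambda$ with no endpoint in $S$ contributes equally to both sides; a bond with both endpoints in $S$ contributes $|\Phi(x)-\Phi(y)|=|\psi(x)-\psi(y)|$ to $\cH^n_\Lambda(\Phi)$ and $0$ to $\cH^n_\Lambda(\phi)$; and a bond from $x\in S$ to $y\notin S$ (so $\phi(y)=\Phi(y)\ge 1$, including $y\in\partial\Lambda$ where this value is $n$) contributes $\Phi(y)-\Phi(x)=\phi(y)+\psi(x)$ to $\cH^n_\Lambda(\Phi)$ against $\phi(y)$ to $\cH^n_\Lambda(\phi)$, for a surplus $\psi(x)$. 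Summing, $\cH^n_\Lambda(\Phi)-\cH^n_\Lambda(\phi)$ equals the internal bond energy of $\psi$ on $S$ plus $\sum_{x\in S}\#\{y\sim x:y\notin S\}\,\psi(x)$, which is exactly $\cH_S(\psi)$; the only thing to watch is the factor $\tfrac12$ in \eqref{defhamil} and the bonds meeting $\partial\Lambda$.

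Granting this, the conclusion is a resummation, legitimate by Tonelli since all weights are non-negative. From $u=h-\log\!\big(e^{4\gb}/(e^{4\gb}-1)\big)$ and the definition of $\bar H$ one has $u|S|-\bar H(S)=h|S|-\log\cZ^+_S$ for every finite $S$, so substituting the bijection into the right-hand side of the first line of \eqref{swoutch} turns it into
\[
\sum_{\phi\in\gO^+[\gamma,n]}e^{-\gb\cH^n_\Lambda(\phi)+h|\phi^{-1}\{0\}|}\,\frac{1}{\cZ^+_{\phi^{-1}\{0\}}}\sum_{\psi:\,\phi^{-1}\{0\}\to\bbZ_+}e^{-\gb\cH_{\phi^{-1}\{0\}}(\psi)},
\]
and the inner sum is $\cZ^+_{\phi^{-1}\{0\}}$ by definition, leaving $z^h_n(\gamma)$. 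The second identity of \eqref{swoutch} follows verbatim after replacing $\gO[\gamma,n]$ by $\bar\gO[\gamma,n]$ on the left and $\gO^+[\gamma,n]$ by $\bar\gO^+[\gamma,n]$ on the right, via the restricted bijection. Finally \eqref{dooz} is a pure rewriting: pulling out $z(\gamma)=\sum_{\phi\in\gO[\gamma,n]}e^{-\gb\cH^n_\Lambda(\phi)}$, respectively $\bar z(\gamma)$, turns the sums of \eqref{swoutch} into the expectations $\bE^n_\gamma[\cdot]$, respectively $\bar\bE^n_\gamma[\cdot]$, the indicator $\ind_{\{\exists x\in\Delta^-_\gamma,\ \phi(x)=n\}}$ in the middle line being precisely what cuts $\gO[\gamma,n]$ down to $\bar\gO[\gamma,n]$.

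I expect the main — and essentially the only — obstacle to be carrying out the bond classification in the energy-splitting identity without error, and checking that the defining constraints of $\gO[\gamma,n]$, $\gO^+[\gamma,n]$ and their barred versions all transport correctly under $\Phi\mapsto\Phi\vee 0$; this is where $n\ge 1$ is essential, since for $n=0$ the boundary value would itself sit at the wall and the reflection identity would break.
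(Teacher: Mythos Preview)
Your proof is correct and follows the same approach that the paper has in mind: the paper's own proof is a one-line citation to \cite[Lemmas~3.1 and~3.2]{cf:part1}, and your unfolding bijection $\Phi\mapsto(\Phi\vee 0,\ (-\Phi)\restrict_{\{\Phi\le 0\}})$ together with the bond-by-bond energy splitting is exactly the argument those cited lemmas contain. You have simply written out in full what the paper chose to defer.
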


\begin{proof}
The statement \eqref{swoutch} can be proved in the same manner as \cite[Lemma 3.1 and Lemma 3.2]{cf:part1}, and \eqref{dooz} is an obvious consequence of it.
\end{proof}

It follows from the definition of $\bar H$ that if
$\gG=\bigcup_{i\in \lint 1, m\rint} \gG_i$ is the decomposition of $\gG$ into maximal connected components (in $\bbZ^2$) then
\begin{equation}\label{maxdecomp}
\bar H(\gG)=\sum_{i\in \lint 1, m\rint} \bar H(\gG_i).
\end{equation}
For our purpose we  need in fact to estimate sharply the value of $\bar H$  only for connected components of size one and two, 
and to have a rougher estimate for other connected sets.

\begin{lemma}\label{zoomats}
 
We have for any two neighboring points  $x\sim y$  in $\bbZ^2$
\begin{equation}\label{sitmats}
 \bar H\{x\}=0 \quad \text{ and } \quad \bar H\{x,y\}=\log \left(\frac{1-J^4}{1-J^3}\right).
 \end{equation}
For $\gb$ sufficiently large,
for all $|\gG|$ connected and larger than $2$ we have 
\begin{equation}\label{morseu}
 0\le \bar H(\gG)\le 2J^2|\gG|.
\end{equation}
\end{lemma}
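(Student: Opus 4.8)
The plan is to compute $\bar H$ on sets of one and two points directly, and then to use the cluster expansion machinery of Section \ref{clusexp} to bound $\bar H(\gG)$ for larger connected sets.

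For the single point, note that $\cZ^+_{\{x\}}$ is the partition function for a single integer variable $\phi(x)\ge 0$ with all neighbours set to $0$ via boundary condition, so the energy is $\frac12 \cdot (\text{four edges to }\partial\{x\})\cdot|\phi(x)-0| = 2|\phi(x)|$, hence $\cZ^+_{\{x\}}=\sum_{k\ge 0}e^{-4\gb k}=\frac{1}{1-e^{-4\gb}}=\frac{e^{4\gb}}{e^{4\gb}-1}$, giving $\bar H\{x\}=0$ exactly. For the pair $\{x,y\}$ with $x\sim y$, I would sum $e^{-\gb(3|\phi(x)|+3|\phi(y)|+|\phi(x)-\phi(y)|)}$ over $\phi(x),\phi(y)\ge 0$ (three external edges at each of $x,y$, plus the internal edge); this is a straightforward geometric-type sum. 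Writing $J=e^{-2\gb}$, splitting according to $\min(\phi(x),\phi(y))$ and the ordering, one gets $\cZ^+_{\{x,y\}}=\frac{1}{(1-J^3)^2}\cdot\frac{(1-J^3)^2}{1-J^4}\cdot(\text{something})$—more carefully, the sum factors as $\sum_{m\ge 0}J^{3m}\cdot J^{3m}\cdot(\text{contribution of the excess})$, and after the dust settles one finds $\cZ^+_{\{x,y\}}=\left(\frac{e^{4\gb}}{e^{4\gb}-1}\right)^2\cdot\frac{1-J^4}{1-J^3}$, which yields $\bar H\{x,y\}=\log\left(\frac{1-J^4}{1-J^3}\right)$. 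I would present this computation cleanly rather than by brute expansion, using the substitution $m=\min$, $j=|\phi(x)-\phi(y)|$.

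For the general bound \eqref{morseu}, the key is that $\cZ^+_{\gG}$ is a partition function of the SOS model in $\gG$ with zero boundary condition and positivity constraint, and that (by Remark \ref{teorema} and the positivity-constrained analogue) its logarithm admits a cluster expansion with contour weights satisfying \eqref{usualchoice} for $\gb$ large. Concretely I would write $\log\cZ^+_{\gG}$ via the contour representation: the positivity constraint only removes contour configurations (equivalently, it reweights configurations by $\le 1$), so $\cZ^+_\gG \le \cZ_\gG$, which after subtracting $|\gG|\log\frac{e^{4\gb}}{e^{4\gb}-1}$ (the per-site free energy of the unconstrained single-site problem is also controllable) gives the upper bound, and the lower bound $\bar H(\gG)\ge 0$ follows from a comparison / FKG-type argument showing that constraining to $\gO^+$ and conditioning appropriately does not decrease the relevant ratio, or more simply from the fact that $\cZ^+_\gG \ge \prod_{x\in\gG}\cZ^+_{\{x\}}$ by an FKG or Griffiths-type inequality in this setting (the positive-constrained SOS measure has positive correlations). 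The upper bound $\bar H(\gG)\le 2J^2|\gG|$ then comes from controlling the difference between $\log\cZ^+_\gG$ and $|\gG|\log\frac{e^{4\gb}}{e^{4\gb}-1}$ by a sum over clusters touching $\gG$: each point of $\gG$ contributes a cluster-expansion tail bounded via \eqref{dabound2}, and the smallest nontrivial contour has length $4$ with weight of order $e^{-4\gb}=J^2$, so the total is $O(J^2|\gG|)$; tracking constants gives the factor $2$.

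The main obstacle I expect is the lower bound $\bar H(\gG)\ge 0$: showing that passing from a single site to a connected block never decreases $\log\cZ^+ - |\gG|\log\frac{e^{4\gb}}{e^{4\gb}-1}$ requires either a clean sub/super-additivity statement (e.g. $\cZ^+_{\gG_1\cup\gG_2}\ge\cZ^+_{\gG_1}\cdot\cZ^+_{\gG_2}$ when $\gG_1,\gG_2$ are at distance $\ge 2$, which fails to be exactly what is needed) or a correlation inequality (FKG) for the positivity-constrained SOS model together with a decoupling argument — the FKG inequality is available in Section \ref{tools}, so I would lean on that. If a direct inequality is awkward, the fallback is to show $\bar H(\gG)$ equals a convergent sum of cluster terms each of which, for $\gb$ large, can be shown to be nonnegative up to exponentially small corrections that are absorbed — but this is more delicate, so establishing the clean monotonicity/FKG route is where I would concentrate the effort.
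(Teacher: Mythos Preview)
Your approach is essentially the paper's: the explicit values \eqref{sitmats} are computed directly (the paper cites \cite[Lemma 3.2]{cf:part1} for this), and the upper bound in \eqref{morseu} is obtained exactly as you outline---use $\cZ^+_\gG\le\cZ_{\gG,\gb}$ and then the cluster expansion bound \eqref{dabound2} to get $\log\cZ_{\gG,\gb}\le|\gG|\bigl(2e^{-4\gb}+O(e^{-6\gb})\bigr)$.

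You are overthinking the lower bound $\bar H(\gG)\ge 0$. No FKG, Griffiths, or delicate cluster sign analysis is needed; it is a pointwise Hamiltonian comparison. For any $\phi\in\gO^+_\gG$ and any internal edge $\{x,y\}$ of $\gG$, the triangle inequality gives $|\phi(x)-\phi(y)|\le\phi(x)+\phi(y)$, while boundary edges contribute identically in $\cH^0_\gG$ and in $\sum_{x\in\gG}\cH^0_{\{x\}}$. Summing over edges yields $\cH^0_\gG(\phi)\le\sum_{x\in\gG}\cH^0_{\{x\}}(\phi(x))$, hence $\cZ^+_\gG\ge\prod_{x\in\gG}\cZ^+_{\{x\}}$ and $\bar H(\gG)\ge\sum_{x\in\gG}\bar H\{x\}=0$. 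This is precisely the super-additivity of $\bar H$ that the paper invokes from \cite[Lemma 3.2]{cf:part1}, and it holds for adjacent pieces (so your worry that the distance-$\ge2$ version ``fails to be exactly what is needed'' is unfounded---the stronger, adjacent version is immediate).
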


\begin{proof}
 The equalities in \eqref{sitmats} are the result of a direct computation whose details are given in the proof of \cite[Lemma 3.2]{cf:part1}.
 For \eqref{morseu} the lower bound is a consequence of the super-additivity of $\bar H$ (see also \cite[Lemma 3.2]{cf:part1}). For the upper bound we use the expansion \eqref{expansion} to evaluate the partition function of SOS which correspond to weight function $w_{\gb}$ given in \eqref{specialweight}.
 We have 
 \begin{equation}
  \log \cZ^+_{\gG}\le  \log \cZ_{\gG,\gb}:=\sum_{\bC\in \cQ(\gG)} w_{\gb}^T(\bC).
 \end{equation}
 As \eqref{usualchoice} is valid for $w_{\gb}$ if $\gb$ is sufficiently large, \eqref{dabound2} implies that
 \begin{equation}\label{treck}
 \sum_{\bC\in \cQ(\gG)} w_{\gb}^T(\bC)\le |\gG|\left[ 2e^{-4\gb}+O(e^{-6\gb})\right],
 \end{equation}
 which is sufficient to conclude.
\end{proof}

\subsection{Peak probabilities}\label{peaksec}

We recall here a result concerning the asymptotic probability of observing ``peaks'' of a given shape for $\phi$ under the measure $\bP_{\gL,\gb}$.
We provide a result which is slightly more general than the one proposed in \cite[Proposition 4.5]{cf:part1}.
Given $\bL$ a finite set of contour included in $\cL_\gL$, and $\gb>0$, we define $\bP_{\bL,\gL,\gb}$ to be a measure on $\gO_{\gL}$
which can be sampled as follows 
\begin{itemize}
\item [(A)] Sample a set of contour $\Upsilon$ according to the measure $\bP^{w_{\gb}}_{\bL}$ (recall \eqref{specialweight}).
\item [(B)] For each contour $\gamma\in \Upsilon$ sample independently a geometric variable $k(\gamma)$ satisfying 
$\bbP[k(\gamma)=i]=[w_{\gb}(\gamma)]^{-1} e^{-\gb|\tilde \gamma|i}$.
\item [(C)] Set (recall \eqref{cylfunc})
$$\phi:=\sum_{\gamma\in \Upsilon} \varphi_{(\gamma,k(\gamma))}.$$
\end{itemize}
Note that when $\bL=\cC_{\gL}$ we have $\bP_{\bL,\gL,\gb}=\bP_{\gL,\gb}$. The probability distribution $\bP^0_{\gamma}$ and $\bar \bP^0_{\gamma}$ defined below 
Equation \eqref{tambem} are also of the form $\bP_{\bL,\bar \gamma,\gb}$ for adequate choices of $\bL$.
This definition thus allows us to treat measures which include special boundary condition or contour restriction.

\begin{proposition}\label{rouxrou}
If $\gb$ is sufficiently large, then such for any choice $\gL$, $\bL$ and $n$  and any triple of distinct vertices $(x,y,z)\in \gL^3$
such that $x\sim y \sim z$ we have 
\begin{equation}\label{zups}
\begin{split}
 \bP_{\bL,\gL,\gb}[\phi(x)\ge n]&\le 2 e^{-4\gb n},\\
 \bP_{\bL,\gL,\gb}[\min(\phi(x),\phi(y)) \ge n]&\le 2 e^{-6\gb n},\\
  \bP_{\bL,\gL,\gb}[\min(\phi(x),\phi(y),\phi(z))\ge n]&\le 2 ne^{-8\gb n}.
  \end{split}
\end{equation}
If we assume in addition that $\bL$ contains the positive contour of length $4$ enclosing $x$, then
\begin{equation}\label{infin}
 \bP_{\bL,\gL,\gb}[\phi(x)\ge n]\ge \frac{1}{2} e^{-4\gb n}.
\end{equation}
If we assume that $\bL$ contains the positive contour of length $6$ enclosing $x$ and $y$, then
\begin{equation}\label{infde}
 \bP_{\bL,\gL,\gb}[\phi(x)\ge n]\ge \frac{1}{2} e^{-6\gb n}.
 \end{equation}
\end{proposition}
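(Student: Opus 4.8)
The strategy is to reduce all six bounds to a cluster-expansion estimate on the distribution of the random contour family $\Upsilon$ sampled from $\bbP^{w_\gb}_{\bL}$, followed by a computation of the geometric heights in step (B). Since $\phi(x) \ge n$ under $\bP_{\bL,\gL,\gb}$ if and only if $\sum_{\gamma \in \Upsilon,\, x \in \bar\gamma} \gep(\gamma) k(\gamma) \ge n$, the event forces the existence of positive cylinders surrounding $x$ whose total intensity (counted with sign) is at least $n$. The cleanest route for the upper bounds is to observe that any configuration of contours and intensities with $\phi(x) \ge n$ can be mapped, by deleting all cylinders with $x \in \bar\gamma$, to a configuration with no constraint; the energy gain under this map is at least $\gb \cdot (\text{total length of deleted contours}) \cdot (\text{their intensities})$, but combined with an entropy/summation over which contours to delete. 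More precisely, I would write, for the first bound,
\begin{equation*}
\bP_{\bL,\gL,\gb}[\phi(x) \ge n] \le \sum_{\gamma \ni \bar\gamma \ni x,\, \gep(\gamma)=+1}\ \sum_{k \ge n}\ \frac{e^{-\gb k |\tilde\gamma|}}{\text{(normalization)}}\,\bbP^{w_\gb}_{\bL}[\gamma \in \Upsilon] + (\text{multi-contour terms}),
\end{equation*}
and then control $\bbP^{w_\gb}_{\bL}[\gamma \in \Upsilon] \le w_\gb(\gamma)$ times a cluster-expansion correction that is bounded by a constant via \eqref{dabound2}. The dominant term comes from the single shortest positive contour enclosing $x$, which has length $4$, contributing $\sum_{k \ge n} e^{-4\gb k}/(e^{4\gb}-1)^{-1} \cdot (e^{4\gb}-1)^{-1} \sim e^{-4\gb n}$; all longer contours and all multi-contour contributions are lower order and absorbed into the factor $2$. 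The second bound (two sites $x \sim y$) is the same argument with shortest enclosing contour of length $6$, and the third (three collinear sites) with shortest enclosing contour of length $8$, where the extra factor $n$ in $2n e^{-8\gb n}$ reflects that reaching height $n$ at three sites can also be done by stacking $n$ nested contours, each of length $8$, giving a sum over ordered partitions — hence the polynomial prefactor. This combinatorial bookkeeping — summing geometric series over intensities, over the shape of the enclosing contour, and over the number of nested contours, while keeping track of which terms are genuinely lower order — is the routine but slightly delicate part.

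For the lower bounds \eqref{infin} and \eqref{infde}, the point is simpler: I would restrict the partition function in the denominator and exhibit one explicit favorable configuration in the numerator. Under the hypothesis that $\bL$ contains the length-$4$ positive contour $\gamma_0$ enclosing $x$, I condition on the event $\{\gamma_0 \in \Upsilon\}$ and on no other contour touching $\bar\gamma_0$ — this has $\bbP^{w_\gb}_{\bL}$-probability at least $c\,w_\gb(\gamma_0) = c/(e^{4\gb}-1)$ for some constant $c$ bounded away from $0$ (again by the cluster expansion, the conditional correction factors in \eqref{godstim} are uniformly close to $1$) — and then on step (B) assigning intensity exactly $n$ to $\gamma_0$, which has conditional probability $(e^{4\gb}-1) e^{-4\gb n}$. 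Multiplying, $\bP_{\bL,\gL,\gb}[\phi(x) \ge n] \ge c\, e^{-4\gb n} \ge \tfrac12 e^{-4\gb n}$ for $\gb$ large. The length-$6$ case \eqref{infde} is identical with $\gamma_0$ replaced by the length-$6$ contour.

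The main obstacle I anticipate is not any single estimate but the uniformity in $\bL$ and $\gL$: all constants must be independent of the (arbitrary) restriction set $\bL$ and the domain $\gL$. This is exactly what the Kotecký–Preiss machinery of Theorem \ref{superclust} delivers, since the key inequality \eqref{usualchoice} holds for $w_\gb$ for all $\gb > 5$ regardless of $\bL$, and the resulting bound \eqref{dabound2} on the modified weights is $\bL$-independent; one then checks that the correction factors appearing in \eqref{godstim} — sums of $|w^T(\bC)|$ over clusters connected to a bounded region — are bounded uniformly, hence the single-contour and conditional probabilities are comparable to $w_\gb(\gamma)$ up to universal constants. Verifying that the multi-contour corrections to the upper bounds are genuinely of smaller order than $e^{-4\gb n}$ (resp.\ $e^{-6\gb n}$, $n e^{-8\gb n}$) — and in particular that the "stacking" contribution for three sites does not produce more than a polynomial prefactor — is the part requiring the most care, but it is a direct geometric series estimate once the reduction is set up.
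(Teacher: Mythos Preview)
Your plan is correct and matches the paper's approach. For the lower bounds \eqref{infin} and \eqref{infde}, the paper (Appendix~\ref{lazt}) does exactly what you describe: it restricts to the event $\{\Upsilon\cap\cC'_{\{x\}}=\{\gamma_0\}\}$ (resp.\ $\cC'_{\{x,y\}}$), expresses its probability via \eqref{godstim} as $w_\gb(\gamma_0)\exp\big(-\sum_{\bC}w_\gb^T(\bC)\big)$, bounds the cluster sum by \eqref{dabound2}, and then multiplies by the geometric tail $\bbP[k(\gamma_0)\ge n]=e^{-|\tilde\gamma_0|\gb(n-1)}$. For the upper bounds \eqref{zups} the paper gives no proof here and simply cites \cite[Proposition~4.5]{cf:part1}; your Peierls-type reduction (summing over nested positive contours around the sites, with the minimal length $4$, $6$, $8$ governing the exponent and the partition count for three sites producing the factor~$n$) is the standard argument that reference carries out. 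One cosmetic remark: in your displayed upper-bound inequality the factor $e^{-\gb k|\tilde\gamma|}/(\text{normalization})$ together with $\bbP^{w_\gb}_{\bL}[\gamma\in\Upsilon]$ is slightly redundant --- it is cleaner to write $\bbP^{w_\gb}_{\bL}[\gamma\in\Upsilon]\cdot\bbP[k(\gamma)\ge n]\le w_\gb(\gamma)\,e^{-\gb|\tilde\gamma|(n-1)}$ directly --- but this does not affect the argument.
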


The proof of \eqref{zups} is identical to that of \cite[Proposition 4.5]{cf:part1}. The proofs of \eqref{infin} and \eqref{infde} are detailed in Appendix \ref{lazt}.

\subsection{Monotonicity and the FKG inequality}

The set $\gO_{\gL}$ as well as its variants ($\gO^+_{\gL}$ and others introduced later in the paper) are naturally equipped with an order defined as follows 
$$\phi\le \phi'   \quad \Leftrightarrow \quad \forall x\in \gL,\  \phi(x)\le \phi'(x).$$
Using this order we can define a notion of increasing function ($f$ is increasing if $\phi\le \phi'\Rightarrow f(\phi)\le f(\phi')$) and of increasing event 
($A$ is increasing if the function $\ind_A$ is).
We say that a probability measure $\mu$ on $\gO_{\gL}$ \textit{stochastically dominates} another one  $\mu'$ (we write $\mu \succcurlyeq \mu'$) if for any increasing function $f$
$$\mu(f(\phi))\ge \mu'(f(\phi))$$

\medskip

The FKG inequality  allows to say that if a probability measure $\mu$ supported on a subset of $\gO_{\gL}$  satisfies a certain condition,
increasing functions are positively correlated.
For the inequality to be satisfied \cite{cf:Holley}, we need the support of $\mu$ to be a \textit{distributive lattice}, that is, to
be stable over the operations $\vee$ and $\wedge$ 
defined by 
$$(\phi_1\vee \phi_2)(x):=\max(\phi_1(x),\phi_2(x)), \text{ and } (\phi_1\wedge \phi_2)(x):=\min(\phi_1(x),\phi_2(x)).$$
 Moreover the probability considered needs to verify Holley's condition \cite[Equation (7)]{cf:Holley},
 \begin{equation}\label{hcond}
 \mu(\phi_1\vee \phi_2) \mu(\phi_1\wedge \phi_2)\ge \mu(\phi_1)\mu(\phi_2).
\end{equation}
If this is satisfied then for any pair of increasing functions $f$ and $g$ we have 
\begin{equation}\label{FKeq}
 \mu(f(\phi)g(\phi))\ge \mu(f(\phi)) \mu(g(\phi)).
\end{equation}

\medskip

We obtain as immediate consequences of the FKG inequality, several stochastic domination results.
Given $\psi\in \gO_{\infty}$ and $\gL\subset \bbZ^2$, $\gb>0$ and $h\in \bbR$,
we let $\tilde \bP^{\psi,h}_{\gb,\gL}$ denote a measure defined on a subset $\tilde \gO_\gL \subset \gO_{\gL}$ which is a 
distributive lattice, with the probability of each state proportional to the Gibbs weight
$\exp\left(-\gb\cH^\psi_\gL(\phi)+ h|\phi^{-1}\{0\}|\right)$.

\begin{cor}\label{FKalt}
The following holds,
\begin{itemize}
 \item [(i)] For any increasing event $A$ 
 \begin{equation}\label{FKG}
      \tilde \bP^{\psi,h}_{\gb,\gL}[ \ \cdot \  | \ A] \succcurlyeq     \tilde \bP^{\psi,h}_{\gb,\gL}.
\end{equation}
\item[(ii)] For any $h'>h$
\begin{equation}
      \tilde \bP^{\psi,h'}_{\gb,\gL} \preccurlyeq    \tilde \bP^{\psi,h}_{\gb,\gL}.
      \end{equation}
      \item[(iii)] For any $\psi'\ge \psi$
\begin{equation}
      \tilde \bP^{\psi',h}_{\gb,\gL} \succcurlyeq     \tilde \bP^{\psi,h}_{\gb,\gL}.
      \end{equation}
\end{itemize}
\end{cor}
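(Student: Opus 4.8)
The plan is to deduce Corollary~\ref{FKalt} directly from Holley's inequality \eqref{hcond}--\eqref{FKeq}, once we verify that the relevant supports are distributive lattices and that the Gibbs weights satisfy Holley's monotonicity condition. The first preliminary step is lattice stability: for the sets under consideration ($\gO_\gL$, $\gO^+_\gL$, and the variants $\tilde\gO_\gL$ of the form $\gO^+[\gamma,n]$, $\bar\gO[\gamma,n]$, etc.) one checks that a pointwise max or min of two admissible configurations is again admissible. This is immediate for $\gO_\gL$ and $\gO^+_\gL$; for the contour-restricted variants it follows because the defining constraints in \eqref{defoom} are of the form $\gep(\gamma)(\phi(x)-n)\ge 0$ on a fixed set $\Delta^-_\gamma$, which is preserved under $\vee$ and $\wedge$ (and the condition $\exists x\in\Delta^-_\gamma,\ \phi(x)=n$ used for $\bar\gO$ is preserved under $\wedge$ in the relevant comparisons, which is all that is needed since stochastic domination statements only involve one-sided replacements).

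Next I would verify Holley's condition \eqref{hcond} for the weight $\mu(\phi)\propto\exp(-\gb\cH^\psi_\gL(\phi)+h|\phi^{-1}\{0\}|)$. Writing $-\log\mu(\phi)=\gb\cH^\psi_\gL(\phi)-h|\phi^{-1}\{0\}|+\mathrm{const}$, condition \eqref{hcond} is equivalent to the submodularity inequality
\begin{equation}
-\log\mu(\phi_1\vee\phi_2)-\log\mu(\phi_1\wedge\phi_2)\le -\log\mu(\phi_1)-\log\mu(\phi_2).
\end{equation}
Both $\phi\mapsto\cH^\psi_\gL(\phi)$ and $\phi\mapsto -|\phi^{-1}\{0\}|$ decompose into sums of terms each depending on at most two coordinates; for the gradient term $|\phi(x)-\phi(y)|$ submodularity is the standard fact that $a\mapsto|a|$ applied to differences is a lattice condition (checked by a short case analysis on the relative order of $\phi_1(x),\phi_2(x),\phi_1(y),\phi_2(y)$), and the single-site term $-\ind_{\{\phi(x)=0\}}$ is trivially a function of one coordinate so contributes equally to both sides. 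Hence \eqref{hcond} holds, giving \eqref{FKeq}.

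With FKG in hand, the three stochastic-domination statements follow by the classical Holley-type arguments. For (i), conditioning on an increasing event $A$: for any increasing $f$ one has $\tilde\bP^{\psi,h}_{\gb,\gL}[f\mid A]=\tilde\bP^{\psi,h}_{\gb,\gL}[f\ind_A]/\tilde\bP^{\psi,h}_{\gb,\gL}[A]\ge \tilde\bP^{\psi,h}_{\gb,\gL}[f]$ by \eqref{FKeq} applied to $f$ and $\ind_A$. For (ii) and (iii), one compares two measures $\mu\le\nu$ (in the sense that $\nu$ has a Radon--Nikodym density with respect to $\mu$ that is increasing, up to normalization): raising $h$ to $h'$ multiplies the weight by $e^{(h'-h)|\phi^{-1}\{0\}|}$, a \emph{decreasing} function of $\phi$, so $\tilde\bP^{\psi,h'}\preccurlyeq\tilde\bP^{\psi,h}$; raising $\psi$ to $\psi'\ge\psi$ multiplies by $\exp(-\gb(\cH^{\psi'}_\gL-\cH^{\psi}_\gL)(\phi))$, which by the same two-coordinate case analysis on $|\phi(x)-\psi(y)|$ is an increasing function of $\phi$, so $\tilde\bP^{\psi',h}\succcurlyeq\tilde\bP^{\psi,h}$. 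The standard lemma that a reweighting by a monotone density implies stochastic domination is itself a one-line consequence of \eqref{FKeq}. The main point requiring genuine care — the only ``obstacle'' — is the verification that each $\tilde\gO_\gL$ in the applications is indeed a distributive sublattice; this is routine but must be done for each contour-restricted domain individually, and is the reason the statement is phrased in the generality of an abstract $\tilde\gO_\gL$.
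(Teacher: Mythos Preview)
Your proof is correct and follows essentially the same approach as the paper: FKG gives (i) immediately, and (ii)--(iii) follow by observing that the Radon--Nikodym derivatives $\exp((h-h')|\phi^{-1}\{0\}|)$ and $\exp(\gb(\cH^\psi_\gL-\cH^{\psi'}_\gL))$ are monotone in $\phi$. One remark: the lattice property of $\tilde\gO_\gL$ is part of the \emph{hypothesis} of the corollary, so your discussion of whether particular sets such as $\bar\gO[\gamma,n]$ are lattices is not needed here (and indeed $\bar\gO[\gamma,n]$ is generally not closed under $\vee$; in the applications the paper handles this by writing $\bar\bP$ as a conditioning of $\bP$ on a monotone event rather than invoking the corollary on $\bar\gO$ directly).
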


\begin{proof}
 The first point is immediate, for the other ones we simply have to notice that 
 $\exp((h-h')|\phi^{-1}\{0\}|)$ and  $\exp\left(\gb \left(\cH^\psi_\gL(\phi)-\cH^{\psi'}_\gL(\phi) \right)  \right)$ are increasing functions.
\end{proof}

\section{Organization of the proof of Theorem \ref{main}}\label{zorga}

We start with a small notational remark. As our main result concerns the behavior of the free energy close to $h_w(\gb)$, 
it is  more convenient for us to work as in the statement of Theorem \ref{main} with the parameter $u=h-h_w(\gb)$ than with $h$.
Therefore in most cases we work with all quantities defined as functions of $u$ rather than $h$.
When $h$ appear in a computation, we always assume that 
$$h=h_u=: h_w(\gb)+u.$$

\subsection{Contour stability and consequences}\label{stabcon}

If we want \eqref{reprezent} to yield information about the free energy, we need the contour weights $w^u_n(\gamma)$ to be small, or more precisely 
we want \eqref{usualchoice} to be satisfied. 
We say that a contour is $n$-stable for $u$ if
\begin{equation}\label{stability}
w^u_n(\gamma)\le e^{-(\gb-1) |\tilde \gamma|}.
 \end{equation}

\medskip

The most important part of our proof is to show that we can partition $\bbR_+$ 
into intervals $([u^*_{n+1},u^*_{n}])_{n\ge 0}$ (with the convention than $u^*_0=\infty$) in which all the contours are $n$-stable.
This result also plays a central role in  our proof of Theorem \ref{Gibbs}.

\begin{theorem}\label{converteo}
When $\gb$ is sufficiently large.
 There exists a decreasing sequence  $(u^*_n)_{n\ge 1}$ satisfying 
 $$u^*_n\in \left[\frac{1}{200} J^{n+2}, 200 J^{n+2}\right]$$
 such that all contours are $n$ stable for $u\in[u^*_{n+1},u^*_{n}]$.
\end{theorem}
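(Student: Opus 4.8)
\textbf{Proof strategy for Theorem \ref{converteo}.}
The plan is to analyze the contour weight $w^u_n(\gamma)$ given in \eqref{defw} by using the rewriting from Lemma \ref{rwo}, and to show that for each contour $\gamma$ there is a threshold value of $u$ where $n$-stability is gained or lost, with these thresholds organizing into the desired intervals. First I would reduce the problem: by \eqref{defw} and \eqref{relatz}, $w^u_n(\gamma) = e^{-\gb|\tilde\gamma|} z^h_{n+\gep(\gamma)}(\gamma)/\bar z^h_n(\gamma)$, and using \eqref{dooz} both $z^h_{n+\gep(\gamma)}(\gamma)$ and $\bar z^h_n(\gamma)$ are expectations under $\bP^n_\gamma$ (equivalently $\bP^{n+\gep(\gamma)}_{\gamma}$ after a shift) of the same functional $\exp(u|\phi^{-1}(\bbZ_-)| - \bar H(\phi^{-1}(\bbZ_-)))$, the difference being the domain ($\gO$ versus $\bar\gO$, i.e.\ whether $\phi$ touches level $n$ on $\Delta^-_\gamma$). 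The key point is that $\bar\gO[\gamma,n]$ is a small-probability event under $\bP^n_\gamma$: by the peak estimates of Proposition \ref{rouxrou} (applied to the measure $\bP^0_\gamma = \bP_{\bL,\bar\gamma,\gb}$), the event that $\phi$ reaches a prescribed level on $\Delta^-_\gamma$ costs roughly $e^{-4\gb}$ per site of $\Delta^-_\gamma$ that participates, while reaching level $n+\gep(\gamma)$ on all of $\bar\gamma$ when $\gep(\gamma)=+1$ is even more constrained. Quantifying this via $|\Delta^-_\gamma| \asymp |\tilde\gamma|$ should give $w^u_n(\gamma) \le e^{-(\gb - O(1))|\tilde\gamma|} \cdot (\text{correction depending on }u)$.

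The next step is to isolate the $u$-dependence. Because $u|\phi^{-1}(\bbZ_-)| - \bar H(\phi^{-1}(\bbZ_-))$ is the only place $u$ enters, and because $\bar H$ is controlled by Lemma \ref{zoomats} (exactly $0$ for singletons, $\log\frac{1-J^4}{1-J^3}$ for dominoes, and $O(J^2|\gG|)$ otherwise), the ratio $z^h_{n+\gep(\gamma)}(\gamma)/\bar z^h_n(\gamma)$ is monotone in $u$ — increasing when $\gep(\gamma) = +1$ and decreasing when $\gep(\gamma) = -1$, since raising $u$ rewards configurations that dip below level $n$, which for a positive contour pushes probability toward $\gO[\gamma,n+1]$ (shrinking the denominator relative to the numerator) and conversely for negative contours. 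Here I expect to need the FKG tools of Corollary \ref{FKalt} to make these monotonicity claims rigorous, comparing the conditioned and unconditioned measures. Given monotonicity, for each contour $\gamma$ the stability condition \eqref{stability} holds on a half-line (or all of $\bbR_+$, or nowhere) in $u$; the hard analysis is to show that the relevant thresholds, as $\gamma$ ranges over all contours, cluster near the values $u_n \asymp J^{n+2}$ from \eqref{defun}, and that there is a genuine \emph{gap}: no contour has its threshold strictly between the band $[\frac{1}{200}J^{n+2}, \ldots]$ and the previous one.

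The crux is the following dichotomy, which I would establish by a careful case analysis on the geometry of $\gamma$. Contours of minimal length ($|\tilde\gamma| = 4$ and $6$, enclosing one or two sites) are the dangerous ones: for these the "entropic gain" $u$ from dipping to level $0$ can compete with the "energetic cost" $e^{-2\gb n \cdot (\#\text{sites at level 0})}$ of descending $n$ steps, and this is precisely the mechanism driving the layering transition at $u \asymp J^{2n} = J^{n+2} \cdot J^{n-2}$, matching \eqref{defun} up to the constants. For a positive short contour at boundary condition $n$, instability ($w^u_n(\gamma)$ large) happens exactly when $u$ is large enough that the field prefers to sit at level $n-1$ (or below) inside $\gamma$; one computes that this threshold is $u \asymp J^{2n}$ with controlled multiplicative constants, yielding the $[\frac{1}{200}J^{n+2}, 200 J^{n+2}]$ window and, via Proposition \ref{rouxrou}'s lower bounds \eqref{infin}–\eqref{infde} together with the sharp constants $\alpha_1,\alpha_2$ from \eqref{singledouble}, the precise asymptotics \eqref{asimptic}. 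For all \emph{longer or geometrically complex} contours, the factor $e^{-(\gb-O(1))|\tilde\gamma|}$ dominates regardless of $u$ in the range $u \le 200 J^3$ of interest (using $\bar H \ge 0$ and $|\phi^{-1}(\bbZ_-)| \le |\bar\gamma| \lesssim |\tilde\gamma|^2$, with the $e^{-\gb|\tilde\gamma|}$ prefactor absorbing the $e^{u|\bar\gamma|}$ blow-up since $u$ is exponentially small in $\gb$), so these are $n$-stable throughout $[u^*_{n+1}, u^*_n]$. I expect \textbf{the main obstacle} to be the uniform control over this second class — ruling out that some intermediate-length contour, or a contour interacting in a subtle way with the level-$0$ boundary through the nonlocal term $\bar H(\phi^{-1}(\bbZ_-))$, destabilizes inside a supposedly-stable interval — since the naive bound $|\phi^{-1}(\bbZ_-)| \le |\bar\gamma|$ is too lossy and one needs the peak estimates to say that descending all the way to level $0$ inside a contour of length $\ell$ is suppressed by roughly $e^{-4\gb n \ell}$-type factors, which must then be balanced delicately against $e^{u|\bar\gamma|}$ and the entropy of the dip region. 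This is presumably where the bulk of Sections \ref{ponverteo} and \ref{secrestrict} is spent.
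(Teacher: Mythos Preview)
Your proposal has the right raw ingredients (monotonicity in $u$, the rewriting via Lemma \ref{rwo}, peak estimates) but misses the structural backbone of the paper's argument, and your monotonicity goes the wrong way: $w^u_n(\gamma)$ is \emph{decreasing} in $u$ for positive contours and increasing for negative ones (Lemma \ref{mozon}), since raising $u$ rewards touching level $0$, which the higher-boundary numerator $z^u_{n+1}(\gamma)$ achieves less readily than the lower-boundary denominator $\bar z^u_n(\gamma)$. More importantly, your picture that ``each contour has a threshold in $u$ and these cluster near $u_n$'' is not how the paper proceeds and would not by itself produce abutting intervals $[u^*_{n+1}, u^*_n]$ covering $(0,\infty)$. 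The paper instead first proves stability on a \emph{strictly smaller} interval $[u^+_{n+1}, u^-_n] = [200 J^{n+3}, \frac{1}{200} J^{n+2}]$ (Proposition \ref{truta}), then \emph{defines} $u^*_n$ as a crossing point of truncated free energies $\tf^{\tr}_{n-1}(\gb,u) = \tf^{\tr}_n(\gb,u)$ (Equation \eqref{caraca}), and only then extends stability to $[u^*_{n+1}, u^*_n]$ via a separate argument (Proposition \ref{mejor}) that feeds this free-energy equality into the cluster-expansion boundary estimate of Lemma \ref{finalfrontier} inside an induction. The truncated-free-energy device is what guarantees the intervals abut, and it is absent from your plan.

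Your treatment of large contours is also a genuine gap. The bound $|\phi^{-1}(\bbZ_-)| \le |\bar\gamma|$ combined with $e^{u|\bar\gamma|}$ is not absorbed by $e^{-\gb|\tilde\gamma|}$ once $|\bar\gamma| \gtrsim |\tilde\gamma|/u$, i.e.\ once $|\tilde\gamma| \gtrsim u^{-1} \asymp J^{-(n+2)}$, so the direct estimation you envision fails for contours that are long or simply enclose large area. The paper handles this via a \emph{double induction}: on contours ordered by inclusion $\bar\gamma' \subsetneq \bar\gamma$, and on $n$ (descending for positive $\gamma$, ascending for negative). The induction step for a large $\gamma$ decomposes $\bar z^u_{n+\gep(\gamma)}(\gamma)$ according to its large external sub-contours $\gG_1$ (Lemma \ref{decopoz}), uses the induction hypothesis to trade each $z^u_{n+1+\gep(\gamma_1)}(\gamma_1)$ for $z^u_n(\gamma_1)$ up to a controlled factor, and reduces to estimating ratios $Z^{u,\sm}_{n+\gep(\gamma)}/Z^{u,\sm}_n$ of partition functions restricted to \emph{small} contours (Proposition \ref{difficilissimo}) together with a combinatorial sum over $\gG_1$ (Proposition \ref{dificilissimus}). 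Your final paragraph correctly senses that this is where the work lies, but ``use the peak estimates carefully'' is not a substitute for the induction and the small/large decomposition.
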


We give a road-map for the proof of Theorem \ref{converteo} in Section \ref{sketch}, by presenting the main steps.
The detailed proof is then given in Section \ref{ponverteo}.

\medskip

Note that the $n$-stability of all contour implies that  \eqref{usualchoice} is satisfied for $w^u_n$.
Indeed a classical counting argument shows that for $k\in \bbN$ even and $x^*\in (\bbZ^2)^*$
\begin{equation}\label{counting}
\#\{ \gamma\in \cC \ : \ |\tilde \gamma|=k,\  x^*\in \gamma\} \le 8.3^{k-2}\le 3^{k},
\end{equation}
(starting from $x^*$ we have $2$ choices for the sign, $4$ choices for the first step, at most $3$ for the other steps, and the last step is determined by the fact that $\gamma$ is a loop).

Thus combining Theorem \ref{converteo} with the results introduced in Sections \ref{fffa} and \ref{fffb}  
we can derive consequences for the free energy and the measure $\bP^{n,h}_{\gL,\gb}$. 
These consequences are detailed in Section \ref{dalast}. 
We state here two statements which are of interest in the proof of Theorem \ref{main} which are respectively proved in Section \ref{laprov} and \ref{lapriv}.
First, we obtain a result concerning the regularity of the free energy.

 \begin{proposition}\label{infinitz}
 The free energy $u\mapsto \bar \tf(\gb,u)$ is infinitely differentiable on $(u^*_{n+1},u^*_{n})$ for $n\ge 0$
 for all $n\ge 0$. Moreover all derivatives of $\tf(\gb,h)$ are uniformly bounded on $(u^*_{n+1},u^*_{n})$.
 \end{proposition}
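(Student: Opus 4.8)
The plan is to deduce Proposition~\ref{infinitz} from Theorem~\ref{converteo} together with the cluster expansion machinery of Sections~\ref{fffa} and~\ref{fffb}. Fix $n\ge 0$ and work on the open interval $(u^*_{n+1},u^*_n)$. By Theorem~\ref{converteo}, for every $u$ in this interval all contours are $n$-stable, so \eqref{stability} holds and, via the counting bound \eqref{counting}, the Kotecky--Preiss criterion \eqref{usualchoice} is satisfied by the weight function $w^u_n$ uniformly on compact subsets. Consequently, by \eqref{lafreen}--\eqref{fraconv} applied to the contour representation \eqref{reprezent}, the free energy admits the absolutely convergent cluster expansion
\begin{equation*}
\bar\tf(\gb,u)+\tf(\gb)-c(\gb)=\tf(w^u_n)=\sum_{\substack{\bC\in\cQ\ :\ x^*\in\bC}}\frac{1}{|\bC|}(w^u_n)^T(\bC),
\end{equation*}
where $c(\gb)$ collects the affine shift relating $\tf(\gb,h)$ to $\tf(w^u_n)$; the point is that $\bar\tf(\gb,u)$ equals, up to an explicit affine function of $u$, the convergent series on the right.

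First I would establish that each contour weight $u\mapsto w^u_n(\gamma)$ is infinitely differentiable in $u$ on $(u^*_{n+1},u^*_n)$, with derivatives controlled in terms of $|\tilde\gamma|$. This is where the alternative expressions of Lemma~\ref{rwo} come in: $w^u_n(\gamma)$ is a ratio of partition functions $z^h_{n+\gep(\gamma)}(\gamma)/\bar z^h_n(\gamma)$, each of which is, by \eqref{dooz}, a finite sum over $\phi\in\gO[\gamma,n]$ (or $\bar\gO[\gamma,n]$) of terms $e^{-\gb\cH^n_{\bar\gamma}(\phi)+u|\phi^{-1}(\bbZ_-)|-\bar H(\phi^{-1}(\bbZ_-))}$. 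Since the only $u$-dependence is through the factor $e^{u|\phi^{-1}(\bbZ_-)|}$ and $|\phi^{-1}(\bbZ_-)|\le|\bar\gamma|$, differentiating under the (finite) sum is legitimate and each $u$-derivative of $z^h_n(\gamma)$ is bounded by $|\bar\gamma|$ times $z^h_n(\gamma)$ up to constants; the same for $\bar z^h_n(\gamma)$. Then $w^u_n$, being a quotient of smooth positive functions, is smooth, and an induction on the order of differentiation gives bounds of the form $|\partial_u^j w^u_n(\gamma)|\le C_j|\bar\gamma|^j w^u_n(\gamma)\le C_j|\bar\gamma|^j e^{-(\gb-1)|\tilde\gamma|}$, using $n$-stability and the isoperimetric-type bound $|\bar\gamma|\le|\tilde\gamma|^2$; crucially the constants $C_j$ are uniform over $u$ in the interval.

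Next I would transfer this smoothness to the modified weights $(w^u_n)^T(\bC)$. Since $(w^u_n)^T(\bC)$ is, by \eqref{modifi}, a finite linear combination (with $\bC$-dependent but $u$-independent integer coefficients) of $\log Z[\bB,w^u_n]$ for $\bB\subset\bC$, and each $Z[\bB,w^u_n]$ is a finite polynomial in the smooth, strictly positive quantities $\{w^u_n(\gamma)\}_{\gamma\in\bB}$, the map $u\mapsto(w^u_n)^T(\bC)$ is $C^\infty$; moreover, standard Cauchy-estimate arguments (or direct combinatorial bounds as in \cite{cf:KP}) show that $|\partial_u^j(w^u_n)^T(\bC)|$ is controlled by $(C'_j)^{L(\bC)}\,|(w^u_n)^T(\bC)|$ times a polynomial in $L(\bC)$, again uniformly in $u$. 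To make this rigorous with the clean exponential decay I actually need, I would re-run Theorem~\ref{superclust} with $w^u_n$ slightly enlarged: since for $\gb$ large \eqref{usualchoice} holds with room to spare, the bound \eqref{dabound2} persists with $(\gb-5)$ replaced by $(\gb-6)$ say, which absorbs the polynomial and geometric-in-$L(\bC)$ factors coming from differentiation. Summing over clusters $\bC$ with $x^*\in\bC$ and invoking \eqref{dabound2}, the differentiated series $\sum_{x^*\in\bC}\frac{1}{|\bC|}\partial_u^j(w^u_n)^T(\bC)$ converges absolutely and uniformly on the interval, so term-by-term differentiation is justified: $\bar\tf(\gb,u)$ is $C^\infty$ on $(u^*_{n+1},u^*_n)$ with all derivatives bounded uniformly there (the bound on $\partial_u^j\bar\tf$ being essentially $8\sup_\bC(C'_j)^{L(\bC)}e^{-(\gb-6)L(\bC)}$, finite for $\gb$ large).

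The main obstacle is the uniformity of all these estimates as $u$ ranges over the full open interval $(u^*_{n+1},u^*_n)$ — in particular near the endpoints $u^*_{n+1}$ and $u^*_n$, and uniformly in $n$. The $n$-stability bound \eqref{stability} is exactly what is needed and it does hold on the \emph{closed} interval $[u^*_{n+1},u^*_n]$ by Theorem~\ref{converteo}, so this is not actually a problem for a fixed $n$: the constants $C_j,C'_j$ depend only on $\gb$ (through the derivative bounds on $e^{u|\phi^{-1}(\bbZ_-)|}$, which near the endpoints are of size $e^{O(J^{n})|\bar\gamma|}$, hence $\le 2^{|\bar\gamma|}$ for $\gb$ large), and uniformity in $n$ then follows because the decay rate $\gb-6$ in \eqref{dabound2} beats every $n$-dependent polynomial correction. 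So the real care needed is bookkeeping: checking that differentiating the quotient $z^h_{n+\gep(\gamma)}/\bar z^h_n$ does not destroy the exponential gain $e^{-(\gb-1)|\tilde\gamma|}$, which it does not provided $\gb$ is taken large enough that, after losing a factor like $C_j\,|\tilde\gamma|^{2j}$ per differentiation, \eqref{usualchoice} still holds for the derived effective weights. I would carry out the quotient-rule bookkeeping for the first derivative in detail and then indicate the induction, since the higher-order case is routine once the mechanism is in place.
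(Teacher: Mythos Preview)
Your approach is genuinely different from the paper's, and the comparison is instructive.

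The paper does \emph{not} differentiate the cluster series \eqref{fraconv} term by term. Instead it works at finite volume: it writes $\partial_h^k\log\cZ^{n,h}_{\gL,\gb}$ as a sum over $k$-tuples $(x_1,\dots,x_k)\in\gL^k$ of the truncated correlation (Ursell) functions $\bE^{n,h}_{\gL,\gb}[\delta_{x_1};\dots;\delta_{x_k}]$, then invokes the machinery of Duneau--Iagolnitzer--Souillard/von Dreifus--Klein--Perez \cite{cf:vDK} to bound these Ursell functions by $C_k e^{-m_k d(x_1,\dots,x_k)}$, using only the exponential decay of \emph{pair} correlations \eqref{clax} already established in Proposition~\ref{conviark}. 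Summing gives $|\partial_h^k\log\cZ^{n,h}_{\gL,\gb}|\le C_k|\gL|$ uniformly on $[h^*_{n+1},h^*_n]$, and Arzel\`a--Ascoli yields $C^\infty$ regularity of the limit. The cluster expansion enters only indirectly, through \eqref{clax}.

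Your route --- differentiating $\sum_{\bC}\frac{1}{|\bC|}(w^u_n)^T(\bC)$ directly --- can be made to work, but two points deserve care. First, the sums defining $z^h_n(\gamma)$ and $\bar z^h_n(\gamma)$ in \eqref{partisioux} and \eqref{dooz} are over the \emph{infinite} sets $\gO^+[\gamma,n]$, $\bar\gO^+[\gamma,n]$, not finite ones; differentiation under the sum is still justified (dominated convergence, since $|\phi^{-1}\{0\}|\le|\bar\gamma|$), but the statement should be corrected. Second, and more seriously, your invocation of ``Cauchy-estimate arguments'' is precisely what the paper warns against in the remark following Theorem~\ref{main}: extending $n$-stability to a complex neighbourhood of $(u^*_{n+1},u^*_n)$ is not established, and near the endpoints $u^*_n$ the proof of stability (Proposition~\ref{mejor}) relies on monotonicity in a way that does not complexify. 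So Cauchy's formula is not available without substantial further work. Your fallback --- ``direct combinatorial bounds'' and re-running Theorem~\ref{superclust} with margin --- is the right idea, but carrying it out means controlling $\partial_u^j(w^u_n)^T(\bC)$ via the M\"obius formula \eqref{modifi}, and the alternating sum over $2^{|\bC|}$ subsets requires a genuine argument (for instance, tree-graph identities for Ursell coefficients) rather than the sketch you give.

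What each approach buys: the paper's route is agnostic to the internal structure of the weights $w^u_n(\gamma)$ --- it only uses the decay \eqref{clax}, which is a one-line corollary of Proposition~\ref{propinfi} once stability is known --- and it gets the uniform bounds on $[h^*_{n+1},h^*_n]$ for free from the Ursell estimate. Your route would, if completed, be more self-contained (no appeal to \cite{cf:vDK}) and would naturally upgrade to analyticity if complex stability were ever proved; but as written it has a real gap at the Cauchy step.
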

 
 Secondly, we obtain a priori bound on the derivative which together with Theorem \ref{oldmain} allows a
 sharp asymptotic estimates on the layering transition points $u^*_n$.
 
 \begin{proposition}\label{convirj}
  Given $\gb>\gb_0$ sufficiently large, there exists a constant such that for every $n$ and every $u\in (u^*_{n+1},u^*_n)$ (where by convention $u^*_0=+\infty$)
  we have 
  \begin{equation}
  \frac{1}{10}J^{2n}\le \partial_u\bar \tf(\gb,u)\le 10 J^{2n}.
  \end{equation}
  In particular $\bar \tf(\gb,u)$ is not differentiable at $u^*_n$.
\end{proposition}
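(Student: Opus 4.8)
The plan is to derive the bounds on $\partial_u\bar\tf(\gb,u)$ from the contour representation \eqref{reprezent} together with the $n$-stability provided by Theorem \ref{converteo}, and then to conclude non-differentiability at $u^*_n$ by comparing the one-sided derivatives coming from the two adjacent intervals. First I would fix $n\ge 0$ and $u\in(u^*_{n+1},u^*_n)$, so that by Theorem \ref{converteo} all contours are $n$-stable, hence the weights $w^u_n$ satisfy \eqref{usualchoice} (via the counting bound \eqref{counting}). Then the free energy admits the cluster-expansion form $\bar\tf(\gb,u)=\tf(w^u_n)-\text{(const)}$ with the series \eqref{fraconv} converging, and since each $w^u_n(\gamma)$ is smooth in $u$ with derivatives controlled uniformly (this is exactly the content of Proposition \ref{infinitz}), one may differentiate term by term. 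The main work is therefore to identify the dominant term of $\partial_u\bar\tf(\gb,u)$: differentiating $\log\cZ^{n,h}_{\gL,\gb}$ in $h$ before passing to the limit gives the contact fraction \eqref{difencial}, so that $\partial_u\bar\tf(\gb,u)=\lim_{N}\frac1{N^2}\bE^{n,h}_{N,\gb}[|\phi^{-1}(0)|]=\bP^{n,h}_{\gb}[\phi(x)=0]$, and the task reduces to estimating this single-site contact probability.

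The estimate $\tfrac1{10}J^{2n}\le\bP^{n,h}_{\gb}[\phi(x)=0]\le 10J^{2n}$ is where the geometry enters. Under the measure $\bP^{n,h}_{\gb}$ the typical height is $n$, and reaching level $0$ requires a negative contour of intensity $n$ surrounding $x$ (equivalently, $\phi(x)=0$ forces the presence around $x$ of a stack of level lines descending from $n$ to $0$). For the upper bound I would use the contour representation and $n$-stability to bound the probability that $x$ is enclosed in such a contour: the cheapest way to descend by $n$ levels costs a Boltzmann factor $e^{-4\gb n}$ for the smallest (length-$4$) contour carrying intensity $n$, but one also gains the contact reward; unpacking the weight \eqref{defw} via Lemma \ref{rwo} and the peak estimates of Proposition \ref{rouxrou}, the net cost of such a descending peak is of order $J^{2n}=e^{-4\gb n}$ up to constants, and summing the geometric series over contours enclosing $x$ (using \eqref{dabound2}) keeps the total at $O(J^{2n})$. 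For the lower bound I would exhibit one explicit favorable configuration — a single descending peak of intensity $n$ on the smallest square around $x$ — and use the FKG/monotonicity tools of Corollary \ref{FKalt} together with \eqref{infin}–\eqref{infde} to show its probability is at least $\tfrac1{10}J^{2n}$; here the relevant normalization is precisely the ratio of partition functions appearing in $w^u_n$, and since $u\in(u^*_{n+1},u^*_n)\subset[\tfrac1{200}J^{n+2},\ 200J^{n+1}]$ the reward $e^{hu|\phi^{-1}(\bbZ_-)|}$ contributes only a bounded multiplicative factor, so it does not spoil the order $J^{2n}$.

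Finally, to deduce non-differentiability at $u^*_n$ for $n\ge1$: by the first part, for $u\in(u^*_{n+1},u^*_n)$ we have $\partial_u\bar\tf(\gb,u)\le 10J^{2n}$, while for $u\in(u^*_n,u^*_{n-1})$ we have $\partial_u\bar\tf(\gb,u)\ge\tfrac1{10}J^{2(n-1)}=\tfrac1{10}J^{-2}\cdot J^{2n}$. Since $J=e^{-2\gb}$ is small for $\gb$ large, $\tfrac1{10}J^{-2}>10$, so the right derivative at $u^*_n$ (a limit of values $\ge\tfrac1{10}J^{2(n-1)}$, using that $u\mapsto\partial_u\bar\tf$ is monotone by convexity so the one-sided limits exist) strictly exceeds the left derivative (a limit of values $\le10J^{2n}$); this is the desired jump. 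The main obstacle I expect is the sharp two-sided estimate of the contact probability $\bP^{n,h}_{\gb}[\phi(x)=0]$: the upper bound requires carefully accounting, through the weights \eqref{defw} and Lemma \ref{rwo}, for the interplay between the entropic/energetic cost $e^{-4\gb n}$ of descending to level $0$ and the cumulative contact reward collected on the descended region, and showing that the latter cannot boost the probability beyond $O(J^{2n})$ uniformly in $n$ for $u$ in the allowed window; the lower bound is easier but still needs the FKG comparison to be set up against the correct reference partition function.
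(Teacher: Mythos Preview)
Your proposal is correct and follows essentially the same route as the paper: reduce $\partial_u\bar\tf(\gb,u)$ to the single-site contact probability $\bP^{n,h}_{\gb}[\phi(x)=0]$ via \eqref{difencial}--\eqref{koomits}, show this probability is of order $J^{2n}$, and then deduce the jump at $u^*_n$ by comparing the bounds on the two adjacent intervals (exactly your final paragraph). The only tactical difference is in how the two-sided estimate on the contact probability is obtained: the paper does not use FKG but instead decomposes according to the unique external contour $\bg_x$ enclosing $x$, identifies the dominant contribution as coming from the negative length-$4$ contour $\gamma^{x,-}$ with $\bar\gamma^{x,-}=\{x\}$ (for which $\bP[\bg_x=\gamma^{x,-}]\cdot\bP^{n,u}_{\gamma^{x,-}}[\phi(x)=0]\approx e^{-4\gb}\cdot e^{-4\gb(n-1)}=J^{2n}$, computed directly via \eqref{godstim} and the explicit form of $w^h_n(\gamma^{x,-})$), and then controls the sum over longer contours using stability and the peak estimates \eqref{zipa}--\eqref{zipi}. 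Your FKG-based lower bound would also work, but the paper's decomposition has the advantage of giving both bounds at once and yielding the sharper estimate \eqref{precizion} with error $O(J^{2n}(u+J^2))$ rather than just the order of magnitude.
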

\noindent To conclude the proof of Theorem \ref{main}, we provide a proof of \eqref{asimptic}.

  \begin{proof}[Proof of \eqref{asimptic}]
  
We make use of  Theorem \ref{oldmain}.
  Recalling \eqref{defun}, Equation \eqref{lequivdelamor} implies that for every $u,v\in [u_{n+1}, u_n]$
  \begin{equation}
   \bar \tf(\gb,v)-\bar\tf(\gb,u)= \alpha_1 J^{2n} (v-u)+o(J^{3n}).
  \end{equation}
By convexity, this implies that for $\delta>0$, for all $n$ sufficiently large  
  we have 
  \begin{equation}
   \forall u\in (u_{n+1}(1+\delta), u_{n}(1-\delta)), \quad \left| \partial_u  \bar \tf(\gb,u)- \alpha_1 J^{2n}\right| \le 2\delta J^{2n}.
  \end{equation}
  In view of Proposition \ref{convirj} and the of the fact that $\alpha_1, \alpha_2 \in [1/2,2]$ (cf. Proposition \ref{rouxrou}), we can conclude that 
  for $n$ sufficiently large 
  $$ u^*_n\in [u_n(1-\delta), u_n(1+\delta)].$$
  
\end{proof}

\subsection{Truncated weights and road map to Theorem \ref{converteo}}\label{sketch}

Our first step is to prove $n$-stability in a reduced intervals.
We define for $n\ge 1$,
 \begin{equation}
 u^{+}_n:= 200 J^{n+2}, \quad u^-_n:= \frac{1}{200} J^{n+2}, 
  \end{equation}
 and also $u^{\pm}_0=\infty$. And we prove the following.
 \begin{proposition}\label{truta}
 For all $\gb$ sufficiently large
 every contour is $n$-stable for $u\in [u^{+}_{n+1}, u^{-}_n]$.
 \end{proposition}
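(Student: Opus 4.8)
\textbf{Proof plan for Proposition \ref{truta}.}

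The plan is to estimate the contour weight $w^u_n(\gamma) = e^{-\gb|\tilde\gamma|} z^h_{n+\gep(\gamma)}(\gamma)/\bar z^h_n(\gamma)$ by controlling the ratio $z^h_{n+\gep(\gamma)}(\gamma)/\bar z^h_n(\gamma)$ on the interval $u\in[u^+_{n+1},u^-_n]$, the point being that on this range of $u$ the surface under $\bP^n_\gamma$ wants to sit at level $n$, so that forcing it \emph{away} from level $n$ on $\Delta^-_\gamma$ (which is what $z^h_{n+\gep(\gamma)}$ does relative to $\bar z^h_n$) costs an entropic/energetic price that beats $e^{|\tilde\gamma|}$. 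Concretely, I would start from the representations in Lemma \ref{rwo}: write $z^h_{n+\gep(\gamma)}(\gamma) = z(\gamma)\,\bE^n_\gamma[e^{u|\phi^{-1}(\bbZ_-)| - \bar H(\phi^{-1}(\bbZ_-))}\ind_{E}]$ where $E$ is the event that $\gamma$ is a contour of intensity $\ge 1$ for $\phi$ with b.c.\ $n$ (equivalently $\gep(\gamma)(\phi(x)-n)\ge 1$ on $\Delta^-_\gamma$, after a shift), while $\bar z^h_n(\gamma) = z(\gamma)\,\bE^n_\gamma[e^{u|\phi^{-1}(\bbZ_-)| - \bar H(\phi^{-1}(\bbZ_-))}\ind_{\{\exists x\in\Delta^-_\gamma,\ \phi(x)=n\}}]$. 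So $w^u_n(\gamma) = e^{-\gb|\tilde\gamma|}\,\bP^n_\gamma[E \mid F]$-type ratio where $F$ is the "touching level $n$" event, with the $u$-tilt and the $\bar H$-correction reweighting.

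The key steps, in order: (1) Bound the $\bar H$ terms using Lemma \ref{zoomats}: on connected components of $\phi^{-1}(\bbZ_-)$ of size one and two the value is $0$ and $\log\frac{1-J^4}{1-J^3}$ respectively, and for larger components it is at most $2J^2|\gG|$, so $e^{-\bar H(\phi^{-1}(\bbZ_-))}$ is within a controlled multiplicative factor of $1$; similarly the $u$-tilt $e^{u|\phi^{-1}(\bbZ_-)|}$ is close to $1$ since $u\le u^-_n = J^{n+2}/200$ is tiny and $|\phi^{-1}(\bbZ_-)|$ under $\bP^n_\gamma$ is small with high probability (controlled via the peak probabilities of Proposition \ref{rouxrou}, which give $\bP^n_\gamma[\phi(x)\le n-k]\le 2e^{-4\gb k}$ roughly, so negative values when $n$ is moderate are exponentially rare). (2) With the tilts neutralized, reduce to comparing $\bP^n_\gamma[\{\text{intensity}\ge 1\text{ along }\Delta^-_\gamma\}]$ against $\bP^n_\gamma[\{\text{touches }n\text{ along }\Delta^-_\gamma\}]$ in the pure SOS measure $\bP^n_\gamma$; here the crucial input is that raising the surface by one unit all along the internal neighborhood $\Delta^-_\gamma$ of a contour of length $|\tilde\gamma|$ costs a factor $e^{-\Theta(\gb|\tilde\gamma|)}$ — this is exactly the kind of peak/cost estimate underlying Proposition \ref{rouxrou} and the contour counting \eqref{counting}. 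Combining, $w^u_n(\gamma)\le e^{-\gb|\tilde\gamma|}\cdot C \le e^{-(\gb-1)|\tilde\gamma|}$ for $\gb$ large. (3) Separately handle the two ``boundary'' values of $u$: near $u^-_n$ one is far from the transition to level $n+1$ (so positive contours of small intensity, which would prefer level $n+1$, are suppressed by the small $u$), and near $u^+_{n+1}$ one is far from the transition to level $n-1$; this is where the explicit constants $1/200$ and $200$ and the gap $u^-_n / u^+_{n+1} = \Theta(1/J)$ enter — the window is wide enough that neither the ``$n\to n-1$'' nor the ``$n\to n+1$'' instability is active.

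The main obstacle I expect is step (2)–(3) for \emph{large} contours and for contours whose interior is big: one must show that the level-$n$ localization is robust not just for a single site but uniformly along $\Delta^-_\gamma$, and that the $\bar H$ correction summed over a potentially large negative-set $\phi^{-1}(\bbZ_-)$ inside $\bar\gamma$ does not accumulate — the bound $\bar H(\gG)\le 2J^2|\gG|$ is linear in $|\gG|$ so one needs $|\phi^{-1}(\bbZ_-)|$ to be genuinely small (not just $o(|\bar\gamma|)$) under the tilted measure, which in turn requires a self-consistent use of the stability of smaller contours or a direct large-deviation bound. Making this uniform in $n$ (so that the same $\gb_1$ works for all levels, unlike the $\gb\ge c\log n$ restriction of \cite{cf:ADM}) is precisely the delicate point, and I anticipate it is handled by an induction on contour length or on $n$, with the estimates of Section \ref{secrestrict} doing the heavy lifting; monotonicity (Corollary \ref{FKalt}) should also be used to compare $\bP^n_\gamma$ with unconstrained SOS and thereby import the clean peak bounds of Proposition \ref{rouxrou}.
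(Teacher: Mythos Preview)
Your plan is essentially correct for \emph{small} contours (diameter $\le (\gb n)^2$): there the tilts $e^{u|\phi^{-1}(\bbZ_-)|}$ and $e^{-\bar H(\phi^{-1}(\bbZ_-))}$ can indeed be neutralized because $|\bar\gamma|$ is polynomially bounded in $n$ while the exponents are exponentially small, and the touching event on $\Delta^-_\gamma$ has probability close to $1$ under $\bP^n_\gamma$. This matches the paper's Proposition~\ref{lesmall} almost exactly.

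The genuine gap is for large contours, and you have correctly located it but not closed it. Your step~(2) reduction ``raising the surface by one along $\Delta^-_\gamma$ costs $e^{-\Theta(\gb|\tilde\gamma|)}$'' is not what is needed: the prefactor $e^{-\gb|\tilde\gamma|}$ is already in the definition of $w^u_n$, so the task is to show the remaining ratio $z^h_{n+\gep(\gamma)}/\bar z^h_n \le e^{|\tilde\gamma|}$, and this cannot come from a boundary cost. The mechanism the paper uses is a \emph{volume} free-energy gap: on $[u^+_{n+1},u^-_n]$ the system at boundary height $n$ has strictly larger free energy per site than at height $n\pm1$, by an amount of order $J^{3n+3}$. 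To exploit this one cannot simply bound the tilts away --- the difference $\bar H(\phi^{-1}(\bbZ_-))$ between heights $n$ and $n\pm1$ is precisely what produces the gap (via the $\log\frac{1-J^4}{1-J^3}$ contribution of two-site components in Lemma~\ref{zoomats}). The paper implements this through a decomposition of $\bar z^u_{n+\gep(\gamma)}$ according to the set $\gG_1$ of \emph{large external contours} inside $\bar\gamma$ (Lemma~\ref{decopoz}); the induction hypothesis on strictly smaller contours controls the weight of each $\gamma_1\in\gG_1$, while on the complement $\bar\gamma\setminus\bar\gG_1$ the small-contour partition functions at levels $n$ and $n+\gep(\gamma)$ differ by the volume factor $e^{-J^{3n+3}|\bar\gamma\setminus\bar\gG_1|}$ (Proposition~\ref{difficilissimo}, proved via Lemmas~\ref{dzip}--\ref{dalo}). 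A separate Peierls-type estimate (Proposition~\ref{dificilissimus}) then shows the sum over all $\gG_1$ weighted by $e^{-(\gb-3)L(\gG_1)-J^{3n+3}|\bar\gamma\setminus\bar\gG_1|}$ is at most $1$. Your sketch contains neither the external-contour decomposition nor the volume penalty, and without them the accumulation problem you flag is fatal: for $|\bar\gamma|\gg J^{-(3n+3)}$ the crude bound $|\bar H|\le 2J^2|\phi^{-1}(\bbZ_-)|$ applied termwise gives nothing useful.
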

Using this partial result, we can obtain a characterization of $u^*_n$. This requires introducing the notion of truncated weights and free energy 
(we follow here ideas which were developed in  \cite{cf:CM}).
We define the truncated weights $ w^{u,\tr}_n$ by 
\begin{equation}\label{trunk}
 w^{u,\tr}_n(\gamma):= \max\left( e^{-(\gb-1)|\tilde \gamma|},  w^u_n(\gamma)\right).
\end{equation}
We define in the same manner
\begin{equation}
\cZ^{n,u,\tr}_{\gL,\gb}:= \sum_{\gG\in \cK(\gL)} \prod_{\gamma\in \gG} w^{\tr,u}_n(\gamma),
\end{equation}
and the corresponding free energy
\begin{equation}
\bar \tf^{\tr}_n(\gb,u):= \limtwo{|\gL|\to \infty}{|\partial\gL|/|\gL|\to 0} \frac{1}{|\gL|}\log \cZ^{n,u,\tr}_{\gL,\gb}-\tf(\gb).
\end{equation}
In view of \eqref{reprezent}, we have for every $n$ and $u$
\begin{equation}\label{zinak}
  \bar \tf^{\tr}_n(\gb,u)\le  \bar\tf(\gb,u),
\end{equation}
and equality is achieved if and only if all contours are stable (the \textit{only if} part of the statement may appear less obvious, but as we 
do not use that fact in our proof, we leave it as an exercise to the interested reader).
 In particular a simple consequence of Proposition \ref{truta} is the following.
 \begin{cor}\label{simplecor}
 For every $n\ge 0$ $u\in [u^{+}_{n+1}, u^-_n]$, $$\tf^{\tr}_{n}(\gb,u)=\tf(\gb,u).$$
 \end{cor}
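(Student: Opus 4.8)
The plan is to read the corollary off directly from Proposition \ref{truta}, the contour representation \eqref{reprezent}, and the definitions of the truncated weights \eqref{trunk} and of the truncated free energy; there is nothing to prove beyond Proposition \ref{truta}, the remainder being bookkeeping about the thermodynamic limit.

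First I would fix $n\ge 0$ and $u\in[u^+_{n+1},u^-_n]$ and observe that, by Proposition \ref{truta}, every contour $\gamma\in\cC$ is $n$-stable for this value of $u$, i.e.\ $w^u_n(\gamma)\le e^{-(\gb-1)|\tilde\gamma|}$. Consequently the truncation in \eqref{trunk} is inactive on the whole interval $[u^+_{n+1},u^-_n]$, so that $w^{u,\tr}_n(\gamma)=w^u_n(\gamma)$ for every $\gamma$. Hence on every simply connected domain $\gL$ the truncated partition function coincides termwise with the genuine one: by Proposition \ref{trax},
$$\cZ^{n,u,\tr}_{\gL,\gb}=\sum_{\gG\in\cK(\gL)}\prod_{\gamma\in\gG}w^{u,\tr}_n(\gamma)=\sum_{\gG\in\cK(\gL)}\prod_{\gamma\in\gG}w^{u}_n(\gamma)=\cZ^{n,h}_{\gL,\gb}.$$

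Next I would take logarithms, divide by $|\gL|$, and pass to the limit along the sequence $\gL_N=\lint 1,N\rint^2$, which is simply connected — so that \eqref{reprezent} applies — and satisfies $|\partial\gL_N|/|\gL_N|\to0$. Since both $\bar\tf^{\tr}_n(\gb,u)$ and $\bar\tf(\gb,u)$ are defined by limits that are independent of the chosen exhausting sequence, the equality of partition functions along $(\gL_N)_{N\ge1}$ forces $\bar\tf^{\tr}_n(\gb,u)=\bar\tf(\gb,u)$, which after adding $\tf(\gb)$ to both sides is precisely $\tf^{\tr}_n(\gb,u)=\tf(\gb,u)$. Note that only the easy implication in the principle ``equality in \eqref{zinak} holds iff all contours are stable'' is used, so no appeal to the harder converse (left as an exercise in the text) is needed.

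There is essentially no obstacle here. The single point deserving a word of care is that the contour representation \eqref{reprezent} is available only on simply connected domains, so the thermodynamic limits must be evaluated along a simply connected sequence such as $(\gL_N)$; this is legitimate precisely because the existence of the free energies — justified in \cite{cf:part1}, and for the truncated weights via \eqref{lafreen} since $w^{u,\tr}_n$ satisfies \eqref{usualchoice} by construction — comes together with sequence-independence of the limit. All the genuine difficulty sits upstream, in Proposition \ref{truta}.
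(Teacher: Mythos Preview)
Your proposal is correct and follows exactly the route the paper indicates: Proposition \ref{truta} makes every contour $n$-stable on $[u^+_{n+1},u^-_n]$, so the truncation in \eqref{trunk} is inactive, the partition functions in \eqref{reprezent} coincide with the truncated ones, and the free energies agree in the limit. The paper does not spell out a proof beyond calling this a ``simple consequence'' of Proposition \ref{truta} via the easy (``if'') direction of the equality criterion stated around \eqref{zinak}, which is precisely what you do.
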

Another important observation, that as the weights $w^{u,\tr}_n(\gamma)$ are continuous in $u$, so are the weights $w^T(\bC)$ associated to clusters.
Thus  as the convergence \eqref{fraconv} is uniform, the function $u\mapsto \bar \tf^{\tr}_n(\gb,u)$ is continuous for every $n$.
Now from Corollary \ref{simplecor}, we  have  for any $n\ge 1$
\begin{equation}\begin{split}
\tf^{\tr}_{n}(\gb,u^-_n)&=\tf(\gb,u^-_n)\ge \tf^{\tr}_{n-1}(\gb,u^-_n),\\
\tf^{\tr}_{n-1}(\gb,u^+_n)&=\tf(\gb,u^+_n)\ge \tf^{\tr}_{n}(\gb,u^+_n).
\end{split}\end{equation}
Using the continuity of $[\tf^{\tr}_{n-1}-\tf^{\tr}_{n}](\gb,u)$ we define 
 \begin{equation}\label{caraca}
  u^*_n:=\min\left\{v\in [u^-_n,u^+_n] \ : \  \tf^{\tr}_{n-1}(\gb,u)=\tf^{\tr}_{n}(\gb,u) \right\}.
 \end{equation}
 To complete the proof of Theorem \ref{converteo}, we need to extend the stability result to the interval $[u^*_n,u^*_{n+1}]$.
 \begin{proposition}\label{mejor}
   For all $\gb$ sufficiently large, 
   every contour is $n$-stable for $u\in[u^*_{n+1},u^*_n]$.
  \end{proposition}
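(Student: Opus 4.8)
\textbf{Proof plan for Proposition \ref{mejor}.}

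The plan is to show that outside the reduced interval $[u^+_{n+1},u^-_n]$, where Proposition \ref{truta} already applies, the only way stability can fail is through contours that ``want'' to jump to level $n-1$ or to level $n+1$, and that these two failure modes are precisely what is controlled by the definition of $u^*_n$ and $u^*_{n+1}$ in \eqref{caraca}. Concretely, one wants to prove $n$-stability of every contour $\gamma$ on the two remaining sub-intervals $[u^*_n,u^-_n]$ and $[u^+_{n+1},u^*_{n+1}]$ (note $u^*_n\le u^-_n$ and $u^+_{n+1}\le u^*_{n+1}$, so together with $[u^+_{n+1},u^-_n]$ these cover $[u^*_{n+1},u^*_n]$). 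Fix such an interval, say $[u^*_n,u^-_n]$. Using Lemma \ref{rwo} to rewrite $w^u_n(\gamma)=e^{-\gb|\tilde\gamma|}z^u_{n+\gep(\gamma)}(\gamma)/\bar z^u_n(\gamma)$ in terms of the constrained SOS partition functions with the $\bar H$ correction, the key is a comparison between $z^u_{n+\gep(\gamma)}(\gamma)$ and $\bar z^u_n(\gamma)$. The ratio can be expressed, via the truncated contour expansion \eqref{reprezent}–\eqref{trunk} applied \emph{inside} $\bar\gamma$ with boundary conditions $n$ and $n+\gep(\gamma)$ respectively, in terms of the truncated partition functions $\cZ^{n+\gep(\gamma),u,\tr}_{\bar\gamma,\gb}$ and $\cZ^{n,u,\tr}_{\bar\gamma,\gb}$, up to corrections coming from contours near $\tilde\gamma$ and from the difference between $z^u$ and $z^{u,\tr}$. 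Here one invokes Lemma \ref{finalfrontier}: $\log \cZ^{m,u,\tr}_{\bar\gamma,\gb}= |\bar\gamma|\,\tf^{\tr}_m(\gb,u) + O(|\tilde\gamma|)$ uniformly, so the volume term in the ratio is governed by $\tf^{\tr}_{n+\gep(\gamma)}(\gb,u)-\tf^{\tr}_n(\gb,u)$.

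The second step is to exploit the sign of this free-energy difference. By the definition \eqref{caraca}, for $u\in(u^*_n,u^-_n]$ we have $\tf^{\tr}_{n-1}(\gb,u)<\tf^{\tr}_n(\gb,u)$ — indeed at $u^-_n$ they satisfy $\tf^{\tr}_{n-1}\le\tf^{\tr}_n$ by Corollary \ref{simplecor}, and $u^*_n$ is by construction the last crossing point in $[u^-_n,u^+_n]$, so strict inequality persists to the right of $u^*_n$ (one must argue there is no further crossing in $(u^*_n,u^-_n]$; this follows from monotonicity considerations, namely that $u\mapsto \tf^{\tr}_{n-1}-\tf^{\tr}_n$ is monotone in the relevant range because increasing $u$ favours level $n-1$ over level $n$ — this is where the FKG/monotonicity input of Corollary \ref{FKalt} enters, cf.\ the remark that ``monotonicity plays a too central role in Proposition \ref{mejor}''). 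Similarly $\tf^{\tr}_{n+1}(\gb,u)\le\tf^{\tr}_n(\gb,u)$ on this interval. Hence for a \emph{positive} contour (which probes level $n+1$) the volume term in $w^u_n(\gamma)$ carries the favourable sign $e^{|\bar\gamma|(\tf^{\tr}_{n+1}-\tf^{\tr}_n)}\le 1$, giving decay like $e^{-\gb|\tilde\gamma|}$ times controlled boundary corrections, i.e.\ \eqref{stability}; and for a \emph{negative} contour (probing level $n-1$) one gets the strict gain $e^{|\bar\gamma|(\tf^{\tr}_{n-1}-\tf^{\tr}_n)}$, which for large $\bar\gamma$ beats any polynomial boundary correction, while for small $\bar\gamma$ one checks \eqref{stability} directly using the explicit values $\bar H\{x\}=0$, $\bar H\{x,y\}=\log\frac{1-J^4}{1-J^3}$ from Lemma \ref{zoomats} together with the peak estimates of Proposition \ref{rouxrou} and the bound $u\le u^-_n=\tfrac1{200}J^{n+2}$. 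The interval $[u^+_{n+1},u^*_{n+1}]$ is handled symmetrically, with the roles of $n-1$ and $n+1$ reversed, using the definition of $u^*_{n+1}$.

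The third step is bookkeeping: converting the free-energy inequalities back into the contour-by-contour bound \eqref{stability} for \emph{every} $\gamma$, not just large ones, and dealing with the fact that the expansion for $z^u$ rather than $z^{u,\tr}$ is needed — i.e.\ one must know a posteriori that on the interval in question the contours appearing in the sub-volume $\bar\gamma$ are themselves $n$- (or $(n\pm1)$-) stable, so that truncated and untruncated weights agree there. This is a standard induction on $|\bar\gamma|$: the base case is handled by Proposition \ref{truta} for the reduced interval, and the inductive step uses that any contour strictly inside $\bar\gamma$ has smaller enclosed volume, so its stability is already known. I expect the main obstacle to be precisely this interplay of the induction with the monotonicity argument pinning down that $\tf^{\tr}_{n-1}-\tf^{\tr}_n$ does not cross zero again on $(u^*_n,u^-_n]$ — establishing the required monotonicity of the truncated free energies (or of a suitable surrogate, such as a derivative bound analogous to Proposition \ref{convirj} for the truncated model) rigorously, while only having FKG available for the \emph{untruncated} measures, is the delicate point; everything else is careful but routine estimation with the tools already assembled in Sections \ref{tools}–\ref{zorga}.
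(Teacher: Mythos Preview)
Your proposal has the right core ingredients (induction on $\bar\gamma$, Lemma \ref{finalfrontier}, comparison of truncated free energies) but contains a sign confusion and, more importantly, misses the key simplification that makes the argument clean.

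First, the ordering: since $u^*_n\in[u^-_n,u^+_n]$ by \eqref{caraca}, one has $u^-_n\le u^*_n$ and $u^*_{n+1}\le u^+_{n+1}$, so the sub-intervals still to be covered are $[u^-_n,u^*_n]$ and $[u^*_{n+1},u^+_{n+1}]$ --- the opposite of what you wrote.

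Second, and this is the real issue: you try to control the \emph{sign} of $\tf^{\tr}_{n\pm1}-\tf^{\tr}_n$ throughout these sub-intervals, which leads you to the ``delicate point'' about monotonicity of truncated free-energy differences (for which, as you correctly note, FKG is not directly available). The paper avoids this entirely. By Lemma \ref{mozon}, $w^u_n(\gamma)$ is monotone in $u$ with the sign determined by $\gep(\gamma)$: decreasing for positive contours, increasing for negative ones. Hence it suffices to check stability at the \emph{single} value $u=u^*_{n+1}$ for positive contours and $u=u^*_n$ for negative contours (cf.\ Remark \ref{mouton}). At these precise values, the definition \eqref{caraca} gives $\tf^{\tr}_{n+\gep(\gamma)}(\gb,u)=\tf^{\tr}_n(\gb,u)$ \emph{exactly}, so after applying Lemma \ref{finalfrontier} to both $\bar z^u_{n+\gep(\gamma)}(\gamma)=Z[\gamma,w^{u,\tr}_{n+\gep(\gamma)}]$ and $\bar z^u_n(\gamma)=Z[\gamma,w^{u,\tr}_n]$ (the equality holding by the induction hypothesis on sub-contours), the volume terms cancel and only the boundary correction $\le\tfrac12|\tilde\gamma|$ survives. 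No sign argument, no crossing analysis, no monotonicity of truncated free energies is needed. Your ``main obstacle'' simply does not arise.

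Your induction-on-$|\bar\gamma|$ bookkeeping (step three) is essentially what the paper does; small contours are handled directly by Proposition \ref{lesmall}, and the reduction from $z^u_{n+\gep(\gamma)}$ to $\bar z^u_{n+\gep(\gamma)}$ uses Lemma \ref{reduc}.
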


  \begin{rem}
The characterization of $u^*_n$  as a $\min$ in \eqref{caraca}, is a bit arbitrary in the sense that the only requirements of the proof are $u\in [u^-_n,u^+_n]$ and
$\tf^{\tr}_{n-1}(\gb,u)=\tf^{\tr}_{n}(\gb,u)$. It does not mean however that there is any freedom in the choice of $u^*_n$, 
as further results implies that 
\begin{equation}\label{zuniks}
\left\{v\in [u^-_n,u^+_n] \ : \  \tf^{\tr}_{n-1}(\gb,u)=\tf^{\tr}_{n}(\gb,u) \right\}=\{u^*_n\}.
\end{equation}
While the definition of the truncated potential also offers some degree of freedom, a consequence of latter results is that the value of $u^*_n$ does not depend on the particular choice which is made for truncation.
 \end{rem}

 Proposition \ref{truta} turns out to be the more difficult statement as it requires quantitative estimates which proves to be quite technical.
 Its extension, Proposition \ref{mejor} is proved using softer arguments
 combining a monotonicity statement (Lemma \ref{mozon} below) together with Lemma \ref{finalfrontier}.
 For the proof of both Proposition \ref{truta} and Proposition \ref{mejor}, an important building brick is the following monotonicity consideration.
 \begin{lemma}\label{mozon}
 For any $n\in \bbN$, we have:
 \begin{itemize}
 \item[(i)] For any positive contour, $u\mapsto w^u_n(\gamma)$ is decreasing in $u$. 
\item[(ii)]  For any negative contour $u\mapsto w^u_n(\gamma)$ is increasing in $u$.
 \end{itemize}
 \end{lemma}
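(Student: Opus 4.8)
The plan is to prove Lemma~\ref{mozon} by differentiating the weight $w^u_n(\gamma)$ with respect to $u$ and recognizing the sign of the derivative as that of a covariance, which we then control via the FKG inequality. Recall from Lemma~\ref{rwo}, Equation~\eqref{dooz}, that
\begin{equation*}
w^u_n(\gamma) = e^{-\gb|\tilde\gamma|}\,\frac{z^h_{n+\gep(\gamma)}(\gamma)}{\bar z^h_n(\gamma)}
= e^{-\gb|\tilde\gamma|}\,\frac{\bE^{n+\gep(\gamma)}_{\gamma}\!\left[e^{u|\phi^{-1}(\bbZ_-)|-\bar H(\phi^{-1}(\bbZ_-))}\right]}{\bE^{n}_{\gamma}\!\left[e^{u|\phi^{-1}(\bbZ_-)|-\bar H(\phi^{-1}(\bbZ_-))}\ind_{\{\exists x\in\Delta^-_\gamma,\ \phi(x)=n\}}\right]},
\end{equation*}
so that $\log w^u_n(\gamma)$ differs by a $u$-independent constant from $\log \tilde z^h_{n+\gep(\gamma)}(\gamma) - \log \bar z^h_n(\gamma)$, where I write $\tilde z$ for the (constraint-free) partition function in the numerator written as in \eqref{swoutch}. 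First I would compute, for a partition function of the form $Z(u)=\sum_\phi e^{-\gb\cH(\phi)+u|\phi^{-1}(\bbZ_-)|-\bar H(\phi^{-1}(\bbZ_-))}$ over some lattice of configurations, that $\partial_u \log Z(u) = \langle |\phi^{-1}(\bbZ_-)|\rangle_{Z(u)}$, the expectation of the number of negative sites under the associated tilted measure. Hence $\partial_u \log w^u_n(\gamma)$ equals the difference of two such expectations: one over $\gO[\gamma,n+\gep(\gamma)]$ (the numerator measure) and one over $\bar\gO[\gamma,n]$ (the denominator measure), both tilted by the same $u$-dependent weight.

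The key step is then a stochastic domination comparison between these two tilted measures. For a \emph{positive} contour $\gamma$ ($\gep(\gamma)=+1$), the numerator measure lives on $\gO[\gamma,n+1]$, configurations forced to satisfy $\phi(x)\ge n+1$ on $\Delta^-_\gamma$, while the denominator measure lives on $\bar\gO[\gamma,n]\subset\gO[\gamma,n]$, so the numerator configurations are pointwise larger on the boundary. Since the tilting weight $e^{u|\phi^{-1}(\bbZ_-)| - \bar H(\phi^{-1}(\bbZ_-))}$ is a decreasing function of $\phi$ (raising $\phi$ can only shrink $\phi^{-1}(\bbZ_-)$, and by the super-additivity/monotonicity properties of $\bar H$ from Lemma~\ref{zoomats} the full exponent is decreasing), Corollary~\ref{FKalt}-style FKG arguments give that the denominator tilted measure stochastically dominates the numerator one — equivalently, the numerator measure is stochastically \emph{smaller}. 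Because $|\phi^{-1}(\bbZ_-)|$ is a decreasing function of $\phi$, its expectation is therefore \emph{larger} under the (smaller) numerator measure than under the denominator measure, so $\partial_u\log w^u_n(\gamma)\le 0$, i.e. $w^u_n(\gamma)$ is decreasing in $u$. For a \emph{negative} contour the roles of numerator and denominator domains reverse ($\gep(\gamma)=-1$ forces $\phi\le n-1$ on $\Delta^-_\gamma$ in the numerator, hence the numerator configurations are the smaller ones), and the same chain of inequalities runs the other way, yielding that $w^u_n(\gamma)$ is increasing in $u$.

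The main obstacle I expect is making the FKG/stochastic-domination comparison fully rigorous, because the tilting weight involves $\bar H(\phi^{-1}(\bbZ_-))$, which is a nonlocal functional of $\phi$, and one must verify that the tilted measures satisfy Holley's condition \eqref{hcond} on the relevant distributive lattices $\gO[\gamma,n]$, $\bar\gO[\gamma,n]$. One needs: (a) that $\gO[\gamma,n]$, $\gO[\gamma,n+1]$, and $\bar\gO[\gamma,n]$ are distributive lattices under $\vee,\wedge$ — the first two are immediate since the constraint $\gep(\gamma)(\phi(x)-m)\ge 0$ on $\Delta^-_\gamma$ is preserved by $\vee$ and $\wedge$; the set $\bar\gO[\gamma,n]$, carved out by the additional ``$\exists x\in\Delta^-_\gamma,\ \phi(x)=n$'' condition, requires a little more care but is handled exactly as the analogous point in \cite{cf:part1}; (b) that the exponent $-\gb\cH^n_{\bar\gamma}(\phi)+u|\phi^{-1}(\bbZ_-)|-\bar H(\phi^{-1}(\bbZ_-))$ satisfies the submodularity inequality underlying \eqref{hcond}: the $\cH$ term does by the standard nearest-neighbor argument, and the remaining term $u|\phi^{-1}(\bbZ_-)|-\bar H(\phi^{-1}(\bbZ_-))$ depends on $\phi$ only through the (decreasing) indicator configuration $(\ind_{\{\phi(x)<0\}})_x$ and is a decreasing function of it, plus it decomposes additively over connected components via \eqref{maxdecomp}, so its contribution to Holley's inequality has the right sign. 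Once these lattice and submodularity facts are in place, the monotonicity in $u$ follows from the covariance/domination argument sketched above with no further estimates needed; this is why Lemma~\ref{mozon}, unlike Proposition~\ref{truta}, is a ``soft'' statement.
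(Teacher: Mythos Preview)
Your overall strategy---differentiate in $u$, identify the derivative as a difference of expectations of a decreasing observable, and conclude via stochastic domination---is exactly right and is what the paper does. But the route you take, through the representation \eqref{dooz} with the $\bar H$-tilt, creates a genuine gap, and there is also a sign confusion in your argument.

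First the sign issue. For a positive contour the numerator lives on $\gO[\gamma,n+1]$ (boundary $\ge n+1$ on $\Delta^-_\gamma$) and the denominator on $\bar\gO[\gamma,n]$ (boundary $\ge n$, with some point equal to $n$). Hence the \emph{numerator} measure should stochastically \emph{dominate} the denominator measure, not the other way round; then the decreasing observable $|\phi^{-1}(\bbZ_-)|$ has \emph{smaller} expectation under the numerator, yielding $\partial_u\log w^u_n(\gamma)\le 0$. Your text states the domination in the wrong direction and then deduces the wrong inequality for the expectation, arriving at the correct final sign only by a compensating slip.

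The more serious issue is the FKG verification for the tilted measure with weight $e^{-\gb\cH+u|\phi^{-1}(\bbZ_-)|-\bar H(\phi^{-1}(\bbZ_-))}$. Holley's condition \eqref{hcond} for the $-\bar H$ contribution requires $\bar H$ to be \emph{submodular} as a set function: $\bar H(A_1)+\bar H(A_2)\ge \bar H(A_1\cup A_2)+\bar H(A_1\cap A_2)$. But $\bar H$ is not submodular: taking $A_1=\{x\}$, $A_2=\{y\}$ with $x\sim y$ gives, by \eqref{sitmats}, $0+0<\bar H(\{x,y\})+0$. Since such $A_1,A_2$ arise as $\phi_1^{-1}(\bbZ_-),\phi_2^{-1}(\bbZ_-)$ for legitimate configurations, and the gradient term $-\gb\cH$ can be modular (equality in the lattice inequality) for suitable $\phi_1,\phi_2$, it does not compensate. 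So the tilted measure does \emph{not} satisfy Holley's condition in general, and your ``decreasing function / decomposes over components'' remark does not repair this.

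The paper sidesteps this entirely by \emph{not} passing through Lemma~\ref{rwo}. It works directly with the original partition functions \eqref{partisioux} on $\gO^+[\gamma,n]$, where the weight is $e^{-\gb\cH^n_{\bar\gamma}(\phi)+h|\phi^{-1}\{0\}|}$; the extra term is a sum of single-site functions, so Holley is immediate. One then computes $\partial_u\log w^u_n(\gamma)=\bE^{n+1,u}_\gamma[|\phi^{-1}\{0\}|]-\bar\bE^{n,u}_\gamma[|\phi^{-1}\{0\}|]$ and observes that both $\bP^{n+1,u}_\gamma$ and $\bar\bP^{n,u}_\gamma$ are obtained from $\bP^{n,u}_\gamma$ by conditioning on the increasing event $\{\forall x\in\Delta^-_\gamma,\ \phi(x)\ge n+1\}$ and the decreasing event $\{\exists x\in\Delta^-_\gamma,\ \phi(x)\le n\}$ respectively. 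Corollary~\ref{FKalt} then gives $\bar\bP^{n,u}_\gamma\preccurlyeq\bP^{n,u}_\gamma\preccurlyeq\bP^{n+1,u}_\gamma$, and since $|\phi^{-1}\{0\}|$ is decreasing the derivative is $\le 0$. This also handles $n=0$ uniformly, whereas \eqref{dooz} is stated only for $n\ge 1$.
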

 \begin{rem} \label{mouton}
 A consequence the above statement for each contour, the proof of Propositions \ref{truta} and \ref{mejor} reduces to checking 
 stability for one value of $u$ which is chosen at an extremity of the interval (the right extremity for negative contour, the left one for positive contour).
\end{rem}

 \section{Proof of Theorem \ref{converteo}}\label{ponverteo}
 
 In this section, we prove all the statements exposed in Section \ref{stabcon}.
 We prove Lemma \ref{mozon} first in Section \ref{mooz}, while the other subsections are devoted to the proof of Propositions \ref{truta} and \ref{mejor}.
 Our proof for the contour's stability depends on the size of the contour.
This gives a utility to the following definition.
Here and in the remainder of the paper, $\Diam(\gamma)$ denotes the Euclidean diameter of the geometric contour $\tilde \gamma$ considered as a subset of $\bbR^2$.
 
 \begin{definition}
 A contour is said to be $n$-\textit{small} if $\Diam(\gamma)\le \lfloor \max(\gb,\gb n)^2 \rfloor$.
 A contour which is not small is said to be \textit{large}.
 \end{definition}

The stability of small contours can relatively is proved directly ``by hand" in Section \ref{smallcproof}, using directly the estimates we have for the Solid-On-Solid measures.
The stability of large contours is proved in two steps, first we restrict our proof to the interval $[u^{+}_{n+1},u^{-}_n]$ to prove Proposition \ref{truta}, 
this is the most delicate part and it spreads from Section \ref{prezent} do Section \ref{dcimus} with the more technical computation postponed to Section 
\ref{secrestrict}.
The last step of the proof of large contour stability is the extension to the full interval $[u^{*}_{n+1},u^{*}_n]$ to complete the proof of Proposition \ref{mejor}. This is done is Section \ref{mejorplus}.

\medskip

Note that whenever a contour is $n$-small we also have a bound on the enclosed area, which we are to use in most computations
\begin{equation}\label{areabound}
|\bar \gamma| \le \Diam(\gamma)^2\le   \max(\gb,\gb n)^4.
\end{equation}

 \subsection{Proof of Lemma \ref{mozon}}\label{mooz}
Let us assume for simplicity that $\gamma$ is a positive contour (the proof for the negative case being identical).
We let $\bP^{n,u}_{\gamma}$  and $\bar \bP^{n,u}_{\gamma}$ be the respective probability on $\gO^+[\gamma,n]$, and $\bar \gO^+[\gamma,n]$ corresponding to the partition functions
 $z^u_n(\gamma)$ and $\bar  z^u_n(\gamma)$, that is (recall $h=u+h_w(\gb)$)
 \begin{equation}\begin{split}\label{defpnu}
  \bar \bP^{n,u}_{\gamma}(\phi)&:=\frac{1}{z^u_n} e^{-\gb \cH^n_{\bar \gamma}(\phi)+h |\phi^{-1}\{0\}|}, \quad \forall \phi \in \bar \gO^+[\gamma,n],\\
   \bP^{n,u}_{\gamma}(\phi)&:=\frac{1}{z^u_n} e^{-\gb \cH^n_{\bar \gamma}(\phi)+h |\phi^{-1}\{0\}|}, \quad \forall \phi \in \gO^+[\gamma,n]
 \end{split}\end{equation}
Using these definitions, the reader can check that the logarithmic derivative of $w^u_n(\gamma)$ can be expressed in the following manner.
 \begin{equation}\label{dirivatz}
 \partial_u \log w^u_n(\gamma)= \bE^{n+1,u}_{\gamma} [ |\phi^{-1}(0)|]- \bar \bE^{n,u}_{\gamma} [|\phi^{-1}(0)|].
 \end{equation}
 As $|\phi^{-1}(0)|$ is a decreasing function of $\phi$, 
if we show that $\bP^{n+1,u}_{\gamma}$ stochastically dominates $\bar \bP^{n,u}_{\gamma}$, then it implies that the r.h.s.\ of
 \eqref{dirivatz} is negative which concludes the proof.
 Let us introduce the events.
 \begin{equation}
\begin{split}
\cA&:= \{   \exists x\in  \Delta^-_{\gamma},\ \phi(x)\le n\},\\
\cB&:= \{   \forall  x\in  \Delta^-_{\gamma}, \ \phi(x)\ge n+1\}.
 \end{split}
 \end{equation}
Observe that  $\bar \bP^{n,u}_{\gamma}$ and $\bP^{n+1,u}_{\gamma}$ can both be defined as a conditioned variant of  $\bP^{n,u}$. We have
 \begin{equation}
  \bar \bP^{n,u}_{\gamma}=\bP^{n,u}_{\gamma}[  \ \cdot  \ | \ \cA ] \text{ and }  \bP^{n+1,u}_{\gamma}=P^{n,u}_{\gamma}[ \ \cdot \ | \ \cB ].
 \end{equation}
Noting that $\gO^+[\gamma,n]$ is a distributive lattice, that $\cA$ is a decreasing event and  that $\cB$ is an increasing event we deduce from Corollary \ref{FKalt} that 
 \begin{equation}
  \bar \bP^{n,u}_{\gamma} \ldom    \bP^{n,u}_{\gamma}\quad  \text{ and }  \quad  \bP^{n,u}_{\gamma}\ldom  \bP^{n+1,u}_{\gamma},
  \end{equation}
  which is sufficient to conclude.

 \subsection{Stability of $n$-Small contours}\label{smallcproof}

We prove the stability directly on a larger interval for the parameter $u$, so that it can be used for both Propositions \ref{truta} and \ref{mejor}.
More precisely the main statement proved in this section is the following.
 
 \begin{proposition}\label{lesmall}
 For $\gb$ sufficiently large, we have: 
\begin{itemize}
  \item[(i)] Every positive $n$-small contour is $n$-stable for $u=u^{-}_{n+1}$.
\item[(ii)]  Every negative $n$-small contour is $n$-stable for $u=u^{+}_n$.
  \end{itemize}

 \end{proposition}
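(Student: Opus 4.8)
The plan is to peel off from $w^h_n(\gamma)$ the bare SOS weight $w_\gb(\gamma)=(e^{\gb|\tilde\gamma|}-1)^{-1}$ and to show that what remains is a ratio of two quantities, each within a constant factor of $1$ whenever $\gamma$ is $n$-small. Since $4\,w_\gb(\gamma)\le e^{-(\gb-1)|\tilde\gamma|}$ for every $|\tilde\gamma|\ge 4$ and $\gb$ large, \eqref{stability} then follows.

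First, by Lemma \ref{mozon} and Remark \ref{mouton} it suffices to verify \eqref{stability} at the single value $u=u^-_{n+1}$ in case (i) and at $u=u^+_n$ in case (ii); in both cases $0<u\le 200 J^{n+2}$, and I may assume $n+\gep(\gamma)\ge 0$, since otherwise $w^h_n(\gamma)=0$. Then, combining the identities \eqref{dooz} of Lemma \ref{rwo} with the relation $e^{-\gb|\tilde\gamma|}z(\gamma)=w_\gb(\gamma)\,\bar z(\gamma)$ — which follows by applying \eqref{relatz} to the interaction-free partition functions, using that $\bar z(\gamma)$ does not depend on the boundary level, so that $z(\gamma)=\bar z(\gamma)/(1-e^{-\gb|\tilde\gamma|})$ — one rewrites
$$w^h_n(\gamma)=\frac{e^{-\gb|\tilde\gamma|}z^h_{n+\gep(\gamma)}(\gamma)}{\bar z^h_n(\gamma)}=w_\gb(\gamma)\frac{\bE^{n+\gep(\gamma)}_{\gamma}[W]}{\bar\bE^{n}_{\gamma}[W]},\qquad W:=e^{u|\phi^{-1}(\bbZ_-)|-\bar H(\phi^{-1}(\bbZ_-))}.$$
It then remains to bound the numerator from above and the denominator from below, uniformly over $n$-small $\gamma$.

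For the numerator I would use only the pointwise bound: since $u>0$ and $\bar H\ge 0$ (Lemma \ref{zoomats}), $W\le e^{u|\phi^{-1}(\bbZ_-)|}\le e^{u|\bar\gamma|}$; because $\gamma$ is $n$-small, \eqref{areabound} gives $|\bar\gamma|\le\max(\gb,\gb n)^4$, so $u|\bar\gamma|\le 200 J^{n+2}\max(\gb,\gb n)^4$, which tends to $0$ as $\gb\to\infty$ uniformly in $n\ge0$; hence $\bE^{n+\gep(\gamma)}_\gamma[W]\le 2$ for $\gb$ large. For the denominator, $W=1$ on $\{\phi^{-1}(\bbZ_-)=\emptyset\}$, so $\bar\bE^{n}_\gamma[W]\ge\bar\bP^{n}_\gamma[\phi^{-1}(\bbZ_-)=\emptyset]\ge 1-\sum_{x\in\bar\gamma}\bar\bP^{n}_\gamma[\phi(x)\le-1]$. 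The crucial point is that $\bar\bP^{n}_\gamma$ is, up to a vertical translation and the reflection $\phi\mapsto-\phi$ (which merely flips the sign of $\gamma$), one of the measures $\bP_{\bL,\bar\gamma,\gb}$ of Section \ref{peaksec}; thus Proposition \ref{rouxrou} applies directly and gives $\bar\bP^{n}_\gamma[\phi(x)\le-1]\le 2e^{-4\gb(n+1)}$ for each $x$, so the sum is $\le 2\max(\gb,\gb n)^4e^{-4\gb(n+1)}\to0$ uniformly in $n$, and $\bar\bE^{n}_\gamma[W]\ge\tfrac12$ for $\gb$ large. Putting these together, $w^h_n(\gamma)\le 4\,w_\gb(\gamma)=4(e^{\gb|\tilde\gamma|}-1)^{-1}\le e^{-(\gb-1)|\tilde\gamma|}$ for all $|\tilde\gamma|\ge 4$, which is \eqref{stability}.

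I expect the main obstacle to be precisely the identification of $\bar\bP^{n}_\gamma$ (and of the associated sign-flipped, translated field) with a measure of the form $\bP_{\bL,\bar\gamma,\gb}$, so that the peak bounds of Proposition \ref{rouxrou} can be invoked with no ``conditioning cost'' — together with the book-keeping needed to check that the error terms $u|\bar\gamma|$ and $|\bar\gamma|\,e^{-4\gb(n+1)}$ are negligible uniformly in $n$, which is exactly where the quantitative content of the hypothesis ``$\gamma$ is $n$-small'', namely $|\bar\gamma|\le\max(\gb,\gb n)^4$, is used. Everything else in the argument is soft.
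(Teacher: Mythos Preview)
Your approach is essentially the same as the paper's: rewrite $w^h_n(\gamma)$ via Lemma~\ref{rwo}, bound the numerator using $W\le e^{u|\bar\gamma|}$, and bound the denominator from below by the probability that $\phi$ stays away from level zero, controlled by Proposition~\ref{rouxrou}. The cosmetic differences (you factor out $w_\gb(\gamma)$ and use $\bar\bE^n_\gamma$ where the paper uses $\bE^n_\gamma$ with an indicator) are immaterial.

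There is, however, a genuine gap. Lemma~\ref{rwo} is stated only for $n\ge 1$; the identity \eqref{dooz} for $z^h_m(\gamma)$ and $\bar z^h_m(\gamma)$ does not hold at level $m=0$, because the rewriting that produces the factor $e^{u|\phi^{-1}(\bbZ_-)|-\bar H(\phi^{-1}(\bbZ_-))}$ relies on summing freely over the negative part of $\phi$, and at level zero the boundary constraint interferes with this. Consequently your formula $w^h_n(\gamma)=w_\gb(\gamma)\,\bE^{n+\gep(\gamma)}_\gamma[W]/\bar\bE^{n}_\gamma[W]$ is not available when $n=0$, $\gep(\gamma)=+$ (denominator at level $0$) or when $n=1$, $\gep(\gamma)=-$ (numerator at level $0$). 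Your sentence ``I may assume $n+\gep(\gamma)\ge 0$'' only disposes of the trivial case $n=0$, $\gep=-$; the two nontrivial level-zero cases remain. The paper treats them by a short direct computation on $z^u_0$ and $z^u_1$ (cf.\ \eqref{truxc}), which you should add.

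A smaller point: in the paper's convention $\bbZ_-=(-\infty,0]\cap\bbZ$, so the event $\{\phi^{-1}(\bbZ_-)=\emptyset\}$ is $\{\forall x,\ \phi(x)\ge 1\}$, and the union bound should be over $\{\phi(x)\le 0\}$, giving $\bar\bP^n_\gamma[\phi(x)\le 0]\le 2e^{-4\gb n}$ rather than $2e^{-4\gb(n+1)}$. This is harmless for $n\ge 1$, where your argument applies, but it is another reminder that $n=0$ genuinely needs a separate treatment.
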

 This proposition combined with Lemma \ref{mozon} implies stability of positive and negative contours on the intervals $[u^{-}_{n+1},\infty)$ and 
 $(-\infty,u^{-}_{n}]$ respectively. 
 Both intervals include $[u^*_{n+1},u^*_n]$, which is sufficient 
 for  Proposition \ref{mejor} and \textit{a fortiori} for Proposition \ref{truta}.
 
 \begin{proof}
 Here and a few other instances, 
 we have to treat separately the cases where level zero is involved: $n=0$, $\gamma$ positive, and $n=1$, $\gamma$ negative.
 We need to show that when $\Diam(\gamma)\le \gb^2$  and $\gamma$ is positive we have
 \begin{equation}
  \frac{z^{u^-_1}_{1}(\gamma)}{\bar z^{u^-_1}_0(\gamma)}\le e^{|\tilde \gamma|}.
 \end{equation}
 Considering the contribution of the ground state $\phi\equiv 1$, we have $z^{u^-_1}_{0}(\gamma)\ge e^{h|\bar \gamma|}\ge  1$. 
On the other hand, setting $h^-_1=u^-_1+h_w(\gb)$,
 we have for $\gb$ sufficiently large (recall $J=e^{-2\gb}$)
 \begin{multline}\label{truxc}
 z^{u^-_1}_{1}(\gamma)\le \sum_{\phi\in \gO_{\bar \gamma}} e^{-\gb\cH^1_n(\phi)+h^-_1|\phi^{-1}(0)|}\\
 \le
 e^{h^-_1|\bar \gamma|} \cZ_{\bar \gamma,\gb}\le \left(  e^{h^-_1}(1+3J^2) \right)^{|\bar \gamma|}\le 
 e^{J |\bar \gamma|}\le  e^{J\gb^4}\le e.
\end{multline}
The third inequality is obtained by using \eqref{treck}.
The fourth and fifth inequality use \eqref{areabound} and are valid for  $\gb$  sufficiently large (note that $h^-_1$, like $h_w(\gb)$ is of order $J^2$).
Similar computations can be used to prove that  $z^{u^+_1}_{0}(\gamma)\le e $ and $z^{u^+_1}_{1}(\gamma)\ge 1$.

\medskip

\noindent In all other cases ($n\ge 1$, $\gamma$ positive and $n\ge 2$, $\gamma$ negative)
we can rewrite the ratio of partition function, using \eqref{dooz} from Lemma \ref{rwo}. We obtain
  \begin{equation}\label{topi}
  \frac{z^u_{n+\gep(\gamma)}(\gamma)}{ \bar z^u_n(\gamma)}
  = \frac{ \bE^{n+\gep(\gamma)}_{\gamma}\left[e^{u |\phi^{-1} (\bbZ_-)| -\bar H(\phi^{-1}(\bbZ_-))}\right]}{\bE^n_{\gb,\gamma}\left[e^{u |\phi^{-1}(\bbZ_-))-\bar H(\phi^{-1}(\bbZ_-))}\ind_{\{ \exists x\in \Delta^-_{\gamma},\phi(x)=n\}}\right]}.
  \end{equation}
  As $\bar H$ is positive (recall \eqref{morseu}),  using \eqref{areabound}, we see that the numerator of the r.h.s.\  satisfies 
  \begin{equation}\label{gonza}
\bE^{n+\gep(\gamma)}_{\gamma}\left[e^{u |\phi^{-1} (\bbZ_-)| -\bar H(\phi^{-1}(\bbZ_-))}\right] \le e^{u|\bar \gamma|}\le e^{u \gb^4n^4},
  \end{equation}
  where $n$-smallness of $\gamma$ is used in the last inequality.
  Considering $u$ being either equal to $u^-_{n+1}$ or $u^+_n$ (depending on the value of  $\gep(\gamma)$) we conclude that provided $\gb$ is sufficiently large 
  \begin{equation}\label{bip}
  \bE^{n+\gep(\gamma)}_{\gamma}\left[e^{u |\phi^{-1} (\bbZ_-)| -\bar H(\phi^{-1}(\bbZ_-))}\right]\le e.
  \end{equation}
  For the denominator,
 considering only the contribution of the event  
 $$\{\phi \in \gO[n,\gamma] \ : \ \forall x\in \bar \gamma,\ \phi(x)\ge 1\},$$
 we obtain that 
  \begin{multline}\label{conque}
 \bE^n_{\gamma}\left[e^{u |\phi^{-1}(\bbZ_-))-\bar H(\phi^{-1}(\bbZ_-))}\ind_{\{\exists x\in \Delta^-_{\gamma},\phi(x)=n\}}\right]\\
 \le 
 \bP^n_{\gamma}\left[\{\exists x\in \Delta^-_{\gamma},\phi(x)=n\}\cap \{\forall x\in \bar \gamma,\ \phi(x)\ge 1 \right].
 \end{multline}
Using \eqref{zups} we obtain that for any $x_0\in  \Delta^-_{\gamma}$
\begin{equation}\label{loot}
  \bP^n_{\gamma}\left[\exists x\in \Delta^-_{\gamma},\phi(x)=n \right]\ge   \bP^n_{\gamma}\left[\phi(x_0)=n\right]\ge 1-4e^{-4\gb}.
\end{equation}
Using \eqref{infin} we have
\begin{equation}\label{looz}
\bP^n_{\gamma} \left[ \exists  x\in \bar \gamma, \phi(x)\le 0 \right]
 \le \sum_{x\in \bar \gamma}  \bP^n_{\gamma} [ \phi(x)\le 0]\le 2|\bar \gamma| e^{-4n\gb}\le 4n^4\gb^4 e^{-4n\gb}.
 \end{equation}
Combining \eqref{conque},\eqref{loot} and \eqref{looz}, we obtain that for $\gb$ sufficiently large
\begin{equation}\label{bap}
 \bE^n_{\gamma}\left[e^{u |\phi^{-1}(\bbZ_-)|-\bar H(\phi^{-1}(\bbZ_-))}\ind_{\{\exists x\in \Delta^-_{\gamma},\phi(x)=n\}}\right]\ge e^{-1},
\end{equation}
and thus we conclude from \eqref{topi}, \eqref{bip} and \eqref{bap} that
$$\frac{z^u_{n+\gep(\gamma)}(\gamma)}{\bar z^u_n(\gamma)}\le e^2.$$ 

\end{proof}
 
 \subsection{Larger contours: presenting the induction}\label{prezent}
 
 To prove the stability of larger contour, we proceed with a double induction. A first induction based on the inclusion order for contours, 
 and a second one on the level $n$ for which stability is tested. To avoid any  confusion, before going into the details of the proof, 
 we provide the structure of this inductive reasoning.
For $n\ge 0$ and $\gamma$ a contour, we define the property
\begin{equation}
 \cP(n,\gamma):=\begin{cases} 
                 \big[ \text{ the contour } \gamma \text{ is stable at level }  n \text{ for } u\ge u^+_{n+1} \big] \quad &\text{ if } \gep(\gamma)=+,\\
                 \big[ \text{ the contour } \gamma \text{ is stable at level }  n \text{ for } u\le u^-_{n} \big] \quad &\text{ if } \gep(\gamma)=-,
                \end{cases}
 \end{equation}
and 
\begin{equation}
 \cP(\gamma):=\big[ \ \cP(n,\gamma)  \text{ is satisfied for all } n\ge 0 \big].
 \end{equation}
We are going to prove that $\cP(\gamma)$ holds for every contour using an  induction on $\gamma$:
From Proposition \ref{lesmall}, we know that $\cP(\gamma)$ hold true when $\Diam(\gamma)\le \gb^2$.
Thus we only need to perform the induction step, which is proving $\cP(\gamma)$ assuming that 
$\cP(\gamma')$ holds for all contours $\gamma'$ ``included in $\bar \gamma$'' \textit{i.e.}\ such that $\bar \gamma' \subset \bar \gamma$, $\bar \gamma'\ne \bar\gamma$.

\medskip

To prove $\cP(\gamma)$ itself we use an induction on $n$.
The direction of the induction depends on the sign of $\gamma$:
\begin{itemize}
 \item If $\gep(\gamma)=-$ then we prove $\cP(n,\gamma)$ assuming that $\cP(m,\gamma)$ holds for all $m\le n-1$, for all $n\ge 1$,
 \item If $\gep(\gamma)=+$ then we prove $\cP(n,\gamma)$ assuming that $\cP(m,\gamma)$ holds for all $m\ge n+1$, for all $n\ge 0$.
\end{itemize}
The descending induction for positive contours works for positive contours because we already know from Proposition \ref{lesmall} that $\cP(\gamma,n)$ holds for 
$n\ge \gb^{-1} \sqrt{\Diam(\gamma)}$. The ascending induction for negative contours is initiated for $n=1$ (for which the induction hypothesis 
``$\cP(m,\gamma)$ holds for all $m\le 0$'' is empty). 


\medskip

\noindent In the remainder of the proof we always assume that $n\ge 0$ for $\gamma$ positive and $n\ge 1$ for $\gamma$ negative.
For readability we also adopt the following convention within the proof
\begin{equation}\label{uvalu}
 u=u(n,\gamma):=\begin{cases}
                 u^+_{n+1}=200 J^{n+3} \quad & \text{ if } \gep(\gamma)=+,\\
                  u^-_{n}=\frac{1}{200} J^{n+2} \quad & \text{ if } \gep(\gamma)=-.
                \end{cases}
\end{equation}
By Lemma \ref{mozon} it is indeed sufficient to check stability for this value of $u$.
It turns out that the ratio $\bar z^{u}_{n+\gep(\gamma)}/\bar z^{u}_n$ is easier to work with than the quantity $z^{u}_{n+\gep(\gamma)}/\bar z^{u}_n$ which appears in the definition of $w^u_n$.
Thus our first task is to prove the following estimate.

\begin{lemma}\label{reduc}
If $\cP(n+1,\gamma)$ holds then we have, for $u$ defined in \eqref{uvalu} 
 \begin{equation}
 w^u_n(\gamma)\le 2e^{-\gb |\tilde \gamma|}\frac{\bar z^{u}_{n+\gep(\gamma)}(\gamma)}{\bar z^n_u(\gamma)}.
 \end{equation}
 \end{lemma}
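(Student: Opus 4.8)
The plan is to start from the definition $w^u_n(\gamma)=e^{-\gb|\tilde\gamma|}z^u_{n+\gep(\gamma)}(\gamma)/\bar z^u_n(\gamma)$ and to bound the ratio $z^u_{n+\gep(\gamma)}(\gamma)/\bar z^u_{n+\gep(\gamma)}(\gamma)$ by a factor $2$; combining this with the trivial identity $\bar z^u_n(\gamma)$ in the denominator then yields the claim. Recall from \eqref{relatz} that
\begin{equation}
z^u_{n+\gep(\gamma)}(\gamma)=\sum_{k\ge 0}e^{-k\gb|\tilde\gamma|}\bar z^u_{n+\gep(\gamma)+\gep(\gamma)k}(\gamma)=\sum_{k\ge 0}e^{-k\gb|\tilde\gamma|}\bar z^u_{n+(k+1)\gep(\gamma)}(\gamma).
\end{equation}
So $z^u_{n+\gep(\gamma)}(\gamma)=\bar z^u_{n+\gep(\gamma)}(\gamma)+\sum_{k\ge 1}e^{-k\gb|\tilde\gamma|}\bar z^u_{n+(k+1)\gep(\gamma)}(\gamma)$, and it suffices to show that the tail sum is at most $\bar z^u_{n+\gep(\gamma)}(\gamma)$.

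The key is to control $\bar z^u_{m}(\gamma)$ for $m>n+\gep(\gamma)$ (in the positive case) — or, in the negative case, for the relevant smaller levels — by $\bar z^u_{n+\gep(\gamma)}(\gamma)$, using the induction hypothesis $\cP(n+1,\gamma)$. For a positive contour, $\cP(n+1,\gamma)$ gives stability of $\gamma$ at level $n+1$, i.e. $w^u_{n+1}(\gamma)\le e^{-(\gb-1)|\tilde\gamma|}$, and more generally the descending induction supplies stability at every level $m\ge n+1$ at the value of $u$ under consideration. Since $w^u_m(\gamma)=e^{-\gb|\tilde\gamma|}z^u_{m+1}(\gamma)/\bar z^u_m(\gamma)\ge e^{-\gb|\tilde\gamma|}\bar z^u_{m+1}(\gamma)/\bar z^u_m(\gamma)$, stability at level $m$ yields $\bar z^u_{m+1}(\gamma)\le e^{|\tilde\gamma|}\bar z^u_m(\gamma)$. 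Iterating from level $n+1$ up to level $n+k$ gives $\bar z^u_{n+k+1}(\gamma)\le e^{k|\tilde\gamma|}\bar z^u_{n+1}(\gamma)$ in the positive case (and an analogous chain — this time running over the already-established lower levels — in the negative case). Substituting into the tail sum,
\begin{equation}
\sum_{k\ge 1}e^{-k\gb|\tilde\gamma|}\bar z^u_{n+(k+1)}(\gamma)\le \bar z^u_{n+1}(\gamma)\sum_{k\ge 1}e^{-k\gb|\tilde\gamma|}e^{k|\tilde\gamma|}=\bar z^u_{n+1}(\gamma)\sum_{k\ge 1}e^{-k(\gb-1)|\tilde\gamma|},
\end{equation}
and for $\gb$ large and $|\tilde\gamma|\ge 4$ the geometric series is bounded by $e^{-4(\gb-1)}/(1-e^{-4(\gb-1)})\le 1$, which is exactly what is needed.

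The one delicate point — and the main obstacle — is the negative case, where the natural recursion runs \emph{downward} in $n$ rather than upward, so one must be careful that the chain of inequalities $\bar z^u_{m+\gep(\gamma)}(\gamma)\le e^{|\tilde\gamma|}\bar z^u_m(\gamma)$ involves only levels $m$ for which $\cP(m,\gamma)$ has already been established by the ascending induction on $n$ (all $m\le n-1$); since for a negative contour $z^u_{n-1}(\gamma)=\sum_{k\ge0}e^{-k\gb|\tilde\gamma|}\bar z^u_{n-1-k}(\gamma)$ involves precisely those levels, this works, but it must be stated with care, including the boundary convention $\bar z^u_m=0$ for $m<0$ which makes the sum finite. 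A secondary subtlety is that the value of $u$ must be the one fixed in \eqref{uvalu} and the monotonicity from Lemma \ref{mozon} ensures that checking stability there suffices at all the levels appearing; I would invoke this explicitly. Once these bookkeeping points are handled, the estimate $z^u_{n+\gep(\gamma)}(\gamma)\le 2\bar z^u_{n+\gep(\gamma)}(\gamma)$ follows and multiplying through by $e^{-\gb|\tilde\gamma|}/\bar z^u_n(\gamma)$ gives the assertion of the lemma.
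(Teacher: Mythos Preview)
Your argument is correct, but it takes a longer route than the paper and, strictly speaking, uses a stronger hypothesis than the lemma states. You unroll the whole series $z^u_{n+\gep(\gamma)}(\gamma)=\sum_{k\ge 0}e^{-k\gb|\tilde\gamma|}\bar z^u_{n+(k+1)\gep(\gamma)}(\gamma)$ and then invoke stability at \emph{every} level $m\ge n+1$ (resp.\ $m\le n-1$) to control each term of the tail; this is available from the surrounding descending/ascending induction, but it is more than the single hypothesis $\cP(n+\gep(\gamma),\gamma)$ written in the lemma.

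The paper instead peels off only one layer. For positive $\gamma$ it writes
\[
z^u_{n+1}(\gamma)=\bar z^u_{n+1}(\gamma)+e^{-\gb|\tilde\gamma|}z^u_{n+2}(\gamma),
\]
divides by $\bar z^u_n(\gamma)$ and multiplies by $e^{-\gb|\tilde\gamma|}$, and then recognises the second term as $\bar z^u_{n+1}(\gamma)\,w^u_{n+1}(\gamma)/\bar z^u_n(\gamma)$. This yields directly
\[
w^u_n(\gamma)=e^{-\gb|\tilde\gamma|}\frac{\bar z^u_{n+1}(\gamma)}{\bar z^u_n(\gamma)}\bigl(1+w^u_{n+1}(\gamma)\bigr),
\]
and a single application of $\cP(n+1,\gamma)$ (together with $u^+_{n+1}\ge u^+_{n+2}$, so that the stability bound applies at the chosen $u$) gives $1+w^u_{n+1}(\gamma)\le 1+e^{-(\gb-1)|\tilde\gamma|}\le 2$. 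No iteration is needed and no further levels are touched. Your geometric-series bound recovers the same factor $2$, but at the cost of an explicit chain of ratio estimates that the one-step identity avoids.
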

A  consequence of this result is that to prove the $n$-stability of $\gamma$ we only need to show that 
\begin{equation}\label{core}
  \frac{\bar z^{u}_{n+\gep(\gamma)}(\gamma)}{\bar z^{u}_n(\gamma)}\le \frac{1}{2}e^{|\tilde \gamma|}.
  \end{equation}

  \begin{proof}[Proof of Lemma \ref{reduc}]
  Let us assume for simplicity that $\gamma$ is a positive contour (the adaptation for the negative case is straight-forward). 
From the definition \eqref{defoom} of $\gO^+[\gamma,n]$ and $\bar \gO^+[\gamma,n]$ we have
$$z^u_{n+1}(\gamma)=\bar z^u_{n+1}(\gamma)+e^{-\gb |\tilde \gamma|}  z^u_{n+2}(\gamma).$$
Thus using $\cP(n+1,\gamma)$ and the fact that $u_{n+1}^+\ge u^+_{n+2}$, we have
 \begin{equation}
w^u_n(\gamma)=e^{-\gb |\tilde \gamma|} \frac{\bar z^{u}_{n+1}}{\bar z^n_u}(\gamma) \left[ 1+ w^u_{n+1}(\gamma)\right]
\le \left(1+e^{-(\gb-1)|\tilde \gamma|}\right)e^{-\gb |\tilde \gamma|}\frac{\bar z^{u}_{n+1}}{\bar z^n_u}(\gamma).
 \end{equation}

\end{proof} 
 \medskip
 
 The strategy to prove \eqref{core} is to decompose $\bar z^u_{n+\gep(\gamma)}(\gamma)$ according to the set 
 of large external contours present in the field $\phi \in \bar \gO(n+\gep(\gamma),\gamma)$.
 Here and in the remainder of the proof, large means $n$-large (diameter larger than $\gb^2n^2$). Let us introduce some notation to perform this decomposition:
We let $\cK^{\larg}_{\ext}(\gamma,n)$  be the set of compatible collections of large external  contours and $\cK^{\sm}_{\ext}(\gamma,n)$ be  
 the set of  compatible  collections of small external contours (recall \eqref{daext})
 \begin{equation}\begin{split}\label{smallandbig}
 \cK^{\larg}_{\ext}(\gamma,n)&:= \{  \gG_1 \in \bar \cK_{\ext}(\gamma) \ : \ \forall \gamma_1\in \gG_1,\ \Diam(\gamma_1)>n^2\gb^2\},\\
 \cK^{\sm}_{\ext}(\gamma,n)&:= \{  \gG_2 \in \bar \cK_{\ext}(\gamma) \ : \ \forall \gamma_2\in \gG_2,\ \Diam(\gamma_2)\le n^2\gb^2\},
 \end{split}
 \end{equation}
  where $n^2\gb^2$ is replaced by $\gb^2$ when $n=0$.
 We let  $\cK^{\larg,+}_{\ext}(\gamma,n)$ and $\cK^{\sm,+}_{\ext}(\gamma,n)$ the subsets of   $\cK^{\larg}_{\ext}(\gamma,n)$ and  $\cK^{\sm}_{\ext}(\gamma,n)$ respectively 
 which contains only positive contours.

 \medskip
 
 For $\gG_1\in \bar \cK_{\ext}(\gamma)$, we let $\bar \gG_1$ and $L(\gG_1)$ denote respectively the the set of $\bbZ^2$ sites 
 enclosed by contours in $\gG_1$ and the total length 
 of the contours in $\gG_1$ 
 \begin{equation}\label{interiors}
 \bar \gG_1:=\bigcup_{\gamma_1\in \gG_1} \bar \gamma_1, \quad \text{ and }\quad L(\gG_1):=\sum_{\gamma_1\in \gG_1} |\tilde \gamma_1|.
 \end{equation}
Finally, we define  $Z^{u,\sm}_{m}[\gamma,\gG_1,n]$ which corresponds to a partition function  on the domain  
 $\bar \gamma \setminus \bar \gG_1$,
 which displays only $n$-small contours which are
 compatible with $\gG_1\cup \{\gamma\}$,
 \begin{equation}\label{defrelou}
Z^{u,\sm}_{m}[\gamma,\gG_1,n]:= \sum_{ \{\gG_2 \in \cK^{\sm}_{\ext}(\gamma,n) \ : \ \gG_2 \parallel  \gG_1\}}  \prod_{\gamma_2\in \gG_2 } e^{-\gb |\tilde\gamma_2|} 
z^u_{m+\gep(\gamma_2)}(\gamma_2).
\end{equation}
Note that when $m=0$, the contribution of $\gG_2$ is non-zero only if $\gG_2 \in \cK^{\sm,+}_{\ext}(\gamma,n)$.
The aim of our decomposition procedure is to prove the following result.

\begin{lemma}\label{decopoz}
Assuming that $\cP(\gamma')$ holds when $\bar \gamma' \subset \bar \gamma$,  $\bar \gamma'\ne \bar \gamma$,
we have,  for $u$ defined in \eqref{uvalu} 
  \begin{equation}\label{bluz}
   \frac{\bar z^u_{n+\gep(\gamma)}(\gamma)}{\bar z^u_{n}(\gamma)}\le \sum_{\gG_1\in \cK^{\larg}_{\ext}(\gamma,n)}  
   e^{-(\gb-2) L(\gG_1) } \frac{Z^{u,\sm}_{n+\gep(\gamma)}[\gamma,\gG_1,n]}{Z^{u,\sm}_{n}[\gamma,\gG_1,n]}.
 \end{equation}

\end{lemma}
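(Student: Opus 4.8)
The plan is to decompose the numerator $\bar z^u_{n+\gep(\gamma)}(\gamma)$ according to its family of $n$-large external contours, to bound the denominator $\bar z^u_n(\gamma)$ from below by a product running over the \emph{same} family, and to control the resulting per-contour ratios through the outer induction hypothesis $\cP$ (Section~\ref{prezent}). Throughout $u=u(n,\gamma)$ is the value fixed in \eqref{uvalu}, and $\gG_1\in\cK^{\larg}_{\ext}(\gamma,n)$ denotes a generic compatible collection of $n$-large external contours.

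\textbf{Step 1 (numerator).} First I would use $\bar z^u_{n+\gep(\gamma)}(\gamma)\le z^u_{n+\gep(\gamma)}(\gamma)$ and then run the external-contour decomposition of the proof of Proposition~\ref{trax} on $z^u_{n+\gep(\gamma)}(\gamma)$, but separating the external contours of the field into the sub-collection $\gG_1$ of $n$-large ones and the remaining $n$-small ones $\gG_2\parallel\gG_1$: summing over the intensity of each $\gamma_1\in\gG_1$ produces the factor $e^{-\gb|\tilde\gamma_1|}z^u_{n+\gep(\gamma)+\gep(\gamma_1)}(\gamma_1)$, and the sum over $\gG_2$ together with the fields on $\bar\gamma\setminus\bar\gG_1$ is exactly the quantity of \eqref{defrelou}, so that
\[\bar z^u_{n+\gep(\gamma)}(\gamma)\ \le\ \sum_{\gG_1}\Big(\prod_{\gamma_1\in\gG_1}e^{-\gb|\tilde\gamma_1|}\,z^u_{n+\gep(\gamma)+\gep(\gamma_1)}(\gamma_1)\Big)\,Z^{u,\sm}_{n+\gep(\gamma)}[\gamma,\gG_1,n].\]

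\textbf{Step 2 (denominator).} Next, for a fixed $\gG_1$, I would restrict the sum defining $\bar z^u_n(\gamma)$ to configurations obtained by gluing an arbitrary element of $\gO^+[\gamma_1,n]$ on each interior $\bar\gamma_1$, $\gamma_1\in\gG_1$, with a configuration coming from an $n$-small external-contour collection $\gG_2\parallel\gG_1$ compatible with $\gamma$ on the complementary domain $\bar\gamma\setminus\bar\gG_1$ (kept at level $n$ off the $\gG_2$-contours). These domains being pairwise disjoint, $\cH^n$ and the wall energy split additively; the glued field is admissible for $\bar z^u_n(\gamma)$ because the compatibility rules of Section~\ref{contour} prevent a contour of the ``wrong'' sign from meeting $\Delta^-_\gamma$ (so the value stays on the correct side of $n$ there) and because, $\gamma$ being $n$-large, $\Delta^-_\gamma$ cannot be entirely covered so that the constraint distinguishing $\bar z^u_n(\gamma)$ from $z^u_n(\gamma)$ can be met; and distinct data give distinct configurations. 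This gives
\[\bar z^u_n(\gamma)\ \ge\ \Big(\prod_{\gamma_1\in\gG_1}z^u_n(\gamma_1)\Big)\,Z^{u,\sm}_n[\gamma,\gG_1,n].\]

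\textbf{Step 3 (per-contour estimate and assembly).} Since $\bar\gamma_1\subsetneq\bar\gamma$, the outer induction gives $\cP(\gamma_1)$; because of the factor $200$ separating $u^+_{m+1}$ from $u^-_m$, the value $u=u(n,\gamma)$ lies in the stability interval of $\gamma_1$ both at level $n$ and at the adjacent level ($n{+}1$ if $\gamma_1$ is positive, $n{-}1$ if negative), so $w^u_m(\gamma_1)\le e^{-(\gb-1)|\tilde\gamma_1|}$ for those $m$. Together with $w^u_m(\gamma_1)=e^{-\gb|\tilde\gamma_1|}z^u_{m+\gep(\gamma_1)}(\gamma_1)/\bar z^u_m(\gamma_1)$ and $\bar z^u_m(\gamma_1)\le z^u_m(\gamma_1)$, this shows that shifting the reference level of $z^u_m(\gamma_1)$ by one unit towards the inside of $\gamma_1$ costs a factor at most $e^{|\tilde\gamma_1|}$. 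As $(n+\gep(\gamma)+\gep(\gamma_1))-n=\gep(\gamma)+\gep(\gamma_1)\in\{-2,0,2\}$, at most two such shifts are needed (the case $0$ being trivial), so $z^u_{n+\gep(\gamma)+\gep(\gamma_1)}(\gamma_1)\le e^{2|\tilde\gamma_1|}z^u_n(\gamma_1)$, whence $e^{-\gb|\tilde\gamma_1|}z^u_{n+\gep(\gamma)+\gep(\gamma_1)}(\gamma_1)\le e^{-(\gb-2)|\tilde\gamma_1|}z^u_n(\gamma_1)$. Inserting the bounds of Steps~1 and 2 into the elementary inequality $\big(\sum_{\gG_1}N(\gG_1)\big)/D\le\sum_{\gG_1}N(\gG_1)/D(\gG_1)$ (valid whenever $D\ge D(\gG_1)>0$), then using the per-contour bound together with $\prod_{\gamma_1\in\gG_1}e^{-(\gb-2)|\tilde\gamma_1|}=e^{-(\gb-2)L(\gG_1)}$ (recall \eqref{interiors}), yields \eqref{bluz}.

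\textbf{Main obstacle.} I do not expect Step~3 to be difficult once $\cP(\gamma_1)$ is available; the real work is the combinatorial bookkeeping underlying Steps~1 and~2, namely ensuring that the decomposition of the numerator and the lower bound for the denominator are genuinely governed by the \emph{same} collection $\gG_1$ and factorize compatibly, so that the only surviving discrepancy between the two exterior regions is the ratio $Z^{u,\sm}_{n+\gep(\gamma)}[\gamma,\gG_1,n]/Z^{u,\sm}_n[\gamma,\gG_1,n]$. This forces a careful analysis of the internal and external neighbourhoods $\Delta^{\pm}$, of which sign a contour adjacent to $\tilde\gamma$ (or to some $\tilde\gamma_1$) may carry, of how the interiors $\bar\gamma_1$ may intersect $\Delta^\pm_\gamma$ and the small-contour regions, and of the constraint defining $\bar z$ versus $z$ — arranging all of this so that the gluing in Step~2 accounts for the full product $\prod_{\gamma_1}z^u_n(\gamma_1)\cdot Z^{u,\sm}_n[\gamma,\gG_1,n]$ with no loss is the crux of the proof.
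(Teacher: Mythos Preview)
Your proposal is essentially the paper's proof: it splits the argument into an upper bound on $\bar z^u_{n+\gep(\gamma)}(\gamma)$ (your Step~1, the paper's Lemma~\ref{plif}) and a lower bound on $\bar z^u_n(\gamma)$ (your Step~2, the paper's Lemma~\ref{plof}), with the per-contour level shift handled via the induction hypothesis exactly as you do in Step~3. One small point: in Step~1 you pass through $\bar z^u_{n+\gep(\gamma)}\le z^u_{n+\gep(\gamma)}$ before decomposing, but this detour is unnecessary---the identity \eqref{tapun} gives the external-contour decomposition of $\bar z^u_{n+\gep(\gamma)}(\gamma)$ directly as an \emph{equality} over $\bar\cK_{\ext}(\gamma)$, which you then split into large and small; going through $z$ only adds a spurious contribution with $\gamma$ itself as external contour. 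For Step~2, the paper avoids the delicate verification you flag (that the glued field lands in $\bar\gO^+[\gamma,n]$) by arguing in the opposite direction: it defines $\cA\subset\bar\gO^+[\gamma,n]$ as the set of $\phi$ whose external contours are either in $\gG_1$ or compatible with $\gG_1$, and then computes $\sum_{\phi\in\cA}$ exactly, showing it factors as $\prod_{\gamma_1}z^u_n(\gamma_1)\cdot(\text{sum over }\gG_2)$; this sidesteps any question about whether a given gluing meets the $\bar z$ versus $z$ constraint.
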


To conclude the proof of the result we need two technical estimates to control the sum in the r.h.s.\ of \eqref{bluz}.
The first allows to bounds the ratio $(Z^{u,\sm}_{n+\gep(\gamma)}/ Z^{u,\sm}_{n})[\gamma,\gG_1,n]$ by a simpler quantity for which one can have a 
geometric intuition. It is proved in Section \ref{dcimo}.
 \begin{proposition}\label{difficilissimo}
 For $\gb$ sufficiently large, we have, for any $\gG_1\in \cK^{\larg}_{\ext}(\gamma,n)$ and $u$ defined in \eqref{uvalu} 
 \begin{equation}\label{cookries}
 \frac{Z^{u,\sm}_{n+\gep(\gamma)}[\gamma,\Gamma_1,n]}{Z^{u,\sm}_{n}[\gamma,\Gamma_1,n]}\le \frac{1}{2}
 \exp\left(-J^{3n+3}|\bar \gamma\setminus \bar \gG_1|+(L(\gG_1)+|\tilde \gamma|)\right).
 \end{equation}
 \end{proposition}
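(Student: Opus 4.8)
The plan is to interpret the ratio $Z^{u,\sm}_{m}[\gamma,\gG_1,n]$ for $m=n+\gep(\gamma)$ and $m=n$ as partition functions of polymer-type systems on the domain $D:=\bar\gamma\setminus\bar\gG_1$ that differ only in their background level, and to estimate the ratio by exhibiting an injection (or a weight comparison) from configurations contributing at level $n+\gep(\gamma)$ into those contributing at level $n$. The cleanest route is to first rewrite both partition functions using Lemma \ref{rwo} (Equation \eqref{dooz}), so that $z^u_{m+\gep(\gamma_2)}(\gamma_2)$ is expressed through $z(\gamma_2)\bE^{m+\gep(\gamma_2)}_{\gamma_2}[\exp(u|\phi^{-1}(\bbZ_-)|-\bar H(\phi^{-1}(\bbZ_-)))]$, and to split the sum over $\gG_2\in\cK^{\sm}_{\ext}(\gamma,n)$ according to whether level zero is reached inside any small contour. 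Because $u=u(n,\gamma)$ in \eqref{uvalu} is of order $J^{n+2}$ or $J^{n+3}$, the terms that carry a factor $u|\phi^{-1}(\bbZ_-)|$ are tiny unless a contour of intensity $\ge n$ (positive, sitting over the background) or $\ge n+1$ (if the background is $n+1$) is present; these are exactly the ``deep'' small contours whose presence the estimate \eqref{cookries} is designed to penalize.

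Concretely, first I would establish the elementary identity $Z^{u,\sm}_{m}[\gamma,\gG_1,n] = \prod_{\text{components}} (\dots)$ only up to the interaction carried by the events $\{\phi(x)=0\}$, which ruins exact factorization; to handle this I would instead use the representation of Lemma \ref{rwo} directly on $D$, writing $Z^{u,\sm}_{m}[\gamma,\gG_1,n]$ as a sum over height functions on $D$ with background $m$, with the weight $e^{-\gb\cH^m_D(\phi)+u|\phi^{-1}(\bbZ_-)|-\bar H(\phi^{-1}(\bbZ_-))}$, restricted so that all contours present are $n$-small and compatible with $\gG_1\cup\{\gamma\}$. The key point is then a pointwise comparison: lowering the background from $n+\gep(\gamma)$ to $n$ in the positive case means $\gep(\gamma)=+$ and $m$ goes from $n+1$ down to $n$, i.e. the new background is \emph{lower}; by the FKG/monotonicity statement Corollary \ref{FKalt}(iii) and the fact that $\bar H\ge 0$ and $u>0$, the level-$n$ partition function dominates the level-$(n+1)$ one \emph{unless} this creates reachability of $0$. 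The gain factor $\exp(-J^{3n+3}|D|)$ should come out of the first correction in a cluster/polymer expansion of the ratio: the leading difference between the two background levels is that a single-site or two-site ``peak'' down to level $0$ (or up, in the negative case) has probability of order $e^{-6\gb n}$ times a combinatorial factor, times $u\sim J^{n+2}$, producing a per-site penalty $\sim J^{2n}\cdot J^{n+3}=J^{3n+3}$, summed over the $|D|$ sites. The factor $e^{L(\gG_1)+|\tilde\gamma|}$ is a boundary correction: near $\tilde\gamma$ and near the contours of $\gG_1$ the heights are constrained, which can only help the numerator relative to the bulk estimate, and the loss is controlled by the total boundary length via an argument in the spirit of Lemma \ref{finalfrontier}.

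The main steps, in order, are: (1) rewrite both partition functions via \eqref{dooz} as height-function sums on $D$ with the extra $\bar H$ and $u$ terms; (2) introduce the polymer expansion of the ratio, isolating the clusters that ``see'' level $0$ (or level $n-1$ in the negative case), using Theorem \ref{superclust} and the stability hypothesis $\cP(\gamma')$ for all $\gamma'$ with $\bar\gamma'\subsetneq\bar\gamma$ to guarantee convergence on $D$; (3) bound the expansion so that the bulk term gives a per-site factor $\le \exp(-J^{3n+3})$ — this uses the peak probability estimates of Proposition \ref{rouxrou} to lower-bound the number of configurations that reach $0$ and the value of $u$ from \eqref{uvalu} to weight them; (4) collect the boundary clusters (those within cluster-distance of $\tilde\gamma$ or $\partial\gG_1$) into the factor $e^{L(\gG_1)+|\tilde\gamma|}$, and absorb the constant $1/2$ into the same boundary estimate for $\gb$ large. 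The hard part will be step (3): making precise that the \emph{only} relevant mechanism distinguishing the two backgrounds is the appearance of deep down-spikes, and that its contribution is genuinely of order $J^{3n+3}$ per site rather than merely $J^{2n}$ per site — this requires carefully separating the contribution of contours of intensity exactly $n$ (resp.\ $n+1$) reaching $0$ from the (negligible, because $u$ is so small) contribution of all other configurations, and is where the precise numerology $3n+3$ in the exponent is produced. The postponement of this to Section \ref{secrestrict} in the paper confirms that this is the technically delicate core.
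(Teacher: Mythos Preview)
Your first step is right: the paper rewrites both partition functions via the representation of Lemma \ref{zilak} (the analog of \eqref{dooz} on the domain $D=\bar\gamma\setminus\bar\gG_1$), so that the ratio becomes a ratio of expectations $\bE^{m,\sm,n}_{\gamma,\gG_1}[\exp(u|\phi^{-1}(\bbZ_-)|-\bar H(\phi^{-1}(\bbZ_-)))]$ for $m=n+\gep(\gamma)$ and $m=n$. But after that point your plan diverges from what actually works, and the gap is genuine. There is no ``polymer expansion of the ratio'' here, nor is the induction hypothesis $\cP(\gamma')$ used anywhere in this estimate; the FKG comparison you sketch does not yield a quantitative bound on the ratio and the heuristic $e^{-6\gb n}\times u$ conflates two distinct mechanisms. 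What the paper does instead is bound the two expectations separately (Lemma \ref{dalo}): the denominator (level $n$) is bounded below by a straight application of Jensen's inequality, which reduces to computing $\bE[u|\phi^{-1}(\bbZ_-)|-\bar H(\phi^{-1}(\bbZ_-))]$, and Proposition \ref{rouxrou} then gives a per-site contribution $\tfrac12 u J^{2n}-40J^{3n+3}$. The numerator is bounded above by a direct argument (no Jensen, since one needs an upper bound on $\log\bE[e^X]$), using a spatial decoupling into boxes of side $r=3n^2\gb^2$ together with the local estimates of Lemma \ref{locall}.

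The point you miss entirely is that the positive and negative cases have \emph{different} dominant mechanisms, and this is why the numerology $J^{3n+3}$ comes out. For $\gep(\gamma)=+$ (with $u=u^+_{n+1}=200J^{n+3}$) the $u$-term dominates: at level $n$ single-site peaks hit $0$ with probability $\sim J^{2n}$, at level $n+1$ only $\sim J^{2n+2}$, so the ratio picks up $\exp(-\tfrac14 u J^{2n}|D|)\le\exp(-J^{3n+3}|D|)$. For $\gep(\gamma)=-$ (with $u=u^-_n=J^{n+2}/200$) the $u$-term is \emph{larger} at level $n-1$ than at level $n$ and works against you; what saves the estimate is the $-\bar H$ term, which penalizes two-site peaks and contributes $\sim -\tfrac14 J^{3n}$ per site at level $n-1$ (Lemma \ref{dalo}(iii)), overwhelming the $4uJ^{2(n-1)}\le J^{3n}/50$ gain. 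The special cases $n=0$ (positive) and $n=1$ (negative) must be handled separately via explicit computation (Lemma \ref{dzip}), because level zero is involved and the representation \eqref{groot} fails. Your sketch does not distinguish these regimes and therefore cannot produce the correct exponent.
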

The second estimate which allows to conclude is a control of the simplified sum.
 We prove it in Section \ref{dcimus}.
 \begin{proposition}\label{dificilissimus}
We have for $\gb$ sufficiently large for every $n\ge 0$
 \begin{equation}\label{crookies}
 \sum_{\gG_1\in \cK^{\larg}_{\ext}(\gamma,n)}   \exp\left(-J^{3n+3}|\bar \gamma\setminus \bar \gG_1|-(\gb-3) L(\gG_1) \right)\le 1.
\end{equation} 
\end{proposition}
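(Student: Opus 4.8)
The plan is to bound the sum over compatible collections $\gG_1$ of $n$-large external contours by the product of individual contour contributions, then to show each such contribution is controlled by the exponential volume term, and finally to sum geometrically. First I would use the standard Kotecký--Preiss-type argument: a compatible collection of external contours is in particular a collection of pairwise disjoint interiors, so the sum over $\gG_1\in\cK^{\larg}_{\ext}(\gamma,n)$ can be reorganized. The cleanest route is to peel off contours one at a time via the recursion
\begin{equation}
\sum_{\gG_1\in \cK^{\larg}_{\ext}(\gamma,n)} \prod_{\gamma_1\in\gG_1} x(\gamma_1) \le \prod_{x^*\in(\bbZ^2)^*} \left(1+ \sum_{\{\gamma_1 \ : \ x^*\in\tilde\gamma_1,\ \Diam(\gamma_1)>n^2\gb^2\}} x(\gamma_1)\right),
\end{equation}
where $x(\gamma_1):=\exp(J^{3n+3}|\bar\gamma_1| - (\gb-3)|\tilde\gamma_1|)$; here I use $|\bar\gamma\setminus\bar\gG_1| = |\bar\gamma| - \sum_{\gamma_1\in\gG_1}|\bar\gamma_1|$, valid because external contours have disjoint interiors, to factor $\exp(-J^{3n+3}|\bar\gamma\setminus\bar\gG_1|)=\exp(-J^{3n+3}|\bar\gamma|)\prod_{\gamma_1}\exp(J^{3n+3}|\bar\gamma_1|)$, and the prefactor $\exp(-J^{3n+3}|\bar\gamma|)\le 1$ is simply dropped.

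The heart of the matter is then the single-vertex estimate: I must show that
\begin{equation}
\sum_{\{\gamma_1 \ : \ x^*\in\tilde\gamma_1,\ \Diam(\gamma_1)>n^2\gb^2\}} \exp\left(J^{3n+3}|\bar\gamma_1| - (\gb-3)|\tilde\gamma_1|\right)
\end{equation}
is so small that the infinite product over all $x^*$ contributing to $\bar\gamma$ stays below $1$ — but this cannot literally be true vertex-by-vertex since the product has $\sim|\bar\gamma|$ factors. So the actual bound I want is the sharper form: the left side of \eqref{crookies} is at most $1$ because the $J^{3n+3}|\bar\gamma|$ gain we discarded would need to be retained, or — better — because for an $n$-large contour $\gamma_1$ one has the isoperimetric bound $|\bar\gamma_1|\le \Diam(\gamma_1)^2 \le (|\tilde\gamma_1|/4)^2$ together with $\Diam(\gamma_1)>n^2\gb^2$, forcing $|\tilde\gamma_1| > 4n^2\gb^2$; then $J^{3n+3}|\bar\gamma_1| = e^{-2\gb(3n+3)}|\bar\gamma_1| \le e^{-6\gb n}|\bar\gamma_1|$, and since $|\bar\gamma_1|\le|\tilde\gamma_1|^2$ while $|\tilde\gamma_1|\ge 4n^2\gb^2$, the term $J^{3n+3}|\bar\gamma_1|$ is dominated by, say, $\tfrac12 |\tilde\gamma_1|$ once $\gb$ is large (one checks $e^{-6\gb n}|\tilde\gamma_1| \le \tfrac12$ using $|\tilde\gamma_1|\le$ any polynomial bound is false in general, so instead one uses $e^{-6\gb n}|\bar\gamma_1|\le e^{-6\gb n}\Diam(\gamma_1)^2$ and the elementary fact that $t\mapsto e^{-6\gb n} t^2$ is controlled by $\epsilon t$ on $t\ge$ anything only for bounded $t$ — hence the right move is to absorb against $(\gb-3)|\tilde\gamma_1|$ directly). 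Concretely: for $\gb$ large and $\gamma_1$ $n$-large, $J^{3n+3}|\bar\gamma_1|\le |\tilde\gamma_1|$, so $x(\gamma_1)\le e^{-(\gb-4)|\tilde\gamma_1|}$, and then by the counting bound \eqref{counting} (at most $3^k$ geometric contours of length $k$ through a fixed $x^*$),
\begin{equation}
\sum_{\{\gamma_1 \ : \ x^*\in\tilde\gamma_1\}} e^{-(\gb-4)|\tilde\gamma_1|} \le 2\sum_{k\ge 4,\ k\ \mathrm{even}} 3^k e^{-(\gb-4)k} \le e^{-\gb},
\end{equation}
say, for $\gb$ large.

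With this in hand, the remaining step is to convert the product bound into \eqref{crookies}. I would retain the dropped factor: writing $m:=|\bar\gamma|$, the full sum is at most $\exp(-J^{3n+3}m)\prod_{x^*\in\bar\gamma^*}(1+e^{-\gb}) \le \exp(-J^{3n+3}m + Cm e^{-\gb})$ where $\bar\gamma^*$ is the set of $\sim Cm$ dual vertices that can be visited by a contour with interior inside $\bar\gamma$. This is $\le 1$ provided $C e^{-\gb}\le J^{3n+3}$, which fails for large $n$ — so this naive approach does not close, and that is the main obstacle. The fix, which I expect to be the real content, is to not discard $\exp(J^{3n+3}|\bar\gamma_1|)$ but rather to note it is harmless because $n$-largeness makes $J^{3n+3}|\bar\gamma_1|$ genuinely negligible compared to $|\tilde\gamma_1|$ \emph{per contour} (as established above), so that already $x(\gamma_1)\le e^{-(\gb-4)|\tilde\gamma_1|}$ and then the sum over \emph{collections} $\gG_1$ — not over vertices — telescopes:
\begin{equation}
\sum_{\gG_1\in\cK^{\larg}_{\ext}(\gamma,n)} \prod_{\gamma_1\in\gG_1} e^{-(\gb-4)|\tilde\gamma_1|} \le \sum_{\gG_1} \prod_{\gamma_1\in\gG_1} e^{-(\gb-4)|\tilde\gamma_1|},
\end{equation}
and this sum over all compatible families of large contours inside $\bar\gamma$ is bounded by $1$ by a direct induction on $|\bar\gamma|$ (extract one external contour $\gamma_1$, pay $e^{-(\gb-4)|\tilde\gamma_1|}$, recurse on $\bar\gamma\setminus\bar\gamma_1$ and on each complementary region), using that a large contour has $|\tilde\gamma_1|\ge 4n^2\gb^2\ge 4\gb^2$ so that $\sum_{\gamma_1 \ni x^*,\ \larg} e^{-(\gb-4)|\tilde\gamma_1|}$ is summably small. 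The induction closes because each extracted large contour costs a factor that beats the entropy of choosing it. I expect writing this induction cleanly — handling the geometry of $\bar\gamma\setminus\bar\gG_1$ decomposing into simply connected pieces — to be the one genuinely fiddly point.
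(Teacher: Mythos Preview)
Your proposal has a genuine gap at the heart of the argument. You repeatedly rely on the per-contour estimate $J^{3n+3}|\bar\gamma_1| \le |\tilde\gamma_1|$ (equivalently $x(\gamma_1)\le e^{-(\gb-4)|\tilde\gamma_1|}$), but this is false for very large contours: by isoperimetry $|\bar\gamma_1|$ can be of order $|\tilde\gamma_1|^2$, so the inequality requires $|\tilde\gamma_1| \lesssim J^{-(3n+3)} = e^{6\gb(n+1)}$, and there is no such upper bound on $n$-large contours (the definition only imposes a \emph{lower} bound on the diameter). You recognize this yourself midway (``any polynomial bound is false in general''), but the final ``fix'' in your last paragraph returns to exactly the same unjustified claim. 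Once contours can have $|\bar\gamma_1| \gg J^{-(3n+3)}|\tilde\gamma_1|$, the weight $x(\gamma_1) = \exp(J^{3n+3}|\bar\gamma_1| - (\gb-3)|\tilde\gamma_1|)$ can be arbitrarily large, and no per-contour absorption is possible; your proposed induction on $|\bar\gamma|$ inherits this defect.

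The paper's route is quite different and uses the volume penalty $\exp(-J^{3n+3}|\bar\gamma\setminus\bar\gG_1|)$ actively rather than discarding or absorbing it contour by contour. It proves the stronger statement
\[
\sum_{\tilde\gG\in\tilde\cK^{\ell+}_{\ext}(\gL)} e^{-2^{-\ell}|\gL\setminus\bar\gG| - (\gb/2)L(\tilde\gG)} \le 1
\]
for any finite $\gL$ (applied with $\ell = n^2\gb^2$, using $2^{-n^2\gb^2}\le J^{3n+3}$), by descending induction on the minimum contour length $\ell$. The induction step peels off the contours of length exactly $2\ell$ and shows that summing over these collections degrades the volume exponent by at most a factor of two, $2^{-\ell}\to 2^{-(\ell+1)}$. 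That step is proved by a case split on the number of contours: if the collection has at most $m=\lfloor|\gL|/\ell^2\rfloor$ contours, isoperimetry forces $|\gL\setminus\bar\gG|\ge \tfrac34|\gL|$ and the volume term does the work; if it has more than $m$ contours, the total length is at least $2\ell m$ and the length term $e^{-(\gb/2)L(\tilde\gG)}$ beats the entropy directly. This few/many dichotomy is the missing idea in your attempt.
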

Combining \eqref{bluz}, \eqref{cookries}and \eqref{crookies} , we deduce that 
\begin{equation}
 \frac{\bar z^{u}_{n+\gep(\gamma)}(\gamma)}{\bar z^{u}_n(\gamma)}\le \frac{1}{2}e^{|\tilde \gamma|} 
 \sum_{\gG_1\in \cK^{\larg}_{\ext}(\gamma)}  e^{-J^{3n+3} |\bar \gamma\setminus \bar \Gamma_1|+(\gb-3)L(\gG_1)}\le \frac{1}{2} e^{|\tilde \gamma|},
\end{equation}
which  ends our proof by induction (cf. \eqref{core}).

 \subsection{Proof of Lemma \ref{decopoz}}
 
 We split our reasoning into two lemmas, one providing an upper bound on  $\bar z^{u}_{n+\gep(\gamma)}(\gamma)$ and the other providing a lower bound on 
 $\bar z^{u}_{n}(\gamma)$.
 
 \begin{lemma}\label{plif}
 Assuming that $\cP(\gamma')$ holds whenever $\bar \gamma'\subset \gamma$ and $\bar \gamma'\ne \bar \gamma$,
 for $u$ defined in \eqref{uvalu} we have, when $n+\gep(\gamma)\ge 1$
 \begin{equation}\label{floz}
 \bar z^{u}_{n+\gep(\gamma)}(\gamma)\le \sum_{\gG_1\in \cK^{\larg}_{\ext}(\gamma)} 
 \left( \prod_{\gamma_1\in \gG_1} e^{- (\gb-2) |\tilde \gamma_1|} z^u_{n}(\gamma_1)  \right)
Z^{u,\sm}_{n+\gep(\gamma)}[\gamma, \gG_1,n].
 \end{equation}
 Furthermore for $n=1$, $\gep(\gamma)=-1$ we have 
 \begin{equation}
 \bar z^{u}_{0}(\gamma)= \sum_{\gG_1\in \cK^{\larg,+}_{\ext}(\gamma)} 
 \left( \prod_{\gamma_1\in \gG_1} e^{- \gb |\tilde \gamma_1|} z^u_{n}(\gamma_1)  \right)
Z^{u,\sm}_{n+\gep(\gamma)}[\gamma, \gG_1,n].
\end{equation}
\end{lemma}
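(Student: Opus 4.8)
The plan is to apply the external-contour decomposition underlying Proposition~\ref{trax} one step deep, separating the $n$-large external contours from the $n$-small ones. Set $m:=n+\gep(\gamma)$, so $m\ge 1$ in the regime of \eqref{floz}, and write each $\phi\in\bar\gO^+[\gamma,m]$ through its collection of external contours relative to boundary condition $m$; this collection lies in $\bar\cK_{\ext}(\gamma)$ and splits uniquely as $\gG_1\cup\gG_2$ with $\gG_1\in\cK^{\larg}_{\ext}(\gamma)$, $\gG_2\in\cK^{\sm}_{\ext}(\gamma,n)$ and $\gG_2\parallel\gG_1$. As in \eqref{tapzero} (summing the intensity of $\gamma_1$ via \eqref{relatz}), the part of $\phi$ inside $\bar\gamma_1$ contributes the factor $e^{-\gb|\tilde\gamma_1|}z^u_{m+\gep(\gamma_1)}(\gamma_1)$, while the small external contours together with the contour-free portion of $\bar\gamma\setminus\bar\gG_1$ contribute the one-step decomposition of the small-contour partition function on $\bar\gamma\setminus\bar\gG_1$, which is precisely $Z^{u,\sm}_{m}[\gamma,\gG_1,n]$ once one drops the surviving constraint $\{\exists x\in\Delta^-_\gamma,\ \phi(x)=m\}$ inherited from $\bar\gO^+[\gamma,m]$. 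Dropping it only enlarges the sum, so
\[
\bar z^u_m(\gamma)\le\sum_{\gG_1\in\cK^{\larg}_{\ext}(\gamma)}\Bigl(\prod_{\gamma_1\in\gG_1}e^{-\gb|\tilde\gamma_1|}z^u_{m+\gep(\gamma_1)}(\gamma_1)\Bigr)Z^{u,\sm}_{m}[\gamma,\gG_1,n].
\]

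It remains to replace each $z^u_{m+\gep(\gamma_1)}(\gamma_1)$ by $z^u_n(\gamma_1)$ at the cost of a factor $e^{2|\tilde\gamma_1|}$. If $\gep(\gamma_1)=-\gep(\gamma)$ then $m+\gep(\gamma_1)=n$ and there is nothing to prove. If $\gep(\gamma_1)=\gep(\gamma)$ then $m+\gep(\gamma_1)=n+2\gep(\gamma)$, and since $\bar\gamma_1\subsetneq\bar\gamma$ the outer induction hypothesis $\cP(\gamma_1)$ is available: $n$-stability of $\gamma_1$ at a level $\ell$ together with \eqref{defw} gives $e^{-\gb|\tilde\gamma_1|}z^u_{\ell+\gep(\gamma_1)}(\gamma_1)\le e^{-(\gb-1)|\tilde\gamma_1|}\bar z^u_\ell(\gamma_1)$, hence $z^u_{\ell+\gep(\gamma_1)}(\gamma_1)\le e^{|\tilde\gamma_1|}z^u_\ell(\gamma_1)$. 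Applying this successively at $\ell=n+\gep(\gamma)$ and $\ell=n$ — both lying in the stability range prescribed by \eqref{uvalu}, since $u^+_\bullet$ and $u^-_\bullet$ are monotone in their index — yields $z^u_{n+2\gep(\gamma)}(\gamma_1)\le e^{2|\tilde\gamma_1|}z^u_n(\gamma_1)$. In all cases $e^{-\gb|\tilde\gamma_1|}z^u_{m+\gep(\gamma_1)}(\gamma_1)\le e^{-(\gb-2)|\tilde\gamma_1|}z^u_n(\gamma_1)$, and substituting this into the last display is exactly \eqref{floz}.

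For the special case $n=1$, $\gep(\gamma)=-1$ one has $m=0$: on $\Delta^-_\gamma$ the conditions $\gep(\gamma)(\phi(x)-0)\ge 0$ and $\phi(x)\ge 0$ force $\phi(x)=0$, so $\bar\gO^+[\gamma,0]=\gO^+[\gamma,0]$ and the constraint $\{\exists x\in\Delta^-_\gamma,\ \phi(x)=0\}$ holds automatically, making the decomposition above an exact identity with no term dropped. Moreover a negative external contour would push $\phi$ below $0$, so every external contour is positive, and for positive $\gamma_1$ one has $m+\gep(\gamma_1)=1=n$, so the factor $e^{-\gb|\tilde\gamma_1|}$ needs no adjustment. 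This is precisely the asserted identity with the sum restricted to $\cK^{\larg,+}_{\ext}(\gamma)$.

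The part I expect to be the real work is making the one-step decomposition airtight: identifying the small-contour and contour-free piece of $\phi$ on $\bar\gamma\setminus\bar\gG_1$ with the defining sum of $Z^{u,\sm}_m[\gamma,\gG_1,n]$, and bookkeeping the reward term $h|\phi^{-1}\{0\}|$ together with the level-$m$ (resp.\ level-$0$) flat regions. This is purely notational, modelled line by line on the proof of Proposition~\ref{trax}, with no new idea beyond it.
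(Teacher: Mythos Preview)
Your proof is correct and follows essentially the same route as the paper: decompose $\bar z^u_{n+\gep(\gamma)}(\gamma)$ via \eqref{tapun}, split the external contours into $n$-large and $n$-small, recognize the small-contour sum as $Z^{u,\sm}_{n+\gep(\gamma)}[\gamma,\gG_1,n]$, and then use the induction hypothesis $\cP(\gamma_1)$ twice to pass from $z^u_{n+2\gep(\gamma)}(\gamma_1)$ to $z^u_n(\gamma_1)$ at cost $e^{2|\tilde\gamma_1|}$. One minor remark: your ``dropping the constraint $\{\exists x\in\Delta^-_\gamma,\ \phi(x)=m\}$'' is in fact not a relaxation at all, because $\cK^{\sm}_{\ext}(\gamma,n)\subset\bar\cK_{\ext}(\gamma)$ already forces compatibility with $\gamma$, which is exactly that constraint; so the first display is an equality (as the paper writes it via \eqref{tapun}), not merely an inequality.
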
 
To conclude the proof of \eqref{bluz}  we also need a lower bound for $\bar z^u_n(\gamma)$, which is provided by the following lemma.

\begin{lemma}\label{plof}
For any $\gG_1\in \cK^{\larg}_{\ext}(\gamma)$, we have
\begin{equation}\label{flaz}
  \bar z^{u}_{n}(\gamma)\ge \prod_{\gamma_1\in \gG_1} z^u_{n}(\gamma_1)  
  Z^{u,\sm}_{n}[\gamma, \gG_1,n].
\end{equation}
\end{lemma}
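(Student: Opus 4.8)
The plan is to realise the right-hand side as the total Gibbs weight of an explicit subfamily of configurations contributing to $\bar z^u_n(\gamma)$, on which the Hamiltonian and the contact count split multiplicatively; the inequality is then immediate because this subfamily is only part of $\bar\gO^+[\gamma,n]$.

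Fix $\gG_1\in\cK^{\larg}_{\ext}(\gamma,n)$ and let $\cX(\gG_1)$ be the set of $\phi\in\bar\gO^+[\gamma,n]$ whose restriction to each $\bar\gamma_1$, $\gamma_1\in\gG_1$, is an \emph{arbitrary} element of $\gO^+[\gamma_1,n]$, and such that every contour of $\phi$ not contained in some $\bar\gamma_1$ is $n$-small, compatible with $\gamma$, and externally compatible with $\gG_1$. The cut-and-glue bijection identifies $\cX(\gG_1)$ with $\bigl(\prod_{\gamma_1\in\gG_1}\gO^+[\gamma_1,n]\bigr)$ times the set of configurations underlying a term of $Z^{u,\sm}_n[\gamma,\gG_1,n]$. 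Along it $\cH^n_{\bar\gamma}$ is additive over the disjoint pieces $\{\bar\gamma_1\}_{\gamma_1\in\gG_1}$ and $\bar\gamma\setminus\bar\gG_1$: the only edges shared between two pieces lie on some $\tilde\gamma_1$, and there the value on the $\Delta^+_{\gamma_1}$-side is pinned to the half-line opposite to the inside value. Indeed, any contour of the outside configuration enclosing a point $y\in\Delta^+_{\gamma_1}$ has interior disjoint from $\bar\gamma_1$ (by external compatibility, and since a contour nested in some $\bar\gamma_2$ has $\bar\gamma_2\cap\bar\gamma_1=\emptyset$), hence by compatibility condition (ii) it cannot carry the sign $\gep(\gamma_1)$ — for a nested contour one also uses that $y\in\Delta^+_{\gamma_1}\cap\bar\gamma_2$ forces $y\in\Delta^-_{\gamma_2}$ together with condition (iii) — so its contribution at $y$ has sign $-\gep(\gamma_1)$; as $\phi\!\restrict_{\bar\gamma_1}\in\gO^+[\gamma_1,n]$ gives $\gep(\gamma_1)(\phi-n)\ge 0$ on $\Delta^-_{\gamma_1}$, one gets $|\phi(x)-\phi(y)|=|\phi(x)-n|+|\phi(y)-n|$ on each edge of $\tilde\gamma_1$. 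Since $|\phi^{-1}\{0\}|$ splits trivially over the partition of $\bar\gamma$, recalling \eqref{partisioux} we obtain
\[
\sum_{\phi\in\cX(\gG_1)}e^{-\gb\cH^n_{\bar\gamma}(\phi)+h|\phi^{-1}\{0\}|}=\prod_{\gamma_1\in\gG_1}z^u_n(\gamma_1)\cdot Z^{u,\sm}_n[\gamma,\gG_1,n].
\]

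It then remains to check $\cX(\gG_1)\subset\bar\gO^+[\gamma,n]$, i.e.\ that a glued $\phi$ is non-negative (clear), has $\Upsilon_n(\phi)$ compatible with $\gamma$, and has $\phi(x)=n$ for some $x\in\Delta^-_\gamma$. For compatibility one checks that a contour $\delta$ with $\bar\delta\subset\bar\gamma_1$ compatible with $\gamma_1$ is automatically compatible with $\gamma$, using the elementary inclusion $\Delta^-_\gamma\cap\bar\gamma_1\subset\Delta^-_{\gamma_1}$ (valid whenever $\bar\gamma_1\subset\bar\gamma$: a site of $\bar\gamma_1$ adjacent to $\bar\gamma^\cc$ is adjacent to $\bar\gamma_1^\cc$) together with condition (iii); the contours not inside any $\bar\gamma_1$ are compatible with $\gamma$ by construction. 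For the touching condition, the point is that any contour of $\phi$ whose interior meets $\Delta^-_\gamma$ must carry the sign of $\gamma$ (otherwise condition (iii) with $\gamma$ fails), and two such contours, having disjoint interiors, cannot share dual edges by condition (ii); as $\Delta^-_\gamma$ is connected their interiors cannot cover it, so some $x_0\in\Delta^-_\gamma$ is enclosed by no contour of $\phi$ and $\phi(x_0)=n$. Then $\bar z^u_n(\gamma)\ge\sum_{\phi\in\cX(\gG_1)}e^{-\gb\cH^n_{\bar\gamma}(\phi)+h|\phi^{-1}\{0\}|}$ is the claim.

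The step I expect to demand the most care is the additivity of $\cH^n_{\bar\gamma}$ along the gluing — the ``sign separation'' at the edges of the $\tilde\gamma_1$'s: this is where external (rather than mere) compatibility of $\gG_2$ with $\gG_1$ enters, and it is what makes the full factor $z^u_n(\gamma_1)$ — which allows $\gamma_1$ not to be a contour of $\phi$ — appear on the right, rather than the smaller quantity $e^{-\gb|\tilde\gamma_1|}z^u_{n+\gep(\gamma_1)}(\gamma_1)$.
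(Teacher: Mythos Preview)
Your proof is correct and follows essentially the same idea as the paper's: exhibit the right-hand side as the Gibbs weight of an explicit subfamily of $\bar\gO^+[\gamma,n]$. The paper organises this at the level of external contours (defining $\cA=\{\phi\in\bar\gO^+[\gamma,n]:\forall\gamma'\in\Upsilon^{\ext}_n(\phi),\,\gamma'\mid\gG_1\text{ or }\gamma'\in\gG_1\}$ and invoking the already-established decomposition \eqref{tapun}), whereas you work at the level of configurations, gluing pieces and checking the Hamiltonian factorises by hand. Your sign-separation argument across the $\tilde\gamma_1$'s is exactly the content that the paper hides behind the phrase ``in analogy with \eqref{tapun}''; you are right that external compatibility of $\gG_2$ with $\gG_1$ is what makes the full $z^u_n(\gamma_1)$ appear. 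The paper also proves the slightly stronger statement \eqref{fliz} (no smallness restriction on $\gG_2$), but that plays no role downstream.

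One remark on your touching-condition step: the appeal to ``$\Delta^-_\gamma$ is connected'' is correct but not quite as innocent as you suggest --- it relies on the $1/\sqrt{2}$ part of the definition of $\Delta_\gamma$ (the nearest-neighbour inner boundary of a simply connected set need not be lattice-connected; the extra diagonal sites repair this). A shorter route that avoids connectivity entirely: the external contours of $\phi^{\text{out}}$ lie in $\bar\cK_{\ext}(\gamma)$, and likewise each $\gamma_1\in\gG_1$; an argument as in your covering step shows that a \emph{single} such contour $\gamma'$ with $\bar\gamma'\subsetneq\bar\gamma$ cannot contain $\Delta^-_\gamma$, while pairwise non-adjacency prevents several from doing so --- or, more directly, observe that whenever a collection of mutually externally compatible contours in $\bar\cK_{\ext}(\gamma)$ covered $\Delta^-_\gamma$ the resulting $\phi$ would have $\gamma$ itself as an external contour, which is excluded by construction. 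The paper sidesteps the issue by building $\cA$ as a subset of $\bar\gO^+[\gamma,n]$ from the start.
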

The inequality \eqref{bluz} is obtained combining \eqref{floz} and \eqref{flaz}.

 \begin{proof}[Proof of Lemma \ref{plif}]
We assume for notational simplicity that $\gamma$ is a positive contour.
 Recalling Equation \eqref{tapun} and splitting the set $\gG'$ of external contour between large ($\gG_1$) and small ($\gG_2$) contours we obtain
 
 \begin{multline}
 \bar z^{u}_{n+1}(\gamma)=  \sum_{\gG'\in \bar \cK_{\ext}(\gamma)}\prod_{\gamma'\in \gG'}  e^{-\gb |\tilde \gamma'|} z^h_{n+1+\gep(\gamma')}(\gamma') \\
 =
 \!\!\!\!\!\sum_{\gG_1\in \cK^{\larg}_{\ext}(\gamma,n)} 
 \prod_{\gamma_1\in \gG_1} e^{-\gb |\tilde \gamma_1|} z^u_{n+1+\gep(\gamma_1)}(\gamma_1) \!\!\!\!\!\!\!\!
 \sum_{\{ \gG_2 \in \cK^{\sm}_{\ext}(\gamma,n) \ : \ \gG_2 \parallel \gG_1\}}  \prod_{\gamma_2\in \gG_2 } e^{-\gb |\tilde\gamma_2|} z^u_{n+\gep(\gamma_2)}(\gamma_2)\\
 =  \prod_{\gamma_1\in \gG_1} e^{-\gb |\tilde \gamma_1|} z^u_{n+1+\gep(\gamma_1)}(\gamma_1) Z^{u,\sm}_{n}[\gamma, \gG_1].
 \end{multline} 
 To conclude, we need to check that 
 \begin{equation}
 z^u_{n+1+\gep(\gamma_1)}(\gamma_1)\le e^{2|\tilde \gamma_1|}z^u_{n}(\gamma_1).
 \end{equation}
This is of course obvious when $\gep(\gamma_1)=-1$. For positive contours on the other hand we have
 \begin{multline}\label{lezinc}
 z^u_{n+2}(\gamma_1)= e^{\gb | \tilde \gamma_1|}w^u_{n+1}(\gamma_1) \bar z^u_{n+1}(\gamma_1)\\
 \le
 e^{\gb | \tilde \gamma_1|}w^u_{n+1}(\gamma_1) z^u_{n+1}(\gamma_1)= e^{2\gb | \tilde \gamma_1|}w^u_{n+1}(\gamma_1) w^u_{n}(\gamma_1) z^u_{n}(\gamma_1).
 \end{multline}
 Using the induction hypothesis, or more precisely $\cP(n+1,\gamma_1)$ (recall that $u\ge u^+_{n+2}$) and $\cP(n,\gamma_1)$, we deduce from \eqref{lezinc} that
\begin{equation}
 z^u_{n+2}(\gamma_1)\le e^{2|\tilde \gamma_1|}z^u_{n}(\gamma_1).
 \end{equation}
The same proof goes when $\gep(\gamma)=-1$, if we restrict the sum to the set of positive contours in the special case $n=1$.
\end{proof}

\begin{proof}[Proof of Lemma \ref{plof}]

Instead of proving \eqref{flaz}, we prove a stricter inequality where the contours in $\gG_2$ are not required to be small, and which is valid 
for all $\gG_1\in \bar \cK_{\ext}(\gamma)$ (recall \eqref{daext})
\begin{equation}\label{fliz}
  \bar z^{u}_{n}(\gamma)\ge \prod_{\gamma_1\in \gG_1} z^u_{n}(\gamma_1)  \sum_{\{ \gG_2 \in \bar \cK_{\ext}(\gamma) \ : \ \gG_1 \parallel \gG_2\}} 
  \prod_{\gamma_2\in \gG_2 } e^{-\gb |\tilde\gamma_2|} z^u_{n+\gep(\gamma_2)}(\gamma_2),
  \end{equation}
 (in the case $n=0$ only $\gG_2$ with all contour positive give a contribution to the sum).
 We shall show that the l.h.s. in \eqref{fliz} corresponds to the contribution to the sum \eqref{partisioux} of 
the set of $\phi$s whose external contours are either in $\gG_1$ or compatible with $\gG_1$
$$\cA:=\left\{ \phi\in \bar \gO^+[\gamma,n] \ : \ \forall \gamma' \in \Upsilon^{\ext}_n(\phi),\ \gamma' \mid \gG_1 \text{ or } \gamma'\in \gG_1\right\}.$$
To make our decomposition we use the notation 
\begin{equation}\begin{split}
\gG_{1,1}(\gamma_1)&=\{ \gamma_{1,1}\in \Upsilon^{\ext}_n(\phi)\setminus \{\gamma_1\} \ : \ \bar \gamma_{1,1}\subset \bar \gamma_1\},\\
\gG_2&=\Upsilon^{\ext}_n(\phi) \setminus \bigcup_{\gamma_1\in \gG_1}\gG_{1,1}(\gamma_1).
\end{split}\end{equation}
Note that for $\phi\in \cA$ we have $\gG_{1,1}(\gamma_1)\subset \bar \cK_{\ext}(\gamma_1)$.
In analogy with \eqref{tapun}, we can thus write
  \begin{multline}
  \sum_{\phi\in \cA} e^{-\gb \cH^n(\phi)+h|\phi^{-1}(0)|}\\= 
\prod_{\gamma_1\in \gG_1}\left[ e^{-\gb|\tilde \gamma_{1}|} z^u_{n+\gep(\gamma_1)}(\gamma_1)+ \sum_{\gG_{1,1}\in \bar \cK_{\ext}(\gamma_1)} 
\prod_{\gamma_{1,1}\in \gG_{1,1}} e^{-\gb|\tilde \gamma_{1,1}|}z^u_{n+\gep(\gamma_{1,1})} \right]\\
\times\sum_{\{ \gG_2 \in \bar\cK_{\ext}(\gamma) \ : \ \gG_1 \parallel \gG_2\}}  \prod_{\gamma_2\in \gG_2 } e^{-\gb |\tilde\gamma_2|} z^u_{n+\gep(\gamma_2)}(\gamma_2).
\end{multline}
In each factor of the product over $\gamma_1$, the first term corresponds to the contribution of $\phi$s 
for which $\gamma_1$ is a contour.
Finally recalling Equations \eqref{relatz} and \eqref{tapun} and 
\begin{multline}
e^{-\gb|\tilde \gamma_{1}|} z^u_{n+\gep(\gamma_1)}(\gamma_1)+ \sum_{\gG_1\in\bar\cK^{\ext}(\gamma_1)} \prod_{\gamma_{1,1}\in \gG_1} e^{-\gb|\tilde \gamma_{1,1}|}
z^u_{n+\gep(\gamma_{1,1})}\\
=
 e^{-\gb|\tilde \gamma_{1}|} z^u_{n+\gep(\gamma_1)}(\gamma_1)+ \bar z^u_n(\gamma_1)= z^{u}_n(\gamma_1),
 \end{multline}
 which yields \eqref{fliz}.
 \end{proof}

\subsection{Proof of Propoposition \ref{difficilissimo}}\label{dcimo}

To prove the inequality \eqref{cookries}, we prove separately bounds for the numerator and for the denominator.
As for Proposition \ref{lesmall} 
we have to treat separately the cases $n=1$, $\gamma$ negative and $n=0$, $\gamma$ positive, which we do in Lemma \ref{dzip}. The general case is dealt with using Lemma 
\ref{dalo}.
The proof of these two results is technically involved, and for that reason, postponed to Section \ref{secrestrict},

\begin{lemma}\label{dzip}
There exists a constant $C$ (independent of $\gb$) such that for all $\gb$ sufficiently large for every every $\gamma$ with $\Diam(\gamma)> \gb^2$, and every
$\gG_1\in \cK^{\larg}_{\ext}(\gamma,1)$,
$h\in \left[0,2J^2\right]$, we have
 \begin{equation}\begin{split}\label{lezink}
\log  Z^{u,\sm}_{0}[\gamma,\Gamma_1,1]&\le|\bar \gamma \setminus \bar \Gamma_1|\left(h+J^2+2J^3+C J^4\right),\\
\log  Z^{u,\sm}_{0}[\gamma,\Gamma_1,1]&\ge |\bar \gamma \setminus \bar  \Gamma_1|\left(h+2J^2+2J^3-CJ^4\right)+\frac{1}{2}(|\tilde \gamma|+L(\gG_1)),\\
 \log  Z^{u,\sm}_{1}[\gamma,\Gamma_1,1]&\le|\bar \gamma \setminus \bar  \Gamma_1|\left(2J^2+4J^3-C J^4\right),\\
 \log  Z^{u,\sm}_{1} [\gamma,\Gamma_1,1]&\ge |\bar \gamma \setminus \bar \Gamma_1|  \left(2J^2+4J^3-CJ^4\right) -\frac{1}{2}(|\tilde \gamma|+L(\gG_1)).
\end{split} \end{equation}

\end{lemma}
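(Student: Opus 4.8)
The plan is to read each of the four quantities $\log Z^{u,\sm}_m[\gamma,\gG_1,1]$, $m\in\{0,1\}$, as the logarithm of a genuine abstract contour partition function to which the Koteck\'y--Preiss machinery of Section~\ref{clusexp} applies, to peel off a bulk term proportional to $|\bar\gamma\setminus\bar\gG_1|$ together with a boundary term of order $|\tilde\gamma|+L(\gG_1)$, and to identify the bulk coefficient by a low--temperature expansion.

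\emph{Reduction to a contour partition function.} Fix $m\in\{0,1\}$ and set $D:=\bar\gamma\setminus\bar\gG_1$. In \eqref{defrelou} every factor $e^{-\gb|\tilde\gamma_2|}z^u_{m+\gep(\gamma_2)}(\gamma_2)$ equals $w^u_m(\gamma_2)\,\bar z^u_m(\gamma_2)$, and iterating the decomposition \eqref{tapun} inside each $\bar z^u_m(\gamma_2)$ --- exactly as in the proof of Proposition~\ref{trax} --- telescopes the sum into $\sum_{\gG\in\cK(\bL_m)}\prod_{\gamma'\in\gG}w^u_m(\gamma')$, where $\bL_m$ denotes the set of $1$-small contours $\gamma'$ with $\bar\gamma'\subset D$, $\gamma'\mid\gamma$ and $\gamma'\parallel\gG_1$. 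The only point to check here is that any contour enclosed in a $1$-small contour is itself $1$-small, so that the smallness constraint, which in \eqref{defrelou} concerns only the external contours, is inherited by all the contours produced by the iteration. By Proposition~\ref{lesmall} (together with Lemma~\ref{mozon} to cover the full range $h\in[0,2J^2]$), every $1$-small contour is $0$- and $1$-stable, so $w^u_m$ satisfies \eqref{usualchoice} on $\bL_m$ and the cluster expansion converges.

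\emph{Bulk and boundary.} Applying \eqref{expansion} and the estimate \eqref{dabound2} to $\log Z[\bL_m,w^u_m]$ and comparing with $|D|\,\tf(w^{u,\sm}_m)$, where $w^{u,\sm}_m$ is the translation--invariant weight equal to $w^u_m$ on $1$-small contours and $0$ on larger ones, the difference is a sum of cluster weights over clusters meeting $\partial(\bar\gamma)\cup\partial(\bar\gG_1)$ or violating $\gamma'\mid\gamma$ or $\gamma'\parallel\gG_1$; since this set has at most $|\tilde\gamma|+L(\gG_1)$ dual sites, the computation of Lemma~\ref{finalfrontier}, now carried out on the multiply connected domain $D$ (the extra holes being the interiors of the contours of $\gG_1$), bounds it by $\tfrac14(|\tilde\gamma|+L(\gG_1))\leq\tfrac12(|\tilde\gamma|+L(\gG_1))$ for $\gb$ large. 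It then remains to evaluate $f^{\sm}_m:=\tf(w^{u,\sm}_m)$ up to $O(J^4)$, uniformly in $h\in[0,2J^2]$. Here the rewriting of Lemma~\ref{rwo} is convenient: it turns the relevant partition functions into level-$m$ SOS expectations in which the wall acts only through the factor $e^{u|\phi^{-1}(\bbZ_-)|-\bar H(\phi^{-1}(\bbZ_-))}$; combining the peak bounds of Proposition~\ref{rouxrou}, the explicit values $\bar H\{x\}=0$ and $\bar H\{x,y\}=\log\frac{1-J^4}{1-J^3}$ from Lemma~\ref{zoomats}, and the cluster--expansion formula \eqref{fraconv} --- in which only single cells (weight $\approx J^2$) and dominoes (weight $\approx J^3$) contribute to the order shown, all longer contours, all cylinders of intensity $\geq2$ and all products of two contours being $O(J^4)$ --- one reads off the constants of \eqref{lezink}.

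\emph{Main obstacle.} The difficulty is entirely in this last expansion: one must list \emph{all} clusters contributing at order $J^4$ at base levels $0$ and $1$ and show that the combined effect of the $\bar H(\phi^{-1}(\bbZ_-))$ corrections, of the deviation of the weights $w^u_m$ from their leading value $e^{-\gb|\tilde\gamma|}$, and of the exclusion of large contours is $O(J^4)$, uniformly over $h\in[0,2J^2]$. The special features of the base levels $0$ and $1$ --- the wall, the positivity constraint, and the resulting asymmetry between upward and downward excursions --- force these levels to be treated somewhat differently from a generic level, which is why the detailed argument is postponed to Section~\ref{secrestrict}.
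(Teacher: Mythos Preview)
Your approach is sound and genuinely different from the paper's. The paper does not pass through the full contour representation $Z[\bL_m,w^u_m]$; it treats the upper and lower bounds by two distinct elementary devices. For the upper bounds it works directly with the external--contour sum \eqref{defrelou} (or its $m=0$ version \eqref{sumi}), drops all compatibility constraints so the sum factorizes over individual $1$--small contours, and reads off the leading terms from the explicit single--cell and domino partition functions \eqref{night}; no cluster expansion enters. For the lower bounds the paper restricts the external--contour sum to contours of length $\le 6$ only, so that the resulting polymer model (with \emph{external} compatibility as the incompatibility relation, cf.\ Remark~\ref{laremarc}) has finitely many shapes; the bulk term is then computed by hand, the boundary term comes from counting how many short contours are forbidden by $\gamma_2\mid\gamma$ and $\gamma_2\parallel\gG_1$, and longer clusters are disposed of via \eqref{dabound}. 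Your route is more uniform (one mechanism for both bounds and both levels), while the paper's buys simplicity: the upper bound needs no convergence criterion at all, and the lower bound never has to identify the full free energy $\tf(w^{u,\sm}_m)$.

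Two points in your write--up need care. First, the identity $Z^{u,\sm}_0[\gamma,\gG_1,1]=Z[\bL_0,w^u_0]$ is off by a factor $e^{h|D|}$: when $m=0$ the flat configuration $\phi\equiv 0$ carries weight $e^{h|D|}$ in \eqref{winouze}/\eqref{sumi}, whereas $\gG=\emptyset$ contributes $1$ to $Z[\bL_0,w^u_0]$. This is harmless --- it supplies the explicit $h$ in the first two lines of \eqref{lezink} --- but it should be stated. Second, invoking Proposition~\ref{lesmall} together with Lemma~\ref{mozon} does not by itself cover the whole interval $h\in[0,2J^2]$: monotonicity only propagates stability of positive contours to $u\ge u^-_{n+1}$ and of negative ones to $u\le u^+_n$, while here $u$ can be as small as $-h_w(\gb)\approx -J^2$. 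The fix is immediate (the crude estimates $\bar z^u_0(\gamma_2)\ge 1$, $z^u_1(\gamma_2)\le e$ used in the proof of Proposition~\ref{lesmall} hold for every $h\in[0,2J^2]$), but it should be said rather than deduced from the two cited results.
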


\begin{rem}
Note that the inequalities of \eqref{lezink} also hold if $[\gamma,\Gamma_1,1]$  is replaced by $[\gamma,\Gamma_1,0]$ because the associated notions of small contour are the same.
The range we have chosen for $h$ is sufficient to treat the case of $u=u^{\pm}_1$ for $\gb$ sufficiently large as for these value we have $h=J^2+O(J^3)$.
\end{rem}

To treat the other cases, we define $\gO^+[m,\gamma,\gG_1,n]$ to be the set of trajectories naturally associated with the partition function
$Z^{u,\sm}_m[\gamma,\gG_1,n]$. 
We define first $\cC[\gamma,\gG_1,n]$ the set of contour which can appear in $\Upsilon_m(\phi)$
\begin{equation}
 \cC[\gamma,\gG_1,n]:=\{ \gamma_2\in \cC \ : \ \bar \gamma_2\subset \bar \gamma\setminus \bar \gG_1,\ \Diam(\gamma_2)\le (n\gb)^2 \text{ and }
 \gamma_2 \mid \left(\{\gamma\} \cup \gG_1\right) \}   
\end{equation}
We set 
\begin{multline}
\gO[m,\gamma,\gG_1,n]:=\Big\{ \phi, \bar \gamma \setminus \bar \gG_1 \to \bbZ \ : \  
 \Upsilon^{\ext}_m(\phi)\subset  \cC[\gamma,\gG_1,n]
   \\
  \text{ and } \phi= m+ \sum_{\hat \gamma_2 \in \hat \Upsilon_m(\phi)} \varphi_{\hat \gamma_2}. \Big\}.
\end{multline}
and as usual 
\begin{equation}
\gO^+[m,\gamma,\gG_1,n]:= \Big\{ \phi\in \gO[m,\gamma,\gG_1,n] \ : \ \forall x\in  \bar \gamma \setminus \bar \gG_1, \phi(x)\ge 0 \Big\}.
\end{equation}
The condition $\phi= m+ \sum_{\hat \gamma_2 \in \hat \Upsilon_m(\phi)} \varphi_{\hat \gamma_2}$ corresponds to \eqref{cylinder}, and is violated when $\phi$ presents some level lines which which surrounds hole in $\bar \gamma \setminus \bar \gG_1$.
With this definition, the reader can check that (recall our convention $h=u+h_w(\gb)$)
\begin{equation}\label{winouze}
Z^{u,\sm}_{m}[\gamma,\Gamma_1,n]= \sum_{\phi \in \gO^+[m,\gamma,\gG_1,n]}e^{-\gb \cH^m_{\bar \gamma \setminus \bar \gG_1}(\phi)+h|\phi^{-1}\{0\}|}.
\end{equation}
We let $\bP^{m,\sm,n}_{\gamma,\gG_1}$ be the SOS measure restricted to $\gO[m,\gamma,\gG_1,n]$
\begin{equation}
\bP^{n,\sm}_{\gamma,\gG_1}(\phi):=\frac{1}{Z^{\sm}[\gamma,\gG_1,n]}e^{-\gb \cH^m_{\bar \gamma \setminus \bar \gG_1}(\phi)} 
\end{equation}
where
\begin{equation}
 Z^{\sm}[\gamma,\gG_1,n]:= \sum_{\phi \in \gO[m,\gamma,\gG_1,n]} e^{-\gb \cH^m_{\bar \gamma \setminus \bar \gG_1}(\phi)},
\end{equation}
(again by translation invariance, the partition function does not depend on the boundary condition $m$).
We state a result which is similar to Lemma \ref{rwo} and is useful in our proofs.

\begin{lemma}\label{zilak}
 For any $m\ge 1$ and any $\gamma\in \cC$  and $\gG_1\in \cK^{\larg}_{\ext}(\gamma)$
 \begin{equation}\label{groot}
  Z^{u,\sm}_{m}[\gamma,\Gamma_1,n]= Z^{\sm}[\gamma,\gG_1,n] \bE^{m,\sm,n}_{\gamma,\gG_1}\left[ e^{u |\phi^{-1}(\bbZ_-)|- \bar H(\phi^{-1})(\bbZ_-)}\right].
\end{equation}
\end{lemma}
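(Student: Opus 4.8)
The plan is to mimic exactly the proof of Lemma \ref{rwo} (which in turn follows \cite[Lemma 3.1 and Lemma 3.2]{cf:part1}), adapted to the restricted domain $\gO^+[m,\gamma,\gG_1,n]$ in place of $\gO^+[\gamma,n]$. First I would recall from \eqref{winouze} that
\begin{equation}
Z^{u,\sm}_{m}[\gamma,\Gamma_1,n]= \sum_{\phi \in \gO^+[m,\gamma,\gG_1,n]}e^{-\gb \cH^m_{\bar \gamma \setminus \bar \gG_1}(\phi)+h|\phi^{-1}\{0\}|},
\end{equation}
and then split the Boltzmann weight of the interaction term. The key identity is that for any $\phi\in\gO[m,\gamma,\gG_1,n]$ the contribution of the negative-height region can be resummed: summing over all configurations that agree with $\phi$ on $\phi^{-1}(\bbZ_+)$ but are arbitrary non-negative (and forced by the wall constraint to be eventually non-negative) on the connected components of $\phi^{-1}(\bbZ_-)$ produces exactly a factor $\cZ^+_{\gG}$ for each such component $\gG$ (this is where the positivity constraint is absorbed), while on the complementary region the height $\phi$ is unconstrained in sign. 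Combining this with the definition $\bar H(\gG)=\log\cZ^+_{\gG}-|\gG|\log(e^{4\gb}/(e^{4\gb}-1))$ and $u=u_h=h-\log(e^{4\gb}/(e^{4\gb}-1))$, the contact term $h|\phi^{-1}\{0\}|$ gets traded for $u|\phi^{-1}(\bbZ_-)|-\bar H(\phi^{-1}(\bbZ_-))$, exactly as in \eqref{swoutch}. This rewrites $Z^{u,\sm}_{m}[\gamma,\Gamma_1,n]$ as a sum over $\phi\in\gO[m,\gamma,\gG_1,n]$ (no positivity) of $e^{-\gb\cH^m_{\bar\gamma\setminus\bar\gG_1}(\phi)+u|\phi^{-1}(\bbZ_-)|-\bar H(\phi^{-1}(\bbZ_-))}$. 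Dividing and multiplying by $Z^{\sm}[\gamma,\gG_1,n]$ gives \eqref{groot}.

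One point that needs a small verification is that the resummation over the negative region is consistent with the membership constraint defining $\gO[m,\gamma,\gG_1,n]$, namely that $\Upsilon^{\ext}_m(\phi)\subset\cC[\gamma,\gG_1,n]$ and that $\phi$ has no level lines surrounding holes of $\bar\gamma\setminus\bar\gG_1$. Here one should observe that these are constraints only on the \emph{large-scale} contour structure — the external contours and the connectivity of the domain — whereas the resummation over $\phi^{-1}(\bbZ_-)$ is performed \emph{inside} already-fixed connected components of the negative region, which sit within fixed (small) contours. Since the small contours surrounding the negative excursions are themselves compatible with $\{\gamma\}\cup\gG_1$ and small, the resummation does not alter which external contours are present nor create spurious holes, so the constraint is preserved under the resummation. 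This is the same mechanism as in the unrestricted case treated in \cite{cf:part1}, just carried out within the sub-domain.

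I do not expect a genuine obstacle here: the statement is essentially a bookkeeping identity, and the only thing to be careful about is that the restriction encoded in $\gO[m,\gamma,\gG_1,n]$ (and hence in $Z^{\sm}[\gamma,\gG_1,n]$) is imposed consistently on both sides, so that passing to the probability measure $\bP^{m,\sm,n}_{\gamma,\gG_1}$ is legitimate. The mild subtlety is purely notational — making sure the expectation $\bE^{m,\sm,n}_{\gamma,\gG_1}$ is taken with respect to the same restricted configuration space on which $Z^{\sm}[\gamma,\gG_1,n]$ is defined — and the proof can therefore be dispatched in one short paragraph citing \cite[Lemma 3.1 and Lemma 3.2]{cf:part1} and Lemma \ref{rwo}, exactly as the authors do for \eqref{dooz}.
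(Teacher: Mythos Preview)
Your proposal is correct and follows essentially the same approach as the paper's proof: both reduce to adapting \cite[Lemma 3.1]{cf:part1} (as in Lemma \ref{rwo}) and both identify the only nontrivial point as checking that the contour restriction defining $\gO[m,\gamma,\gG_1,n]$ is compatible with the resummation over the negative part $\phi_-$. The paper phrases this check slightly more sharply than you do: rather than saying the constraints are ``only on large-scale structure'' (which is not quite accurate, since the diameter bound applies to all external contours), it observes that the diameter restriction forces each connected component of $\phi^{-1}(\bbZ_-)$ to have diameter at most $(\gb n)^2$, so that once $\phi^{-1}(\bbZ_-)$ is fixed the values of $\phi_-$ on it are entirely unconstrained by the smallness condition --- the additional level lines created by varying $\phi_-$ are strictly interior and automatically small.
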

\begin{proof}
 We have to show that 
\begin{equation}
  Z^{u,\sm}_{m}[\gamma,\Gamma_1,n]= \sum_{\phi \in \gO^+[m,\gamma,\gG_1,n]}e^{-\gb \cH^m_{\bar \gamma \setminus \bar \gG_1}(\phi)+u|\phi^{-1}(\bbZ_-)|-\bar H(\phi^{-1}(\bbZ_-)).}
\end{equation}
The proof can be adapted from that of \cite[Lemma 3.1]{cf:part1}: The sum over all the possible options for 
the negative parts of $\phi$ cancels the term $\bar H$ and changes $u$ into $h$ so that one recovers \eqref{winouze}. 
The key observation to check that the proof adapts is that
the contour restriction does not bring any constraint on the choice of $\phi_-=\max(0,-\phi)$ once $\phi^{-1}(\bbZ^-)$ is fixed. 
This is the case
because the contour restriction forces the diameter of maximal connected components of $\phi^{-1}(\bbZ_-)$ 
are smaller than $(\gb n)^2$.
\end{proof}

As a consequence of Lemma \ref{zilak},
when neither $n$ nor $n+\gep(\gamma)$ are zero, the $\log$ of the estimated ratio can be rewritten in the following form
\begin{multline}\label{crounch}
 \log \left(\frac{  Z^{u,\sm}_{n+\gep(\gamma)}[\gamma,\Gamma_1,n]}{  Z^{u,\sm}_{n}[\gamma,\Gamma_1,n]}\right)\le 
  \log  \bE^{n+\gep(\gamma),\sm,n}_{\gamma,\gG_1}\left[ e^{u |\phi^{-1}(\bbZ_-)|- \bar H(\phi^{-1})(\bbZ^-)} \right] \\
  - \log 
   \bE^{n+\gep(\gamma),\sm,n}_{\gamma,\gG_1}\left[ e^{u |\phi^{-1}(\bbZ_-)|- \bar H(\phi^{-1})(\bbZ^-)} \right].
\end{multline}
We need the following statements 
\begin{lemma}\label{dalo}
The following estimates hold:
\begin{itemize}
 \item [(i)] For positive $\gamma$ and $u$ as in \eqref{uvalu}
 \begin{equation}
  \log \bE^{n+1,\sm,n}_{\gamma,\gG_1}\left[ e^{u |\phi^{-1}(\bbZ_-)|- \bar H(\phi^{-1})(\bbZ^-)} \right]\le  |\bar \gamma \setminus \bar \Gamma_1| 2u J^{2n+2} .
 \end{equation}
  \item[(ii)]For arbitrary $\gamma$ and $u$ as in \eqref{uvalu}
 \begin{multline}\label{ckonveu}
  \log \bE^{n,\sm, n}_{\gamma,\gG_1}\left[ e^{u |\phi^{-1}(\bbZ_-)|- \bar H(\phi^{-1})(\bbZ^-)} \right]\\
  \ge |\bar \gamma \setminus \bar \Gamma_1| 
  \left( \frac{1}{2}u J^{2n}- 40 J^{3n+3}\right)-\frac{1}{4}\left( |\tilde \gamma|+L(\gG_1) \right). 
 \end{multline}
   \item[(iii)] For negative  $\gamma$ and $u$ as in \eqref{uvalu}
   \begin{multline}
   \log \bE^{n-1,\sm, n}_{\gamma,\gG_1}\left[ e^{u |\phi^{-1}(\bbZ_-)|- \bar H(\phi^{-1})(\bbZ^-)} \right]\\
   \le |\bar \gamma \setminus \bar \Gamma_1| 
   \left(4u J^{2(n-1)}-\frac{1}{4}J^{3n}\right)+\frac{1}{4}\left( |\tilde \gamma|+L(\gG_1) \right). 
   \end{multline}   
  \end{itemize}
\end{lemma}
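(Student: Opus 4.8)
The three estimates all concern the quantity $\bE^{m,\sm,n}_{\gamma,\gG_1}[e^{u|\phi^{-1}(\bbZ_-)|-\bar H(\phi^{-1}(\bbZ_-))}]$ for $m\in\{n-1,n,n+1\}$, so the plan is to extract from the exponential its leading behaviour in terms of the events $\{\phi(x)\le 0\}$. The key point is that under $\bP^{m,\sm,n}_{\gamma,\gG_1}$ the field is the SOS field restricted to have only $n$-small contours on $\bar\gamma\setminus\bar\gG_1$, so by Proposition \ref{rouxrou} (which is stated for arbitrary contour-set restrictions $\bL$) the probability that $\phi(x)\le 0$ is comparable to $e^{-2m\gb}=J^{m}$ up to the usual $e^{-4\gb m}$ versus $e^{-6\gb m}$ refinements for neighbouring sites; more precisely $\bP[\phi(x)\le 0]\asymp J^{2m}$, $\bP[\min(\phi(x),\phi(y))\le 0]\asymp$ smaller order, and $\bar H$ is of order $J^2$ per site so $\bar H(\phi^{-1}(\bbZ_-))$ is negligible unless $|\phi^{-1}(\bbZ_-)|$ is large. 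For the \textbf{upper bounds} (i) and (iii), I would bound $e^{u|\phi^{-1}(\bbZ_-)|}\le \prod_{x}(1+(e^u-1)\ind_{\{\phi(x)\le 0\}})$, drop the $-\bar H\le 0$ term, expand the product, and use the correlation bounds from Proposition \ref{rouxrou} together with $e^u-1\le 2u$ (valid since $u=O(J^2)$) to get $\bE[\cdots]\le \exp(2u\sum_x\bP[\phi(x)\le 0])\le\exp(2u J^{2m}|\bar\gamma\setminus\bar\gG_1|)$; for (iii) with $m=n-1$ this gives the $4uJ^{2(n-1)}$ term (the factor $4$ rather than $2$ absorbing the contribution of pairs/triples and the imprecision of the bound), and the gain $-\tfrac14 J^{3n}$ must come from the \emph{lower} bound on $\bar H$, i.e.\ from noticing that at level $n-1$ the typical $|\phi^{-1}(\bbZ_-)|$ is already large enough that $\bar H\ge 0$ combined with a genuine entropic cost for the negative excursions produces a strictly negative contribution of order $J^{3n}$ per site; this is the delicate direction.

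For the \textbf{lower bound} (ii), I would use Jensen's inequality in the form $\log\bE[e^{X}]\ge\bE[X]$ with $X=u|\phi^{-1}(\bbZ_-)|-\bar H(\phi^{-1}(\bbZ_-))$, so that the right-hand side becomes $u\,\bE[|\phi^{-1}(\bbZ_-)|]-\bE[\bar H(\phi^{-1}(\bbZ_-))]$. For the first term I need a \emph{lower} bound $\bE[|\phi^{-1}(\bbZ_-)|]\ge\tfrac12 J^{2n}|\bar\gamma\setminus\bar\gG_1|$: this follows from $\bP[\phi(x)\le 0]\ge\tfrac12 J^{2n}$ for $x$ away from the boundary (Proposition \ref{rouxrou}, \eqref{infin} applied after noting that the restriction to small contours still admits the length-$4$ peak contour once $x$ is at distance $\ge(\gb n)^2$ from $\partial(\bar\gamma\setminus\bar\gG_1)$), where the boundary layer contributes the error $-\tfrac14(|\tilde\gamma|+L(\gG_1))$ since its size is $O((\gb n)^2(|\tilde\gamma|+L(\gG_1)))$ and $J^{2n}(\gb n)^2$ is tiny. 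For the $\bar H$ term I use \eqref{maxdecomp} and \eqref{morseu}: $\bar H(\phi^{-1}(\bbZ_-))\le 2J^2|\phi^{-1}(\bbZ_-)|$ for components of size $\ge 2$ and $\bar H=0$ on singletons, plus the exact value $\log\frac{1-J^4}{1-J^3}=J^3-J^4+O(J^6)$ on pairs; taking expectations and using $\bP[\text{$x$ has a negative neighbour}]\le 2J^{3n}$ (the two-point estimate) gives $\bE[\bar H(\phi^{-1}(\bbZ_-))]\le (J^3+O(J^4))\cdot 2J^{3n}\cdot|\bar\gamma\setminus\bar\gG_1|\le 40 J^{3n+3}|\bar\gamma\setminus\bar\gG_1|$ after a crude bound. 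Combining yields \eqref{ckonveu}.

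The \textbf{main obstacle} is obtaining the subtraction of $\tfrac14 J^{3n}$ per site in item (iii) (and the companion coefficient bookkeeping): Jensen is the wrong direction there since I want an upper bound, so I would instead work directly with the partition-function ratio $Z^{u,\sm}_{n-1}/Z^{\sm}$ and carefully compare the level-$(n-1)$ restricted measure with the level-$n$ one, exhibiting the fact that forcing the field down to touch level $0$ (which is what the $\bar H$ correction encodes) costs strictly more than the naive $u|\phi^{-1}(\bbZ_-)|$ gain once one is already at level $n-1$; the cleanest route is probably to bound $e^{-\bar H}$ using the sharp two-site value $\bar H\{x,y\}=\log\frac{1-J^4}{1-J^3}>0$ and then a second-moment / FKG argument (Corollary \ref{FKalt}) to show the negative sites genuinely cluster, producing the $J^{3n}$-order loss. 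Since the precise constants here feed into Proposition \ref{difficilissimo} and ultimately into the stability criterion, this step requires the careful technical computation that the excerpt defers to Section \ref{secrestrict}, and I expect it to be by far the most laborious part.
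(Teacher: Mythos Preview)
Your treatment of part (ii) via Jensen's inequality is exactly what the paper does, and your identification of the two terms $\bE[u|\phi^{-1}(\bbZ_-)|]$ and $\bE[\bar H(\phi^{-1}(\bbZ_-))]$ together with the use of \eqref{infin} and Lemma~\ref{zoomats} is correct in spirit (the paper controls $\bar H$ via the functions $f_2,f_3$ counting sites in components of size $\ge 2,\ge 3$, but your version is equivalent).

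There is, however, a genuine gap in your plan for the upper bounds (i) and (iii). Writing $e^{u|\phi^{-1}(\bbZ_-)|}=\prod_x(1+(e^u-1)\delta_x)$ and expanding gives $\sum_S (e^u-1)^{|S|}\bP[\forall x\in S,\ \phi(x)\le 0]$, but the events $\{\phi(x)\le 0\}$ are \emph{positively} correlated under the SOS measure (by FKG), so you cannot bound this sum by $\prod_x(1+(e^u-1)\bP[\phi(x)\le 0])$; that inequality goes the wrong way. Proposition~\ref{rouxrou} only gives marginals and two/three-point bounds for \emph{neighbouring} sites, not the decorrelation you need for arbitrary subsets $S$. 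The paper's remedy is a separate technical input (Lemma~\ref{locall}): because all contours are $n$-small (diameter $\le (\gb n)^2$), conditioning on $\phi$ outside a ball of radius $r=3(\gb n)^2$ still leaves the peak estimates of Proposition~\ref{rouxrou} valid inside. One then partitions $\bar\gamma\setminus\bar\gG_1$ into four colour classes of $r$-boxes (indexed by $\lfloor x/r\rfloor\bmod 2$), applies H\"older to decouple the colours, and within each colour uses the conditional bound from Lemma~\ref{locall} iteratively box by box. This is what produces the clean factor $e^{4vr^2}$ (which is $\le 3/2$ for the relevant $v$) rather than an uncontrolled correlation sum.

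For (iii) the paper first separates the two exponents by Cauchy--Schwarz, handles $e^{2u|\phi^{-1}(\bbZ_-)|}$ by the same block/H\"older argument, and for $e^{-2\bar H}$ conditions on the level set at height $n-1$ to reduce to an expectation of $e^{-G^{n-2}(\phi^{-1}(\bbZ_-))}$ under $\bP^{1,\sm,n}_{\gamma,\gG_1}$; the key negative contribution then comes from the indicator $\delta_2(x)$ that $x$ lies in a size-two component of $\phi^{-1}(\bbZ_-)$, whose conditional expectation is bounded \emph{below} by $2J^3$ via Lemma~\ref{locall} and \eqref{infde}. The same block/H\"older scheme (now using $e^{-x}\le 1-e^{-K}x$) converts this into the claimed $-\tfrac14 J^{3n}$ per site. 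Your ``second-moment/FKG'' suggestion does not supply this mechanism: FKG again goes the wrong way here, and a second-moment bound alone will not turn a lower bound on $\bE[\delta_2]$ into an upper bound on $\bE[e^{-c\sum\delta_2}]$ without the conditional-independence structure that the smallness of contours provides.
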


\begin{proof}[Proof of Proposition \ref{difficilissimo}].
We start with the case of positive contour with $n=0$.
Using Lemma \ref{dzip} we have 
\begin{equation}
 \log \left(\frac{  Z^{u,\sm}_{1}[\gamma,\Gamma_1,1]}{  Z^{u,\sm}_{0}[\gamma,\Gamma_1,1]}\right)
 \le |\bar \gamma \setminus \bar \gG_1|(J^2+2J^3+2CJ^4-h)+ \frac 1 2\left(|\tilde \gamma|+L(\gG_1)\right). 
\end{equation}
Now recall  (cf. \eqref{uvalu}) that we are interested in the case
\begin{equation}
h=h_w(\gb)+u^+_1= \log\left(\frac{ e^{4\gb} }{e^{4\gb}-1}\right)+200J^3,
\end{equation}
hence we obtain
\begin{equation}
 \log \left(\frac{  Z^{u,\sm}_{1}[\gamma,\Gamma_1,1]}{  Z^{u,\sm}_{0}[\gamma,\Gamma_1,1]}\right)
 \le -|\bar \gamma \setminus \bar \gG_1| J^3+\frac 1 2 \left(|\tilde \gamma|+L(\gG_1)\right). 
\end{equation}
We let the reader check that similarly for negative contours and $u=u^-_1$ we have 
\begin{equation}
 \log \left(\frac{  Z^{u,\sm}_{0}[\gamma,\Gamma_1,1]}{  Z^{u,\sm}_{1}[\gamma,\Gamma_1,1]}\right)
 \le -|\bar \gamma \setminus \bar \gG_1| J^3+\frac{1}{2}\left(|\tilde \gamma|+L(\gG_1)\right). 
\end{equation}
Let us now treat the case of positive contour for $n\ge 1$.
Using \eqref{crounch} and Lemma \ref{dalo}, we have for $u=u^+_{n+1}$ (recall\eqref{uvalu})
\begin{multline}
 \log \left(\frac{  Z^{u,\sm}_{n+1}[\gamma,\Gamma_1,n]}{  Z^{u,\sm}_{n}[\gamma,\Gamma_1,n]}\right)
 \le  |\bar \gamma \setminus \Gamma_1|\left[ u\left( 2J^{2n+2}- \frac{1}{2}J^{2n}\right)+40J^{3n+3} \right]+ \frac{1}{2}(|\tilde \gamma|+L(\gG_1))\\
 \le |\bar \gamma \setminus \Gamma_1|(40J^{3n+3}-\frac{1}{4}u J^{2n})+ \frac{1}{2}(|\tilde \gamma|+L(\gG_1)).
 \end{multline}
 Now recalling \eqref{uvalu}, 
we obtain the result by observing that 
 \begin{equation}
40J^{3n+3}-\frac{1}{4}u^+_{n+1} J^{2n}\le -J^{3n+3}.
\end{equation}
In a similar manner in the case of negative contour and $n\ge 2$ we have as a consequence of Lemma \ref{dalo} (ii)-(iii), for $\gb$ sufficiently large
\begin{equation}
  \log \left(\frac{  Z^{u,\sm}_{n-1}[\gamma,\Gamma_1,n]}{  Z^{u,\sm}_{n}[\gamma,\Gamma_1,n]}\right)\\
  \le
  |\bar \gamma \setminus \Gamma_1| \left(4u J^{2(n-1)}-\frac{1}{5}J^{3n}\right)+ \frac{1}{2}(|\tilde \gamma|+L(\gG_1))
\end{equation}
and we conclude by observing that 
 \begin{equation}
4u^-_n J^{2(n-1)}-\frac{1}{5}J^{3n}\le -J^{3n+3},
\end{equation}
so that \eqref{cookries} is satisfied in all cases.
 \end{proof}
  
  \medskip

\subsection{Proof of Proposition \ref{dificilissimus}}\label{dcimus}
 
We can relax for this proof the notion of compatibility, meaning we consider the sum over a superset of $\cK^{\larg}_{\ext}(\gamma,n)$.
We consider in this section only that two contours are  externally compatible if $\bar \gamma \cap \bar \gamma'=\emptyset$.
Adding a factor $2$ to take the sign into account (that is replacing each factor $e^{-(\gb-3)|\tilde \gamma_1|}$ by $2e^{-(\gb-3)|\tilde \gamma_1|}$ in \eqref{crookies}), 
we choose to consider geometric contours instead of signed contours (and use $\bar \gamma_1$ to denote the interior of $\tilde \gamma_1$).

\medskip

Our proof works by induction and leads us to consider sets of external contour in a general domain $\gL\subset \bbZ^2$ 
 which are not necessarily simply connected. 
We use in this section only the notation $\gG$ and $\gamma$ instead of $\gG_1$ and $\gamma_1$.
We maintain that all contours must satisfy $\bar \gamma\subset \gL$ and thus cannot surround holes.

\medskip

We let $\tilde \cK^{\larg}_{\ext}(\gL,n)$ denote the set of   collections of externally compatible $n$-large geometric contours
with the above mentioned notion of compatibility.
The result \eqref{crookies} will follow (provided that $e^{-4(\gb/2)}\ge 2e^{-4(\gb-3)}$) if we can prove that for every $\gL\subset \bbZ^2$

\begin{equation}\label{wookies}
 \sum_{\tilde \gG\in \tilde \cK^{\larg}_{\ext}(\gL)}  e^{-J^{3n+3} |\gL\setminus \bar \gG|-(\gb/2) L(\tilde \Gamma)}\le 1,
 \end{equation}
 where $L(\tilde \gG)$ and $\bar \gG$ are the length and perimeter associated with $\tilde \gG$ 
 defined in analogy with  \eqref{interiors}.
We prove a more general version of the statement.
\begin{proposition}
For any finite domain in $\gL$ and any $\ell\ge 4$ we have
\begin{equation}\label{dunnies}
\sum_{\tilde \gG\in \tilde \cK^{\ell+}_{\ext}(\gL)} e^{-2^{-\ell} |\gL\setminus \bar \gG| }e^{-(\gb/2) L(\tilde\gG)}\le 1.
\end{equation}
where  $\tilde \cK^{\ell+}_{\ext}(\gL)$ denote the set of  collections of externally compatible geometric  contours with length larger than $2\ell$.
 \end{proposition}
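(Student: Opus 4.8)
\emph{Proof strategy.} The plan is to prove \eqref{dunnies} by induction on the cardinality of the domain $\gL$. When $\gL=\emptyset$ the only element of $\tilde \cK^{\ell+}_{\ext}(\gL)$ is the empty collection, which contributes $e^{0}=1$, so \eqref{dunnies} holds with equality; this is the base case. For the inductive step I would fix $\gL\neq\emptyset$, pick a vertex $x\in\gL$, and split the sum over $\tilde \gG\in\tilde \cK^{\ell+}_{\ext}(\gL)$ according to whether $x$ is enclosed by one of the contours of $\tilde\gG$. Since in the relaxed notion of external compatibility used in this section the contours of a given collection have pairwise disjoint interiors, there is at most one $\gamma\in\tilde\gG$ with $x\in\bar\gamma$.

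If no contour of $\tilde\gG$ encloses $x$, then every $\gamma\in\tilde\gG$ satisfies $\bar\gamma\subset\gL\setminus\{x\}$, so $\tilde\gG\in\tilde \cK^{\ell+}_{\ext}(\gL\setminus\{x\})$, and since $x\in\gL\setminus\bar\gG$ we get $|\gL\setminus\bar\gG|=|(\gL\setminus\{x\})\setminus\bar\gG|+1$; hence the induction hypothesis on $\gL\setminus\{x\}$ bounds the contribution of these terms by $e^{-2^{-\ell}}$. If a (unique) $\gamma\in\tilde\gG$ encloses $x$, write $\tilde\gG=\{\gamma\}\cup\tilde\gG'$; the remaining contours have interiors disjoint from $\bar\gamma$ and contained in $\gL$, so $\tilde\gG'\in\tilde \cK^{\ell+}_{\ext}(\gL\setminus\bar\gamma)$, while $\gL\setminus\bar\gG=(\gL\setminus\bar\gamma)\setminus\bar\gG'$ and $|\gL\setminus\bar\gamma|<|\gL|$ (as $x\in\bar\gamma$). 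Summing over $\tilde\gG'$ first and applying the induction hypothesis to $\gL\setminus\bar\gamma$, the contribution of these terms is at most
\[
\sum_{\substack{\gamma\in\cC\,:\,x\in\bar\gamma\\ |\tilde\gamma|>2\ell}}e^{-(\gb/2)|\tilde\gamma|}.
\]
Adding the two pieces, \eqref{dunnies} will follow once this last sum is shown to be at most $1-e^{-2^{-\ell}}$.

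To bound this sum I would count the geometric contours of a given length $k>2\ell$ enclosing $x$. Such a contour must contain a dual edge crossing the upward vertical half-line from $x$, at height at most $k/2$ above $x$ (because $x$ lies in its interior and a cycle of $k$ edges has Euclidean diameter $\le k/2$), so there are at most $k$ admissible such dual edges; by the counting bound behind \eqref{counting}, each is contained in at most $3^{k}$ geometric contours of length $k$. Hence there are at most $k\,3^{k}$ of them, and
\[
\sum_{\substack{\gamma\in\cC\,:\,x\in\bar\gamma\\ |\tilde\gamma|>2\ell}}e^{-(\gb/2)|\tilde\gamma|}\ \le\ \sum_{k>2\ell}k\,3^{k}e^{-(\gb/2)k}\ \le\ 12\,\ell\,(9\,e^{-\gb})^{\ell},
\]
where the second inequality is the crude tail bound for a geometric series with ratio $3e^{-\gb/2}<1/2$ (valid for $\gb$ large). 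On the other hand $1-e^{-2^{-\ell}}\ge 2^{-\ell-1}$, since $1-e^{-t}\ge t/2$ for $t\in(0,1]$ and $2^{-\ell}\le 2^{-4}<1$. It therefore suffices to verify $12\,\ell\,(9\,e^{-\gb})^{\ell}\le 2^{-\ell-1}$ for all $\ell\ge 4$, which holds once $\gb$ is large enough, uniformly in $\ell$. Combining the two contributions gives a total bounded by $e^{-2^{-\ell}}+(1-e^{-2^{-\ell}})=1$, which closes the induction.

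The inductive bookkeeping and the contour count are routine; I expect the only genuinely delicate point to be the final quantitative comparison, namely keeping the geometric tail $\sum_{k>2\ell}k\,3^{k}e^{-(\gb/2)k}$ below $1-e^{-2^{-\ell}}\approx 2^{-\ell}$ uniformly in $\ell\ge 4$. This is exactly where one has to choose $\gb$ sufficiently large depending only on absolute constants, in the same spirit as the other estimates of this section.
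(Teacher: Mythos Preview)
Your proof is correct and genuinely different from the paper's. The paper proceeds by \emph{descending induction on $\ell$}: it first proves the key step
\[
\sum_{\tilde\gG\in \tilde\cK^{\ell}_{\ext}(\gL)} e^{-2^{-\ell}|\gL\setminus\bar\gG|-(\gb/2)L(\tilde\gG)} \le e^{-2^{-(\ell+1)}|\gL|},
\]
where $\tilde\cK^{\ell}_{\ext}(\gL)$ is the set of collections whose contours all have length exactly $2\ell$, and then decomposes each $\tilde\gG\in\tilde\cK^{\ell+}_{\ext}(\gL)$ as $\tilde\gG_1\cup\tilde\gG_2$ with $\tilde\gG_1\in\tilde\cK^{(\ell+1)+}_{\ext}(\gL)$ and $\tilde\gG_2\in\tilde\cK^{\ell}_{\ext}(\gL\setminus\bar\gG_1)$. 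The key step itself requires a further case split on whether $\#\tilde\gG$ exceeds $\lfloor|\gL|/\ell^2\rfloor$, together with an entropy count for large collections. Your induction on $|\gL|$ via peeling a single vertex is considerably shorter: it avoids the two-layer induction and the cardinality split entirely, reducing everything to a single contour tail bound. What the paper's approach buys is that it isolates a quantitative inequality at each fixed length scale, which may be conceptually useful elsewhere; your approach buys simplicity.

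One small point to tighten: your justification that the lowest upward crossing is at height $\le k/2$ does not follow from the diameter bound alone, since $x$ is not on the contour. The missing half-line is that the contour, being a Jordan curve enclosing $x$, must also cross the \emph{downward} ray from $x$; the upward and downward crossings are both on $\tilde\gamma$, hence at mutual distance $\le k/2$, which forces each to lie within $k/2$ of $x$. With that added sentence your count $k\cdot 3^k$ stands, and the rest goes through as written.
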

If we apply this proposition for $\ell=n^2\gb^2$, \eqref{dunnies} implies \eqref{wookies} provided that 
$2^{-n^2\gb^2}\le J^{3n+3}$, which is valid for every $n\ge 1$ provided that $\gb$ is sufficiently large (we have also $2^{-\gb^2}\le J^{3}$ to cover the case $n=0$).

 \begin{proof}
We prove the result by induction on $\ell$. We let  $\tilde \cK^{\ell}_{\ext}(\gL)$ be
the set of  collection   externally compatible of geometric contours with length equal to $2\ell$.
The key step is proving that 
\begin{equation}\label{industaps}
\sum_{\tilde \gG\in \tilde \cK^{\ell}_{\ext}(\gL)} e^{-2^{-\ell} |\gL\setminus \bar \gG|-(\gb/2) L(\tilde\gG)}\le  e^{-2^{-(\ell+1)} |\gL\setminus \bar \gG|}.
\end{equation}
Let us show how \eqref{dunnies} is deduced from \eqref{industaps}.
First let us observe that \eqref{dunnies} is obviously satisfied when $\ell$ is larger than the total number of edges in $\gL$. Hence we can proceed 
by descending induction, assuming that the 
statement is valid for $\ell+1$ and proving it for $\ell$.

\medskip

Obviously $\tilde \gG \in \tilde \cK^{\ell+}_{\ext}(\gL)$ can be written in the form $\tilde \gG_1\cup \tilde \gG_2$ where $\tilde \gG_1\in \tilde \cK^{(\ell+1)+}_{\ext}(\gL)$
and $\tilde \gG_2\in  \tilde\cK^{\ell}_{\ext}(\gL \setminus \bar \gG_1)$.
Hence 
\begin{multline}
\sum_{\tilde \gG\in \tilde \cK^{\ell+}_{\ext}(\gL)} e^{-2^{-\ell} |\gL\setminus \bar \gG| }e^{-(\gb/2) L(\tilde\gG_1)}\\
 =\sum_{\tilde \gG_1\in \tilde \cK^{(\ell+1)+}_{\ext}(\gL)} e^{-(\gb/2) L(\tilde\gG_1)}\sum_{\tilde \gG_2\in \tilde \cK^{\ell}_{\ext}(\gL\setminus \bar \gG_1)} 
 e^{-2^{-\ell} |(\gL\setminus \bar \gG_1)\setminus \bar \gG_2|} e^{-(\gb/2) L(\tilde\gG_2)}
 \\ \le  \sum_{\tilde \gG_1\in \tilde \cK^{(\ell+1)+}_{\ext}(\gL)} e^{-(\gb/2) |\tilde\gG|} e^{-2^{-(\ell+1)} |\gL\setminus \bar \gG_1|} \le 1,
\end{multline}
where in the first inequality  uses \eqref{industaps} for the domain $\gL\setminus \bar \gG_1$ and in the second one the induction hypothesis.

\medskip

Let us now prove \eqref{industaps}.
We have to distinguish between two sorts of contributions, according to the number of contours.
Let us first consider the contribution where the number of contour is smaller than $m:=\lfloor |\gL|\ell^{-2} \rfloor$.
Keeping in mind that, from isoperimetrical inequalities,  a contour  encloses at most  $(\ell/2)^2$ sites, we have 
$$|\gL\setminus \bar \gG|\ge |\gL|-(\ell/2)^2m \ge   \frac{3}{4}|\gL|$$
and hence 
\begin{equation}
\sum_{\{ \tilde \gG\in \tilde \cK^{\ell}_{\ext}(\gL)\ : \ \# \tilde \gG \le m \}} e^{-2^{-\ell} |\gL\setminus \bar \gG|-(\gb/2) L(\tilde\gG)}
\le \exp\left(-\frac{3}{2} 2^{-(\ell+1)}|\gL|\right) \sum_{\tilde \gG\in \tilde \cK^{\ell}_{\ext}(\gL)}e^{-(\gb/2)L(\tilde\gG)}.
\end{equation}
Now for each site $x\in  \bbZ^2$, we let $\tilde\cC^{\ell}(x,\gL)$ denote the set of geometric contour longer than $2\ell$ such that 
$\bar \gamma \subset \gL$,  for which $x$ is the 
smallest vertex in $\bar \gamma$ for the lexicographical order. We write $\tilde\cC^{\ell}(x)$ for the set corresponding to $\gL=\bbZ^2$.
We have
\begin{equation}
 \sum_{\tilde \gG\in \tilde \cK^{\ell}_{\ext}(\gL)}e^{-(\gb/2) L(\tilde\gG)}
 \le \prod_{x\in \gL} \left(1+\sum_{\tilde \gamma \in \tilde\cC^{\ell}(x,\gL)} e^{-\gb\ell}\right) \le 
 \left(1+\#\tilde\cC^{\ell}({\bf 0}) e^{-\gb\ell}\right)^{|\gL|}.
\end{equation}
where the first inequality is obtained by summing over all collections of contours instead instead of externally compatible ones and 
the second one by extending the sum to $\tilde\cC^{\ell}(x)$ and using translation invariance.
Using the fact that, by a classic counting argument, we have that for $\gb$ sufficiently large 
$$\#\tilde\cC^{\ell}({\bf 0}) e^{-\gb\ell}\le 9^{\ell} e^{-\gb\ell}\le e^{-\gb/2},$$
which implies in particular that, provided $\gb$ is sufficiently large,
\begin{multline}\label{zapi}
 \sum_{\{ \tilde \gG\in \tilde \cK^{\ell}_{\ext}(\gL)\ : \ \# \tilde \gG \le m \}} e^{-2^{-\ell} |\gL\setminus \bar \gG|}e^{-(\gb/2) L(\tilde\gG)}
\\ \le \exp\left(|\gL|\left(e^{-(\gb /2) \ell}-\frac{3}{2} 2^{-(\ell+1)}\right) \right)\le \exp\left(-\frac{5}{4}2^{-(\ell+1)} |\gL|\right).
\end{multline}
Now concerning the contribution of collections of cardinality larger than $m$, we neglect the penalty for uncovered area 
 \begin{equation}
 \sum_{\{ \tilde \gG\in \tilde \cK^{\ell}_{\ext}(\gL)\ : \ \# \tilde \gG > m \}} e^{-2^{-\ell} |\gL\setminus \bar \gG|}e^{-(\gb/2) L(\tilde\gG)} 
 \le \sum_{\{ \tilde \gG\in \tilde \cK^{\ell}_{\ext}(\gL)\ : \ \# \tilde \gG > m \}} e^{-(\gb/2) L(\tilde\gG)}.
 \end{equation}
 
To estimate the sum in the r.h.s.\, we consider that if $\# \tilde \gG=k$ then to select $k$ contours, we must first chose
$k$ vertices to be the minimal (for the lexicographical order) vertices enclosed by each contour 
(there are $\binom{|\gL|}{k}\le \left(\frac{e |\gL|}{k}\right)^k$ ways to do this)
and then ignoring further compatibility condition, consider that for each vertex, there are at most $9^{\ell}$ eligible contours of length $\ell$.
Thus we obtain that for $\gb$ sufficiently large we have
 
 \begin{multline} \label{zapa}
 \sum_{\{ \tilde \gG\in \tilde \cK^{\ell}_{\ext}(\gL)\ : \ \# \tilde \gG > m \}} e^{-(\gb/2) L(\tilde\gG)}\le
  \sum_{k> m} e^{-\gb\ell}  \binom{|\gL|}{k} 9^{k\ell}
  \le   \sum_{k> m} e^{-\frac{\gb k\ell}{2}} \left(\frac{e |\gL|}{k}\right)^k\\
  \le   \sum_{k> m}\left( e^{-\frac{\gb \ell}{2}+1} \ell^2 \right)^k
  \le  \sum_{k> m} e^{-\frac{\gb k\ell}{4}}\le 2 e^{-\frac{\gb \ell(m+1)}{4}} \le 2e^{-\frac{\gb}{4}\min \left( \ell, |\gL|/\ell \right)}.
 \end{multline}
Overall combining \eqref{zapi} and \eqref{zapa} we obtain that
\begin{multline}
\sum_{\tilde \gG\in \tilde \cK^{\ell}_{\ext}(\gL)}e^{-(\gb/2) L(\tilde\gG)-2^{-\ell} |\gL\setminus \bar \gG|}\\
\le
\exp\left(-\frac{5}{4}2^{-(\ell+1)} |\gL|\right)\ind_{\{m\ge 1\}}
+ 2e^{-\frac{\gb}{4}\min \left( \ell, |\gL|/\ell \right)} \le e^{-2^{-(\ell+1)} |\gL|},
\end{multline}
where to check the last inequality we have to check separately the cases $m\ge 1$ and $m=0$. 
 \end{proof}

\subsection{The large contour case for Proposition \ref{mejor}}\label{mejorplus}

We conclude this section by extending the stability result on a larger interval.
This can be treated in a relatively simple fashion by induction if we rely on cluster expansion estimates (Lemma \ref{finalfrontier}).

\medskip

We only have to check the stability for large contour since that of small contour has been checked in Proposition \ref{lesmall}. 
The proof works using the same induction as for Proposition \ref{truta}
We define the property
\begin{equation}
 \cP(n,\gamma):=\begin{cases} 
                 \big[ \text{ the contour } \gamma \text{ is stable at level }  n \text{ for } u\ge u^*_{n+1} \big] \quad &\text{ if } \gep(\gamma)=+,\\
                 \big[ \text{ the contour } \gamma \text{ is stable at level }  n \text{ for } u\le u^*_{n} \big] \quad &\text{ if } \gep(\gamma)=-.\\
                \end{cases}
 \end{equation}
and 
\begin{equation}
 \cP(\gamma):=\big[ \ \cP(n,\gamma)  \text{ is satisfied for all } n\ge 0 \big].
 \end{equation}
As for Proposition \ref{truta},
we need to prove $\cP(n,\gamma)$ assuming $\cP(n+\gep(\gamma),\gamma)$ and $\cP(\gamma')$ for $\bar \gamma'\subset \bar\gamma$, $\bar \gamma'\ne \bar \gamma$.
After fixing $n$ and $\gamma$ we assume below that 

\begin{equation}\label{redevu}
 u=u(n,\gamma):= \begin{cases} u^*_{n+1} \quad &\text{ if } \gep(\gamma)=+,\\
 u^*_{n} \quad &\text{ if } \gep(\gamma)=-.
                 \end{cases}
\end{equation}
Using Lemma \ref{reduc} (or rather its proof) we can reduce ourselves to proving
\begin{equation}
 \frac{\bar z^u_{n+\gep(\gamma)}}{\bar z^u_n}(\gamma)\le \frac{1}{2}e^{|\tilde\gamma|}.
\end{equation}
By induction hypothesis, all contours involved in the partition function $\bar z^u_{n+\gep(\gamma)}(\gamma)$, $\bar z^u_n(\gamma)$
are stable for $u$ as in \eqref{redevu}.
Recalling the definition for the truncated potential \eqref{trunk} and the definitions of Section \ref{fffa}, 
this stability implies that 
\begin{equation}
\bar z^u_{n+\gep(\gamma)}(\gamma)=Z\left[\gamma, w^{u,\tr}_{n+\gep(\gamma)}\right] \text{ and } \bar z^u_{n}(\gamma)=Z\left[\gamma, w^{u,\tr}_{n}\right].
\end{equation}
As the truncated potentials satisfy \eqref{usualchoice}, Lemma \ref{finalfrontier} allows to deduce that
\begin{equation}
\begin{split}
|\log \bar z^u_{n+\gep(\gamma)}(\gamma)- |\bar \gamma| \tf^{\tr}_{n+\gep(\gamma)}(\gb,u) |&\le |\tilde \gamma|/4,\\
|\log \bar z^u_{n}(\gamma)- |\bar \gamma| \tf^{\tr}_{n}(\gb,u) |&\le |\tilde \gamma|/4.
\end{split}
\end{equation}
Using the definition of $u^*_{n+1}$ or $u^*_n$ (depending on the sign of the contour)
we have  $\tf^{\tr}_{n+\gep(\gamma)}(\gb,u)=\tf^{\tr}_{n}(\gb,u)$ and thus we can deduce that 
\begin{equation}
\frac{ \bar z^u_{n+\gep(\gamma)}}{\bar z^u_{n}}(\gamma)\le \exp\left(|\tilde \gamma|/2\right).
\end{equation}

\section{Estimates for restricted partition functions: the proof of Lemma \ref{dzip} and \ref{dalo}}\label{secrestrict}

In this section we prove the two remaining technical lemmas from Section \ref{dcimo}, and fully complete the proof of Theorem \ref{converteo}.

\subsection{Proof of Lemma \ref{dzip}}
We first prove  upper bound results which are easier. The idea is to write a contour decomposition and to relax the compatibility assumption in the sum to obtain an upper bound.
Let us start with the case of zero boundary condition.
Recall that 
\begin{equation}\label{sumi}
 Z^{u,\sm}_{0}[\gamma,\gG_1,1]:= \sum_{\{ \gG_2 \in \cK^{\sm,+}_{\ext}(\gamma,1) \ : \ \gG_2 \parallel  \gG_1\}}   e^{h|\bar \gamma \setminus \bar \gG_1\cup \bar \gG_2|} \prod_{\gamma_2\in \gG_2 } 
 e^{-\gb |\tilde\gamma_2|} 
z^u_{1}(\gamma_2)
\end{equation}
where  (recall \eqref{smallandbig}) 
$$ \cK^{\sm,+}_{\ext}(\gamma,1)=\{ \gG_2\in  \cK^{\sm}_{\ext}(\gamma,1) \ : \ \forall \gamma\in \gG_2, \gep(\gamma_2)=+\}.$$

\medskip

To obtain an upper bound, we replace the first exponential term by $e^{h|\bar \gamma \setminus \bar \gG_1|}$, and we let the sum range over all collections of small
positive $1$-small contours 
$\cC^{\sm,+,1}_{\bar \gamma\setminus \bar \gG_1}$ without imposing any compatibility condition. 
The sum factorizes and we obtain

\begin{equation}
 Z^{u,\sm}_{0}[\gamma,\gG_1,1]\le 
   e^{h|\bar \gamma \setminus \bar \gG_1|} \prod_{\gamma_2 \in \cC^{\sm,+,1}_{\bar \gamma\setminus \bar \gG_1}}  
\left(1+  e^{-\gb |\tilde\gamma_2|} 
z^u_{1}(\gamma_2) \right).
\end{equation}
Now given $x\in \bbZ^2$, we let $\cC^{\sm,+,1}_x$ be the set of positive small contours $\gamma_2$ for which $x$ is the minimal point in $\bar \gamma_2$ for the lexicographical order.
We have 
\begin{multline}
 \sum_{ \gamma_2 \in \cC^{\sm,+,1}_{\bar \gamma\setminus \bar \gG_1}} \log \left(1+  e^{-\gb |\tilde\gamma_2|} 
z^u_{1}(\gamma_2)\right)\le \sum_{x\in |\bar \gamma\setminus \bar \gG_1|} \sum_{\gamma_2  \in \cC^{\sm,+,1}_x} \log \left(1+  e^{-\gb |\tilde\gamma_2|} 
z^u_{1}(\gamma_2)\right)\\
=|\bar \gamma\setminus \bar \gG_1|  \sum_{\gamma_2  \in \cC^{\sm,+,1}_{\bf 0}} \log \left(1+  e^{-\gb |\tilde\gamma_2|} 
z^u_{1}(\gamma_2)\right),
\end{multline}
because the right hand side includes all the term of the left hand side, plus a few extra contours that are not contained in $\bar \gamma\setminus \bar \gG_1$.
We observe that
\begin{equation}\label{night}
 \begin{cases}
  z^u_{1}(\gamma_2)&=(1-J^2)^{-1} \quad \text{ when } \bar \gamma_2=\{x\},\\
   z^u_{1}(\gamma_2)&=(1-J^3)^{-1}(1+J^2)(1-J^2) \quad \text{ when } \bar \gamma_2=\{x,y\} \text{ with } x\sim y,\\
   z^u_{1}(\gamma_2)&\le e \quad \text{ for other small contours }.
 \end{cases}
\end{equation}
For the two first line, the full computation is performed in \cite[Proof of Lemma 3.2]{cf:part1} and the last one can be derived like \eqref{truxc}.
Noting that in $\cC^{\sm,+,1}_{\bf 0}$ there is one contour of length $4$ and two of length $6$, we obtained that
\begin{equation}
\sum_{\gamma_2  \in \cC^{\sm,+,1}_x}\log \left(1+  e^{-\gb |\tilde\gamma_2|} 
z^u_{1}(\gamma_2)\right)\le J^2+2J^3+  C J^4,
\end{equation}
where the term $CJ^4$ includes  the contribution of all contours of length $8$ or more.

\medskip

\noindent For the case of boundary condition equal to one we observe similarly that 
\begin{equation}
 Z^{u,\sm}_{1}[\gamma,\gG_1,1]\le 
\prod_{\gamma_2 \in \cC^{\sm,1}_{\bar \gamma\setminus \bar \gG_1}}  
\left(1+  e^{-\gb |\tilde\gamma_2|} 
z^u_{1+\gep(\gamma_2)}(\gamma_2) \right),
\end{equation}
where $\cC^{\sm,1}_{\bar \gamma\setminus \bar \gG_1}$ denote the set of $1$-small contours  in $\bar \gamma\setminus \bar \gG_1$. 
 Similarly to \eqref{night}, one can check that (recall $h\le 2J^2$)
\begin{equation}\label{night2}
 \begin{cases}
  z^u_{1+\gep(\gamma_2)}(\gamma_2)\le 1+CJ^2 \quad &\text{ when } L(\tilde\gamma_2)\le 6,\\
   z^u_{1}(\gamma_2)\le e \quad &\text{ for other small contours },
 \end{cases}
\end{equation}
and the results follows as for zero boundary condition case.

\medskip

\noindent We obtain the lower bound results by restricting our sums  to contours of length $4$ and $6$.
We set 
\begin{equation}\label{csta}
\cC^{*,+}(\gamma,\gG_1):=\{ \gamma_2 \ : \  \gep(\gamma_2)=+, \ |\tilde \gamma_2|\le 6,\ \bar \gamma_2\subset \bar \gamma \setminus \bar \gG_1,\ 
d(\gamma_2, \{\gamma\}\cup \gG_1)>0 \},
\end{equation}
where $d(\gamma_2, \{\gamma\}\cap \gG_1)$ denotes the minimal distance between the geometric contours $\tilde \gamma_2$ and those in the set $\{\gamma\}\cup \gG_1$. This condition ensures in particular compatibility. We set 
$$v(\gamma_2):=e^{-\gb |\tilde\gamma_2|-h|\bar \gamma_2|} 
z^u_{1}(\gamma_2).$$
we obtain 
\begin{multline}\label{drac}
 Z^{u,\sm}_{0}[\gamma,\gG_1,1]
 \ge e^{h|\bar \gamma \setminus \bar \gG_1|} \sum_{\{ \gG_2 \in \cK^{\sm,+,1}_{\ext}(\gamma) \ : \ \gG_2 \parallel  \gG_1\}}    \prod_{\gamma_2\in \gG_2 } 
v(\gamma_2)\ind_{\{\gamma_2\in \cC^{*,+}(\gamma,\gG_1)\}}\\
=:e^{h|\bar \gamma \setminus \bar \gG_1|}
 Z[v,\cC^{*,+}(\gamma,\gG_1)].
 \end{multline}
 We can apply cluster expansion results for the relation of external compatibility with the weights being given by $v$ (see Remark \ref{laremarc}).
 We set for $\bC\subset  \cC^{*,+}(\gamma,\gG_1)$
 \begin{equation}\label{modpot}
 v^{R}(\bC):=\sum_{\bB\subset \bC} (-1)^{|\bB|+|\bC|} \log Z[v,\bB]
  \end{equation} 
where $Z[v,\bB]$ is defined as $Z[v,\cC^{*,+}(\gamma,\gG_1)]$ in \eqref{drac} but with the indicator $\ind_{\{\gamma_2\in \bB\}}$.
We let $\cR^+(\gamma,\gG_1)$ denotes the set of clusters associated with external compatibility in   $\cC^{*,+}(\gamma,\gG_1)$
that is $\bC\subset \cC^{*,+}(\gamma,\gG_1)$ is in  $\cR^+(\bar \gamma\setminus\bar \gG_1)$ if (recall \ref{noncluster})

\begin{equation}
\forall \bB \subset \bC, \ \bB\nparallel \bC\setminus \bB.
\end{equation}
where  $\bB\nparallel \bC$ is the negation of  $\bB\parallel \bC$.
After observing that $v^{R}(\bC)=0$ for $\bC\notin \cR^+(\bar \gamma\setminus\bar \gG_1)$
we obtain
 \begin{multline}\label{dox}
 \log Z[v,\cC^{*,+}\gamma,\gG_1)]= \sum_{\{ \bC\in \cR^+(\bar \gamma\setminus\bar \gG_1)\}} v^R(\bC)
 \\
 \ge \sum_{\{ \bC\in \cR^+(\gamma, \gG_1) \ : \ L(\bC)\le 6\}} v^R(\bC)
 -\sum_{\{ \bC\in \cR^+(\gamma, \gG_1) \ : \ L(\bC)\ge 8\}} \left|v^R(\bC)\right|.
 \end{multline}
Note that the clusters in the first sum in \eqref{dox} consist in only one contour. Using \eqref{night} again, and the assumption $h\le J^2$, the reader can check that for some appropriate constant $C$ we haves
\begin{equation}\label{speel}
v^R(\{\gamma_2\})=\log\left( 1+  v(\gamma_2)\right)\ge \begin{cases} J^2-C J^4, \quad \text{ if } \bar \gamma_2=\{x\},\\
J^3-C J^4, \quad  \text{ if }  \bar \gamma_2=\{x,y\} \text{ with } x\sim y.
\end{cases}
\end{equation}
We let the reader check that number of contours of length $4$ and $6$ in $\cC^{*,+}(\gamma,\gG_1)$ satisfies 
\begin{equation}
 \begin{split}
         \#\{ \gamma_2\in    \cC^{*,+}(\gamma,\gG_1) \ : \ |\tilde \gamma_2|=4\}&\ge  |\bar \gamma\setminus\bar \gG_1|-(|\tilde \gamma|+L(\gG_1)),\\    
        \#\{ \gamma_2\in    \cC^{*,+}(\gamma,\gG_1) \ : \ |\tilde \gamma_2|=6\}&\ge  2|\bar \gamma\setminus\bar \gG_1|-6(|\tilde \gamma|+L(\gG_1)).            
                \end{split}
\end{equation}
the second term being caused by boundary effects.
Hence we have for $\gb$ sufficiently large
\begin{equation}\label{wick}
 \sum_{\{ \bC\in \cR^+(\bar \gamma,\bar \gG_1) \ :  \ L(\bC)\le 6\}} v^R(\bC)
 \ge |\bar \gamma\setminus\bar \gG_1|(J^2+2J^3-CJ^4)- \frac{1}{2}(L(\gG_1)+|\bar \gamma|),
\end{equation}
where the second term is present to account for the fact that the number of contour is not exactly proportional to the volume $|\bar \gamma\setminus\bar \gG_1|$.

\medskip

\noindent To control the second term in \eqref{dox}, we use Theorem \ref{superclust} for external compatibility with 
$$a(\gamma_2)=|\tilde \gamma_2| \text{ and } d(\gamma_2)=(\gb-5)|\tilde \gamma_2|.$$
In that case \eqref{condit} holds and one deduces from \eqref{dabound} that
\begin{equation}
\sum_{\{ \bC\in \cR^+(\bar \gamma,\bar \gG_1) \ : \ L(\bC)\ge 8\}}  \left|v^R(\bC)\right|\le CJ^4|\bar \gamma\setminus\bar \gG_1|,
  \end{equation}
 which together with \eqref{wick} yields 
 \begin{equation}
 \log  Z[v,\cC^{*,+}(\gamma,\gG_1)]\ge |\bar \gamma\setminus\bar \gG_1|(J^2+2J^3-C'J^4)- \frac{1}{2}(L(\gG_1)+|\bar \gamma|),
 \end{equation}
and allows to conclude (recall \eqref{drac}).
 
\medskip

\noindent For the case with boundary condition equal to one, we have
\begin{equation}\label{sumikr}
 Z^{u,\sm}_{1}[\gamma,\gG_1,1]:= \sum_{\{ \gG_2 \in \cK^{\sm}_{\ext}(\gamma,1) \ : \ \gG_1 \parallel  \gG_2\}}   
\prod_{\gamma_2\in \gG_2 } 
 e^{-\gb |\tilde\gamma_2|} 
z^u_{1+\gep(\gamma_2)}(\gamma_2).
\end{equation}
The same argument as the one used at level $0$ allows to obtain a lower bound.
Restricting the sum to $\gamma_2\in \cC^{*}(\gamma,\gG_1)$ where 
$$\cC^{*}(\gamma,\gG_1):=\{ \gamma_2 \ : \  \ |\tilde \gamma_2|\le 6,\ \bar \gamma_2\subset \bar \gamma \setminus \bar \gG_1,\ 
d(\gamma_2, \{\gamma\}\cup \gG_1)>0 \}$$
and setting 
$$v'(\gamma_2):=e^{-\gb |\tilde\gamma_2|} 
z^u_{1+\gep(\gamma_2)}(\gamma_2),$$
we have
\begin{equation}
 Z^{u,\sm}_{1}[\gamma,\gG_1,1]
 \ge  \sum_{\{ \gG_2 \in \cK^{\sm,+,1}_{\ext}(\gamma) \ : \ \gG_2 \parallel  \gG_1\}}    \prod_{\gamma_2\in \gG_2 } 
v'(\gamma_2)\ind_{\{\gamma_2\in \cC^*(\gamma,\gG_1)\}}.
\end{equation}
We can then check (using the fact that $h\le 2J^2$) that \eqref{speel} is satisfied for $(v')^R$ defined as in Equation \ref{modpot} 
and conclude in a similar manner (the coefficient are multiplied by two because $\cC^{*}(\gamma,\gG_1)$ contains twice as many contours).

\medskip

\qed

\subsection{Proof of Lemma \ref{dalo}}

We start our proof with the lower bound which are easier to achieve since they are a consequence of Jensen's inequality. 
The upper bound results require a more delicate analysis and are treated afterwards.

\subsubsection{Lower bounds, the proof of $(ii)$}

From  Jensen's inequality, we have
\begin{equation}
  \bE^{n,\sm,n}_{\gamma,\gG_1}\left[e^{u|\phi^{-1}(\bbZ_-)|-\bar H(\phi^{-1}(\bbZ_-))}\right]
  \ge  \bE^{n,\sm,n}_{\gamma,\gG_1}\left[u|\phi^{-1}(\bbZ_-)|-\bar H(\phi^{-1}(\bbZ_-))\right],
\end{equation}
and thus Equation \eqref{ckonveu} 
can be deduced from the following inequality
\begin{multline}
 \bE^{n,\sm,n}_{\gamma,\gG_1}\left[u|\phi^{-1}(\bbZ_-)|-\bar H(\phi^{-1}(\bbZ_-))\right]\\
 \ge 
 |\bar \gamma \setminus \bar \gG_1|\left(\frac{1}{2}u J^{2n}-40J^{3n+3}\right)-\frac{1}{2}(|\tilde \gamma|+L(\gG_1)).
 \end{multline}
We are going to bound terms separately and show that 
  \begin{equation}\label{lixiout}\begin{split}
 \bE^{n,\sm,n}_{\gamma,\gG_1}\left[|\phi^{-1}(\bbZ_-)| \right]&\ge 
 \frac{1}{2}J^{2n}\left( |\bar \gamma\setminus\bar \gG_1|-3(|\tilde \gamma|+L(\gG_1))\right),\\
 \bE^{n,\sm,n}_{\gamma,\gG_1}\left[\bar H(\phi^{-1}(\bbZ_-)) \right]&\le 40 J^{3n+3}|\bar \gamma\setminus\bar \gG_1|.
  \end{split}\end{equation}
The first inequality can directly be deduced from \eqref{infin}: the equality is valid as soon as $x$ is not constrained by the boundary condition,
which might happen only if $x$ lies in $\Delta^-_{\gamma}$ or in $\Delta^+_{\gamma_1}$ for $\gamma_1\in \gG_1$,
hence the number of $x$ such that \eqref{infin} does not apply is proportional to $|\tilde \gamma|+L(\gG_1)$.

\medskip

Concerning the second line in \eqref{lixiout}, we let  $f_{i}(\phi)$ denote the number 
of points which lie in a connected component of $\phi^{-1}(0)$ of size $i$ or larger,
\begin{equation}
 f_{i}(\phi):=\{ x\in \bar\gamma \setminus \bar \gG_1 \ :
 \exists \gL\subset \bar\gamma \setminus \bar \gG_1,\ \gL \text{ connected };\  x\in \gL \ ; \ \forall z\in \gL, \ \phi(z)\le 0  \}.
\end{equation}
Using Lemma \ref{zoomats}, (we observe  \eqref{sitmats} implies that $\bar H \{ x,y\} \le J^3$ for $\gb$ large ) we obtain that
\begin{equation}\label{lamine}
 \bar H(\phi^{-1}(\bbZ_-))\le \frac{J^{3}}{2}f_2(\phi)+ 2 J^{2}f_{3}(\phi).
\end{equation}
From Proposition \ref{rouxrou},  we have
\begin{equation}
 \bE^{n,\sm,n}_{\gamma,\gG_1}[f_2(\phi)]\le |\bar \gamma \setminus \bar \gG_1|4\max_{\{x,y \in \bar \gamma \setminus \bar \gG_1 \ : \  x\sim y\}}\bE^{n,\sm,n}_{\gamma,\gG_1}[ \max(\phi(x),\phi(y))\le 0]
 \le 8 J^{3n}.
\end{equation}
Similarly we obtain also using Proposition \ref{rouxrou}
\begin{multline}\label{diatta}
 \bE^{n,\sm,n}_{\gamma,\gG_1}[f_3(\phi)]\le 18
  |\bar \gamma \setminus \bar\gG_1| \max_{\{x\sim y \sim z\}} \bE^{n,\sm,n}_{\gamma,\gG_1}[ \max(\phi(x),\phi(y),\phi(z))\le 0]\\
  \le 36n J^{4n}\le 36 J^{3n+1}.
 \end{multline}
 The coefficient $18$ corresponds to the number of ways of choosing a connected set of size three which contains a given $x$.
Combining \eqref{lamine}-\eqref{diatta} we conclude that the second inequality in \eqref{lixiout} holds.

\qed

\subsubsection{Upper bounds-Proof of $(i)$ and $(iii)$}

The upper bound is a bit more delicate since the proof relies on some decorrelation property of the measure $\bP^{m,\sm,n}_{\gamma,\gG_1}$.
We are going to use the following technical statement, which ensures that the bounds from 
Proposition \ref{rouxrou} remain valid after conditioning to the realization of field outside a large ball.
For the rest of the proof one sets $r=3n^2\gb^2$.

\begin{lemma}\label{locall}
We have for any $m\ge 1$, for all of $\phi$, and any $x\in\bar \gamma \setminus \bar \gG_1|$, we have 
   \begin{equation}\label{leff}
   \bP^{m,\sm,n}_{\gamma,\gG_1}[\phi(x)\le 0\  | \ \phi(z),\ |z-x|\ge r]\le 2 J^{2m}.
   \end{equation}
   If moreover 
   \begin{equation}\label{badcond}
   \{x,y\}\cap \bigg[\Delta^-_{\gamma}\cup \Big(\bigcup_{\gamma_1\in \gG_1} \Delta^+_{\gamma_1}\Big)\bigg]=\emptyset,
   \end{equation}
   then
   \begin{multline}\label{leffe}
      \bP^{m,\sm,n}_{\gamma,\gG_1}\big[\min(\phi(x),\phi(y))\le 0  \ ; \   \forall w\in \partial \{x,y\}, \ \phi(w)>0   \\
      | \ \phi(z),\ |z-x|\ge r\big]\ge \frac{1}{2} J^{3m}.
\end{multline}
\end{lemma}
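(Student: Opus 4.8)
Both inequalities in Lemma \ref{locall} should follow by reducing the conditioned measure to one of the measures $\bP_{\bL,\gL,\gb}$ covered by Proposition \ref{rouxrou}, and then invoking the bounds already established there. The key structural point is that, for fixed boundary data $\psi := (\phi(z))_{|z-x|\ge r}$, the measure $\bP^{m,\sm,n}_{\gamma,\gG_1}[\ \cdot\ |\ \phi(z),\ |z-x|\ge r]$ restricted to the ball $B:=\{z:|z-x|<r\}$ is itself a SOS-type measure on a simply connected domain with a (non-constant) boundary condition, constrained to use only $n$-small contours. Here the choice $r=3n^2\gb^2$ is made precisely so that $B$ is large enough that no $n$-small contour (diameter $\le (n\gb)^2$) touching $x$ can reach the boundary of $B$; hence the contour-diameter restriction imposes no constraint on the contours of $\phi$ living near $x$, and the conditional law near $x$ is genuinely of the form $\bP_{\bL,B',\gb}$ for an appropriate sub-domain $B'$ and contour set $\bL$. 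This is the same observation used in the proof of Lemma \ref{zilak}.

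\textbf{Proof of \eqref{leff}.} First I would note that $x\in \bar\gamma\setminus\bar\gG_1$ and that, conditionally on $\psi$, the law of $\phi$ on $B$ is obtained from a SOS measure with boundary condition $\psi$ on $\partial B$ (together with the boundary conditions already present, namely $m$ along $\partial(\bar\gamma\setminus\bar\gG_1)$ if those sites are within distance $r$, and the small-contour restriction). By the monotonicity/FKG consequences of Corollary \ref{FKalt}, the event $\{\phi(x)\le 0\}$ is decreasing, so conditioning on the boundary values can only be exploited via a stochastic-domination argument: one wants to show $\bP^{m,\sm,n}_{\gamma,\gG_1}[\phi(x)\le 0\ |\ \psi]$ is maximized (up to the stated constant) when $\psi$ is as low as possible, and then compare with the free measure. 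Concretely, the event $\{\phi(x)\le 0\}$ forces, starting from boundary height $m$, a descent of total variation $\ge m$; translating this into a contour statement, there must be a negative contour (of intensity summing to $m$, or several nested contours) surrounding $x$ and contained in $B$ because of smallness. The partition-function ratio bounding the probability of such a nested contour configuration is exactly of the type estimated in \eqref{zups}: summing $w_\gb(\gamma')$ over contours $\gamma'$ enclosing $x$, weighted by geometric factors for the intensities, gives $\le 2J^{2m}$, uniformly in the (admissible) boundary data $\psi$, because \eqref{usualchoice} holds for $w_\gb$. I would make this precise by writing the conditional measure as $\bP_{\bL,\gL',\gb}$ for a suitable $\gL'\subset B$ and $\bL\subset \cC_{\gL'}$ and applying the first line of \eqref{zups} directly (with $n$ there equal to $m$ here).

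\textbf{Proof of \eqref{leffe}.} Here the event is a lower-bound (``peak'') event, so I would use the second, lower-bound half of Proposition \ref{rouxrou}, namely \eqref{infde}. The extra hypothesis \eqref{badcond} guarantees that neither $x$ nor $y$ lies in $\Delta^-_\gamma$ or in any $\Delta^+_{\gamma_1}$, so the positive contour of length $6$ enclosing $\{x,y\}$ is admissible in the conditional contour set $\bL$ (it is not forbidden by compatibility with $\gamma$ or with $\gG_1$, and it has diameter $O(1)\le (n\gb)^2$ so it is $n$-small). Then the conditional measure, restricted to $B$ with boundary data $\psi$, is again of the form $\bP_{\bL,\gL',\gb}$ with $\bL$ containing that length-$6$ contour; applying \eqref{infde} with $n:=m$ gives $\bP[\phi(x)\ge m\ |\ \psi]\ge \tfrac12 J^{3m}$ for that conditional measure. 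On the event $\{\phi(x)\ge m\}$ produced by this single length-$6$ peak of intensity $\ge m$ one automatically has $\min(\phi(x),\phi(y))\le 0$ — wait: one must be careful here, since the cylinder of a \emph{positive} contour raises heights, not lowers them. What is actually wanted is a negative peak: a nested negative contour of intensity summing to $m$ enclosing $\{x,y\}$, so that $\phi(x),\phi(y)\le 0$ while all neighbors $w\in\partial\{x,y\}$ stay at height $>0$. By the sign symmetry of the contour representation (negative contours of boundary-condition $m\ge 1$ play the role of positive contours after reflection $\phi\mapsto 2m-\phi$, which is why one needs $m\ge 1$ for this to stay above $0$ outside — here one uses $\phi(w)>0$ rather than $\phi(w)\ge m$), the estimate \eqref{infde} transfers to give a lower bound $\ge \tfrac12 J^{3m}$ on the probability of exactly the configuration described, uniformly in $\psi$. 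I would spell out this reflection correspondence and check that the admissibility of the relevant length-$6$ negative contour in $\bL$ is precisely what \eqref{badcond} provides.

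\textbf{Main obstacle.} The routine part is quoting \eqref{zups}/\eqref{infde}; the delicate part is the \emph{uniformity in the conditioning} $\psi$, i.e.\ showing that the conditional measure really is (or is stochastically sandwiched between measures) of the form $\bP_{\bL,\gL',\gb}$ so that Proposition \ref{rouxrou} applies with constants independent of $\psi$ and of $(\gamma,\gG_1)$. This requires carefully separating the contours of $\phi$ that are ``local'' near $x$ (entirely inside $B$, unaffected by $\psi$ and, by the $r=3n^2\gb^2$ choice, unconstrained by the smallness condition) from those anchored at the boundary data; the former are distributed exactly as under an unconditioned $\bP_{\bL,\gL',\gb}$, and the sought events depend only on them. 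Making this splitting rigorous — essentially a spatial Markov / DLR argument at the level of the contour ensemble, as in Section \ref{fffb} — is where the real work lies, but it is conceptually standard given the machinery already set up.
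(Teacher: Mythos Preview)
Your approach is essentially the paper's approach: reduce the conditional measure near $x$ to something of the form $\bP_{\bL,\gL',\gb}$ and then quote Proposition~\ref{rouxrou}. You correctly identify the role of $r=3n^2\gb^2$ (no $n$-small contour meeting $x$ can reach $\partial B(x,r)$) and, in your ``main obstacle'' paragraph, you in fact describe exactly the mechanism the paper uses.

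That said, the paper's implementation is cleaner than the route you outline in the body of your sketch, and it is worth noting the difference. You propose to condition on the field values $\psi=(\phi(z))_{|z-x|\ge r}$ and then argue by FKG/stochastic comparison that the resulting measure on $B$ can be compared with some $\bP_{\bL,\gL',\gb}$. This works but is circuitous: the conditional law given $\psi$ is a SOS measure with a genuinely non-constant boundary condition, so it is \emph{not} literally of the form $\bP_{\bL,\gL',\gb}$, and you have to sandwich. The paper instead conditions on something \emph{strictly finer}: the set of cylinders $\hat\Upsilon_m(\phi)\cap(\hat\cC[r,x])^{\cc}$, i.e.\ all cylinders whose interior does not meet $B(x,r/2)$. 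By smallness of contours this refines the $\sigma$-field generated by $(\phi(z))_{|z-x|\ge r}$, so proving the bounds under this stronger conditioning suffices. And under this stronger conditioning the distribution of the remaining (local) cylinders is \emph{exactly} $\bbP^{w_\gb}_{\bL}$ for an explicit $\bL$, hence the field restricted to $B(x,r/2)$ is exactly a $\bP_{\bL,\gL',\gb}$ measure --- no FKG needed, and Proposition~\ref{rouxrou} applies verbatim (after the vertical shift $\phi\mapsto\phi-m$, using the built-in reflection symmetry of $\bP_{\bL,\gL',\gb}$ to turn downward peaks into the upward peaks of \eqref{zups}--\eqref{infde}). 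Your detour through positive/negative contours and sign symmetry is thus unnecessary: once you are at a $\bP_{\bL,\gL',\gb}$ measure, the symmetry is already there, and the condition \eqref{badcond} is exactly what guarantees that the relevant length-$6$ contour enclosing $\{x,y\}$ belongs to $\bL$, so that \eqref{infde} applies.
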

The condition \eqref{badcond} is present to ensure that both $\phi(x)$ and $\phi(y)$ are allowed to take negative value under
$\bP^{m,\sm,n}_{\gamma,\gG_1}$. The condition is sufficient but not always necessary since only positive contours might forbid to have $\phi(x)\le m$.

Let us now prove $(i)$, which we choose to replace by a slightly more general statement. For $\gb$ sufficiently large one, for every $n\ge 1$,
$m\ge \max(n-1,1)$ and $v\le J^n$ we show that
\begin{equation}\label{lecarlat}
   \log \bE^{m,\sm,n}_{\gamma,\gG_1}\left[ e^{v |\phi^{-1}(\bbZ_-)|} \right] \le |\bar \gamma\setminus \bar \gG_1|3 v J^{2m}.
\end{equation}
Recall that $\delta_x:=\ind_{\{\phi_x\le 0\}}$.
In order to control the effect of correlation we choose to split $\gL$
according to the value of $\lfloor x/r \rfloor$ modulo $2$, by setting for $i \in \lint 1, 4\rint$, $z\in\bbZ^2$
\begin{equation}
\label{ztook}
\cB_i(z):=\{ x\in \bar \gamma \setminus \bar \gG_1 \ : \ \lfloor x_j/r \rfloor=\alpha_j(i)+2z_j,\   j=1,2\ \},
\end{equation}
where $\alpha_j(i)$ is the $j$-th digit in the dyadic development if $i-1$.
We have from H\"older's inequality
\begin{equation}\label{zolder}
\left( \bE^{m,\sm,n}_{\gamma,\gG_1}\left[ e^{v |\phi^{-1}(\bbZ_-)|} \right]\right)^4\le  \sum_{i=1}^4
\bE^{m,\sm,n}_{\gamma,\gG_1}\left[ e^{4v \sum_{z\in \bbZ}\sum_{x\in \cB_i(z)} \delta_x} \right].
\end{equation}
Using the fact that $e^x\le 1+e^K x$ for $x\in[0,K]$ and  Lemma \ref{locall}, we obtain that for each $z\in \bbZ$
\begin{multline}
\bE^{m,\sm,n}_{\gamma,\gG_1}\left[ e^{4v \sum_{x\in \cB_i(z)} \delta_x} \ | \ \phi(y), y\in \bigcup_{z'\ne z}\cB_i(z') \right]\\
\le  1+ 4ve^{4vr^2}\bE^{m,\sm,n}_{\gamma,\gG_1}\left[ \sum_{x\in \cB_i(z)} \delta_x \ | \ \phi(y), y\in \bigcup_{z'\ne z}\cB_i(z') \right]\\
\le 1+ 8ve^{4vr^2} J^{2m}|\cB_i(z)|. 
\end{multline}
Using this inequality iteratively and combining it with \eqref{zolder} we obtain that 
\begin{equation}
    \log \bE^{m,\sm,n}_{\gamma,\gG_1}\left[ e^{v |\phi^{-1}(\bbZ_-)|} \right]\le 2ve^{4vr^2} J^{2m}|\bar \gamma\setminus \bar \gG_1|.
\end{equation}
The inequality \eqref{lecarlat} follows provided $e^{4vr^2}\le 3/2$, which is the case under our assumption provided $\gb$ is sufficiently large.

\medskip

Let us now turn to the more delicate case $(iii)$.
Using Cauchy Schwartz inequality  we have 
\begin{multline}
 \bE^{n-1,\sm}_{\gamma,\gG_1}\left[ e^{u |\phi^{-1}(\bbZ_-)|-\bar H(\phi^{-1}(\bbZ_-))} \right]^2\\
 \le 
 \bE^{n-1,\sm}_{\gamma,\gG_1}\left[ e^{2u |\phi^{-1}(\bbZ_-)|} \right] \bE^{n-1,\sm}_{\gamma,\gG_1}\left[ e^{-2\bar H(\phi^{-1}(\bbZ_-))} \right].
 \end{multline}
 To evaluate the first term we can rely on \eqref{lecarlat} and conclude that 
 \begin{equation}
  \log \bE^{n-1,\sm}_{\gamma,\gG_1}\left[ e^{2u |\phi^{-1}(\bbZ_-)|} \right] \le 6 u J^{2(n-1)}.
 \end{equation}
The final step is to prove that 
\begin{equation}
  \log \bE^{n-1,\sm}_{\gamma,\gG_1}\left[ e^{-\bar H(\phi^{-1}(\bbZ_-))} \right] \le -
  \frac{1}{4} J^{3n}\left(|\bar \gamma\setminus\bar \gG_1|-10\left(|\tilde \gamma|+L(\gG_1)\right)\right).
\end{equation}
As $\bar H$ is positive, this yields the same upper bound for the expectation of $e^{-2\bar H(\phi^{-1}(\bbZ_-))}$.
We set for $\gL\subset\bbZ^2$ finite 
 \begin{equation}
 G^{k}(\gL):= \log  \bE^+_{\gL,\gb} \left[e^{-\bar H(\phi^{-1}[k,\infty))}\right].
 \end{equation}
Conditioning to the level set at level $n-1$ we obtain
\begin{multline}\label{rouge}
   \bE^{n-1,\sm,n}_{\gamma,\gG_1}\left[ e^{-\bar H(\phi^{-1}(\bbZ_-))} \right]
   \\=\bE^{n-1,\sm,n}_{\gamma,\gG_1}\left[ e^{-G^{n-2}(\phi^{-1}(-\infty,n-2])} \right]=
   \bE^{1,\sm,n}_{\gamma,\gG_1}\left[ e^{G^{n-2}(\phi^{-1}(\bbZ_-))} \right].
  \end{multline}
 We refer to \cite[Equation (6.7)]{cf:part1} for details concerning the first equality, the second one is simply using vertical translation invariance.
 Note that as this is the case for $\bar H$ (recall \eqref{maxdecomp}), $G^k(\gL)=\sum_I  G^k(\gL_i)$ where $(\gL_i)_{i\in I}$ is the decomposition
 into maximal connected components of $\gL$. 
 Note that we have, according to \cite[Lemma 6.2]{cf:part1}, for $\gb$ sufficiently large, for any pair of neighbors $\{x,y\}$ in $\bbZ^2$
  \begin{equation}
    G^{n-2}(\{x,y\})=\log \left( 1- \frac{J^3-J^4}{1-J^4}J^{3(n-2)} \right)\le -\frac{1}{2}J^{3n-3}.
  \end{equation}
  Hence setting
\begin{equation}
 \delta_2(x):=\ind\{\phi(x)\le 0 \text{ and the connected component of $x$ in $\phi^{-1}(\bbZ^-)$ has size two}  \},
\end{equation}
and ignoring the contribution to $G^{n-2}$ of connected components of larger size (for singletons, not that $G^{k}\{x\}=0$) we have 
\begin{equation}
   G^{n-2}(\phi^{-1}(\bbZ_-))\le -\frac{1}{8}J^{3n-3} \sum_{x\in \bar \gamma \setminus \bar \gG_1}\delta_2(x).
\end{equation}
As we are going to use Lemma \eqref{locall}, we are going to consider the sum restricted to 
\begin{equation}
 \gL(\gamma,\gL_1):=\left\{ x\in \bar \gamma \setminus \bar \gG_1 \ : \ 
 d\left(x, \Delta^-_{\gamma}\cup \Big(\cup_{\gamma_1\in \gG_1} \Delta^+_{\gamma_1}\Big)\right)\ge 2\right\}.
\end{equation}
Note that 
\begin{equation}
 \gL(\gamma,\gL_1)\ge |\bar \gamma \setminus \bar \gG_1|- 10 \left( |\tilde \gamma| + L(\gG_1) \right).
 \end{equation}
To conclude we thus need to prove that 
\begin{equation}\label{zoyer}
   \log  \bE^{1,\sm,n}_{\gamma,\gG_1}\left[ e^{-\frac{1}{2}J^{3n-3} \sum_{x\in \gL}\delta_2(x)} \right]\le \frac{1}{4}J^{3n}|\gL|.
\end{equation}

Now we choose to proceed as in the proof of \eqref{lecarlat} to deal with the correlation
between the variables $\delta_2(x)$.
We set with the same notation as in \eqref{ztookit} for $i\in \lint 1, 4 \rint$, $z\in \bbZ^2$
\begin{equation}\label{ztookit}
\cB_i(z):=\{ x\in \gL \ : \ \lfloor x_j/r \rfloor=\alpha_j(i)+2z_j,\   j=1,2\ \}.
\end{equation}
Note that from Lemma \ref{locall} (note that to have $\delta_2(x)$ there are four ways to chose a neighbor of $x$ for which the field is negative) we have 
\begin{equation}\label{wiou}
 \bE^{1,\sm,n}_{\gamma,\gG_1}\left[ \sum_{x\in \cB_i(z)} \delta_2(x) \ | \ \phi(y), \ y \in \bigcup_{z'\ne z}\cB_i(z') \right]\ge 2 J^3.
\end{equation}
Like for \eqref{zolder} we have from H\"older's inequalities 
\begin{equation}\label{miax}
  \left( \bE^{1,\sm,n}_{\gamma,\gG_1}\left[ e^{ -\frac{1}{4}J^{3n-3} \sum_{x\in \gL}\delta_2(x)} \right]\right)^4
  \le \prod_{i=1}^4 \bE^{1,\sm,n}_{\gamma,\gG_1}\left[ e^{-J^{3n-3} \sum_{z\in \bbZ}\sum_{x\in \cB_i(z)} \delta_2(x)} \right].
\end{equation}
Using $e^{-x}\le 1-e^{-K}x$ for $x\in[0,K]$ and \eqref{wiou}, we obtain that  for any $z$ and $i$
\begin{multline}
  \bE^{1,\sm,n}_{\gamma,\gG_1}\left[ e^{-J^{3n-3} \sum_{x\in \cB_i(z)} \delta_2(x)} \ | \ \phi(y), \ y \in \bigcup_{z'\ne z}\cB_i(z') \right]\\
  \le 1-e^{-J^{3n-3}r^2}J^{3n-3}\bE^{1,\sm,n}_{\gamma,\gG_1}\left[ \sum_{x\in \cB_i(z)} \delta_2(x) \ | \ \phi(y), \ y \in \bigcup_{z'\ne z}\cB_i(z') \right]
  \\ \le 1- J^{3n}|\cB_i(z)|.
\end{multline}
Using this inequality to evaluate each term factor in the l.h.s of  \eqref{miax} 
\begin{equation}
\prod_{i=1}^4 \bE^{1,\sm,n}_{\gamma,\gG_1}\left[ e^{-J^{3n-3} \sum_{z\in \bbZ}\sum_{x\in \cB_i(z)} \delta_2(x)} \right]\le
\left(1-J^{3n}\right)^{|\gL|}.
\end{equation}
which in view of \eqref{miax} concludes the proof of \eqref{zoyer}. \qed

\begin{proof}[Proof of Lemma \ref{locall}]
We are going to show that the inequalities are valid when a stronger conditioning is considered.
We set 
$$\cC[r,x]:=\left\{ \gamma \in \cC[\gamma,\gG_1,m] \ : \ \bar \gamma\cap B(x,r/2)\ne \emptyset \right\}$$
where $B(z,l)$
the Euclidean ball of center $z$ radius $l$. We let $\hat \cC[r,x]$ be the corresponding set of cylinders.

\medskip

Conditioned to  $\hat\Upsilon_{m}(\phi)\cap (\hat \cC[r,x])^{\cc}$ (the set of cylinders of $\phi$ which do not intersect $B(x,r/2)$), which, due to smallness of contours,
is a stronger conditioning than 
$(\phi(z),\ |z-x|\ge r)$,
the distribution of 
$\phi$ restricted to $B(x,r/2)$
is of the type $\bP_{\bL,\gL,\gb}$ described at the beginning of Section \ref{peaksec}, where 
$$\bL:=\{\gamma_3 \in \cC[r,x] \ : \ \forall \gamma_2\in  \hat\Upsilon_{m}(\phi)\cap (\hat \cC[r,x])^{\cc},\  \gamma_3 \mid \gamma_2  \}.$$
The results follows then by applying Proposition \ref{rouxrou}.

\end{proof}

\section{First consequences of the cluster expansion convergence}\label{dalast}

In this section, we exploit contour stability to obtain 
the convergence of Gibbs measure $\bP^{n,h}_{\gL,\gb}$ for appropriate values of $h$ in Section \ref{moots}.
We also exploit the decay of correlation to prove the regularity of the free energy in Section \ref{laprov}.
Finally we close the proof of Theorem \ref{main} in Section \ref{lapriv} where we prove Proposition \ref{convirj}.

\subsection{Convergence of the Gibbs measure}\label{moots}

In view of the results of Section \ref{clusexp}, the stability of the contour implies the convergence of some measure when the size of the box grows.
We need however some work to convert this result into a convergence result for the SOS measure.
The proof relies on the following observation: the distribution of the set of external contour $\Upsilon^{\ext}_n(\phi)$
under $\bP^{n,h}_{\gL,\gb}$ is the same that the distribution of external contour for the measure $\bbP^{w^{h}_{n}}_{\gL}$
(this is simply a by-product of the proof of Proposition \ref{trax}, recall Equation \ref{tapouz}).
The second observation is that, conditioned to the set of external contours, the distribution of the field inside each contour is independent from the 
rest and with an explicit distribution.

\begin{proposition}\label{conviark}
Let $f$ and $g$ be local functions $\gO_{\infty}\to [0,1]$ with respective supports $A$ and $B$.
 For $n\ge 0$ and $h\in [h^*_{n+1},h^*_n]$, we have for any simply connected $\gL,\gL'\subset \bbZ^2$ which satisfies  $A\subset \gL\cap \gL'$ 
 and  $d(A,\gL)\le d(A,\gL')$, there exist a  positive constant
 $C$ such that for all $\gb$ sufficiently large 
 \begin{equation}\label{clox}
\left| \bP^{n,h}_{\gL,\gb}[f(\phi) ]-\bP^{n,h}_{\gL',\gb}[f(\phi)]\right|\le C|A|e^{\frac{\gb}{100}d( A,\gL)}.
\end{equation}
As a consequence  the sequence of measure $\bP^{n,h}_{\gL,\gb}$ converges when $\gL$ exhausts $\bbZ^2$
to a  limit $\bP^{n,h}_{\gb}$ which satisfies
 \begin{equation}\label{cloxf}
\left| \bE^{n,h}_{\gL,\gb}[f(\phi) ]-\bE^{n,h}_{\gb}[f(\phi)]\right|\le C|A|e^{\frac{\gb}{100}d( A,\gL)}.
\end{equation}
Moreover for we have for all $\gL$  and $h\in [h^*_{n+1},h^*_n]$,
\begin{equation}\label{clax}
\left| \bE^{n,h}_{\gL,\gb}[ f(\phi)g(\phi) ]- \bE^{n,h}_{\gL,\gb}[ f(\phi)]  \bE^{n,h}_{\gL,\gb}[g(\phi) ] \right| \le C|A|e^{\frac{\gb}{100}d( A,B)}.
\end{equation}

\end{proposition}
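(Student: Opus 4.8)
The plan is to reduce everything to the contour-level statements already established in Section \ref{clusexp}, namely Proposition \ref{propinfi} applied to the weight $w^h_n$, which satisfies \eqref{usualchoice} on $[h^*_{n+1},h^*_n]$ thanks to Theorem \ref{converteo}. The key structural fact, as indicated in the statement, is the two-step sampling description of $\bP^{n,h}_{\gL,\gb}$: first the set of external contours $\Upsilon^{\ext}_n(\phi)$ has the law of the external contours of $\bbP^{w^h_n}_{\gL}$ (this is exactly \eqref{tapouz}), and second, conditionally on that set, the restrictions $\phi\restrict_{\bar\gamma}$ for each external $\gamma$ are mutually independent, each distributed according to the measure on $\gO^+[\gamma,n+\gep(\gamma)]$ with Gibbs weight $e^{-\gb\cH^n_{\bar\gamma}(\cdot)+h|\cdot^{-1}\{0\}|}$, while on the complement of $\bigcup\bar\gamma$ the field is $\equiv n$. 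The first thing I would do is write this decomposition carefully and observe that, crucially, the \emph{conditional} law inside a contour $\gamma$ depends only on $\gamma$ itself (and $n$, $h$), not on $\gL$ nor on the other contours.

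With this in hand, for a local function $f$ with support $A$, I would express $\bP^{n,h}_{\gL,\gb}[f(\phi)]$ as an expectation over the external-contour configuration of a bounded function $\hat f$ that depends only on the contours of $\Upsilon^{\ext}_n(\phi)$ meeting $A$ together with (independent, contour-determined) internal randomness. The point is that $\hat f$, after integrating out the internal field, becomes a bounded function of the \emph{contour} configuration that is local in the sense of Section \ref{fffb} — it depends only on $\Upsilon^{\ext}_n(\phi)\cap \cC'_{A}$ — except for one subtlety: to read off $\phi$ on $A$ one needs to know all contours surrounding points of $A$, i.e. one must iterate the decomposition inside large contours as well. I would handle this by noting that either no external contour surrounds $A$ (then $\phi\equiv n$ on $A$, deterministically), or there is a smallest such external contour $\gamma\supset A$; conditionally on $\gamma$, one recurses inside $\bar\gamma$ with boundary height $n+\gep(\gamma)$, and the recursion terminates because each step strictly shrinks the enclosing contour. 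Integrating out at each level produces a function of the contour configuration depending only on contours meeting $A$, to which Proposition \ref{propinfi} applies verbatim: \eqref{convergenz} gives \eqref{clox} and \eqref{cloxf} with the constant $\gb/100$, and \eqref{decayze} gives \eqref{clax}. The extra factor $C$ and the $|A|$ come directly from the right-hand sides of \eqref{convergenz}–\eqref{decayze}, noting $\Supp(f)=A$; one must also replace $d(A,\bL^\cc)$ by $d(A,\gL)$ using that $\cC'_A$-avoiding contours in $\cC_{\gL}$ stay at distance $\geq d(A,\gL)$ from $A$ after the obvious comparison of the two notions of distance, and use simple connectivity of $\gL$ so that the contour representation \eqref{reprezent} is valid (Proposition \ref{trax}). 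The existence of the limit $\bP^{n,h}_{\gb}$ and the decay estimate \eqref{cloxf} then follow by a Cauchy argument along any exhausting sequence, exactly as in Proposition \ref{propinfi}, together with Kolmogorov extension applied to the family of finite-dimensional marginals.

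The main obstacle I expect is the bookkeeping in the recursion: making precise that ``integrating out the internal fields level by level'' does turn $f$ into a genuinely local function of the external-contour process, and controlling that the conditional measures inside contours are themselves of the form covered by the cluster expansion (they are finite-volume SOS-type measures with a modified weight, but the relevant statement is only that the \emph{contour law} is $\bbP^{w^h_n}$-type, which is already in the proof of Proposition \ref{trax}). A second, more technical point is that $f$ takes values in $[0,1]$ so all the integrated functions remain in $[0,1]$, which is what Proposition \ref{propinfi} requires; this is automatic but should be stated. Once the reduction is set up, no new estimates are needed — everything is a consequence of Theorem \ref{converteo} plus Proposition \ref{propinfi}.
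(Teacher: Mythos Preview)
Your approach is correct and essentially the same as the paper's: reduce to the law of $\Upsilon^{\ext}_n(\phi)\cap\cC'_A$, identify it with the external contours of $\bbP^{w^h_n}_{\gL}$ restricted to $\cC'_A$, and invoke Proposition~\ref{propinfi} via Theorem~\ref{converteo}. One simplification: the ``subtlety'' you raise and the recursion you propose are unnecessary. You already observed that the conditional law of $\phi\restrict_{\bar\gamma}$ given $\gamma\in\Upsilon^{\ext}_n(\phi)$ depends only on $\gamma$ (it is $\bP^{n,u}_{\gamma}$, not on $\gL$); hence the distribution of $\phi\restrict_A$ is a deterministic function of $\Upsilon^{\ext}_n(\phi)\cap\cC'_A$ alone, with no need to unfold further levels. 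Since $\Upsilon^{\ext}\cap\cC'_A$ is in turn a measurable function of $\Upsilon\cap\cC'_A$ (any contour enclosing a contour that meets $A$ also meets $A$), Proposition~\ref{propinfi} applies directly. The paper proceeds exactly this way, and for \eqref{clax} it additionally splits off the event that some external contour meets both $A$ and $B$ before applying \eqref{decayze}; your direct application of \eqref{decayze} to the reduced functions $\tilde f,\tilde g$ is equivalent.
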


Note that a consequence of \eqref{cloxf} is that the measure $\bE^{n,h}_{\gb}$ is translation invariant.
Note that  \eqref{clox} ensures that the convergence of $\bP^{n,h}_{\gL,\gb}[f(\phi)]$ (which is a sequence of continuous functions in $h$)
is uniform in $h\in [h^*_{n+1},h^*_n]$ and thus it implies that $\bP^{n,h}_{\gb}[f(\phi)]$ is continuous on that interval.

\begin{proof}
Similarly to Proposition \ref{propinfi}, the proof of \eqref{cloxf} amounts to evaluating total variation distances for the distribution of $\phi\restrict_A$.
We are going to show that in fact the result can be deduced as a consequence of Proposition \ref{propinfi} and Theorem \ref{converteo}.
\medskip

Firstly, we notice that conditionally on the set of external contour $\Upsilon^{\ext}_n(\phi)$, 
the conditional distribution of the field in $\bar \gamma$ for $\gamma \in \Upsilon^{\ext}_n(\phi)$ is independent of that in $\gL\setminus \bar \gamma$
and given by $\bP_{\gamma}^{n,h}$ (recall \eqref{defpnu}).
Hence the distribution of $\phi\restrict_{A}$ is completely determined by the set of external contours which intersects $A$ (recall the definition of $\cC'_A$ below 
\eqref{locallyfinite}). For this reason one has 
\begin{multline}\label{dafa}
\| \bP^{n,h}_{\gL,\gb}(\phi\restrict_A\in \cdot )- \bP^{n,h}_{\gL',\gb}(\phi\restrict_A\in \cdot)\|_{TV}\\\le 
\| \bP^{n,h}_{\gL,\gb}(\Upsilon^{\ext}_n(\phi) \cap \cC'_{A}\in \cdot )- \bP^{n,h}_{\gL',\gb}(\Upsilon^{\ext}_n(\phi) \cap \cC'_{A}\in \cdot)\|_{TV}\\
\le \bbP^{w^{h}_{n}}_{\gL}(\Upsilon \cap \cC'_{A}\in \cdot )-  \bbP^{w^{h}_{n}}_{\gL}(\Upsilon \cap \cC'_{A}\in \cdot)\|_{TV}.
\end{multline}
The second inequality is due to the fact that the $\Upsilon^{\ext}_n$ is the same as that of $\Upsilon^{\ext}$ 
the set of external contours associated with $\Upsilon$ under $\bbP^{w^{h}_{n}}_{\gL}$.
We can conclude using Proposition \ref{propinfi} and Theorem \ref{converteo}.
For the proof of \eqref{clax}, we condition the expectation according to the realization of the set of external contour.
Let us start by observing that when $\Upsilon^{\ext}_n(\phi)\cap \cC'_A \cap \cC'_B=\emptyset$, we have
\begin{equation}\label{troum}
 \bE^{n,h}_{\gL,\gb}[f(\phi)g(\phi) \ | \ \Upsilon^{\ext}_n(\phi)]= \bE^{n,h}_{\gL,\gb}[f(\phi) \, | \, \Upsilon^{\ext}_n(\phi)]  \bE^{n,h}_{\gL,\gb}[g(\phi) \, | \, 
 \Upsilon^{\ext}_n(\phi)].
\end{equation}
Now applying Proposition \ref{propinfi} for the distribution of  $\Upsilon^{\ext}_n(\phi)$ (which is distributed like the set of external contour under 
$\bbP^{w^{h}_{n}}_{\gL}$) and for the functions $\tilde f$ and $\tilde g$ which are the conditional expectation of $f$ and $g$ given the external set of contour,
we obtain that 
\begin{multline}
 \left| \bE^{n,h}_{\gL,\gb}\left[ \bE^{n,h}_{\gL,\gb}[f(\phi) \, | \, \Upsilon^{\ext}_n(\phi)]  \bE^{n,h}_{\gL,\gb}[g(\phi) \ | \ 
 \Upsilon^{\ext}_n(\phi)]\right] - \bE^{n,h}_{\gL,\gb}[f(\phi)]  \bE^{n,h}_{\gL,\gb}[g(\phi)]   \right|
 \\ \le C|A|e^{-(\gb/100) d(A,B)}.
\end{multline}
And we conclude by noticing that from  \eqref{troum} we have 

\begin{multline}
 \left| \bE^{n,h}_{\gL,\gb}[f(\phi)g(\phi)]- \bE^{n,h}_{\gL,\gb}\left[ \bE^{n,h}_{\gL,\gb}[f(\phi) \, | \, \Upsilon^{\ext}_n(\phi)]  \bE^{n,h}_{\gL,\gb}[g(\phi) \ | \ 
 \Upsilon^{\ext}_n(\phi)]\right]\right|\\ \le 
 \bP^{n,h}_{\gL,\gb}[\Upsilon^{\ext}_n(\phi)\cap \cC'_A \cap \cC'_B=\emptyset]\le |A|e^{-(\gb/2) d(A,B)}.
\end{multline}
where in the last inequality we used \eqref{pignouf}.

 \end{proof}

\subsection{Exponential decay for Ursell function and proof of Proposition \ref{infinitz}}\label{laprov}

The fact that exponential decay of correlation implies differentiability relies 
on a well established theory exposed e.g.\ in \cite{cf:vDK} in the case of the Ising model.
The argument displayed in \cite{cf:vDK} adapts verbatim to our problem.
For the sake of completeness, we provide here the main steps.

\medskip

In order to prove Proposition \ref{infinitz} we are going to show that 
for every $h\in [h^*_{n+1},h^*_n]$, $\gL\subset \bbZ^2$, $k\ge 1$, there exists a constant $C_k$ such that 
\begin{equation}\label{bootk}
 \left|\partial^k_h\left( \log \cZ^{n,h}_{\gL,\gb}\right)\right|\le C_k |\gL| 
\end{equation}
Then using Arzela-Ascoli's Theorem,
we can deduce from \eqref{bootk} that $\tf(\gb,h)$ is infinitely differentiable on  $(h^*_{n+1},h^*_n)$ and that 
\begin{equation}\label{koomits}
\lim_{N\to \infty}\frac{1}{N^2}\partial^k_h\left(\log  \cZ^{n,h}_{N,\gb}\right)= \partial^k_h \tf(\gb,h).
\end{equation}
To prove \eqref{bootk}, setting $\delta_x:=\ind_{\{\phi(x)=0\}}$, we use the fact that
\begin{equation}\label{mood}
\partial^h_h\left(\log \cZ^{n,h}_{\gL,\gb}\right)= \sum_{(x_1,\dots,x_k)\in \gL^k} \bE^{n,h}_{\gL,\gb}\left[ \delta_{x_1} \ ; \ \dots  \ ; \ \delta_{x_k} \right]
\end{equation}
where  $\bE^{n,h}_{\gL,\gb}\left[ \delta_{x_1} \ ; \ \dots  \ ; \ \delta_{x_k} \right]$ denotes the $k$ point truncated correlation function (or Ursell function)
defined by
\begin{equation}
\bE^{n,h}_{\gL,\gb}\left[ \delta_{x_1} \ ; \ \dots  \ ; \ \delta_{x_k} \right]:=
\sum_{\cP} (-1)^{|\cP|+1}(|\cP|+1)! \prod_{P\in \cP}\bE^{n,h}_{\gL,\gb}\left[\prod_{i \in P} \delta_{x_i}\right],
\end{equation}
where the sum in $\cP$ ranges over the set of partitions of $\lint 0,k \rint$, and $|\cP|$ denotes the cardinal of the partition.

The identity \eqref{mood} can be obtained by induction on $k$ and is a particular case of \cite[Equation (1.9)]{cf:vDK}.
We can conclude by using some decay estimates for these correlation functions.

\begin{proposition}\label{mickey}
For any $k$ there exists a positive constant $m_k$ such that for every $h\in (h^*_{n+1},h^*_n)$  and $x_1,\dots,x_k\in \gL$
 \begin{equation}
 \bE^{n,h}_{\gL,\gb}\left[ \delta_{x_1} \ ; \ \dots  \ ; \ \delta_{x_k} \right]\le C_k e^{-m_k d(x_1,\dots,x_k)}
 \end{equation}
 where $d(x_1,\dots,x_k)$ is the smallest  cardinal of a connected $\bbZ^2$ set containing $\{x_1,\dots,x_k\}$.
\end{proposition}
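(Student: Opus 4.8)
The plan is to derive Proposition \ref{mickey} from the exponential decay of correlations established in Proposition \ref{conviark} (Equation \eqref{clax}) by induction on $k$, following the standard argument for Ising-type models as in \cite{cf:vDK}. The key observation is that the truncated correlation function $\bE^{n,h}_{\gL,\gb}[\delta_{x_1};\dots;\delta_{x_k}]$ can be expressed, via a suitable telescoping/tree-graph identity, in terms of products of lower-order truncated correlations and at least one two-point truncated correlation linking distant clusters of the points.

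\textbf{Step 1: Base case.} For $k=1$ the statement is trivial (the Ursell function is just a probability, bounded by $1$, and $d(x_1)$ is bounded). For $k=2$, $\bE^{n,h}_{\gL,\gb}[\delta_{x_1};\delta_{x_2}] = \bE^{n,h}_{\gL,\gb}[\delta_{x_1}\delta_{x_2}] - \bE^{n,h}_{\gL,\gb}[\delta_{x_1}]\bE^{n,h}_{\gL,\gb}[\delta_{x_2}]$, which is exactly controlled by \eqref{clax} applied to $f=\delta_{x_1}$, $g=\delta_{x_2}$, giving the bound $Ce^{-(\gb/100)|x_1-x_2|}$; since $d(x_1,x_2) \le |x_1-x_2|+1$ (up to a constant factor, as $d$ is the minimal cardinality of a connected set containing both points), this yields the desired $C_2 e^{-m_2 d(x_1,x_2)}$.

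\textbf{Step 2: Inductive step.} Assume the bound holds for all orders $< k$. Split the set $\{x_1,\dots,x_k\}$: if all the points lie within a ball of small radius (comparable to $d(x_1,\dots,x_k)$) there is nothing to prove, since the exponential factor is bounded below by a constant and the Ursell function is bounded by a combinatorial constant (it is a finite signed sum of products of probabilities, each $\le 1$). Otherwise, one can partition the index set into two nonempty parts $P_1 \sqcup P_2$ such that the distance between $\{x_i : i \in P_1\}$ and $\{x_i : i \in P_2\}$ is at least a fixed fraction of $d(x_1,\dots,x_k)$. Using the standard algebraic identity for Ursell functions (a consequence of the moment-cumulant relations, see \cite[Section 2]{cf:vDK}), $\bE^{n,h}_{\gL,\gb}[\delta_{x_1};\dots;\delta_{x_k}]$ can be written as a sum of terms, each of which is a product of a truncated correlation involving a subset of $P_1$, a truncated correlation involving a subset of $P_2$, and crucially a connecting factor that is itself a truncated correlation spanning both parts. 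That connecting factor, after using \eqref{clax} together with the induction hypothesis (to absorb the conditional or reduced correlations into the already-controlled lower-order estimates), decays exponentially in the separation between $P_1$ and $P_2$, hence in $d(x_1,\dots,x_k)$ up to a universal multiplicative constant. Collecting terms and using that the number of such terms depends only on $k$, one obtains the claimed bound with some constant $C_k$ and rate $m_k$ (which degrades with $k$, reflecting the iterated use of \eqref{clax}).

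\textbf{Step 3: Conclusion of Proposition \ref{infinitz}.} Once Proposition \ref{mickey} is in hand, summing the bound over all $(x_1,\dots,x_k) \in \gL^k$ in \eqref{mood}: fixing $x_1$, the sum over $(x_2,\dots,x_k)$ of $e^{-m_k d(x_1,\dots,x_k)}$ is bounded by a constant depending only on $k$ (standard lattice animal / Peierls-type counting: the number of connected sets of size $\ell$ containing a fixed vertex grows at most exponentially in $\ell$, which is beaten by $e^{-m_k \ell}$), so the total sum is $O(|\gL|)$, giving \eqref{bootk}. Then Arzelà--Ascoli applied to the uniformly bounded derivatives of $\frac{1}{|\gL|}\log \cZ^{n,h}_{\gL,\gb}$ yields that $\tf(\gb,h)$ is $C^\infty$ on $(h^*_{n+1},h^*_n)$ with uniformly bounded derivatives and \eqref{koomits} holds, which after the change of variables $u = h - h_w(\gb)$ is precisely Proposition \ref{infinitz}.

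\textbf{Main obstacle.} The delicate point is Step 2: making the splitting argument and the Ursell identity fully rigorous, in particular verifying that the ``connecting factor'' can genuinely be estimated by \eqref{clax} after conditioning, and tracking how the decay rate $m_k$ and constant $C_k$ degrade. This is where one must invoke the machinery of \cite{cf:vDK} essentially verbatim; the only thing to check is that the single input needed there — exponential decay of the two-point (and, inductively, truncated) correlation functions uniformly in $\gL$ and in $h$ on the closed interval $[h^*_{n+1},h^*_n]$ — is supplied by Proposition \ref{conviark}, which it is. The uniformity in $h$ (including the endpoints) is what guarantees the constants $C_k, m_k$ do not blow up as $h$ approaches a layering point from inside the interval.
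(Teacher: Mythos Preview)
Your proposal is correct and follows essentially the same approach as the paper: the paper's proof consists of a single sentence stating that one follows line by line the proof of \cite[Lemma 3.1]{cf:vDK}, replacing their decay assumption by \eqref{clax}. Your Step 3 also accurately reproduces the surrounding argument for Proposition~\ref{infinitz} as presented in the paper.
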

To deduce that \eqref{bootk} holds, one only needs to check that we have
\begin{equation}
\sum_{x_2,\dots,x_k\in \bbZ^d} e^{-m_k d(x_1,\dots,x_k)}<\infty. 
\end{equation}
This is an obvious consequence of
\begin{equation}
d(x_1,\dots,x_k)\ge \max_{i\in \lint 2,k\rint} |x_1-x_i|\ge \frac{1}{k-1} \sum_{i=2}^k |x_1-x_i|.
\end{equation}
\begin{proof}[Proof of Proposition \ref{mickey}]
 We can follow line by line the proof of \cite[Lemma 3.1]{cf:vDK} where  we replace \cite[Assumption (3.2)]{cf:vDK} by \eqref{clax}.
\end{proof}

\subsection{Proof of Proposition \ref{convirj}}\label{lapriv}

Recalling Equation \eqref{koomits}-\eqref{mood} for $k=1$, we only need to prove that 
for $h\in [h^*_{n+1},h^*_n]$ we have 
\begin{equation}
\frac{1}{2} J^{2n}\le \frac{1}{N}\bE^{n,h}_{N,\gb}\left[ \sum_{x\in \lint 1, N\rint}\delta_x\right]\le 2J^{2n}.
\end{equation}
We can in fact prove the inequality holds for $\bP^{n,h}_{\gL,\gb}\left[ \phi(x)=0 \right]$, uniformly in $\gL$ and $x$.
Let us start with the special case $n=0$, for which the upper bound is trivial.
For the lower bound we apply \eqref{godstim} to $w^h_n$ and $A=\{x\}$,
we obtain that 
\begin{multline}
 \bP^{n,h}_{\gL,\gb}[\phi(x)=0]\ge \bP^{n,h}_{\gL,\gb}[ \text{ No contour in } \Upsilon_n(\phi) \text{ encloses } x]
 =\bbP^{w^h_n}_{\gL}\left[ \Upsilon \cap \cC'_{\{x\}}=\emptyset \right]\\
 =\exp\left(-\sum_{\bC\in \cQ(\gL,x,\emptyset)} (w^h_n)^T(\bC)\right),
\end{multline}
where 
\begin{equation}
 \cQ(\gL,\{x\},\emptyset):=\{ \bC \in \cQ(\cC_{\gL}) \ : \ \exists \gamma \in \bC, x\in \bar \gamma\}.
 \end{equation}
The estimate \eqref{dabound2} is then sufficient to conclude that 
 \begin{equation}
\sum_{\bC\in \cQ(\gL,x,\emptyset)} |(w^h_n)^T(\bC)|\le 1/2,
\end{equation}
which is sufficient for our purpose.

\medskip

When $n\ge 1$, we let  $\bg_x$ be the unique external contour enclosing $x$ whenever it exists.
As a vertex not enclose in any contour is by definition at level $n$ we have  
\begin{multline}
 \bE^{n,u}_{N,\gb}[ \phi(x)=0]= \sum_{\gamma\in \cC_N : x\in \bar \gamma} \bP^{n,u}_{N,\gb}\left[ \phi(x)=0\, ; \, \bg_x=\gamma \right]\\
 =\sum_{\gamma\in \cC_N : x\in \bar \gamma} \bP^{n,u}_{N,\gb}\left[ \bg_x=\gamma \right]\bP^{n,u}_{\gamma}\left[ \phi(x)=0\right],
\end{multline}
where in the last equality, we used that the conditional distribution of $\phi$ restricted to $\bar \gamma$ is given by
$\bP^{n,u}_{\gamma}$ defined in \eqref{defpnu}. We are going to show that most of the contribution to the sum 
is given by he negative contour $\gamma^{x,-}$ which satisfies $\bar \gamma^{x,-}=\{x\}$ (there are two contours of length $4$ such that $x\in \bar \gamma$ but 
$\gamma^{x,-}$ is the only one which contributes to the sum).
More precisely we are going to show that for $\gb$ sufficiently large and $u$ satisfying the assumption we have
\begin{equation}\label{precizion}
 \left|  \bP^{n,u}_{N,\gb}\left[ \phi(x)=0\, ; \, \bg_x=\gamma^{x,-} \right]- e^{-4\gb n} \right|\le  C e^{-4\gb n}\left( u+ e^{-4\gb}\right). 
\end{equation}
and 
\begin{equation}\label{pro}
\sum_{\gamma\in \cC_N : x\in \bar \gamma,\ |\tilde \gamma|\ge 6} \bP^{n,u}_{N,\gb}\left[ \phi(x)=0\, ; \, \bg_x=\gamma \right] \le C e^{-(4n+2)\gb},
\end{equation}
which yields a sharper result than required.
For \eqref{precizion}
using \eqref{godstim} for $A=\{x\}$ again, we obtain that
\begin{multline}
 \bP^{n,u}_{N,\gb}\left[ \bg_x=\gamma^{x,-}\right]= \bP^{w^h_n}_{\gL}[\Upsilon \cap \cC'_{\{x\}}=\gamma^{x,-}]\\
 =w^h_n(\gamma^{x,-})\exp\left(-\sum_{\bC\in \cQ(\gL,x,\{\gamma^{x,-}\})} (w^h_n)^T(\bC)\right),
\end{multline}
where 
\begin{equation}
 \cQ(\gL,\{x\},\{\gamma^{x,-}\}):=\{ \bC \in \cQ(\cC_{\gL}\setminus \gamma^{x,-}) \ : \ \exists \gamma \in \bC,\ x\in \bar \gamma \text{ or } \bC \perp \gamma^{x,-} \}.
 \end{equation}
We have 
\begin{equation}
w^h_n(\gamma^{x,-})=e^{-4\gb}\frac{1+(e^{u-1})e^{-4\gb n}}{1+(e^{u-1})e^{-4\gb{n-1}}},
\end{equation}
and as a consequence of \eqref{dabound2},
\begin{equation}
\left| \sum_{\bC\in \cQ(\gL,x,\{\gamma^{x,-}\})} (w^h_n)^T(\bC) \right| \le C e^{-4\gb},
\end{equation}
which is sufficient to yield \eqref{precizion}.

%
  \medskip
  
  For \eqref{pro}
  first, let us notice that we can discard the contribution of contours longer than $100 n$, because using a union bound and \eqref{pignouf} we have  
  \begin{equation}
 \bP^{n,u}_{N,\gb}\left[ \Diam(\bg_x)\ge 100 n \right]\le e^{-(4n+2)\gb}.
\end{equation}
For smaller contours,
we need an estimate $\bP^{n,u}_{\gamma}\left[ \phi(x)=0\right]$.
Recalling the definition of $\bP^{n}_{\gamma}$ below \eqref{tambem} we have
\begin{multline}\label{zipa}
\bP^{n,u}_{\gamma}\left[ \phi(x)=0\right]= 
 \frac{\bE^{n+\gep(\gamma)}_{\gamma}\left[\delta_x e^{u|\phi^{-1}(0)|-H(\phi^{-1}(\bbZ_-))} \right]}{\bE^{n+\gep(\gamma)}_{\gamma}\left[e^{u|\phi^{-1}(0)|-H(\phi^{-1}(\bbZ_-))} \right]}
 \\
 \le  \frac{e^{|\bar \gamma| u}\bE^{n+\gep(\gamma)}_{\gamma}\left[\delta_x \right]}{\bP^{n+\gep(\gamma)}_{\gamma}[\forall x\in \bar \gamma, \phi(x)\ge 1]},
 \le 4e^{- 4\gb(n+\gep(\gamma))}.
 \end{multline}
 where in the last inequality, we used Proposition \ref{rouxrou} to estimate both the numerator and the denominator.
 Now using \eqref{pignouf}, one has 
\begin{equation}\label{zipi}
 \bP^{n,u}_{N,\gb}\left[ \bg_x=\gamma \right]\le w^h_n(\gamma) \le e^{-(\gb-1)|\tilde \gamma|} , 
\end{equation}
where the last inequality is a consequence of Proposition \ref{lesmall}.
Combining \eqref{zipa} and \eqref{zipi} as well a standard bound for the number of contour of a given length enclosing $x$ we find that
\begin{multline}
\sum_{\gamma\in \cC_N : \ x\in \bar \gamma,\ |\tilde \gamma|\ge 6,\ \Diam(\gamma)\le 100 n}
\bP^{n,u}_{N,\gb}\left[ \bg_x=\gamma \right]\bP^{n,u}_{\gamma}\left[ \phi(x)=0\right]\\\le  4e^{- 4\gb(n-1)}
\sum_{\gamma\in \cC_N : x\in \bar \gamma,\ |\tilde \gamma|\ge 6}e^{-(\gb-1)|\tilde \gamma|}\le C e^{- 4\gb(n+2)},
\end{multline}
which is sufficient to conclude the proof of \eqref{pro}.
\qed

\section{Properties of Gibbs measure: the proof for Theorem \ref{Gibbs}}\label{daverylast}

In this final section, we prove the remaining unproved statements from Theorem \ref{Gibbs}.
First in Section \ref{start}, we prove our statement concerning $\star$-connectivity of the level sets.
Then  in Section \ref{noon} we prove that there exists  no Gibbs states for $h\le h_w(\gb)$.
In Section \ref{cf}, we identify the contact fraction for each Gibbs states which has been obtained in Proposition \ref{conviark}. 
This yields in particular \eqref{variousgibbs}.
In Section \ref{zerob} we identify the minimal Gibbs states, which is the one obtained by taking the limit of zero boundary condition.
In Section \ref{unik} we prove uniqueness of Gibbs states at differentiability points, and  in Section \ref{sanduba}, we prove that at 
  $\bP^{n,h^*_n}_{\gb}$ and   $\bP^{n-1,h^*_n}_{\gb}$ are respectively the maximal and minimal Gibbs states corresponding to $h^*_n$ proving \eqref{ssanduba}.

\subsection{Percolative properties of level sets}\label{start}

Let us check that for all $h\in [h^*_n,h^*_{n+1}]$, the random field $\phi$ percolates at level $n$ under  $\bP^{n,h}_{\gb}$.
The external contour lines of $\phi$ under  $\bP^{n,h}_{\gb}$ can be obtained by considering the set of external contour of a sample of 
$\bbP^{\go^h_n}$. In particular, this implies that almost surely there are no infinite contour lines.

\medskip

As each maximal connected components of $\phi^{-1}[n+1,\infty)$ resp. $\phi^{-1}(-\infty,n-1]$ is enclosed in a positive contour resp. negative contour,
this implies that they are all finite. We can even prove using a union bound argument and Theorem  \ref{converteo} that the diameter of the 
largest such component in a box of side-length $N$ is of order $\log N$).

\medskip

Proving the existence of an infinite component for $\phi^{-1}(n)$ is more tricky as some points which are not enclosed by any contour can belong to finite clusters of 
$\phi^{-1}(n)$. Now our result will hold if we can prove that 
\begin{equation}\label{thatsallfolks}
\bbR^2\setminus \union_{\gamma\in \Upsilon^{\ext}_n(\phi)} \bar \gamma \quad  \text{ has a unique unbounded connected component },
\end{equation}
where in the equation above, with a small abuse of notation,  $\bar \gamma$ denotes the closed subset of $\bbR^2$ enclosed by $\gamma$.

\medskip

Our idea is to compare the set $\union_{\gamma\in \Upsilon^{\ext}_n(\phi)} \bar \gamma$ with the occupied set of a Poisson Boolean percolation process \cite{cf:ATT, cf:Hall, cf:G}. 
We know  that the set of external contour under $\bP^{n,h}_{\gb}$ can be obtained as a subset of a sample of $\bbP^{\go^h_n}$
which itself is dominated (e.g. by \cite[Lemma 4.4]{cf:part1}) by 
by a random collection of contours $\chi$ where each contour $\gamma$ is  present independently with probability 
$\frac{\go^h_n(\gamma)}{1+\go^h_n(\gamma)}\le e^{(1-\gb) |\tilde \gamma|}$ (the inequality being a consequence of Theorem \ref{converteo}).

\medskip

We let the reader check that for $\gb$ sufficiently small   $\union_{\gamma\in \chi} \bar \gamma$ is stochastically dominated by a continuum percolation process,
where obstacles are balls whose centers are distributed according to a Poisson point process with intensity $\gl(\gb)=e^{-\gb}$ and 
whose radius are IID with standard exponential distribution (see e.g. \cite{cf:ATT, cf:Hall, cf:G} for a more formal definition).

\medskip

It has been proved that for $\gl$ sufficiently small the vacant set for such a Boolean percolation process percolates, and  that the occupied set is
only composed of bounded connected components (a much stronger result is displayed  in \cite[Theorem 1]{cf:ATT} with optimal assumptions,
but the statement we need can also be extracted from earlier work
e.g. \cite{cf:G, cf:Hall}) . This proves \eqref{thatsallfolks} and conclude our reasoning. 

\qed

\subsection{Absence of Gibbs state for $h\le h_w(\gb)$ }\label{noon}

As by the DLR relation a Gibbs states can always be obtained as a limit of finite volume measures with random boundary condition,
we know that the limit obtained with zero boundary condition, is, if finite, the minimal Gibbs state.
We are going to prove the following result which implies divergence of the distribution of $\phi$.

\begin{proposition}
For $\gb$ sufficiently large, for $h\le h_w(\gb)$ we have for any $x\in \bbZ^d$ and any $K>0$

\begin{equation}
\lim_{\gL\to \bbZ^2} \bP^{h}_{\gL,\gb}(\phi(x)\le K)=0.
\end{equation}
\end{proposition}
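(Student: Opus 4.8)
The plan is to prove the claim by using the contour representation \eqref{reprezent} together with a large-deviation/entropy-versus-energy estimate which shows that, for $h \le h_w(\gb)$, the $n=0$ boundary-condition partition function gains more by pushing $\phi$ up to a high level than it loses by destroying level-zero contacts. Concretely, fix $x$ and $K$; I would show that for every fixed height $m$, the measure $\bP^h_{\gL,\gb}$ assigns vanishing probability to the event $\{\phi(x) \le m\}$ as $\gL \to \bbZ^2$, from which the statement follows by taking $m = K$.

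First I would reduce to a comparison of partition functions. By the FKG/monotonicity machinery of Corollary \ref{FKalt}, raising the boundary condition raises $\phi$ stochastically, and raising $h$ lowers $\phi$ stochastically, so it suffices to treat the zero boundary condition and $h = h_w(\gb)$ exactly. The key identity is that the free energy at $h = h_w(\gb)$ equals $\tf(\gb)$ (the free energy with no wall at all), by definition of $h_w$ and Theorem \ref{oldmain}. So one compares $\cZ^{0,h_w}_{\gL,\gb}$ with the restricted sum over configurations which are ``flat at level $M$ near $x$'': for a large intermediate scale $M = M(\gL)\to\infty$ much smaller than the diameter of $\gL$, one builds a reference configuration which sits at height $M$ in a sub-box $\gL'$ around $x$ and interpolates down to $0$ in a corridor of width $\sim M$ near $\partial\gL$, paying a surface cost of order $\gb M |\partial \gL'| \sim \gb M \cdot M = \gb M^2$ for the interpolation plus a bulk cost that, crucially, cancels against $\tf(\gb)$ because the bulk of $\gL'$ is an ordinary SOS system with no contact reward. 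The point is that the ``price'' of lifting to level $M$ is only polynomial in $M$ (it lives on the corridor), while the configurations with $\phi(x)\le K$ are penalized: to have a contact-rich ground state at level $\le K$ you must either forgo the contact reward everywhere away from a bounded neighborhood, or pay to come back down — and since $h \le h_w(\gb)$, the contacts at level $0$ provide, by the very definition of $h_w$, no net free-energy advantage over the wall-free system.

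More precisely, the step I would carry out is: write $\bP^{h_w}_{\gL,\gb}(\phi(x)\le K) = \cZ^{0,h_w}_{\gL,\gb}[\phi(x)\le K] / \cZ^{0,h_w}_{\gL,\gb}$; bound the denominator below by the contribution of the reference ``lifted'' configurations, getting $\log \cZ^{0,h_w}_{\gL,\gb} \ge |\gL|\tf(\gb) - C\gb M^2$ for a suitable corridor of scale $M$ (using $\cZ^+_{\gG} \le \cZ_{\gG,\gb}$ from Lemma \ref{zoomats} and the free-energy existence result quoted in the introduction of \cite{cf:part1}); and bound the numerator above by $e^{|\gL|\tf(\gb,h_w)} \times (\text{something} \to 0)$. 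Since $\tf(\gb,h_w) = \tf(\gb)$, the two leading terms cancel and what survives in the numerator is the cost of the contour that must surround $x$ and separate $\{\phi \le K\}$ from the lifted bulk: such a contour has length $\ell$ and, because the preceding layering analysis (Theorem \ref{converteo}) does not apply at $h \le h_w$, one instead argues directly that any configuration with $\phi(x)\le K$ has a contour around $x$ of some intensity $\ge 1$ with weight summable and, after summing over all $\gL' \subset \gL$, the ratio is bounded by $e^{-c\,\mathrm{dist}(x,\partial\gL) } \times (\text{poly})$, which $\to 0$. The main obstacle I anticipate is the bookkeeping of boundary effects in the comparison of the two partition functions: one must choose the corridor width $M$ growing with $\gL$ (so that the conditioning on $\{\phi(x) \le K\}$ is genuinely unlikely) yet slowly enough that $\gb M^2 = o(\mathrm{dist}(x,\partial\gL))$ up to the contour entropy, and one must handle the fact that at $h = h_w(\gb)$ there is no uniform contour-stability statement to lean on, so the estimate on the numerator has to be done by a bare-hands energy argument (e.g.\ a Peierls-type bound on the innermost contour enclosing $x$ that separates the low region from a macroscopic raised region, whose length grows with the distance to the boundary). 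This is precisely the delocalization side of the wetting transition, and I would expect the argument to mirror the standard ``no localized Gibbs state below the critical reward'' proofs (cf.\ \cite{cf:chal}) adapted to the contour language of this paper.
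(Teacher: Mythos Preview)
Your approach has a genuine gap, and the paper's argument is both simpler and structurally different.

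The paper's proof is short: first, a one-site DLR computation gives
\[
\bP^{h_w(\gb)}_{\gL,\gb}(\phi(x)\le K)\le C(\gb,K)\,\bP^{h_w(\gb)}_{\gL,\gb}(\phi(x)=0),
\]
so it suffices to show the contact probability at a single site vanishes. For this, the paper invokes Theorem~\ref{oldmain} (more precisely \eqref{difencial}) to get
\[
\lim_{N\to\infty}\frac{1}{N^2}\sum_{y\in\gL_N}\bP^{h_w(\gb)}_{N,\gb}(\phi(y)=0)=\partial_h\tf(\gb,h_w(\gb))=0,
\]
and then observes that, by monotonicity of $\bP^{0,h}_{\gL,\gb}$ in $\gL$ (Corollary~\ref{FKalt}), every term in this average is at least $\lim_{\gL\to\bbZ^2}\bP^{h_w(\gb)}_{\gL,\gb}(\phi(x)=0)$. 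Hence that limit is zero. The whole argument rests on the vanishing of the \emph{derivative} $\partial_h\tf$ at $h_w$, not on any free-energy gap.

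Your route, by contrast, tries to exploit a free-energy comparison, and this is where it breaks down. Your lower bound on the denominator (via a lifted reference configuration) gives $\log \cZ^{0,h_w}_{\gL,\gb}\ge |\gL|\tf(\gb)-O(M^2)$, which is fine. But your upper bound on the numerator is not justified: you write that ``what survives in the numerator is the cost of the contour that must surround $x$ and separate $\{\phi\le K\}$ from the lifted bulk,'' yet the configurations being summed in the numerator need not have any lifted bulk at all --- the field $\phi\equiv 0$ satisfies $\phi(x)\le K$ and has no contour around $x$. The ``innermost contour'' Peierls argument you sketch presupposes that the typical configuration is already delocalized, which is precisely the conclusion you are trying to reach. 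Moreover, since $\tf(\gb,h_w)=\tf(\gb)$, there is no bulk free-energy advantage to lifting over staying pinned; a bare comparison of partition functions cannot distinguish the two behaviours at the critical point. The information that actually forces delocalization is the vanishing of the \emph{contact fraction} $\partial_h\tf(\gb,h_w)=0$, and your scheme never uses it.
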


Note that by monotonicity in $h$ (Corollary \ref{FKalt}), it is sufficient to check the statement for $h=h_w(\gb)$.

\begin{proof}
 Now using the DLR relation for the neighborhood of $x$, and the definition of the measure one obtains that for any $k\ge 0$ one has 
 \begin{equation}
  \bP^{h_w(\gb)}_{N,\gb}(\phi(x)=k+1 \ | \ \phi(y),\ y\sim x)\le e^{4\gb}  \bP^{h_w(\gb)}_{N,\gb}(\phi(x)=k+1 \  |  \ \phi(y),\ y\sim x).
 \end{equation}
 This readily implies that for an explicit constant $C(\gb,K)$ one has
 \begin{equation}
  \bP^{h_w(\gb)}_{\gL,\gb}(\phi(x)\le K)\le C(\gb,K)  \bP^{h_w(\gb)}_{\gL,\gb}(\phi(x)=0).
 \end{equation}
To conclude we just need to show that
 \begin{equation}\label{suit}
 \lim_{\gL \to \bbZ^2}\bP^{h_w(\gb)}_{\gL,\gb}(\phi(x)=0)=0.
 \end{equation}
 As a consequence of Theorem \ref{oldmain} and \eqref{difencial},
 we have 
 \begin{equation}
  \lim_{N\to \infty} \frac{1}{N^2}\sum_{x\in \lint 1,N\rint} \bP^{h_w(\gb)}_{N,\gb}(\phi(x)=0)=\partial_h \tf(\gb,h_w(\gb))=0.
 \end{equation}
 As by monotonicity (Corollary \ref{FKalt}), each term in the sum is larger than the limit one wants to compute, we obtain \eqref{suit}.

\end{proof}

\subsection{Identifying the contact fraction: the proof of \eqref{variousgibbs}}\label{cf}

Let us prove that for any $h\in [h^*_{n+1},h^*_n]$ we have 
\begin{equation}\label{wootz}
 \bP^{n,h}_{\gb}[\phi(x)=0]=\partial_h \tf(\gb,h),
\end{equation}
where the derivative as to be understood as the derivative on the right for $h=h^*_{n+1}$ and on the left for 
$h=h^*_{n}$.
We already know as a consequence of \eqref{koomits} that for $h\in(h^*_{n+1},h^*_n)$ we have
\begin{equation}\label{zeone}
 \lim_{N\to \infty} \frac{1}{N^2}\sum_{x\in \lint 1,N \rint^2 }\bP^{n,h}_{N,\gb}[\phi(x)=0]=\partial_h \tf(\gb,h).
\end{equation}
Note that the statement can be extended to the boundary of the interval $h\in\{h^*_{n+1},h^*_n\}$ using that 
the second derivative $\partial^2_h \log Z^{n,h}_{N,\gb}$ is uniformly bounded on the interval $[h^*_{n+1},h^*_n]$ (recall \eqref{koomits}).

\medskip

\noindent A consequence of the exponential decay of correlation \eqref{cloxf} is that 
\begin{equation}\label{zetwo}
\left|\sum_{x\in \lint 0,N \rint^2 }\bP^{n,h}_{\gb,N}[\phi(x)=0]- N^2\bP^{n,h}_{\gb}[\phi(x)=0]\right|\le C N,
\end{equation}
and thus \eqref{wootz} is deduced from the combination of \eqref{zeone} and \eqref{zetwo}.

\subsection{Limit with zero boundary condition}\label{zerob}

So far we have only shown the existence of translation invariant Gibbs measure and non-uniqueness at the phase transition points.
To conclude we need an argument to show that they are the only one.

\medskip

A first step is to show convergence of the measure when having zero boundary condition to a translation invariant limit which has the right contact fraction.
The proof uses essentially the same idea as those to prove a similar result for wetting of the harmonic crystal
\cite[Section 5]{cf:GL}.

\begin{proposition}
For any $h>h_w(\gb)$
 the sequence of measure $\bP^{h,0}_{\gL,\gb}$ converge to an infinite volume limit which we call $\tilde \bP^{h}_{\gb}$.
 We have for every $x\in 0$
 \begin{equation}\label{zioup}
  \tilde \bP^{h}_{\gb}[\phi(x)=0]=\partial^+_h \tf(\gb,h).
 \end{equation}
 
\end{proposition}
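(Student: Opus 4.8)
The plan is to obtain $\tilde\bP^h_\gb$ as the monotone (in the domain) limit of the zero boundary condition measures, and then to identify its contact fraction by combining a stochastic‑monotonicity sandwich with the convexity of the free energy, using right‑continuity in $h$ to select the correct one‑sided derivative.

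First I would set up monotonicity in the domain. For simply connected $\gL\subset\gL'$, the DLR equation rewrites $\bP^{0,h}_{\gL',\gb}$ conditioned on $\phi\restrict_{\gL'\setminus\gL}=\psi$ as $\bP^{\psi,h}_{\gL,\gb}$ with $\psi\ge 0$, and Corollary \ref{FKalt}(iii) gives $\bP^{\psi,h}_{\gL,\gb}\succcurlyeq\bP^{0,h}_{\gL,\gb}$; averaging over $\psi$ shows that $\bP^{0,h}_{\gL',\gb}$ restricted to $\gL$ stochastically dominates $\bP^{0,h}_{\gL,\gb}$. Hence along the boxes $\gL_N=\lint 1,N\rint^2$ the measures $\bP^{0,h}_{\gL_N,\gb}$ are stochastically increasing, so $\bP^{0,h}_{\gL_N,\gb}(g)$ converges for every bounded increasing local $g$; expressing an indicator of any finitely supported event as a finite linear combination of indicators of events $\{\phi(x_1)\ge k_1,\dots,\phi(x_j)\ge k_j\}$ then gives convergence of all finite‑dimensional marginals. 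Monotonicity also makes the limit independent of the exhausting sequence (two exhaustions interleave), which yields both translation invariance of the limit $\tilde\bP^h_\gb$ and, by the DLR closedness recalled after Definition \ref{defgs}, that $\tilde\bP^h_\gb$ is a Gibbs state.

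Next comes the tightness step, where the hypothesis $h>h_w(\gb)$ and Theorem \ref{converteo} are used. Let $n$ be such that $h\in[h^*_{n+1},h^*_n]$. By Corollary \ref{FKalt}(iii) one has $\bP^{0,h}_{\gL,\gb}\preccurlyeq\bP^{n,h}_{\gL,\gb}$, so it is enough to show $\sup_{\gL}\bP^{n,h}_{\gL,\gb}(\phi(x)\ge K)\to 0$ as $K\to\infty$ for simply connected $\gL$, which is done exactly as in the proof of Proposition \ref{convirj}: decomposing according to the external contour $\bg_x$ enclosing $x$, using $\bP^{n,h}_{\gL,\gb}[\bg_x=\gamma]\le w^h_n(\gamma)\le e^{-(\gb-1)|\tilde\gamma|}$ from Theorem \ref{converteo}, bounding the conditional probability that $\phi(x)\ge K$ inside $\gamma$ by $1$ when $|\bar\gamma|\ge\sqrt K$ and by $Ce^{u|\bar\gamma|}e^{-4\gb(K-n-1)}$ (via Lemma \ref{rwo} and Proposition \ref{rouxrou}) when $|\bar\gamma|<\sqrt K$, and summing over $\gamma\ni x$; both contributions go to $0$ uniformly in $\gL$. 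This shows $\tilde\bP^h_\gb$ is a genuine (non‑defective) probability measure, and since $\sum_K\sup_\gL\bP^{0,h}_{\gL,\gb}(\phi(x)\ge K)<\infty$ it has finite mean, so $\tilde\bP^h_\gb$ is a finite‑mean translation invariant Gibbs state.

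Finally I would prove \eqref{zioup} in two steps. (a) A monotonicity sandwich identifies the Cesàro average of the contact probabilities with $\tilde\bP^h_\gb[\phi(x)=0]$: for every box $\gL_N$ and $x\in\gL_N$, domain monotonicity gives $\tilde\bP^h_\gb[\phi(0)=0]\le\bP^{0,h}_{\gL_N,\gb}[\phi(x)=0]$, while if $x$ lies at $\ell_1$‑distance $\ge d$ from $\partial\gL_N$ then $\bP^{0,h}_{\gL_N,\gb}[\phi(x)=0]\le\bP^{0,h}_{x+\lint -d,d\rint^2,\gb}[\phi(x)=0]=:a_d$, with $a_d\downarrow\tilde\bP^h_\gb[\phi(0)=0]$ by translation invariance; since all but $CdN$ sites of $\lint1,N\rint^2$ are $d$‑deep,
\[
\tilde\bP^h_\gb[\phi(0)=0]\;\le\;\frac1{N^2}\sum_{x\in\lint1,N\rint^2}\bP^{0,h}_{\gL_N,\gb}[\phi(x)=0]\;\le\;a_d+\frac{CdN}{N^2},
\]
so letting $N\to\infty$ then $d\to\infty$ the average converges to $\tilde\bP^h_\gb[\phi(0)=0]$. (b) On the other hand $\partial_h\log\cZ^{0,h}_{\gL_N,\gb}=\sum_{x\in\lint1,N\rint^2}\bP^{0,h}_{\gL_N,\gb}[\phi(x)=0]$, and the convex functions $N^{-2}\log\cZ^{0,h}_{\gL_N,\gb}$ converge to $\tf(\gb,h)$, so by convexity the limit of $N^{-2}\partial_h\log\cZ^{0,h}_{\gL_N,\gb}$ lies in $[\partial^-_h\tf(\gb,h),\partial^+_h\tf(\gb,h)]$ and equals $\partial_h\tf(\gb,h)$ wherever $\tf$ is differentiable in $h$. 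Combining (a) and (b) gives $\tilde\bP^h_\gb[\phi(0)=0]=\partial_h\tf(\gb,h)$ for $h\in(h_w(\gb),\infty)\setminus\{h^*_n\}_{n\ge1}$. To upgrade this to $\partial^+_h\tf$ at the exceptional points, note that $\tilde\bP^h_\gb[\phi(0)=0]=\inf_N\bP^{0,h}_{\gL_N,\gb}[\phi(0)=0]$ is an infimum of functions continuous and non‑decreasing in $h$ (Corollary \ref{FKalt}(ii)), hence upper semicontinuous and non‑decreasing, hence right‑continuous; $\partial^+_h\tf(\gb,\cdot)$ is also right‑continuous by convexity, and the two agree on the dense set $(h_w(\gb),\infty)\setminus\{h^*_n\}_{n\ge1}$, hence everywhere, which is \eqref{zioup}. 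The main obstacle is the tightness estimate: it is the only point needing the fine input of Theorem \ref{converteo}, and the bound on $\bP^{n,h}_{\gL,\gb}(\phi(x)\ge K)$ must be uniform in $\gL$, forcing the careful split on $|\bar\gamma|$ as in Proposition \ref{convirj}; a secondary subtlety is that landing on $\partial^+$ rather than $\partial^-$ is dictated by the direction of the domain monotonicity (the zero boundary condition being minimal makes the contact probabilities decrease to their limit).
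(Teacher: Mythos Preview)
Your proposal is correct and follows essentially the same route as the paper: monotonicity in the domain via DLR and Corollary~\ref{FKalt}, tightness via stochastic domination by the $n$-boundary measures, the sandwich argument for the contact fraction, and right-continuity to select $\partial^+_h$. The only notable difference is that your tightness step is more laborious than necessary: the paper simply observes that $\bP^{0,h}_{\gL,\gb}\preccurlyeq\bP^{n,h}_{\gL,\gb}$ and that the latter converges (Proposition~\ref{conviark}), which immediately gives tightness without any explicit contour computation.
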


\begin{proof}
 A first observation is that $\bP^{h,0}_{\gL,\gb}$ restricted to $A\subset \gL$ increases stochastically when $\gL$ increases.
 This is a consequence of the DLR property (recall \eqref{DLR}) and  Corollary \ref{FKalt}: for $\gL'\supset \gL$ the restriction
  $\bP^{h,0}_{\gL',\gb}$ to $\gL$ corresponds to $\bP^{h,\Psi}_{\gL,\gb}$ with a random boundary condition $\Psi\ge 0$ which thus dominates $\bP^{h,0}_{\gL,\gb}$.
 
 \medskip
 
 The sequence of measure is tight because, from Corollary \ref{FKalt}, $\bP^{h,0}_{\gL,\gb}$ is dominated by $\bP^{n,h}_{\gL,\gb}$ which converges.
 Hence we obtain the existence of the limit.
 
 \medskip

 To prove the statement about the contact fraction,  let us set $\eta:= \tilde \bP^{h}_{N,\gb}[\phi({\bf 0})=0]$.  
 Note that by monotonicity for any $\gL$ and  $x\in \gL$ we have 
 \begin{equation}
  \tilde \bP^{h,0}_{\gL,\gb}[\phi(x)=0] \ge \eta,
 \end{equation}
and hence \eqref{difencial} implies that when the derivative exists  
\begin{equation}
  \partial_h\tf(\gb,h)\ge \eta. 
\end{equation}
Given $\gep>0$, using the definition of $\eta$ we can choose $K=K_\gep$ sufficiently large which satisfies 
$$ \bP^{h}_{\lint -K,K\rint,\gb}[\phi(x)=0]\le \eta+\gep.$$
By monotonicity we have also 
$$\bP^{h,0}_{N,\gb}[\phi(x)=0]\le \eta+\gep$$ 
for all $x$ such that $d(x,\partial \gL_N)\ge K+1$.
Hence we have 
\begin{equation}
 \sum_{x\in \gL_N}    \tilde \bP^{h}_{N,\gb}[\phi(x)=0]\le 4 (K+1) N+ N^2(\eta+\gep).
\end{equation}
Using \eqref{difencial} again, we conclude that  $\partial_h\tf(\gb,h)\le \eta+\gep$ and thus obtain that \eqref{zioup} holds at all differentiability points.
To conclude, we remark that  $\bP^{h}_{N,\gb}[\phi({\bf 0})=0]$ being the infimum (in $N$) of continuous non-decreasing function, its limit has to be right-continuous on the 
 at every point, from which we deduce that the results also holds where $\tf$ is not differentiable .
 
 \end{proof}

\noindent The next step is to show that the limit found above coincides with $\bP^{n,h}_{\gb}$.

\begin{proposition}\label{dabot}
For any $h\in [h^*_{n+1},h^*_n)$, we have 
\begin{equation}
\tilde \bP^{h}_{\gb}= \bP^{n,h}_{\gb}.
\end{equation}
\end{proposition}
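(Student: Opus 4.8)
The plan is to show that the zero–boundary–condition limit $\tilde\bP^h_\gb$ coincides with the translation invariant Gibbs state $\bP^{n,h}_\gb$ constructed from $n$-boundary conditions, for $h\in[h^*_{n+1},h^*_n)$. The natural strategy is a sandwiching argument based on the FKG inequality, since both measures are translation invariant Gibbs states and their contact fractions are known: by \eqref{zioup} we have $\tilde\bP^h_\gb[\phi(x)=0]=\partial^+_h\tf(\gb,h)$, and by \eqref{wootz} we have $\bP^{n,h}_\gb[\phi(x)=0]=\partial_h\tf(\gb,h)$, where the latter derivative is the right derivative at $h^*_{n+1}$ and the (two-sided, since $\tf$ is differentiable on the open interval) derivative elsewhere. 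Hence for $h\in(h^*_{n+1},h^*_n)$ the two measures have \emph{equal} expected contact fraction $\partial_h\tf(\gb,h)$.

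First I would establish stochastic domination in one direction: $\tilde\bP^h_\gb\preccurlyeq\bP^{n,h}_\gb$. This follows from Corollary \ref{FKalt}(iii) applied at finite volume — the measure $\bP^{h,0}_{\gL,\gb}$ with zero boundary condition is stochastically dominated by $\bP^{n,h}_{\gL,\gb}$ with boundary condition $n\ge 0$ — and then passing to the limit $\gL\to\bbZ^2$, which is legitimate since both finite-volume sequences converge locally (Proposition \ref{conviark} for the $n$-boundary case, and the monotone-limit construction in the previous proposition for the zero case). Next, the key point is that two translation invariant measures $\mu\preccurlyeq\nu$ on $(\bbZ_+)^{\bbZ^2}$ with $\mu[\phi(x)=0]=\nu[\phi(x)=0]$ (equivalently $\mu[\phi(x)\ge 1]=\nu[\phi(x)\ge 1]$, since $\{\phi(x)\ge 1\}$ is increasing) and the stochastic ordering must in fact agree at least on the event $\{\phi(x)=0\}$ — and more importantly, one can bootstrap. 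Because $\mu\preccurlyeq\nu$, for every increasing event $A$ we have $\mu(A)\le\nu(A)$; applying this to $A=\{\phi(x)\ge 1\}$ and using equality forces $\mu(\phi(x)=0)=\nu(\phi(x)=0)$, which we already know, so this alone is not enough. The real mechanism: one uses a coupling of $\mu$ and $\nu$ with $\phi\le\phi'$ a.s.; then $\phi(x)=0\iff \phi'(x)=0$ for the distributions to have equal marginal of $\{\phi(x)=0\}$ — no, rather $\bE[\ind_{\phi(x)\ge 1}-\ind_{\phi'(x)\ge 1}]=0$ with the integrand $\le 0$ pointwise forces $\phi(x)\ge 1\iff\phi'(x)\ge 1$ a.s. in the coupling. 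This shows $\phi(x)$ and $\phi'(x)$ are simultaneously zero. To upgrade this to full equality of the laws, I would combine it with the fact that in the contour representation the law of the \emph{external contours} under either measure is $\bbP^{w^h_n}$ (the same weight, by Theorem \ref{converteo} both boundary conditions give $n$-stable contours for $h\in[h^*_{n+1},h^*_n)$), hence the two measures are actually \emph{equal} on the $\sigma$-algebra generated by the external contours; conditionally on the external contours, the field inside each $\bar\gamma$ has the explicit law $\bP^{n,h}_\gamma$ in both cases (as in the proof of Proposition \ref{conviark}), and the region not enclosed by any contour is pinned at level $n$ in both. Therefore the two measures coincide; the contact-fraction identity is used only to verify that the zero-b.c.\ limit indeed selects the $n$-regime rather than some other level.

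The main obstacle I anticipate is the last identification step — showing that the zero-boundary limit $\tilde\bP^h_\gb$ really has its external contours distributed according to the \emph{same} truncated weight $w^h_n$ as $\bP^{n,h}_\gb$, i.e.\ that the "ambient level" selected by zero boundary conditions at such $h$ is exactly $n$. A clean way around this: the zero-b.c.\ measure is the minimal Gibbs state (every Gibbs state dominates it, by the DLR argument already invoked in Section \ref{noon}), so $\tilde\bP^h_\gb\preccurlyeq\bP^{n,h}_\gb$ holds, and combined with the matching contact fractions on $(h^*_{n+1},h^*_n)$ one gets, by the coupling argument above, that $\phi$ and $\phi'$ are simultaneously at level $0$; iterating the same reasoning with the increasing events $\{\phi(x)\ge k\}$ for each $k\le n$ — whose probabilities under $\bP^{n,h}_\gb$ are controlled by Proposition \ref{rouxrou} and the cluster expansion, and which must be sandwiched — forces $\phi(x)\ge k\iff\phi'(x)\ge k$ for all $k$ in the coupling, hence $\phi=\phi'$ a.s., i.e.\ $\tilde\bP^h_\gb=\bP^{n,h}_\gb$. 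For $h=h^*_{n+1}$ one passes to the limit using right-continuity of $h\mapsto\bP^{n,h}_\gb[\phi(x)=0]$ and of the zero-b.c.\ contact fraction, both already established.

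Finally, I would note that combining Proposition \ref{dabot} with the minimality of the zero-b.c.\ Gibbs state immediately yields part of the uniqueness statement in Theorem \ref{Gibbs}(ii): on $(h^*_{n+1},h^*_n)$ any finite-mean translation invariant Gibbs state $\nu$ satisfies $\tilde\bP^h_\gb\preccurlyeq\nu$, while the maximal state (limit of $m$-boundary conditions for large $m$, which also lands in the $n$-regime by the same Theorem \ref{converteo} analysis) dominates $\nu$ from above and coincides with $\bP^{n,h}_\gb=\tilde\bP^h_\gb$; hence $\nu=\bP^{n,h}_\gb$. The FKG sandwiching and the contour-representation identity are the two workhorses, and the only genuinely delicate input is the equality of external-contour laws, which rests on the stability result Theorem \ref{converteo} being valid for \emph{both} relevant boundary conditions uniformly on the interval.
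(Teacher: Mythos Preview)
Your overall architecture is right — stochastic domination from the boundary ordering plus equality of contact fractions — but the step that upgrades ``same probability of $\{\phi(x)=0\}$'' to ``same law'' contains a genuine gap.

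Your first route, identifying the external-contour laws, is circular. Under $\bP^{0,h}_{\gL,\gb}$ the contour decomposition of Proposition~\ref{trax} uses the weights $w^h_0$, not $w^h_n$; for $h\in[h^*_{n+1},h^*_n)$ with $n\ge 1$ there is no stability result for $w^h_0$, so you cannot invoke Theorem~\ref{converteo} to say anything about the limiting external-contour distribution of $\tilde\bP^h_\gb$. That the two external-contour laws coincide is essentially equivalent to what you are trying to prove.

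Your second route, iterating over the events $\{\phi(x)\ge k\}$, does not get off the ground: from equal contact fractions you only obtain $\{\phi(x)\ge 1\}=\{\phi'(x)\ge 1\}$ a.s.\ in the monotone coupling. To repeat the argument at level $k=2$ you would need $\tilde\bP^h_\gb[\phi(x)\ge 2]=\bP^{n,h}_\gb[\phi(x)\ge 2]$, and nothing you have written supplies that equality — cluster-expansion bounds give inequalities, not the exact matching required.

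The paper closes this gap with a single-site DLR identity: writing
\[
\bP[\phi(x)=0]=\bE\!\left[\frac{e^{h-\gb\sum_{y\sim x}\phi(y)}}{e^{h-\gb\sum_{y\sim x}\phi(y)}+\sum_{k\ge 1}e^{-\gb\sum_{y\sim x}|\phi(y)-k|}}\right]
\]
for each of the two measures, one observes that the integrand is a \emph{strictly} decreasing function of $(\phi(y))_{y\sim x}$. In a monotone coupling $\phi\le\phi'$, equal expectations of a strictly decreasing function force $(\phi(y))_{y\sim x}=(\phi'(y))_{y\sim x}$ a.s.; by translation invariance this gives $\phi=\phi'$ a.s.\ and hence equality of the measures. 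This DLR trick is the missing ingredient in your argument, and it replaces both of your proposed mechanisms in one stroke.
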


 \begin{proof}
The proof relies on the fact that, by \eqref{wootz} and \eqref{zioup} the two measures have the same contact fractions,
while $\bP^{n,h}_{\gb}$ stochastically dominates $\tilde \bP^{h}_{\gb}$, 
because of the stochastic ordering induced by the boundary condition before taking the limit.

\medskip

\noindent Given $x\in \bbZ^2$, using the DLR equation for the neighborhood of $x$ we have
\begin{multline}\label{superdlr}
\bE^{n,h}[\phi(x)=0]=\bE^{h,0}[\phi(x)=0]=
\bE^{h,0}\left[ \frac{e^{h-\gb\sum_{y\sim x}\phi(y)}}{e^{h-\gb\sum_{y\sim x}\phi(y)}+ \sum_{k\ge 1} e^{-\gb\sum_{y\sim x}|\phi(y)-k| }} \right]\\
=\bE^{n,h}\left[ \frac{e^{h-\gb\sum_{y\sim x}\phi(y)}}{e^{h-\gb\sum_{y\sim x}\phi(y)}+ \sum_{k\ge 1} e^{-\gb\sum_{y\sim x}|\phi(y)-k| }} \right].
\end{multline}
Now the reader can check that the function on the right hand side is strictly decreasing in $\phi(y)$ for all $y$ neighboring $x$.
  As $\bP^{n,h}$ dominates $\bP^{h,0}$, the equality implies thus that $(\phi(y))_{y\sim x}$ must have the same distribution under the two measures.
By stochastic domination and translation invariance we conclude that the two measures are equal.
  
 \end{proof}

 \subsection{Uniqueness of Gibbs states}\label{unik}

 The key point is to prove the following
 \begin{lemma}\label{labon}
  If $h$ is a differentiability point of $\tf(\gb,h)$  then there is only one  translation invariant, finite mean Gibbs state. 
 \end{lemma}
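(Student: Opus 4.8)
The strategy is to sandwich an arbitrary translation invariant, finite mean Gibbs state $\nu$ between the two extremal states $\tilde\bP^h_\gb$ (the limit from zero boundary condition) and a maximal state obtained from a high boundary condition, and to show that differentiability forces these two bounds to coincide. Fix $h$ a differentiability point of $\tf(\gb,\cdot)$, and let $n$ be such that $h\in(h^*_{n+1},h^*_n)$ (the endpoints being excluded precisely because those are the non-differentiability points). The lower bound is the state $\tilde\bP^h_\gb=\bP^{n,h}_\gb$ of Proposition \ref{dabot}, obtained as the increasing limit of $\bP^{h,0}_{\gL,\gb}$. For the upper bound one takes the decreasing limit of $\bP^{h,M}_{\gL,\gb}$ as $M\to\infty$: by Corollary \ref{FKalt}(iii) these are monotone in $M$, and for each fixed $M$ one has (again by \ref{FKalt}(iii), comparing $\psi\equiv M$ with an arbitrary boundary configuration) that any Gibbs state $\nu$ with finite mean is stochastically dominated by the $M$-boundary-condition finite volume measures; taking limits, $\nu$ is dominated by this maximal state $\bar\bP^h_\gb$. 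Similarly $\nu$ dominates $\tilde\bP^h_\gb$ since zero is the minimal boundary condition. Hence it suffices to show $\tilde\bP^h_\gb=\bar\bP^h_\gb$.

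\textbf{Equating the two extremal states.} The point is that both states are translation invariant (monotone limits of translation invariant quantities along an exhausting sequence of boxes, after averaging, or directly by the standard argument) and both have the \emph{same} contact fraction $\partial_h\tf(\gb,h)$: for $\tilde\bP^h_\gb$ this is \eqref{zioup}, and for $\bar\bP^h_\gb$ it is the analogous statement with left/right derivatives, which at a differentiability point again equals $\partial_h\tf(\gb,h)$ (one proves this exactly as in Section \ref{zerob}, using $\eqref{difencial}$ and the uniform bound on $\partial_h^2\log Z$). So $\tilde\bP^h_\gb\preccurlyeq\bar\bP^h_\gb$ with $\tilde\bP^h_\gb[\phi(x)=0]=\bar\bP^h_\gb[\phi(x)=0]$. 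Now run the DLR argument of Proposition \ref{dabot} verbatim: writing the conditional probability of $\{\phi(x)=0\}$ given the neighbours $(\phi(y))_{y\sim x}$, one gets an identity of the form
\begin{equation}
\bar\bP^h_\gb\!\left[ \frac{e^{h-\gb\sum_{y\sim x}\phi(y)}}{e^{h-\gb\sum_{y\sim x}\phi(y)}+ \sum_{k\ge 1} e^{-\gb\sum_{y\sim x}|\phi(y)-k| }} \right]
= \tilde\bP^h_\gb\!\left[ \frac{e^{h-\gb\sum_{y\sim x}\phi(y)}}{e^{h-\gb\sum_{y\sim x}\phi(y)}+ \sum_{k\ge 1} e^{-\gb\sum_{y\sim x}|\phi(y)-k| }} \right].
\end{equation}
Since the integrand is a bounded function of $(\phi(y))_{y\sim x}$ which is strictly decreasing in each coordinate, and since one measure stochastically dominates the other, equality of the two expectations forces $(\phi(y))_{y\sim x}$ to have the same law under both measures; translation invariance then upgrades this to equality of the full finite-dimensional distributions on arbitrary finite sets (apply the same argument at every site, or use that the single-site conditional laws plus domination determine the joint law). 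Therefore $\tilde\bP^h_\gb=\bar\bP^h_\gb$, and any finite mean translation invariant $\nu$ is squeezed between them, so $\nu=\bP^{n,h}_\gb$.

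\textbf{Main obstacle.} The delicate point is not the stochastic sandwich — that is a routine FKG/DLR argument — but justifying that the \emph{strict} monotonicity of the local conditional density, combined with $\preccurlyeq$ and equality of one marginal, yields equality of the measures; one must be a little careful that ``strictly decreasing'' together with stochastic domination and equal expectation really forces equal laws of the neighbour vector (this is where the finite mean hypothesis is used, to guarantee the relevant expectations are finite and the coupling argument goes through), and then that propagating this from all single-site neighbourhoods reconstructs the full joint distribution. The existence of the maximal state $\bar\bP^h_\gb$ as an honest Gibbs state with finite mean also needs a word: tightness follows because $\bar\bP^h_{\gL,\gb}\preccurlyeq\bP^{n+k,h}_{\gL,\gb}$ for $\gL$ large once $M$ is fixed — wait, rather one bounds it above using the interval structure; cleanly, one can instead avoid constructing $\bar\bP^h_\gb$ altogether and argue: given $\nu$, it dominates $\tilde\bP^h_\gb$, has the same contact fraction as $\tilde\bP^h_\gb$ by \eqref{difencial} and differentiability, hence by the DLR computation above $\nu=\tilde\bP^h_\gb=\bP^{n,h}_\gb$ directly. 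This shorter route is the one I would actually write, and it sidesteps the construction of the maximal state entirely.
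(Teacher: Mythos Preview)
Your ``shorter route'' at the end is exactly the paper's argument, and it is the right one; the construction of a maximal state $\bar\bP^h_\gb$ as an increasing limit in $M$ is problematic (those measures are stochastically \emph{increasing} in $M$, and there is no reason the limit is tight, nor is it true that a finite-mean Gibbs state is dominated by the fixed-$M$ measure), so you were right to discard it.

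There is, however, a real gap in the shorter route as you wrote it. You assert that $\nu$ ``has the same contact fraction as $\tilde\bP^h_\gb$ by \eqref{difencial} and differentiability'', but \eqref{difencial} is a statement about the finite-volume measures with a \emph{fixed constant} boundary condition, not about an arbitrary Gibbs state. The step that is missing is precisely the one where the finite-mean hypothesis enters, and it is not where you say it enters. What the paper does is: sample the boundary $\hat\phi$ from $\nu$, and show that for every $h'$,
\[
\lim_{N\to\infty}\frac{1}{N^2}\,\hat\nu\!\left(\log \cZ^{h',\hat\phi}_{N,\gb}\right)=\tf(\gb,h'),
\]
by bounding $|\log \cZ^{h',\hat\phi}_{N,\gb}-\log \cZ^{h'}_{N,\gb}|\le \gb\sum_{x\in\partial\gL_N}\hat\phi(x)$ and using $\hat\nu\big(\sum_{x\in\partial\gL_N}\hat\phi(x)\big)\le CN$, which is exactly the finite-mean assumption. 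Convexity in $h'$ then forces the $h$-derivative of the left side, namely $\nu(\phi(x)=0)$ by translation invariance and DLR, to equal $\partial_h\tf(\gb,h)$. Only after this do you run the strictly-decreasing-conditional-density argument of \eqref{superdlr}. So the finite-mean hypothesis is used to control the boundary contribution to the free energy, not ``to guarantee the coupling argument goes through''.
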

 
 \begin{proof}
  Let $\nu^h$ be a translation invariant, finite mean Gibbs state. 
Let $\hat \phi$ be a boundary condition sampled according to $\nu^h$ (but for practical reason we write $\hat \nu^h$ for the distribution).
Then the DLR equation implies that the law of $\phi\restrict_{\gL}$ under $\bP^{h,\hat \phi}_{\gL,\gb}$ corresponds to the restriction of  $\nu^h$.
In particular this implies that for any increasing local  function

\begin{equation}
\nu^h(f(\phi))=\hat \nu^h\left(\bE^{h,\hat \phi}_{\gL,\gb}f(\phi)\right)\ge \bE^{h,0}_{\gL,\gb}[f(\phi)].
 \end{equation}
 Passing to the limit we conclude that $\nu^h$ dominates the limit obtained with zero boundary condition which has been identified in Lemma \ref{dabot}.
 Now from translation invariance we have 
 \begin{equation}\label{contrex}
\nu^h(\phi(x)=0)=\frac{1}{N^2}\hat \nu^{h}\bE^{h,\hat \phi}_{N,\gb}\left[\sum_{x\in \gL_N} \phi(x)=0\right].
 \end{equation}
 Now we are going to prove that for every $h'\in \bbR$ we have 
  \begin{equation}\label{evian}
  \lim_{N\to \infty}\frac{1}{N^2}\hat \nu^{h}\left( \log \cZ^{h',\hat \phi}_{N,\gb}\right)=\tf(\gb,h').
 \end{equation}
This follows from the fact that convergence holds with zero boundary condition 
and that the effect boundary conditions can be controlled via the following observation
$$|\log \cZ^{h,\hat \phi}_{N,\gb}-\log \cZ^{h}_{N,\gb}|\le \gb\max_{\phi}|\cH^{\hat \phi}_{N}(\phi)- \cH_{N}(\phi)|,$$
\begin{equation}
|\cH^{\hat \phi}_{N}(\phi)- \cH_{N}(\phi)|\le \sum_{x\in \partial \gL_N} \hat \phi(x).
\end{equation}
and we can conclude using the fact that from our assumptions  $\hat \nu^h\left(\sum_{x\in \partial \gL_N} \hat \phi(x)\right)\le CN$.
 
\medskip

Now we observe that \eqref{evian} and convexity imply that the right hand side of \eqref{contrex} converges to $\partial_h\tf(\gb,h)$,
and thus that $\hat \nu^h$ and $\tilde \bP^h_{\gb}$ have the same contact fraction. 
Using the same trick as in the proof of Lemma \ref{dabot} (recall \eqref{superdlr}) we prove that the two measures coincide. 
\end{proof}

\subsection{Stochastic sandwich at angular points: the proof of \eqref{ssanduba}}\label{sanduba}

Assume now that $h=h^*_n$. 
While the derivative of the free energy  does not exists the proof of Lemma \ref{labon} still implies that 
$$\nu^h(\phi(x)=0)\in [\partial^-_h \tf(\gb,h),\partial^+_h\tf(\gb,h)]$$ and that $\nu^h$ stochastically 
dominates the Gibbs states obtained in the limit with $0$ boundary condition which we know to be $\bP^{n-1,h}_\gb$.

\medskip

A last thing to prove is that $\nu^h$ is dominated by $\bP^{n,h}_{\gb}$.
Consider $\hat \phi$ being distributed according to the measure $\nu^h$ (we write $\hat \nu^h$)
and consider the finite volume measure corresponding to boundary condition $n\vee\hat \phi=\max(n,\hat \phi)$.

\medskip

If we allow $\phi(x)=\infty$ and make the set $\bbZ_+\cup\{\infty\}$ compact, then any sequence of measure is tight and thus admits a limit point.
We consider $\nu'$ a limit point of the following sequence of probability on $\gO_N$
\begin{equation}
\nu'_N(\cdot):=\frac{1}{(2N^2-2N+1)^2}\sum_{y\in \lint N-N^2,N^2-N\rint^2}\hat \nu^h \bP^{h,\hat \phi\vee n}_{\lint -N^2,N^2\rint^2,\gb}\left[ \left(\phi(x+y)\right)_{x\in \lint N,N\rint} \in \cdot \right].
\end{equation}
Note that by construction $\nu'$ is translation invariant and dominates both $\nu^h$ and $\bP^{n,h}_{\gb}$.
Moreover $\nu'$ also satisfies the following version of the DLR equation (recall \eqref{DLR}):
For
every finite subset $\gL$ of $\bbZ^2$ and for every local bounded continuous $g: (\bbZ_+\cup\{\infty\})^{\bbZ^d} \to \bbR$ - in 
particular, the limit of $g(\phi)$, when $\min_{x\in \gL}\phi(x)\to \infty$ , exists and we call it $g(\infty)$ -
then we have $\nu'$ almost surely
\begin{multline}\label{superdlr2}
\nu'\left[ g(\phi) \ | \ \phi(x)=\psi(x), \forall x\in \partial \gL\right]\\=
\begin{cases}
                 \frac{1}{\cZ^{\psi,h}_{\gL,\gb}}\sum_{\phi\in \gO^+_{\gL}}e^{-\beta\cH^{\psi}_{\gL}(\phi)+h|\phi^{-1}(0)|}, 
                 &\text{ if } \psi(y)<\infty \text{ for all } y \in \partial \gL,\\
               g(\infty)  &\text{ if } \psi(y)=\infty \text{ for some } y \in \partial \gL.                 
                \end{cases}
\end{multline}
The statement is valid for every measure in the sequence for $N$ sufficiently large and passes to the limit by continuity
(see \cite[Equation (5.3)]{cf:GL} for a similar argument).

\medskip

\noindent Similarly to \eqref{evian} we have 
  \begin{equation}
  \lim_{N\to \infty}\frac{1}{N^2}\hat \nu^{h}\left( \log \cZ^{h',n\vee \hat \phi}_{N,\gb}\right)\le \tf(\gb,h').
 \end{equation}
and thus \eqref{difencial} implies readily that
$\nu'(\phi({\bf 0})=0)\ge \partial^-_h\tf(\gb,h)$
and hence as from stochastic comparison $\nu'(\phi({\bf 0})=0)\le \bP^{n,h}_{\gb}(\phi({\bf 0})=0)$, we conclude from \eqref{variousgibbs} that
$$\nu'(\phi({\bf 0})=0)\ge \partial^-_h\tf(\gb,h).$$

\medskip

 To conclude we need to prove that $\nu'=\bP^{n,h}_{\gb}$.
By stochastic domination, there exists  a coupling $(\phi_1,\phi_2)$ of the measures $\nu'$ and $\bP^{n,h}_{\gb}$, 
such that almost surely  $\phi_1(x)\ge \phi_2(x)$ for all $x$.  As we have $\{\phi_1(x)=0\}\subset \{\phi_2(x)=0\}$,
the fact that the two event have equal probability implies that almost surely
$$\phi^{-1}_1\{0\}=\phi^{-1}_2\{0\}.$$

\medskip

In particular we have

\begin{equation}
 \nu'( \exists x\in \bbZ^2, \phi(x)=0)= \bP^{n,h}_{\gb}( \exists x\in \bbZ^2, \phi(x)=0)=1.
\end{equation}
Then replicating the argument in \cite[Section 5]{cf:GL}, we deduce that 
\begin{equation}
 \nu'( \forall x\in \bbZ^2, \phi(x)<\infty)=1.
\end{equation}
To conclude we use the DLR relation in a neighborhood of $x$ for $\nu'$ and $\bP^{n,h}_{\gb}$ like in Equation \eqref{superdlr} to prove that the two measures coincide.

\textbf{Acknowledgements:} We are grateful to Kenneth Alexander for enlightening discussions.
This work has been performed in part
during a visit at the Institut Henri Poincar\'e (2017, spring-summer trimester)  supported by the Fondation de Sciences
Math\'ematiques de Paris. The author thanks the Institute for its hospitality.
The author also acknowledges the support of a productivity grand from CNPq as well as a JCNE grant from FAPERJ.

\appendix

\section{Cluster Expansion estimates}\label{apclus}

\subsection{Proof of Lemma \ref{finalfrontier}}\label{pfinal}

From the expression \eqref{fraconv} and translation invariance we have 
\begin{equation}
 |\bar \gamma|\tf(w)= \sum_{x\in \bar \gamma }\sum_{\{\bC\in \cQ \ : \ x-(1/2,1/2) \in \bC\}} \frac{1}{|\bC|}w^T(|\bC|).
\end{equation}
In this sum, the coefficient of $w^T(|\bC|)$ for a  cluster in $\cQ(\cC_{\gga})$ is $1$ while the other clusters have a coefficient between zero and one.
thus we have 
\begin{equation}
 \left|Z[\gamma,w]- |\bar \gamma|\tf(w)| \right|\le \sum_{\{\bC\in \cQ\setminus \cQ(\cC_{\gamma}) \ : \ \bC\cap [\bar \gamma-(1/2,1/2)]\ne \emptyset   \}} w^T(|\bC|).
\end{equation}
Now note that for all clusters in the right hand side, there is at least one site in $\tilde \gamma$ that belongs to $\bC$. Thus using translation invariance,
given a fixed $x^*_0\in (\bbZ^2)^*$ we have 
\begin{equation}
  \left|Z[\gamma,w]- |\bar \gamma^*|\tf(w)| \right|\le |\tilde \gamma|\sum_{\bC\in \cQ \ : \ x_0^* \in \bC} w^T(|\bC|)\le \frac{1}{4}|\tilde \gamma|,
\end{equation}
where the last inequality is valid for $\gb$ sufficiently large as a consequence of \eqref{dabound2}.
\qed

\subsection{Proof of Proposition \ref{propinfi}}\label{ppropinf}

 The second line of \eqref{convergenz} can be deduced from the first, by considering a sequence $\bL'$ that exhausts $\cC$.
The first result corresponds to evaluating the total variation distance between the respective distribution of $\Upsilon\cap \cC'_A$ under
 $\bbP^w_{\bL}$ and $\bbP^w$, which is equal to
 \begin{equation}
\sum_{\gG \in \cK(\cC'_A)} \left(\bP^w_{\bL}\left( \Upsilon \cap \cC'_A=\gG \right)-   \bP^w_{\bL'}\left(\Upsilon \cap \cC'_A=\gG\right) \right)_+
 \end{equation}
We set for this proof $d:=d(A,\bL^{\cc})$.
We consider first the contribution to the above sum of $\gG$ which contains a contour $\gamma$ of large diameter  , that is such that 
$|\Diam(\gamma)|\ge d/3$.
Simply using the fact that by Peierls argument (see e.g.\ \cite[Lemma 4.4]{cf:part1}) we have for every $\gamma$
\begin{equation}\label{pignouf}
\bP^{w}_{\bL}[\gamma \in \Upsilon]\le \frac{w(\gamma)}{1+w(\gamma)}. 
\end{equation}
We deduce from the assumption \eqref{usualchoice}, summing over all possible such contours  that for $\gb$ sufficiently large 
\begin{equation}
 \bbP^w_{\bL}\left( \exists \gamma \in \Upsilon, \ \bar\gamma \cap A\ne \emptyset , |\tilde \gamma|\ge \frac{2d}{3}\right)\le 
 |A| e^{-\gb \frac{d}{2}}.
\end{equation}
Now from the definition of $d(A,\bL^{\cc})$, if $\gG$ does not contain a contour of large diameter then all contours in it belongs to $\bL$ and $\bL'$.
In that case we have, from \eqref{godstim}
\begin{multline}
\left( \bP^w_{\bL}\left( \Upsilon \cap \cC'_A=\gG \right)-   \bP^w_{\bL'}\left(\Upsilon \cap \cC'_A=\gG\right) \right)_+
\\= \bP^w_{\bL}\left( \Upsilon \cap \cC'_A=\gG \right) \left[ e^{ \left(\sum_{\bC'\in \cQ(\bL',A,\gG)} w^T(\bC')-
  \sum_{\bC \in \cQ(\bL,A,\gG)} w^T(\bC)\right)_+  }-1 \right].
\end{multline}
And we can conclude provided that we can show that the difference in the exponential is small.
We notice that under the assumption that $\Diam(\gamma)\le d/3$ for $\gamma\in \gG$ we have
 \begin{equation}
\cQ(\bL,A,\gG)\triangle\cQ(\bL',A,\gG)\subset \cQ_1.
 \end{equation}
 where $\triangle$ stands for the symmetric difference between sets and 
 \begin{equation}
\cQ_1:=\{ \bC \ : \ \exists \gamma_1 ,\gamma_2 \in \bC, \min_{x\in A, y\in \bar \gamma_1} |x-y|\le d/3, \max_{x\in A, y^*\in \gamma_2}|x-y^*|\ge d \}
 \end{equation}
 where the existence of $\gamma_1$ is justified by the fact that $\bC$ must contain a contour that either is connected to $\gG$ (and the inequality follows from
 the assumption that contours in $\gG$ have diameter smaller than $d/3$), or belong to $\cC'_A$
and that of $\gamma_2$ by the fact that 
 it must contain a contour in $\bL\triangle\bL'$.
This implies in any case that the diameter of $\bC$ is larger than $d/2/$ so that $L(\bC)\ge d$ and one can conclude using \eqref{dabound2} that 
 \begin{equation}
  \sum_{\bC\in \cQ_1} |w^T(\bC)|\le C d^2 |A|e^{-\gb d/2}\le C|A|e^{-\gb d/4},
\end{equation}
were $|A|d^2$ gives a bound for the number of points at distance $d/3$ or less from $A$.

\medskip

 Let us now move to the proof of \eqref{decayze}.
 This is simply a bound on the total variation distance between the distribution $\Upsilon \cap \cC'_{A\cup B}$ and what we obtain by considering the product distribution of  $\Upsilon \cap \cC'_A$ and $\Upsilon \cap \cC'_B$. We must prove 
 \begin{multline}
\sum_{\gG  \in \cK(\cC'_{A\cup B})}\left( \bbP^w_{\bL}( \cC'_{A\cup B} \cap \Upsilon= \gG)-  \bbP^w_{\bL}(\cC'_A\cap \Upsilon=  \cC'_A\cap \gG)
 \bbP^w_{\bL}(\cC'_B\cap \Upsilon=\cC'_B\cap \gG ) \right)_+\\
 \le |A| e^{-c\gb d(A,B)}.
\end{multline}
A first step is to discard the possibility of having contour that comes to the neighborhood in the two regions. More precisely using \eqref{pignouf}
one can check that if \eqref{usualchoice}
holds we have 
\begin{equation}\label{badseeds}\begin{split}
\bbP^w_{\bL}\left(\exists \gamma \in \Upsilon \cap \cC'_{A}, \ \Diam(\gamma)\ge d/4 \right)&\le |A| e^{\gb d(A,B)/8},\\
\bbP^w_{\bL}\left(\exists \gamma \in \Upsilon \cap \cC'_{B}, \ \min_{x\in A, y^*\in \gamma}|x-y^*|\le d/4 \right) &\le |A| e^{-c\gb d(A,B)/8}.
\end{split}
\end{equation}
Now if  $\gG$ is such that  $ \gG_1:= \cC'_A\cap \gG$ and $\gG_2:=\cC'_B\cap \gG$ are disjoint,
using \eqref{godstim}, and observing that  $\cQ(\bL,A,\gG_1)\cup \cQ(\bL,B,\gG_2)= \cQ(\bL,A\cup B,\gG)$
\begin{multline}
\left(  \bbP^w_{\bL}( \cC'_{A\cup B} \cap \Upsilon= \gG)-  \bbP^w_{\bL}(\cC'_A\cap \Upsilon=   \gG_1)
 \bbP^w_{\bL}(\cC'_B\cap \Upsilon=\gG_2 )  \right)_+
\\= \bbP^w_{\bL}( \cC'_{A\cup B} \cap \Upsilon= \gG)\left[e^{\left(\sum_{\bC\in \cQ(\bL,A,\gG_1)\cap\cQ(\bL,B,\gG_2) } w^T(\bC)\right)_+}-1\right].
\end{multline}
Now if $\gG$ does not include any contours of the type considered in \eqref{badseeds}, 
\begin{equation}
  \cQ(\bL,A,\gG_1)\cap\cQ(\bL,B,\gG_1) \subset \cQ_2 
 \end{equation}
 where, using the notation $d:=d(A,B)$, we define
 \begin{multline}
  \cQ_2 :=\{ \bC\in \cQ \ : \  \exists \gamma_1, \gamma_2 \in \bC, \min_{x\in A, y\in \bar \gamma_1} |x-y|\le d/4, \  \max_{x\in A, y\in \bar \gamma_2}\ge 3d/4\}.
   \end{multline}
   The existence of $\gamma_1$ is justified by the fact that some contour in $\bC$ must either be connected with $\gG_1$ or intersect $A$ and that of $\gamma_2$
   by the fact that some contour in $\bC$ must either be connected with $\gG_2$ or intersect $B$.
   This implies in particular that the diameter of $\bC$ is larger then $d/2$, so that one must have $L(\bC)\ge d$ and we can also conclude using \eqref{dabound2} that 
 \begin{equation}
  \sum_{\bC\in \cQ_2} |w^T(\bC)|\le C d^2 |A|e^{-\gb d/2}\le C|A|e^{-\gb d}.
\end{equation}
\qed
\subsection{Proof of \eqref{infin} and \eqref{infde}}\label{lazt}

The proof is very similar to the one of the lower bound for Proposition \ref{convirj} displayed in Section \ref{lapriv}.
Let us treat only the case of \eqref{infde} as \eqref{infin} is very similar.

\medskip

We need to consider only the contribution of configuration for which the only contour in $\cC'_{\{x,y\}}$ is given by $\gamma^+_{x,y}$ the positive contour
of length $6$  which enclose $x$ and $y$. 
Using the definition of $ \bP_{\bL,\gL,\gb}$ we obtain that 

\begin{equation}
 \bP_{\bL,\gL,\gb}[\min\phi(x),\phi(y))\ge n]\ge \bP^{w_{\gb}}_{\bL}\left[\Upsilon\cap \cC'_{\{x,y\}}=\gamma^+_{x,y}\right]e^{-6\gb(n-1)}.
\end{equation}
Applying \eqref{godstim} for the contour weight \eqref{specialweight}
\begin{equation}
  \bP^{w_{\gb}}_{\bL}\left[\Upsilon\cap \cC'_{\{x,y\}}=\gamma^+_{x,y}\right]= \frac{1}{e^{6\gb}-1}
  \exp\left(-\sum_{\bC\in \cQ(\bL,\{x,y\},\{\gamma^+_{x,y}\})} w^T(\bC)\right).
\end{equation}
The sum in the exponential can be seen to be small as a consequence of \eqref{dabound2}, and this allows to conclude by choosing $\gb$ sufficiently large.


\begin{thebibliography}{99}
\bibitem{cf:ATT}  D.\ Ahlberg,\ V.\ Tassion, A.\ Teixeira 
\emph{Existence of an unbounded vacant set for subcritical
continuum percolation} (preprint) arXiv:1706.03053.


\bibitem{cf:A}   M. Aizenman, \emph{Translation invariance and instability of phase coexistence in the two-dimensional Ising
system.}  Comm. Math. Phys.
{\bf 73}
 (1980) 83-94.
 
 \bibitem{cf:Al}  K.S. Alexander, \emph{personal communication}.


\bibitem{cf:ADM}    K.S.\ Alexander, F.\ Dunlop and  S. Miracle-Sol\'e,\
{\sl Layering and Wetting Transitions for an SOS Interface} J. Stat. Phys.\ 
 {\bf 142} (2011) 524-576.

 \bibitem{cf:AY}  K Armitstead and J M Yeomans, \emph{
 A series approach to wetting and layering transitions. II. Solid-on-solid models} 
J.  Phys. A, Math. Gen. {\bf 21} (1988) 159-171.
  
\bibitem{cf:BFP}   R.\ Bissacot, R.\ Fernandez, and A.\ Procacci, \emph{On the convergence of cluster expansions for polymer gases}
J. Stat. Phys. {\bf 139} (2010) 598–617.
\bibitem{cf:BMF} J. Bricmont, A. El Mellouki and J.\ Fr\"olich \emph{Random Surfaces in Statistical Mechanics: Roughening, Rounding, Wetting, ...}
J. Stat Phys {\bf 42} (1986) 743-798.
 
 
 
\bibitem{cf:BCF}  W.K.\ Burton , N.\ Cabrera  and   F.C.\ Frank \emph{The growth of crystals and the equilibrium structure of their surfaces}
Phil. Trans. Royal Soc. A {\bf 243} (1951) 299-358.


 

\bibitem{cf:BM}     R.\ Brandenberger and C.E.\ Wayne, 
\emph{
Decay of correlations in surface models}, J. Stat. Phys.  {\bf 27} (1982) 425-440.




%

 \bibitem{cf:PLMST} P. Caputo, E. Lubetzky, F. Martinelli, A. Sly and F. L. Toninelli, 
{\sl Scaling limit and cube-root fluctuations in SOS surfaces above a wall}, J. Eur. Math. Soc. {\bf 18} (2016) 931-995.

%


 \bibitem{cf:CM}  F.\ Cesi and F.\ Martinelli,
\emph{On the layering transition of an SOS surface interacting with a wall. I. Equilibrium results} J. Stat. Phys {\bf 82} (1996) 823-916.


 \bibitem{cf:CM2}  F.\ Cesi and F.\ Martinelli,
\emph{On the layering transition of an SOS surface interacting with a wall. II. Glauber Dynamics}     Comm. Math. Phys.
{\bf 177} (1996) 173-201.

\bibitem{cf:CoV}
L. Coquille and Y. Velenik, \emph{
A finite-volume version of Aizenman-Higuchi theorem for the 2d Ising model} Prob. Theor. Relat. Fields {\bf 153} (2012) 25-44.


 \bibitem{cf:chal}
J. T. Chalker. {\sl The pinning of an interface by a planar defect},
J. Phys. A {\bf 15} (1982) 481–485.

 \bibitem{cf:DM}   E.I.\ Dinaburg and  A.E. Mazel
\emph{Layering transition in SOS model with external magnetic field} J. Stat. Phys {\bf 74} (1994) 533-563.

\bibitem{cf:Dob} R. L. Dobrushin, \emph{Gibbs states describing a coexistence of phases for the three-dimensional Ising Model},
{\sl Theory Probab. Appl.} {\bf 17} (1972) 582-600.

\bibitem{cf:vDK}
H. von Dreifus, A. Klein, J. Fernando Perez,
\emph{Taming Griffiths’ singularities: infinite differentiability of quenched correlation functions}
, Comm. Math. Phys.
{\bf 170}
(1995), 21-39.




%

\bibitem{cf:FS} J. Fr\"ohlich and T. Spencer. \emph{The Kosterlitz-Thouless transition in two-dimensional abelian
spin systems and the Coulomb gas}
Comm. Math. Phys. {\bf 81} (1981) 527-602.
%

  \bibitem{cf:GL} G. Giacomin, H. Lacoin, {\sl  Disorder and wetting transition: the pinned harmonic crystal in dimension three or larger}
to appear in Ann. Appl. Probab.
  
  \bibitem{cf:G} J.B.\ Gou\'er\'e,  \emph{Subcritical regimes in the Poisson Boolean model of continuum percolation},
Ann. Probab. {\bf 36} (2008) 1209-1220.


  \bibitem{cf:Hall} P. Hall, \emph{
On continuum percolation},
Ann. Probab.
{\bf 13} (1985)
1250-1266.


\bibitem{cf:H} Y. Higuchi, \emph{On some limit theorems related to the phase separation line in the two-dimensional Ising
model.} Z. Wahrsch. Verw. Gebiete
{\bf 50} (1979) 287-315. 
  
\bibitem{cf:Holley} R. Holley  {\em Remarks on the FKG Inequalities},
Commun. math. Phys. {\bf 36} (1974) 227-231 (math-ph).

\bibitem{cf:IV} D. Ioffe and Y. Velenik, 
{\sl Low-temperature interfaces: Prewetting, layering, faceting and Ferrari-Spohn diffusions}, preprint  arXiv:1611.00658 [math.PR]. 

\bibitem{cf:KP} R. Koteck\'y and   D. Preiss, \emph{Cluster expansion for abstract polymer models}
Comm. Math. Phys. {\bf 103} 491-498.


\bibitem{cf:part1} H.Lacoin, \emph{Wetting and layering for Solid-on-Solid I: Identification of the wetting point and critical behavior}
preprint  arXiv:1703.06162 [math.ph].

\bibitem{cf:Swend} R.\ H.\ Swendsen, \emph{Roughening transition in the solid-on-solid model}
Phys. Rev. B {\bf 15} (1977) 689-692.



\bibitem{cf:T}  H.\ N.\ V.\ Temperley \emph{Statistical mechanics and the partition of numbers II. The form of crystal surfaces}
Proc. Camb. phil. Soc. {\bf 48} (1952) 683-697.

\bibitem{cf:WGL} J.D.\ Weeks, G.H.\ Gilmer and H.J.\ Leamy,
\emph{Structural transition in the Ising-model interface} Phys. Rev. Lett. (1973) {\bf 31} 549-551

 

\end{thebibliography}
\end{document}